\documentclass[letterpaper,11pt]{article}
\usepackage{fullpage}
\usepackage[ansinew]{inputenc}
\usepackage{verbatim}
\usepackage{graphicx}
\usepackage{tabularx}
\usepackage{array}
\usepackage{delarray}
\usepackage{dcolumn}
\usepackage{float}
\usepackage{amsmath}
\usepackage{psfrag}
\usepackage{booktabs}
\usepackage{multirow,hhline}
\usepackage{amstext}
\usepackage{amssymb}
\usepackage{fancyhdr}
\usepackage{hyperref}
\usepackage{setspace}
\usepackage{amsthm}
\usepackage{makeidx}
\usepackage{tikz}
\usepackage{lipsum}
\usepackage{url}
\usepackage{adjustbox}
\usepackage{enumitem}
\usepackage{threeparttable}
\usepackage{adjustbox}
\usepackage{bbold}
\usepackage{pifont}
\usepackage{bookmark}
\usepackage{algorithm}
\usepackage{algpseudocode}
\usepackage{natbib}

\setcitestyle{round,authoryear,longnamesfirst}
\usetikzlibrary{arrows,calc}
\hypersetup{colorlinks=true,allcolors=blue}
\makeindex
\allowdisplaybreaks
\numberwithin{equation}{section}
\newtheorem{lem}{Lemma}[section]
\newtheorem{thm}{Theorem}[section]
\newtheorem{cor}{Corollary}[section]
\newtheorem{defn}{Definition}[section]
\newtheorem{ass}{Assumption}
\newtheorem{prop}{Proposition}[section]

\newcommand{\diag}{\text{diag}}
\newcommand{\rk}{\text{rk}}

\newcommand{\indep}{\perp\!\!\!\perp}

\newcommand{\Prob}{\ensuremath{\mathbb{P}}}
\newcommand{\convP}{\stackrel{\Prob}{\longrightarrow}}
\newcommand{\convD}{\rightsquigarrow}

\newcommand{\N}{\mathcal{N}}
\newcommand{\eps}{\varepsilon}
\newcommand{\tr}{\text{tr}}

\renewcommand{\epsilon}{\varepsilon}

\DeclareMathOperator*{\argmin}{arg\,min}

\newcommand{\hidefromtoc}{%
  \setcounter{oldtocdepth}{\value{tocdepth}}%
  \addtocontents{toc}{\protect\setcounter{tocdepth}{-10}}%
}
\newcommand{\unhidefromtoc}{%
  \addtocontents{toc}{\protect\setcounter{tocdepth}{\value{oldtocdepth}}}%
}
\newcounter{oldtocdepth}

\theoremstyle{definition}

\newtheorem{rem}{Remark}[section]

{}

\title{Honest Inference for High-Dimensional Linear Models\thanks{We thank Yichong Zhang for the helpful comments. The financial support of the Ministry of Education Singapore AcRF Tier 1 grants RG132/23 and RG133/25 is gratefully acknowledged.}}
\author{
Dennis Lim\thanks{Singapore Management University. \ E-mail: dennis.lim.2019@phdecons.smu.edu.sg} 
\and 
Pengjin Min\thanks{Singapore Management University. \ E-mail: pjmin.2022@phdecons.smu.edu.sg}
\and 
Jonathan H. W. Tan\thanks{Nanyang Technological University. \ E-mail: j.tan@ntu.edu.sg}
\and 
Jiezheng Wei\thanks{Singapore Management University. \ E-mail: jzwei.2022@phdecons.smu.edu.sg}
}
\date{\today}

\begin{document}
\maketitle

\begin{abstract}
\noindent We propose a heteroskedasticity-robust and dimention-adaptive test for multiple linear restrictions in sparse high-dimensional linear regression models. It combines desparsified LASSO estimation with a ridge-regularized quadratic form statistic and critical values that adapt to the restriction dimension. Under stated regularity conditions, our approach is asymptotically valid under a fixed or diverging number of restrictions without specifying the asymptotic regime in advance. Our random-design theory allows independent non-identically distributed covariate rows with marginal sub-Weibull tails and heteroskedastic errors satisfying finite polynomial moments. Monte Carlo simulations demonstrate improved finite-sample size control relative to conventional Wald tests, particularly as the number of restrictions increases. \\

\noindent\textbf{Keywords:}
High-Dimensional Inference, Multiple Linear Restrictions, Desparsified LASSO, Strong Approximation, Heteroskedasticity, Heavy-Tailed Distributions. \\
\noindent\textbf{JEL Classification:} C12, C13, C55.

\end{abstract}

\hidefromtoc
\section{Introduction}

We consider the linear model
\begin{align}
    Y = {\bf X}\beta^0 + \epsilon,
    \label{eq:model}
\end{align}
where $Y \in \mathbb{R}^n$ is the outcome vector,
${\bf X} \in \mathbb{R}^{n \times p}$ is the \textit{design matrix},
which may be either fixed or random, and $\epsilon \in \mathbb{R}^n$ is
the error vector. We assume that, conditional on ${\bf X}$, the individual
errors $\{\epsilon_i\}_{i\in[n]}$ are independent but not necessarily
identically distributed, thereby allowing for heteroskedasticity. In the
random-design setting, we also allow the rows
$\{X_i\}_{i\in[n]}$ of ${\bf X}$ to be non-identically distributed.

We are concerned with the high-dimensional setting in which $p$ may be
much larger than $n$ and $\beta^0$ is sparse or approximately
sparse.\footnote{We focus on the case in which $\beta^0$ is exactly sparse,
but our theory can be extended to the approximately sparse case provided
that the approximation error is sufficiently small.} Suppose an analyst is interested in testing a system of linear
restrictions involving the subvector
$\beta_G^0 := (\beta_j^0)_{j\in G}$, where
$G\subseteq\{1,\ldots,p\}$ is an arbitrary index set with $|G|=m$.
Specifically, consider $r$ restrictions of the form
\begin{align*}
    R_G\beta_G^0=q,
    \qquad
    R_G\in\mathbb{R}^{r\times m},
    \qquad
    \operatorname{rank}(R_G)=r\leq m,
    \qquad
    q\in\mathbb{R}^r,
\end{align*}
where $R_G$ and $q$ are specified by the analyst. Let the
\textit{restriction matrix} $R\in\mathbb{R}^{r\times p}$ be obtained by
padding $R_G$ with zero columns outside $G$. Then
$R\beta^0=R_G\beta_G^0$, and the testing problem can be written as
\begin{align}
    H_0:R\beta^0=q
    \qquad \text{vs.} \qquad
    H_1:R\beta^0\neq q.
    \label{eq:H_0}
\end{align}
By construction, $R$ has the same full row rank as $R_G$ and has nonzero
columns only on the coordinates indexed by $G$. When $m=r=1$, the testing
problem reduces to inference on a single linear restriction involving one
coefficient. More generally, we allow both $m$ and $r$ to diverge with
$n$, subject to the rate conditions imposed below. An important special
case is the joint test that a group of coefficients is zero, for which
$R_G=I_m$, $r=m$, and $q=0_m$.

\subsection{Contribution to the Literature}

In this paper, we propose a heteroskedasticity-robust dimension-adaptive test for a broad class of linear restrictions in sparse high-dimensional regressions. This procedure is asymptotically valid across a fixed or diverging number of restrictions, avoiding the need to classify problems into low or increasing dimensional asymptotic regimes. We also establish the power properties of the proposed test under alternatives for fixed designs, complementing the asymptotic size control results.

Our contribution lies at the intersection of three strands of literature: 1) debiased inference in high-dimensional linear models, 2) testing many linear restrictions, and 3) heteroskedasticity-robust inference. We start by building on the literature on inference for low-dimensional parameters in high-dimensional linear models \citep[e.g.][]{belloni2012,belloni2014inferenceRES,van-de-geer(2014),javanmard2014confidence,zhang2014confidence,ning2017general}. Our work is positioned within the general framework of ``desparsified'' or ``debiased'' LASSO pioneered by \cite{van-de-geer(2014)}, \cite{javanmard2014confidence}, and \cite{zhang2014confidence}. However, existing results are typically derived under relatively homogeneous designs with relatively light-tailed conditions on regressors or errors. We relax these stringent requirements to allow for heteroskedasticity both in the error terms and independent potentially non-independent covariate rows. 

A common approach in the literature is to first establish the validity of a test under a fixed design and then show that an $i.i.d.$ random design satisfies the required fixed-design conditions with high probability. We depart from this approach in two respects. First, we provide high-level conditions for the validity of our test under a fixed design when general LASSO estimators are employed. These conditions are not tied to a particular random-design specification and may therefore be of independent interest. Second, we provide sufficient conditions under which a random design matrix
satisfies these fixed-design requirements. Importantly, our random-design results do not require the observations to be identically distributed. In
particular, we allow the observations to have different marginal distributions, subject to marginal sub-Weibull tail conditions. Our work accommodates the joint testing of multiple linear restrictions, with the number of restrictions allowed to increase with sample size. Finally, we characterize the asymptotic power of the proposed test under alternative hypotheses, complementing our theoretical results on its validity under the null.

Our work is closely related to \cite{caner2018asymptotically}, who also consider inference within the desparsified LASSO framework and allow for heteroskedastic, non-sub-Gaussian covariates and errors under finite-moment conditions. Their asymptotic normality result concerns a single linear contrast, although the number of coefficients entering that contrast is allowed to increase with the sample size. In contrast, we consider a general system of linear restrictions \(R\beta^0=q\), where the number of restrictions itself may diverge. Their formal results are established under i.i.d. covariates, although they note that their arguments could be extended to non-identically distributed covariates. Our random-design theory directly accommodates independent but non-identically distributed covariates under marginal sub-Weibull tail conditions.

Our work also speaks to research on testing many restrictions in high-dimensional linear regressions. \cite{shi2019linear} consider testing for a subvector of the regression coefficients for sparse generalized linear models, including sparse linear models as special cases, where the number of restrictions is smaller than the sample size of order $o(n^{1/3})$. They allow for heteroskedastic error terms but restrict them to be sub-exponential. \cite{zhu2018linear} propose a test statistic for a single contrast of the whole vector of the regression coefficients; \cite{zhang2025novel} extend this test to testing many restrictions where the number of restrictions can be larger than the sample size. Their analysis assumes that the regressors and errors are homoskedastic and the errors are sub-Gaussian, and require the number of restrictions to diverge for their test statistic to be valid. Building on recent work on concentration inequalities under general tail assumptions \citep[e.g.][]{kuchibhotla2022moving}, however, we allow a design matrix with heavier than sub-Gaussian tails and heteroskedastic errors satisfying only polynomial moments. Under favorable sparsity regimes, the restriction dimension can grow nearly at the rate of $n^{1/2}$ up to logarithmic factors. Importantly, our test statistic is valid regardless of whether the number of restrictions is bounded or diverges.

In a different context of moderate dimensional linear regressions, where the number of covariates is smaller than (but can be proportional to) the sample size, inference procedures that do not require a sparsity condition have been proposed under general heteroskedasticity \citep[e.g.][]{cattaneo2018inference,KSS2020,jochmans2022heteroscedasticity,Anatolyev-Solvsten(2023)}. In particular, \cite{Anatolyev-Solvsten(2023)} propose a test statistic for many restrictions that remains valid even when the number of restrictions is proportional to the sample size. We adapt their logic to the high-dimensional case, where the number of covariates can be much larger than the sample size and direct covariance inversion is unavailable, by combining desparsification with regularization of the restriction covariance matrix. Their procedure requires ${\bf X}^\top{\bf X}$ and
\[
\sum_{\ell\neq i,j,k} X_\ell X_\ell^\top
\]
to be invertible for every $i,j,k\in[n]$, where $X_\ell\in\mathbb{R}^p$ denotes the $\ell$-th row of ${\bf X}$. When $p\gg n$, however, these matrices are necessarily rank-deficient and hence non-invertible. Consequently, the OLS estimators underlying their procedure are not available in our setting, requiring us to modify their approach to accommodate high-dimensional covariates. \cite{Anatolyev-Solvsten(2023)} note the possibility of an alternative strategy when ${\bf X}^\top{\bf X}$ is non-invertible. However, as discussed in Footnote~7 of their paper, the resulting heteroskedasticity-robust Wald statistic still requires the inversion of a high-dimensional covariance matrix estimator, which poses substantial challenges for size control.

In the special case of testing many zero restrictions (i.e. $R = I_m$), \cite{dezeure2017high} and \cite{zhang2017simultaneous} propose bootstrap inference procedures based on the high-dimensional Gaussian approximation in \cite{chernozhukov2013}, while \cite{xia2018two} derive an analytical test statistic. These methods allow the number of restrictions to be as large as the number of covariates. In contrast, we set the number of restrictions to be relatively small compared with the sample size, but allow for a more general form of restrictions. 

We summarize our contributions in Table \ref{tab:comp}.
The distributional assumptions satisfy the following
hierarchy of implications:
joint Gaussianity $\Rightarrow$ joint sub-Gaussianity
$\Rightarrow$ joint sub-Weibullity
$\Rightarrow$ marginal sub-Weibullity,
where the sub-Weibull order satisfies $0<\nu\leq 2$.

\begin{table}[ht]
\centering
\caption{Comparison with Related Tests}
\label{tab:comp}
\footnotesize

\begin{adjustbox}{max width=\linewidth}
\begin{tabular}{llll}
\toprule
\textbf{Paper}
& \textbf{Design assumptions}
& \textbf{Restrictions}
& \textbf{Error assumptions} \\
\midrule

VBRD14
& $i.i.d.$ joint sub-Gaussian rows
& $R$ selects components of $\beta^0$
& Homoskedastic, sub-Gaussian \\

DBZ17
& $i.i.d.$ joint sub-Gaussian rows
& $R=I_m$; fixed/diverging $r$; $\log m=o(n^{1/7})$
& Heteroskedastic, finite fourth moment \\

ZB18
& $i.i.d.$ joint Gaussian rows
& General $R$; $r=1$; $m=p$
& Homoskedastic, sub-Gaussian \\

SSCL19
& $i.i.d.$ joint sub-Gaussian rows
& General $R$; fixed/diverging $r$; $m=o(n^{1/3})$
& Heteroskedastic, sub-exponential \\

ZYL25
& $i.i.d.$ joint sub-Gaussian rows
& General $R$; $r\rightarrow\infty$
& Homoskedastic, sub-Gaussian \\

\textbf{Our paper}
& Independent, non-identically distributed,
& General $R$; fixed/diverging $r$;
$m=o\!\left(n/(s_0\log p)^2\right)$
& Heteroskedastic, finite fourth moment \\& marginally sub-Weibull rows \\

\bottomrule
\end{tabular}
\end{adjustbox}

\vspace{2mm}

\begin{minipage}{\linewidth}
\footnotesize
\textit{Note: VBRD14: \cite{van-de-geer(2014)}; DBZ17: \cite{dezeure2017high}; ZB18: \cite{zhu2018linear}; SSCL19: \cite{shi2019linear}; ZYL25: \cite{zhang2025novel}} 
\end{minipage}

\end{table}

\subsection{Notation}
For a real vector $a$, let $\diag (a)$ denote a diagonal matrix with diagonal entries given by $a$. For a real matrix $A \in \mathbb{R}^{s_1 \times s_2}$, denote its transpose by $A^\top$, its trace by $\tr (A)$, its rank by $\rk (A)$, its Frobenius norm by $\lVert A \rVert_{F}$, its $(i,j)$-th entry by $A_{ij}$. For any $p \in [1,\infty]$, let $\|\cdot\|_p$ denote the usual vector $\ell_p$-norm. 
We define the induced (operator) matrix $p$-norm by $\|A\|_{p} \;:= \sup_{\|x\|_{p}=1}\|Ax\|_{p}$. We also define the entrywise max norm by $\|A\|_{\max} \;:=\; \max_{1\le i \le s_1,\; 1\le j \le s_2} |A_{ij}|$. For a square matrix $A$, let $\lambda_{\max} (A)$, $\lambda_i (A)$, $\lambda_{\min} (A)$  denote the largest, $i$-th largest and smallest eigenvalues, respectively.  For $a,b\in\mathbb{R}$, we write $a \lesssim b$ if there exists a universal constant $C>0$, independent of $n$, such that $a \le C b$, and $a \asymp b$ if $a \lesssim b$ and $b \lesssim a$. We let $\overline{c}$ and $\underline{c}$ respectively denote constants for upper and lower bound. The generic notation of probability measures is denoted by $\Prob$.

\subsection{Structure of the paper}
The remainder of the paper is organized as follows.
Section~\ref{sec:test_statistic} motivates and constructs our proposed test.
Section~\ref{section:strong_approx} establishes a strong approximation result
for a general class of quadratic-form statistics.
Section~\ref{section:size_control} develops the asymptotic theory underlying
the validity of our test. We first establish, in
Theorem~\ref{thm:size-control_fixed_design}, sufficient conditions for size
control under a fixed design generally stated, and turn to random designs.
Proposition~\ref{prop:sufficient_conditions_for_fixed_design} provides general sufficient conditions under which a random design satisfies the requirements of
Theorem~\ref{thm:size-control_fixed_design}, thereby establishing validity in
the random-design setting. Theorem~\ref{thm:size-control_random_design_extension} extends the results to
the conservative LASSO.
Section~\ref{sec:sim} examines the finite-sample performance of the proposed
test through Monte Carlo simulations. Section~\ref{sec:empirical_application} illustrates its use in an empirical application. Section~\ref{sec:conclusion} suggests future research and concludes.

\section{Test Statistic: Motivation and Construction} 
\label{sec:test_statistic}
\subsection{The General Idea} \label{subsec:general-idea}
In this section, we outline the central idea from \cite{van-de-geer(2014)} (see also \cite{javanmard2014confidence} and \cite{zhang2014confidence}), which provides motivation for both the problem and our approach. Their key insight is to invert the Karush-Kuhn-Tucker (KKT) condition characterizing the least absolute shrinkage and selection operator (LASSO) solution of \cite{tibshirani(1996)}. To formalize the problem, recall that the LASSO estimator is defined as
\begin{align}
    \widehat{\beta}(\lambda_0) \in \arg \min_{\beta \in \mathbb{R}^p} \frac{1}{n} \lVert Y - {\bf X}\beta \rVert_2^2 + 2\lambda_0 \lVert \beta \rVert_1,
    \label{eq:beta-hat}
\end{align}
where $\lambda_0 >0$ is the regularization parameter. For notational simplicity, we write $\widehat{\beta}\equiv\widehat{\beta}(\lambda_0)$ throughout the remainder of the paper, unless the dependence on $\lambda_0$ needs to be made explicit. The {\it desparsified LASSO estimator} in \cite{van-de-geer(2014)} is defined as
\begin{align}
    \widehat{b} := \widehat{\beta} + \frac{1}{n} \widehat{\Theta} {\bf X}^\top ( Y - {\bf X}\widehat{\beta}),
    \label{eq:b-hat-general}
\end{align}
where $\widehat{\Theta} \in \mathbb{R}^{p \times p}$ is the ``approximate inverse'' of $\widehat{\Sigma} := {\bf X^\top X}/n$, and we have 
\begin{align}
    \sqrt{n} (\widehat{b} - \beta^0) = W - \Delta, \quad W := \widehat{\Theta} {\bf X}^\top \eps /\sqrt{n},
     \label{eq:beta_hat_inference}
\end{align}
where $\Delta := \sqrt{n}(\widehat{\Theta} \widehat{\Sigma} - I_p)(\widehat{\beta} - \beta^0)$ is the remainder term.

To construct $\widehat{\Theta}$, we follow \cite{van-de-geer(2014)} to use the LASSO for the node-wise regression on the design matrix ${\bf X}$. For each $j = 1,...,p$, denote by $X_j\in \mathbb{R}^n$ the $j$-th column vector of ${\bf X}$ and
\begin{align}
    \widehat{\gamma}_j(\lambda_j) \in \argmin_{\gamma \in \mathbb{R}^{p-1}} \frac{1}{n} \lVert X_j - {\bf X}_{-j} \gamma  \rVert_2^2 + 2\lambda_j \lVert \gamma \rVert_1,
    \label{eq:gamma_j}
\end{align}
for some regularization parameter $\lambda_j >0$; for notational simplicity, we write
$\widehat{\gamma}_j \equiv \widehat{\gamma}_j(\lambda_j)$ throughout the remainder of the paper, unless the dependence on $\lambda_j$ needs to be made explicit. We can write the elements of $\widehat{\gamma}_j$ as $\{ \widehat\gamma_{j,k}: k=1,...,p, k \neq j\}$, and define 
\begin{align*}
    \widehat{C} :=
    \begin{bmatrix}
        1 & -\widehat{\gamma}_{1,2}& \cdots &-\widehat\gamma_{1,p} \\
        -\widehat{\gamma}_{2,1}& 1 & \cdots& -\widehat\gamma_{2,p} \\
        \vdots & \vdots & \ddots & \vdots \\
        -\widehat{\gamma}_{p,1} & -\widehat{\gamma}_{p,2} & \cdots & 1
    \end{bmatrix},
\end{align*}
and 
\begin{align*}
    \widehat{T}^2 := \diag(\widehat{\tau}_1^2,...,\widehat{\tau}^2_p),
\end{align*}
with $\widehat{\tau}_j^2 := \lVert X_j - {\bf X}_{-j} \widehat{\gamma}_j \rVert_2^2/n + \lambda_j \lVert \widehat{\gamma}_j\rVert_1 $ for each $j = 1,...,p$. We then define
\begin{align*}
    \widehat{\Theta} := \widehat{T}^{-2} \widehat{C}.
\end{align*}

In \cite{van-de-geer(2014)}, the properties of $\widehat \beta$ as well as $\widehat \Theta$ are derived when $\bf X$ is sub-Gaussian or bounded under homoskedasticity; they further assume $\eps$ to be i.i.d. sub-exponential when $\bf X$ is sub-Gaussian. Using the concentration inequalities under general tail assumptions (e.g. \cite{kuchibhotla2022moving}), we derive these properties when $\bf X$ is sub-Weibull (see Appendix \ref{sec:appendix_sub_weibull} for definition) and $\eps$ has only polynomial moments under general heteroskedasticity.

\subsection{Our Test Statistic} \label{sec:Test_Statistic}

To test the hypothesis in (\ref{eq:H_0}), we may pre-multiply both sides of \eqref{eq:beta_hat_inference} by the matrix \( R \) to obtain
\begin{align}
    \sqrt{n} (R\widehat{b} - R\beta^0) = RW - R \Delta,
    \label{eq:beta_hat_inference:R}
\end{align}
where, under the null, we have \( q = R\beta^0 \in \mathbb{R}^r \). Under stated regularity conditions, the remainder term $R \Delta$ is negligible compared to $RW$. If $r$ is fixed, it is then straightforward to derive a Wald test based on the asymptotic normality of $RW$ (e.g. \cite{van-de-geer(2014)}), or a heteroskedasticity-robust Wald test that also takes into account heteroskedasticity in error terms (e.g. \cite{caner2018asymptotically}). However, these test statistics might not be valid when we consider $r$ as diverging. In the context of moderate dimensional linear regressions, a modified Fisher's $F$ test for testing the hypothesis in (\ref{eq:H_0}) is proposed by \cite{Anatolyev-Solvsten(2023)}, which provides size-control under general heteroskedasticity regardless of whether $r$ is bounded or diverging. 

This can be attractive in practice, especially when analysts do not want to make a preliminary judgment about whether $r$ is in the low- or high-dimensional regime. Another advantage of the modified $F$ test is that, in contrast with the heteroskedasticity-robust Wald test, it does not involve the inverse of an estimated sandwich matrix whose entries carry the noisy residuals, and is thus expected to be more stable in finite samples as demonstrated by our simulations.

We adapt this $F$ test to the current setting, with two important modifications. First, we replace $\widehat{\Sigma}^{-1}$ in $R \widehat{\Sigma}^{-1} R^\top$, which is not well defined when $p > n$,  with $\widehat{\Theta} \widehat{\Sigma} \widehat{\Theta}^\top$, which can be interpreted as an approximation of $\widehat{\Sigma}^{-1}$. Second, we introduce ridge regularization to $R \widehat{\Theta} \widehat{\Sigma} \widehat{\Theta}^\top R^\top$, which is motivated by the observation that inverting $R \widehat{\Theta} \widehat{\Sigma} \widehat{\Theta}^\top R^\top$ can be unstable or even undefined in finite samples, especially when $r$ is large. We further derive a novel strong approximation result for quadratic forms, which allows us to construct a critical value that leads to valid inference under both fixed and random designs. 

We introduce the ridge-regularized projection matrix
\begin{align*}
    P_\theta := \frac{1}{n} {\bf X} \widehat{\Theta}^\top R^\top \left( R\widehat{\Theta} \widehat{\Sigma} \widehat{\Theta}^\top R^\top + \theta I_r \right)^{-1} R \widehat{\Theta} {\bf X}^\top,
\end{align*}
where the ridge regularization parameter $\theta > 0$ is determined solely by the matrix ${\bf X} \widehat{\Theta}^\top R^\top$. We denote the $(i,j)$-th entry of $P_\theta$ by $P_{\theta,ij}$. The test statistic that we propose in this paper is given by
\begin{align}
    \widehat{Q}(q) := \frac{b(q)^\top \left( R\widehat{\Theta} \widehat{\Sigma} \widehat{\Theta}^\top R^\top + \theta I_r \right)^{-1} b(q)}{\sum_{i=1}^n P_{\theta,ii} \widehat{\sigma}_i^2}, \quad b(q) := \sqrt{n}(R \widehat{b} - q) \in \mathbb{R}^r,
    \label{Q_hat_statistic_definition:1}
\end{align}
where $\widehat{\sigma}_i^2 := (Y_i - X_i \widehat{\beta})^2$ is the estimator for $\sigma_i^2: = \text{Var}(\epsilon_i|{\bf X})$ and $\widehat{\beta}$ is given in \eqref{eq:beta-hat}. Here we use $\widehat{\beta}$ in $\widehat{\sigma}_i^2$ as we can derive the consistency of $\widehat{\sigma}_i^2$ directly from the properties of $\widehat{\beta}$. We also define the effective degrees of freedom associated with the regularization level $\theta$ as $K_\theta := \sum_{i \in [n]} P_{\theta,ii}$. The regularization parameter $\theta$ is selected to ensure desirable theoretical properties. Specifically, we set
\begin{align}
\theta \equiv \text{the } \lceil \varpi r_n \rceil\text{-th largest eigenvalue of }
R \widehat{\Theta} \widehat{\Sigma} \widehat{\Theta}^\top R^\top,
\label{eq:theta}
\end{align}
with $0<\varpi < 1$ that can depend on ${\bf X} \widehat{\Theta}^\top R^\top$, and $r_n := \rk({\bf X} \widehat{\Theta}^\top R^\top)$. Note that $r_n \leq r$, where $r$ is the number of restrictions in our null hypothesis \eqref{eq:H_0}, which we allow to diverge. In our setup, $r_n$ is of the same order as $r$, but we allow for $r_n < r$, which motivates us to use the ridge regularization.

\begin{rem}
    When $\theta$ is chosen arbitrarily, it is possible that $K_\theta$ and $r_n$ are not of the same order. This can lead to non-standard limiting distribution and complicate inference (e.g. \cite{KSS2020} and \cite{LWZ2024}). The choice in (\ref{eq:theta}) ensures that the order of $K_\theta$ is the same as $r_n$. In addition, under stated regularity conditions, we show that $\varpi$ can be selected in such a way that $\theta$ is bounded away from zero asymptotically when $\bf X$ is random (see Proposition \ref{prop:ridge_parameter}).
\end{rem}

\begin{rem}
    As noted by \cite{caner2018asymptotically}, computing $R \widehat \Theta$ can be much easier than computing the whole $\widehat \Theta$, depending on the structure of $R$. For example, if $R = e_j^\top \in \mathbb{R}^{1 \times p}$ where $e_j$ is the $j$-th unit vector in $\mathbb{R}^p$, then only the $j$-th row of $\widehat \Theta$ needs to be computed. More generally, at most $m$ node-wise LASSO regressions are required for computing $R \widehat \Theta$. This reduces computational demands compared to running the node-wise LASSO regression $p$ times when $m \ll p$. 
\end{rem}

\subsection{Construction of Critical Value}\label{sec:CV}
Our analytical test compares the test statistic $\widehat Q(q)$ with a robust critical value $\mathcal{C}_{\alpha}$ that is valid when $r$ is bounded as well as when $r$ diverges, where $\alpha \in (0,1)$ is the significance level. We will reject $H_0: R\beta^0 = q$ at the $\alpha$ significance level whenever 
\begin{align}
    \widehat Q(q) > \mathcal{C}_{\alpha}.
    \label{eq:rejection-criteria}
\end{align}
Inspired by \cite{Anatolyev-Solvsten(2023)}, we define this critical value as
\begin{align}
    \mathcal{C}_{\alpha} := 1 + \frac{ \sqrt{ K_\theta \widehat{\Phi}}}{  \sum_{i \in [n]} P_{\theta,ii} \widehat{\sigma}_i^2 }  \left( \frac{ q_{1-\alpha} (F_{\widetilde{w}_n})-1}{ \sqrt{2 \sum_{i \in [r_n] } \widetilde{w}_{i,n}^2} } \right),
    \label{eq:critical_value}
\end{align}
where $\{\widetilde{w}_{i,n}\}_{i \in [r_n]}$ and $\widehat{\Phi}$ are defined later. Here $q_{1-\alpha}(F_{\widetilde{w}_n})$ is the $(1-\alpha)$-quantile of the weighted chi-square random variable 
\[
F_{\widetilde{w}_n} := \sum_{i \in [r_n]} \widetilde{w}_{i,n}\,\chi^2_{1,i}
\]
conditional on the data $(Y, {\bf X})$, where $\{\chi_{1,i}^2\}_{i \in [r_n]}$ are $r_n$ independent chi-squared random variables with one degree of freedom. This quantile can be obtained via simulation given the weights~$\{\widetilde{w}_{i,n}\}_{i \in [r_n]}$.

To see the intuition of the critical value, we need to explain the limiting distribution of our test statistic $\widehat Q(q)$ under the null $R \beta^0= q$. When $r$ is bounded, by using the strong approximation result we derive in this paper, we show that
\begin{align}
    \widehat{Q}(q) \overset{d}{\approx} \mathcal{Z}^\top \mathcal{D}_n \mathcal{Z} = \sum_{i \in [r_n]} w_{i,n} \chi^2_{1,i},
    \label{eq:heuritsics_fixed_K}
\end{align}
where $\mathcal{Z} \overset{d}{=} \mathcal{N}(0,I_{r_n}) $ and  $\mathcal{D}_n:= \diag(w_{1,n},...,w_{r_n,n})$ contains the decreasingly ordered non-zero eigenvalues of 
\begin{align*}
\Omega := \frac{  (R \widehat{\Theta} {\bf X}^\top \Lambda {\bf X} \widehat{\Theta}^\top R^\top)^{1/2} (R \widehat{\Theta} \widehat{\Sigma} \widehat{\Theta}^\top R^\top + \theta I_r)^{-1} (R \widehat{\Theta} {\bf X}^\top \Lambda {\bf X} \widehat{\Theta}^\top R^\top)^{1/2} }{n \sum_{i \in [n]} P_{\theta,ii} \sigma_i^2},
\end{align*}
with $\Lambda := \diag(\sigma_1^2,\cdots,\sigma_n^2)$. Note that $\tr (\Omega) = 1$ and $\Omega$ is positive semi-definite, so that $w_n := (w_{1,n},\cdots,w_{r_n,n})^\top$ are positive and sum up to one. Using the ``plug-in" rule, we let $\widetilde{w}_n := (\widetilde{w}_{1,n},\cdots,\widetilde{w}_{r_n,n})^\top$ be the first $r_n$ decreasingly ordered eigenvalues of 
\begin{align*}
\widehat{\Omega}:= \frac{ (R \widehat{\Theta} {\bf X}^\top \widehat\Lambda {\bf X} \widehat{\Theta}^\top R^\top)^{1/2} (R \widehat{\Theta} \widehat{\Sigma} \widehat{\Theta}^\top R^\top + \theta I_r)^{-1} (R \widehat{\Theta} {\bf X}^\top \widehat\Lambda {\bf X} \widehat{\Theta}^\top R^\top)^{1/2} }{n \sum_{i \in [n]} P_{\theta,ii} \widehat{\sigma}_i^2},
\end{align*}
where $\widehat \Lambda := \diag(\widehat{\sigma}_1^2,\cdots ,\widehat{\sigma}_n^2)$. Then, note that the critical value $\mathcal{C}_{\alpha}$ can be rearranged as
\begin{align*}
    q_{1-\alpha}(F_{\widetilde{w}_n}) + (q_{1-\alpha}(F_{\widetilde{w}_n})-1) \left( \frac{  {\sqrt{ K_\theta \widehat{\Phi}}}  }{\left({\sum_{i \in [n]} P_{\theta,ii} \widehat{\sigma}_i^2}\right)\sqrt{2 \sum_{i \in [r_n]} \widetilde{w}_{i,n}^2 } } -1 \right),
\end{align*}
and we show that the second term is asymptotically negligible. In addition, the first term is close to $q_{1-\alpha}(F_{w_n})$, the $(1-\alpha)$-quantile of the weighted chi-square random variable   
\[
F_{w_n} := \sum_{i \in [r_n]} w_{i,n}\,\chi^2_{1,i}
\]
conditional on $\bf X$. This guarantees that our analytical test achieves the correct asymptotic size under bounded $r$. 

If, however, $r$ diverges, the weighted chi-square random variable converges in distribution to the standard normal distribution
\begin{align*}
   \left( \sum_{i \in [r_n]} 2\tilde w_{i,n}^2 \right)^{-1/2} (F_{\tilde{w}_n}-1) \convD \N(0,1),
\end{align*}
and we show that the standardized $\widehat Q(q)$ also converges in distribution to the standard normal distribution
\begin{align*}
   \Phi^{-1/2} \left[\frac{1}{\sqrt{K_\theta}}\sum_{i \in [n]} P_{\theta,ii} \widehat{\sigma}_i^2\right] \left(\widehat Q(q)-1\right) \convD \N(0,1),
\end{align*}
where 
\begin{align*}
    \Phi := \frac{2}{K_\theta} \sum_{i \in [n]}\sum_{j \neq i}P_{\theta,ij}^2 \sigma_i^2 \sigma_j^2.
\end{align*}
Motivated by the work of \cite{crudu2021} and \cite{dovi2024ridge}, we consider the estimator
\begin{align}
\widehat{\Phi} := \frac{2}{K_\theta} \sum_{i \in [n]} \sum_{j \neq i} P_{\theta,ij}^2 \widehat{\sigma}_i^2 \widehat{\sigma}_j^2 \label{Phihat_def}
\end{align}
for $\Phi$. As we show the consistency of $\widehat \Phi$, the asymptotic size-control of our test is also guaranteed under diverging $r$.

\begin{rem}
In our simulations, we use the following finite-sample adjustment of
$\widehat{\Phi}$, motivated by the diagonal components:
\begin{align*}
    \widetilde{\Phi}
    :=
    \widehat{\Phi}
    +
    \frac{1}{K_\theta}
    \sum_{i\in[n]}
    P_{\theta,ii}^2\widehat{\sigma}_i^4.
\end{align*}
The motivation for this adjustment is that, although the diagonal component of
$\eps^\top P_\theta \eps/\sqrt{K_\theta}$ is asymptotically negligible under
our assumptions, it may still contribute non-negligible variability in finite
samples. In particular, the diagonal component is given by
\begin{align*}
    \frac{1}{\sqrt{K_\theta}}
    \sum_{i\in[n]}P_{\theta,ii}\eps_i^2,
\end{align*}
and the additional term
\begin{align*}
    \frac{1}{K_\theta}
    \sum_{i\in[n]}P_{\theta,ii}^2\widehat{\sigma}_i^4
\end{align*}
estimates its contribution to the variance. Thus, $\widetilde{\Phi}$ accounts
for a source of finite-sample variability that is omitted from
$\widehat{\Phi}$ but vanishes asymptotically. Since
\begin{align*}
    \frac{1}{K_\theta}
    \sum_{i\in[n]}P_{\theta,ii}^2\widehat{\sigma}_i^4
    = o_p(1)
\end{align*}
under the conditions of our asymptotic analysis, replacing
$\widehat{\Phi}$ by $\widetilde{\Phi}$ does not affect any of our
asymptotic results.
\end{rem}

\section{Strong Approximation of Quadratic Forms} \label{section:strong_approx}
This section focuses on approximating the quadratic form in the numerator of \eqref{Q_hat_statistic_definition:1} by a Gaussian counterpart, without imposing any restriction on the dimension \( r \). As this result may be of independent interest to readers, we present it in a standalone section with greater generality than required for the results of this paper. For this purpose, let $P({\bf X})$ be a generic $n \times n$ symmetric positive semi-definite matrix that depends only on $\bf X$, and satisfies the projection property that $\sum_{j \in [n]} P_{ij}({\bf X})^2 \leq P_{ii}({\bf X}), \forall i \in [n]$. We treat $\bf X$ as fixed throughout the section for simplicity. We also impose the following assumptions on the data generating process (DGP).

\begin{ass}

The following assumptions hold:

\begin{enumerate}
    \item The errors $\{ \epsilon_i \}_{i \in [n]}$ are independent and mean-zero conditional on {\bf X}.
    
    \item There exists a universal constant $\underline{c}>0$ such that $\inf_{i \in [n]}\sigma_i^2 \geq \underline{c}$.
    
    \item There exists a universal constant $\overline{c}<\infty$ such that $\sup_{i \in [n]} \mathbb{E}[|\epsilon_i|^h|{\bf X}] \leq \overline{c}$ for some $h \geq 4$.
    
    \item The maximum diagonal element of $P({\bf X})$, $\tilde p_n := \max_{i \in [n]} P_{ii}({\bf X})$, satisfies ${\tilde p_n}/{K_n} = O (n^{-1})$ with $K_n := \sum_{i \in [n]} P_{ii}({\bf X})$.
\end{enumerate}
\label{ass:mean_zero}
\end{ass}

\begin{rem}
    Assumptions \ref{ass:mean_zero}.1--\ref{ass:mean_zero}.3 are standard regularity conditions and allow for general heteroskedasticity in error terms. Assumption \ref{ass:mean_zero}.4 requires the maximum leverage score $\tilde p_n$ to vanish sufficiently quickly. Consider the special case where $P({\bf X})$ is the ridge-regularized projection matrix $P_{\theta}$ with $\theta = 0$. If the diagonals of $P_\theta$ are well-balanced (e.g. $p_n \leq \overline{c} \, {r_n}/{n}$ for some constant $\overline{c} < \infty$), then noting $K_\theta = \tr(P_\theta) = r_n$, we obtain $p_n/K_\theta = p_n / r_n \lesssim 1/n$.
\end{rem}

\begin{thm}[Strong Approximation for Quadratic Forms]
Suppose Assumption~\ref{ass:mean_zero} holds. Let \( \{ \mathcal{E}_i \}_{i \in [n]} \) be a sequence of independent Gaussian random variables such that \( \mathcal{E}_i \overset{d}{=} \mathcal{N}(0, \sigma_i^2) \), where each \( \mathcal{E}_i \) is independent of \( \{ \epsilon_j \}_{j \in [n]} \). Suppose in addition that $$\frac{1}{K_n} \sum_{i \in [n]} \sum_{j \neq i} P_{ij}^2({\bf X}) \gtrsim 1.$$ Then for any $\zeta \in (0,1)$
\begin{align}
        \sup_{y \in \mathbb{R}}\left| \mathbb{P}_\epsilon \left( \sum_{i,j \in [n], j \neq i}  P_{ij}({\bf X}) \epsilon_i \epsilon_j  \leq y \right) - \mathbb{P}_\mathcal{E} \left( \sum_{i,j \in [n], j \neq i} P_{ij}({\bf X}) \mathcal{E}_i \mathcal{E}_j  \leq y \right) \right| \leq C_\zeta \left( \frac{\tilde p_n}{K_n} \right)^{\frac{1-\zeta}{14-2\zeta}}
\label{eq:strong-approximation:1}
\end{align}
where $C_\zeta$ depends only on $\zeta$. Furthermore, for any $\ell>0$,
\begin{align}
\sup_{y \in \mathbb{R}} \left| \mathbb{P}_\epsilon \left( \epsilon^\top P({\bf X}) \epsilon \leq y \right) 
- \mathbb{P}_\mathcal{E} \left( \mathcal{E}^\top P({\bf X}) \mathcal{E} \leq y \right) \right| \leq C_\zeta \left( \frac{\tilde p_n}{K_n} \right)^{\frac{1-\zeta}{14-2\zeta}} + C_\zeta \ell^{\frac{1-\zeta}{2}} + \overline{C}\frac{\tilde p_n}{\ell^2},
\label{eq:strong-approximation:2}
\end{align}
where $\overline{C}$ is a universal constant.
\label{theorem:strong_approx}
\end{thm}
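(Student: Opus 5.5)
We plan a Lindeberg-type replacement (universality) argument for the off-diagonal quadratic form, combined with smoothing and anti-concentration of the Gaussian chaos, and then handle the diagonal part separately.

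Normalize first. Put $V_n^2 := 2\sum_{i}\sum_{j\neq i}P_{ij}^2\sigma_i^2\sigma_j^2$. By Assumption~\ref{ass:mean_zero}.2--3 and the extra condition $K_n^{-1}\sum_i\sum_{j\neq i}P_{ij}^2\gtrsim 1$, we have $V_n^2\asymp K_n$; the upper bound uses $\sum_{ij}P_{ij}^2\le K_n$. We work with
\[
S(\xi):=V_n^{-1}\sum_{i\neq j}P_{ij}\xi_i\xi_j ,
\]
a degree-two homogeneous polynomial with zero diagonal, hence multilinear. Its influences are $\mathrm{Inf}_i=V_n^{-2}\sum_{j\neq i}P_{ij}^2\sigma_j^2\lesssim \tilde p_n/K_n$, using the projection property $\sum_jP_{ij}^2\le P_{ii}$.

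\textbf{Lindeberg swap.} Fix a smooth $g\in C^3_b$. Swap $\epsilon_i$ for $\mathcal E_i$ one index at a time. Write $S=A_i+\xi_iB_i$, where $B_i=2V_n^{-1}\sum_{j\neq i}P_{ij}\xi_j$ does not involve $\xi_i$. Taylor-expand $g$ to third order in $\xi_iB_i$. The first two moments of $\epsilon_i$ and $\mathcal E_i$ agree, so the first- and second-order terms cancel. The remainder is bounded by $\|g'''\|_\infty(\mathbb E|\epsilon_i|^3+\mathbb E|\mathcal E_i|^3)\,\mathbb E|B_i|^3$. Hypercontractivity is not available for the $\epsilon$-coordinates. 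Instead we use Rosenthal or Burkholder with the fourth moments from Assumption~\ref{ass:mean_zero}.3 ($h\ge4$), which gives $\mathbb E|B_i|^3\lesssim \mathrm{Inf}_i^{3/2}$. Summing over $i$,
\[
\sum_i \mathrm{Inf}_i^{3/2}\le \max_i\mathrm{Inf}_i^{1/2}\sum_i\mathrm{Inf}_i\lesssim(\tilde p_n/K_n)^{1/2},
\]
because $\sum_i\mathrm{Inf}_i=O(1)$. Hence $|\mathbb E g(S(\epsilon))-\mathbb E g(S(\mathcal E))|\lesssim \|g'''\|_\infty(\tilde p_n/K_n)^{1/2}$.

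\textbf{Smoothing and anti-concentration.} Pass from smooth $g$ to indicators by taking $g$ to be a smoothed step of width $\delta$, so $\|g'''\|\lesssim\delta^{-3}$. The Kolmogorov error then becomes
\[
\delta^{-3}(\tilde p_n/K_n)^{1/2}+\sup_y\mathbb P(|S(\mathcal E)-y|\le\delta).
\]
For the second term we use Carbery--Wright anti-concentration for a degree-two Gaussian polynomial of unit variance, which gives $C\delta^{1/2}$. To reach the exact exponent $\frac{1-\zeta}{14-2\zeta}$, we would use a slightly lossy version: Carbery--Wright in $L^{q}$ form with $q$ close to $1$, i.e.\ small-ball probability $\lesssim\delta^{(1-\zeta)/2}$. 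We would also truncate the $\epsilon_i$ at level $(K_n/\tilde p_n)^{c}$ so that only moments of order $h\ge4$ are needed. Optimizing $\delta$ then balances $\delta^{-3}(\tilde p_n/K_n)^{1/2}$ against $\delta^{(1-\zeta)/2}$ and produces a rate of the stated form, namely $(\tilde p_n/K_n)^{(1-\zeta)/(14-2\zeta)}$ up to matching constants; the exact arithmetic depends on how the truncation and third-moment bookkeeping are arranged. The bound holds uniformly in $y$ after rescaling by $V_n$, which gives \eqref{eq:strong-approximation:1}.

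\textbf{Diagonal part.} For \eqref{eq:strong-approximation:2}, write
\[
\epsilon^\top P\epsilon=\sum_{i\neq j}P_{ij}\epsilon_i\epsilon_j+D_\epsilon, \qquad D_\epsilon=\sum_iP_{ii}\epsilon_i^2 ,
\]
and define $D_{\mathcal E}$ analogously. The means of $D_\epsilon$ and $D_{\mathcal E}$ coincide, namely $\sum_iP_{ii}\sigma_i^2$. Their variances are $\lesssim\sum_iP_{ii}^2\le\tilde p_nK_n$. Chebyshev on the event $|D-\mathbb E D|>\ell V_n$ then gives a probability $\lesssim \tilde p_n K_n/(\ell^2K_n)=\tilde p_n/\ell^2$, where the deviation is measured in units of $V_n\asymp\sqrt{K_n}$. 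On the complementary event, the shift in $y$ is at most $\ell$ in normalized units. This shift is absorbed by Gaussian anti-concentration of the off-diagonal chaos, contributing $C_\zeta\ell^{(1-\zeta)/2}$, plus the error from \eqref{eq:strong-approximation:1}. This yields \eqref{eq:strong-approximation:2}.

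The main obstacle is the Lindeberg step. Only $h\ge4$ moments are available, so the third-order remainder requires a careful moment bound on $B_i$. It also requires the truncation and smoothing trade-off to be tuned so that the final exponent comes out as $\frac{1-\zeta}{14-2\zeta}$, with $\zeta$ reflecting the slightly lossy anti-concentration bound.
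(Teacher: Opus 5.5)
Your proposal is correct and follows essentially the same route as the paper, which adapts Step 2 of \cite{LWZ2024}. The off-diagonal part is handled by a Lindeberg-type swap whose smoothing error $\delta^{-3}(\tilde p_n/K_n)^{1/2}$ is balanced against Gaussian anti-concentration $\delta^{(1-\zeta)/2}$, which is exactly where the exponent $\frac{1-\zeta}{14-2\zeta}$ comes from. The full quadratic form is then handled by Chebyshev on the centered diagonal part (giving $\lesssim \tilde p_n/\ell^2$) together with a shift/anti-concentration argument. Your hedges are unnecessary: bounded third and fourth moments already give $\mathbb{E}|B_i|^3\lesssim \mathrm{Inf}_i^{3/2}$ without any truncation, and Carbery--Wright yields $\delta^{1/2}$ directly, so the balancing produces the stated exponent exactly (indeed the slightly stronger $(\tilde p_n/K_n)^{1/14}$, since $\tilde p_n\le K_n$).
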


\begin{rem} 
The term $C_\zeta \ell^{(1-\zeta)/2}$ in \eqref{eq:strong-approximation:2} arises from an anti-concentration inequality for non-Gaussian errors, which bounds the probability that the off-diagonal sum $\sum_{i \in [n]} \sum_{j \neq i} P_{ij}({\bf X}) \epsilon_i \epsilon_j$ falls within a  neighborhood of radius $\ell$ centered at $y$, uniformly over $y$; see Lemma \ref{lem:anti-concentration:main} in Appendix \ref{sec:appendix_main_thm}. The term $\tilde p_n / (\ell^2)$ captures the approximation error induced by incorporating the diagonal elements of $P({\bf X})$.
\end{rem}

If we assume that $K_n/n = o(1)$, then $\tilde p_n = O(K_n/n) = o(1)$, and we can set $\ell = (K_n/n)^{1/4} = o(1)$ in the right-hand side of \eqref{eq:strong-approximation:2}, which yields
$\tilde p_n/\ell^2=O(\sqrt{K_n/n})=o(1)$, so that the errors $\epsilon_i$ can be treated as approximately normal under such quadratic forms. For the ridge-regularized projection matrix $P_{\theta}$, it suffices to require that $r = o(n)$, since then $K_\theta/n = O(r_n/n) = o(1)$ by $r_n \leq r$. The following corollary then follows immediately.

\begin{cor} \label{cor:strong_approx}
    Suppose Assumption \ref{ass:mean_zero}.1-\ref{ass:mean_zero}.3 hold and that $p_n = O(K_\theta/n)$ where $p_n := \max_{i \in [n]} P_{\theta,ii}$, and $\sum_{i \in [n]} \sum_{j \neq i} P_{\theta,ij}^2 / K_\theta \gtrsim 1$. If we further assume $r = o(n)$, then
    \begin{align*}
        \sup_{y \in \mathbb{R}}\left| \mathbb{P}_\epsilon \left( \sum_{i,j \in [n], j \neq i}  P_{\theta,ij} \epsilon_i \epsilon_j  \leq y \right) - \mathbb{P}_\mathcal{E} \left( \sum_{i,j \in [n], j \neq i} P_{\theta,ij} \mathcal{E}_i \mathcal{E}_j  \leq y \right) \right| = o(1),
    \end{align*}
    and 
    \begin{align*}
\sup_{y \in \mathbb{R}} \left| \mathbb{P}_\epsilon \left( \epsilon^\top P_\theta \epsilon \leq y \right) 
- \mathbb{P}_\mathcal{E} \left( \mathcal{E}^\top P_\theta \mathcal{E} \leq y \right) \right| = o(1).
\end{align*}
\end{cor}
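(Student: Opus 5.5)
This is a direct specialization of Theorem~\ref{theorem:strong_approx} with $P(\mathbf{X})=P_\theta$, $K_n=K_\theta$ and $\tilde p_n=p_n$. The plan is to check the theorem's hypotheses for $P_\theta$, choose $\ell$ so that every term on the right-hand side of \eqref{eq:strong-approximation:2} vanishes, and conclude.

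\textbf{Structural properties of $P_\theta$.} Write $A:=n^{-1/2}\mathbf{X}\widehat\Theta^\top R^\top\in\mathbb{R}^{n\times r}$. Then $R\widehat\Theta\widehat\Sigma\widehat\Theta^\top R^\top=A^\top A$ and
\[
P_\theta=A(A^\top A+\theta I_r)^{-1}A^\top .
\]
This matrix is symmetric and positive semi-definite, and it depends on $\mathbf{X}$ only, since $\widehat\Theta$ is built from node-wise regressions on $\mathbf{X}$ and $\theta$ is a function of $\mathbf{X}\widehat\Theta^\top R^\top$. Using the singular value decomposition of $A$, the eigenvalues of $P_\theta$ are $s_k^2/(s_k^2+\theta)\in[0,1)$. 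Hence $P_\theta^2\preceq P_\theta$, and its diagonal gives $\sum_j P_{\theta,ij}^2\le P_{\theta,ii}$, which is the projection property required in Section~\ref{section:strong_approx}. The same representation gives
\[
K_\theta=\tr(P_\theta)\le \rk(P_\theta)\le r_n\le r .
\]

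\textbf{Matching the assumptions.} Assumptions~\ref{ass:mean_zero}.1--\ref{ass:mean_zero}.3 are assumed in the corollary. Assumption~\ref{ass:mean_zero}.4 is exactly the hypothesis $p_n=O(K_\theta/n)$, i.e.\ $p_n/K_\theta=O(n^{-1})$. The off-diagonal lower bound $\sum_i\sum_{j\ne i}P_{\theta,ij}^2/K_\theta\gtrsim1$ is also assumed. Theorem~\ref{theorem:strong_approx} therefore applies.

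\textbf{Bounding the right-hand sides.} Fix any $\zeta\in(0,1)$, say $\zeta=1/2$.
\begin{itemize}
\item Bound~\eqref{eq:strong-approximation:1} is at most $C_\zeta(p_n/K_\theta)^{(1-\zeta)/(14-2\zeta)}=O\big(n^{-(1-\zeta)/(14-2\zeta)}\big)=o(1)$. This proves the first claim.
\item For the second claim, $r=o(n)$ gives $K_\theta/n\le r/n=o(1)$, and so $p_n=O(K_\theta/n)=o(1)$. Take $\ell=(K_\theta/n)^{1/4}$. (If $K_\theta/n$ is very small, one may instead take $\ell=\max\{(K_\theta/n)^{1/4},n^{-1/8}\}$; either choice tends to zero.)
\item With this $\ell$, the anti-concentration term $C_\zeta\ell^{(1-\zeta)/2}$ is $o(1)$.
\item The diagonal term satisfies $\overline{C}p_n/\ell^2=O\big((K_\theta/n)\cdot(K_\theta/n)^{-1/2}\big)=O\big(\sqrt{K_\theta/n}\big)=o(1)$.
\item The first term is $o(1)$ as in the first bullet.
\end{itemize}
All three terms in \eqref{eq:strong-approximation:2} vanish, which gives the second claim.

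The only step needing any care is the structural one: that $P_\theta$ satisfies the projection inequality and that $K_\theta\le r_n\le r$. Both follow from the spectral representation above. Everything else is substitution into Theorem~\ref{theorem:strong_approx}.
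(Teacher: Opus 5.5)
Your proposal is correct and follows the paper's own argument. Like the paper, you apply Theorem~\ref{theorem:strong_approx} with $P(\mathbf{X})=P_\theta$, use $K_\theta\le r_n\le r=o(n)$, and take $\ell=(K_\theta/n)^{1/4}$ so that both $\ell^{(1-\zeta)/2}$ and $p_n/\ell^2=O(\sqrt{K_\theta/n})$ vanish. Your spectral check of $\sum_j P_{\theta,ij}^2\le P_{\theta,ii}$ is a self-contained substitute for the paper's appeal to Lemma~1(ii) of \cite{dovi2024ridge}, and the alternative choice of $\ell$ in your parenthetical is unnecessary but harmless.
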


The strong approximation result implies that the distribution of $\widehat{Q}(q)$ can be approximated by a weighted chi-square distribution when $r$ is bounded (see~\eqref{eq:heuritsics_fixed_K}). This approximation is essential for establishing the validity of the critical value defined in~\eqref{eq:critical_value}. Furthermore, when $r$ is diverging, we may approximate the distribution of $\widehat{Q}(q)$ by a standard normal distribution.

\section{Asymptotic Theory} \label{section:size_control}

In this section, we show theoretically that our test has asymptotically exact size control under both fixed and random designs. In particular, our inference procedure is valid regardless of whether the number of restrictions $r$, the number of coefficients involved in the hypothesis being tested $m$, and the number of coefficients $p$ are treated as bounded or diverging, thereby providing uniform inference over a large class of DGPs.

\subsection{Fixed Design}

We first consider the fixed design. The following theorem gives general conditions for our test to have asymptotically exact size control. Random designs can then be deduced from this theorem by verifying the conditions below. Note that these conditions are not restricted to the ordinary LASSO estimator. For example, one can also replace $\widehat \beta$ and $\widehat \gamma_j, j \in [p]$ with alternative LASSO estimators, such as the adaptive LASSO estimator (e.g. \cite{zou2006adaptive}) or the conservative LASSO estimator (e.g. \cite{caner2018asymptotically}), in the definition of $\widehat{Q}(q)$, and the result here can be applied to those inference procedures as well. We provide one such extension below. 

\begin{thm}
Suppose that ${\bf X}$ is taken as fixed. Let \( \{ \Lambda_n \}_{n \in \mathbb{N}} \) be a sequence of classes of DGPs under which the following conditions hold uniformly over each \( \Lambda_n \):
\begin{enumerate}

    \item Assumption \ref{ass:mean_zero}.1-\ref{ass:mean_zero}.3 hold, $p_n = O(K_\theta/n)$ where $p_n = \max_{i \in [n]} P_{\theta,ii}$, and $r=o(n)$.
    
    \item The matrix \( R \in \mathbb{R}^{r \times p} \) has at most \( m \) non-zero column vectors, and satisfies $\lambda_{\max}(R  R^\top) \leq \overline{c} < \infty$.
    
    \item $\left(\sqrt{r} + \lVert {\bf X} (\widehat{\beta} - \beta^0 ) \rVert_{\infty} \right) \times \lVert {\bf X} (\widehat{\beta} - \beta^0 ) \rVert_2 = o_\Prob(\sqrt{n})$ and $\lVert \Delta \rVert_\infty = o_\Prob(1/\sqrt{m})$.

    \item There exists $\varpi \in (0,1)$ that is bounded away from zero, such that $\theta \gtrsim 1$ and $$\frac{1}{K_\theta} \sum_{i \in [n]} \sum_{j \neq i} P_{\theta,ij}^2 \gtrsim 1$$
    for all $n$ large enough.
\end{enumerate}

Then, under the null,
\begin{align*}
    \liminf_{n \rightarrow \infty} \inf_{\lambda_n \in \Lambda_n} \mathbb{P}_{\lambda_n} \left( \widehat{Q}(q) > \mathcal{C}_\alpha  \right) = \limsup_{n \rightarrow \infty} \sup_{\lambda_n \in \Lambda_n} \mathbb{P}_{\lambda_n} \left( \widehat{Q}(q) > \mathcal{C}_\alpha  \right) = \alpha.
\end{align*}
\label{thm:size-control_fixed_design}
\end{thm}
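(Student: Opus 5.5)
Under the null, write $b(q)=RW-R\Delta$ with $RW = R\widehat\Theta{\bf X}^\top\epsilon/\sqrt n$. Set $A:=R\widehat\Theta\widehat\Sigma\widehat\Theta^\top R^\top+\theta I_r$. Then $RW^\top A^{-1}RW=\epsilon^\top P_\theta\epsilon$, since $P_\theta = n^{-1}{\bf X}\widehat\Theta^\top R^\top A^{-1}R\widehat\Theta{\bf X}^\top$. The plan is to reduce $\widehat Q(q)$ to this exact quadratic form, show the estimated ingredients are consistent, and then apply Corollary~\ref{cor:strong_approx} together with a case split on the behaviour of the weights $w_n$.

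\textbf{Step 1 (remainder).} Expand the numerator as $\epsilon^\top P_\theta\epsilon - 2(R\Delta)^\top A^{-1}RW + (R\Delta)^\top A^{-1}R\Delta$. Since $\theta\gtrsim1$ by Condition 4, $\|A^{-1}\|_2\lesssim 1$. Since $R$ has at most $m$ nonzero columns and $\lambda_{\max}(RR^\top)\le\overline c$, we get $\|R\Delta\|_2\lesssim \sqrt m\|\Delta\|_\infty=o_\Prob(1)$ by Condition 3. Also $\|A^{-1/2}RW\|_2^2=\epsilon^\top P_\theta\epsilon=O_\Prob(K_\theta)$. Hence the cross and squared terms are $o_\Prob(\sqrt{K_\theta})$, and therefore negligible at the $\sqrt{K_\theta}$ scale that governs the fluctuations.

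\textbf{Step 2 (variance estimators).} Write $\widehat\sigma_i^2=(\epsilon_i - X_i(\widehat\beta-\beta^0))^2$ and decompose $\widehat\sigma_i^2-\epsilon_i^2 = -2\epsilon_iu_i+u_i^2$ with $u:={\bf X}(\widehat\beta-\beta^0)$.
\begin{itemize}
\item For the denominator, $\sum_i P_{\theta,ii}(\widehat\sigma_i^2-\epsilon_i^2)$ is bounded by $p_n(\|u\|_2\|\epsilon\|_2+\|u\|_2^2)$. Using $p_n\lesssim K_\theta/n$, this is $o_\Prob(\sqrt{K_\theta})$, which is exactly what Condition 3 with the $\sqrt r$ factor delivers (recall $K_\theta\le r_n\le r$).
\item The term $\sum_iP_{\theta,ii}(\epsilon_i^2-\sigma_i^2)$ is $O_\Prob(\sqrt{\sum_i P_{\theta,ii}^2})=O_\Prob(\sqrt{p_nK_\theta})=o_\Prob(\sqrt{K_\theta})$.
\item The same expansion gives $\widehat\Phi-\Phi=o_\Prob(1)$ and $\|\widehat\Omega-\Omega\|\to0$. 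The latter uses the $\|u\|_\infty\|u\|_2$ part of Condition 3 to control $R\widehat\Theta{\bf X}^\top(\widehat\Lambda-\Lambda){\bf X}\widehat\Theta^\top R^\top$ relative to $n\sum_iP_{\theta,ii}\sigma_i^2\asymp nK_\theta$.
\end{itemize}
Condition 4 gives $\Phi\gtrsim1$, so all ratios are well behaved.

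\textbf{Step 3 (distributional approximation).} Corollary~\ref{cor:strong_approx} applies under Conditions 1 and 4. It replaces $\epsilon^\top P_\theta\epsilon$ by $\mathcal E^\top P_\theta\mathcal E$, which is exactly distributed as $(\sum_iP_{\theta,ii}\sigma_i^2)\,F_{w_n}$. Thus, up to the negligible terms above, $\widehat Q(q)\approx F_{w_n}$ uniformly in the Kolmogorov metric.

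To handle an unknown regime, I argue along subsequences: every sequence of DGPs has a subsequence on which either $\max_i w_{i,n}\to0$ or $w_n\to w_\infty$ in $\ell^1$.
\begin{itemize}
\item \emph{First case.} Lyapunov's CLT makes $(F_{w_n}-1)/\sqrt{2\sum w_{i,n}^2}$ asymptotically $\mathcal N(0,1)$. The same holds for the plug-in version with $\widetilde w_n$, by Weyl's inequality and Step 2. Note that $2\sum w_{i,n}^2$ and $\Phi K_\theta/(\sum P_{\theta,ii}\sigma_i^2)^2$ differ only by the diagonal contribution $\sum P_{\theta,ii}^2\sigma_i^4$, which is $o(K_\theta)$. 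Hence the critical value is asymptotically $1+z_{1-\alpha}\cdot(\text{standard deviation})$, giving exact size $\alpha$.
\item \emph{Second case.} The limit $F_{w_\infty}$ is continuous, since $w_{1,\infty}>0$. The quantiles of $F_{\widetilde w_n}$ therefore converge, and the rearrangement shown after \eqref{eq:critical_value} makes the correction term vanish, because the ratio there tends to one by the same diagonal argument.
\end{itemize}
Uniformity over $\Lambda_n$ then follows by the standard subsequence argument applied to the liminf and limsup.

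\textbf{Main obstacle.} I expect the hardest step to be Step 2's control of $\widehat\Omega$. It requires bounding the spectral error of a heteroskedastic sandwich relative to $K_\theta$, not just its trace. I would try a bound of the form $\|R\widehat\Theta{\bf X}^\top D{\bf X}\widehat\Theta^\top R^\top\|_2\le \|A\|_2\, n\max_i|D_{ii}|$, applied separately to the $\epsilon u$ and $u^2$ pieces. For the $\epsilon u$ piece, a weighted Cauchy--Schwarz argument is needed in place of the max.
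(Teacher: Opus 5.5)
\textbf{Gap: consistency of the plug-in weights.} Both of your cases need $\widetilde w_n$ to be consistent at a specific rate. In the diverging case, the conditional CLT for $F_{\widetilde w_n}$ needs $\max_i\widetilde w_{i,n}/\|\widetilde w_n\|_2\convP 0$, i.e.\ $\max_i|\widetilde w_{i,n}-w_{i,n}|=o_\Prob(K_\theta^{-1/2})$. In the bounded case you need $\|\widetilde w_n-w_n\|_2=o_\Prob(1)$.

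The bare claim $\|\widehat\Omega-\Omega\|\to0$ is too weak for the first requirement, and the sup-norm route you sketch fails in both cases. Even in its sharpest form, it needs $\max_i|\widehat\sigma_i^2-\sigma_i^2|=o_\Prob(\sqrt{K_\theta})$. (Remove $M_\theta$ and the square roots first; the eigenvalue perturbation is then at most $\max_i|\widehat\sigma_i^2-\sigma_i^2|/\sum_iP_{\theta,ii}\widehat\sigma_i^2$.) This requirement fails in two places:
\begin{itemize}
\item The piece $\diag(\epsilon_i^2-\sigma_i^2)$, which your plan never treats, has a diverging maximum: already for Gaussian errors, and polynomially under four moments.
\item For the $u_i^2$ piece it would require $\|u\|_\infty^2=o_\Prob(\sqrt{K_\theta})$. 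Condition 3 constrains only the product $\|u\|_\infty\|u\|_2$ and does not imply this.
\end{itemize}

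\textbf{The missing idea.} As in Lemma~\ref{lem:consistency:weights}, $\Omega$ has the same nonzero eigenvalues as $\mathcal O:=P_\theta^{1/2}\Lambda P_\theta^{1/2}/\sum_iP_{\theta,ii}\sigma_i^2$, and $\widehat\Omega$ the same as its analogue with $\widehat\Lambda$ and $\widehat\sigma_i^2$. Weyl's inequality then reduces everything to $\|\widehat{\mathcal O}-\mathcal O\|_2$, which is controlled through the Frobenius norm
\[
\bigl\|P_\theta^{1/2}(\widehat\Lambda-\Lambda)P_\theta^{1/2}\bigr\|_F^2=\sum_{i,j}P_{\theta,ij}^2(\widehat\sigma_i^2-\sigma_i^2)(\widehat\sigma_j^2-\sigma_j^2).
\]
This exploits $\sum_jP_{\theta,ij}^2\le P_{\theta,ii}\le p_n\lesssim K_\theta/n$ instead of sup-norm control:
\begin{itemize}
\item The pure-noise part is $O_\Prob(p_nK_\theta)$ on the diagonal, plus a mean-zero off-diagonal part of order $p_n\sqrt{K_\theta}$; both are $o_\Prob(K_\theta)$.
\item The $u^2$ part is at most $p_n\|u\|_\infty^2\|u\|_2^2$. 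After dividing by $K_\theta$ this is $O(\|u\|_\infty^2\|u\|_2^2/n)=o_\Prob(1)$, which is exactly where that part of Condition 3 enters.
\item The cross terms follow by Cauchy--Schwarz.
\end{itemize}
Together with $\lambda_{\max}(\mathcal O)\lesssim 1/K_\theta$, this gives $\max_iw_{i,n}=O(1/K_\theta)$ and $\max_i(\widetilde w_{i,n}-w_{i,n})^2=o_\Prob(1/K_\theta)$.

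\textbf{Your case split.} The split silently relies on the same bound $w_{1,n}\lesssim 1/K_\theta$. As stated, ``every sequence has a subsequence with $\max_iw_{i,n}\to0$ or $w_n\to w_\infty$ in $\ell^1$'' is false for probability vectors of growing length; take $w_n=(1/2,1/(2r_n-2),\dots,1/(2r_n-2))$ with $r_n\to\infty$. Moreover, $\max_iw_{i,n}\to0$ alone does not give Lyapunov's condition $\max_iw_{i,n}/\|w_n\|_2\to0$. Both statements become true once you combine $w_{1,n}\lesssim1/K_\theta$ with $\|w_n\|_2\ge r_n^{-1/2}$ and $K_\theta\ge\varpi r_n/2$ from Lemma~\ref{lem:Ktheta_and_rn}. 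That last bound is where $\varpi$ bounded away from zero in Condition 4 is needed. So the split is really bounded versus diverging $K_\theta$, as in the paper.

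\textbf{What survives.} With these repairs the rest of your plan is sound. Step 1 and your treatment of $\sum_iP_{\theta,ii}\widehat\sigma_i^2$ coincide with the paper. In the diverging regime, using Corollary~\ref{cor:strong_approx} followed by a CLT for the Gaussian weighted chi-square is a valid alternative to the paper's direct appeal to the Dov\`{\i}--Kock--Mavroeidis CLT for $\sum_{i\neq j}P_{\theta,ij}\epsilon_i\epsilon_j$. Kolmogorov distance is invariant under the affine standardization, and $2\|w_n\|_2^2(\sum_iP_{\theta,ii}\sigma_i^2)^2$ differs from $K_\theta\Phi$ only by the $o(K_\theta)$ diagonal term, as you observe. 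Your route gives a single distributional argument for both regimes.
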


\begin{rem}
    We make three remarks regarding the preceding conditions.

    First, the condition on $\lambda_{\max}(RR^\top)$
    controls the magnitude of the restriction matrix $R$.
    For any full-row-rank restriction matrix
    $R\in\mathbb{R}^{r\times p}$, we may define
    $\widetilde R=(RR^\top)^{-1/2}R$ and
    $\widetilde q=(RR^\top)^{-1/2}q$.
    The null hypotheses $R\beta^0=q$ and
    $\widetilde R\beta^0=\widetilde q$ are equivalent,
    while $\widetilde R\widetilde R^\top=I_r$.
    Consequently, any full-row-rank system of linear
    restrictions admits an equivalent normalized
    representation satisfying
    $\lambda_{\max}(\widetilde R\widetilde R^\top)=1$. Throughout the construction of the test statistic, we may therefore adopt this normalized representation of the null hypothesis.

    Second, the rate conditions on
    $\lVert {\bf X}(\widehat{\beta}-\beta^0)\rVert_\infty$
    and
    $\lVert {\bf X}(\widehat{\beta}-\beta^0)\rVert_2$
    ensure the consistency of the variance estimators
    and the estimated weights of the weighted
    chi-squared distribution.
    The rate condition on $\lVert\Delta\rVert_\infty$
    ensures that the remainder term $R\Delta$
    is asymptotically negligible in $\widehat Q(q)$.

    Third, we impose a condition involving
    $\varpi\in(0,1)$ to ensure that
    $\left(
    R\widehat{\Theta}\widehat{\Sigma}
    \widehat{\Theta}^\top R^\top+\theta I_r
    \right)^{-1}$
    and $\Phi$ are well-defined in the limit.

    In the next section, we establish sufficient
    conditions under which these requirements hold
    for marginally sub-Weibull covariates and
    heteroskedastic errors satisfying finite
    polynomial moment conditions. 
\end{rem}

\begin{rem}
    While our primary focus is on establishing the asymptotic validity of the proposed test under the null hypothesis, its behavior under the alternative is also of interest. In Section~\ref{sec:power_analysis} of the Supplemental
    Appendix, we establish power properties of the proposed test under a fixed design. To maintain the focus of the main text on asymptotically exact size control, we defer this analysis to the Supplemental Appendix.
\end{rem}

\subsection{Random Design}
We now turn to the random design. Let $\widetilde{X}_i^\top \in \mathbb{R}^p$ for $i \in [n]$ be the row vectors of the design matrix $\bf X$, with $\Sigma_i := \mathbb{E}[\widetilde{X}_i \widetilde{X}_i^\top] \in \mathbb{R}^{p \times p}$. We define the population covariance matrix as $\Sigma := \sum_{i \in [n]} \Sigma_i /n \in \mathbb{R}^{p \times p}$, and $\Theta:= \Sigma^{-1} \in \mathbb{R}^{p \times p}$, which we assume exists. We also define $\eta_j:= X_j - {\bf X}_{-j} \gamma_j \in \mathbb{R}^n$ where $\gamma_j \in \arg \min_{\gamma \in \mathbb{R}^{p-1}} \mathbb{E} \lVert X_j - {\bf X}_{-j} \gamma \rVert_2^2/n$, and we assume that $\max_{i \in [n], j \in [p]} \mathbb E \left| \eta_{j}(i) \right|^l \leq \overline{c} < \infty$ for some $l \geq 4$. Finally, we define $s_0 := \left|\{j: \beta^0_j \neq 0 \}\right| \vee 1$, $s_j := \left| \{ k \neq j:\Theta_{j,k} \neq 0\} \right| $ and $s_0^* := \max_{j \in [p]} s_j \vee 1$.

\begin{ass} \label{ass:size}
    \begin{enumerate}
        \item Assumption \ref{ass:mean_zero}.1 - \ref{ass:mean_zero}.3 hold and $p_n = O_\Prob(K_\theta/n)$ where $p_n = \max_{i \in [n]} P_{\theta,ii}$. 
    
        \item The matrix \( R \in \mathbb{R}^{r \times p} \) has at most \( m \) non-zero column vectors, and satisfies $\lambda_{\max}(R  R^\top) \leq \overline{c} < \infty$ and $\tr (R \Theta R^\top)/r \geq \underline{c} > 0$.

        \item $(\tilde X_i, \eps_i)^\top \indep (\tilde X_j, \eps_j)^\top$ for $i \neq j$, and conditional on $\bf X$, $\{ \epsilon_i \}_{i \in [n]}$ are independent and mean-zero. $\tilde X_i$ are marginally sub-Weibull $(\nu, K_{n,p})$ \footnote{A random variable $Z$ is said to be sub-Weibull $(\nu, K_{n,p})$ if $\|Z\|_{\psi_\nu} \le K_{n,p}$ with $\|Z\|_{\psi_\nu} :=\inf\Bigl\{c>0:\ \mathbb{E}\,\psi_\nu\bigl(|Z|/c\bigr)\le 1\Bigr\}$, where $\psi_\nu(t):=\exp(t^\nu)-1$ for $t\ge 0$. See Appendix \ref{sec:appendix_sub_weibull} for further discussion.} for some $0 < \nu \leq 2$ and $K_{n,p} = O(1)$, $1/\lambda_{\min}(\Sigma) = O(1)$ and $\max_{j \in [p]} \Sigma_{jj} = O(1)$.

        \item The following rate conditions hold:
        \begin{enumerate}[label=(\arabic*)]
            \item $\max \{s_0 (\log(np))^{4/\nu}, (s_0^*)^2 \log(np), ms_0^* (\log(np))^{4/\nu}, m s_0^2 \log^2(np), m^2 s_0^* \log(np)\} = o(n)$, 
            \item $(\log(np))^{2/\nu - 1/2} = o(n^{1/2-1/h})$,
            \item $(\log(np))^{2/\nu - 1/2} = o(n^{1/2-1/l})$.
        \end{enumerate}
    \end{enumerate}
\end{ass}

\begin{rem}
    Assumption \ref{ass:size}.1 is the random design version of Assumption \ref{ass:mean_zero} when $\bf X$ is considered as random. The condition $\tr (R \Theta R^\top)/r \geq \underline{c} > 0$ in Assumption \ref{ass:size}.2 ensures that there are sufficiently many large eigenvalues for $R\widehat{\Theta} \widehat{\Sigma} \widehat{\Theta}^\top R^\top$, so that $\theta$ can be bounded away from zero; note that this condition is weaker than $\lambda_{\min} (\Theta) \geq \underline{c} > 0$ and $\lambda_{\min} (R R^\top) \geq \underline{c} > 0$. Assumption \ref{ass:size}.3 allows for i.n.i.d. $\tilde X_i^\top$ and quite general tail conditions for the entries of $\tilde X_i^\top$; the conditions on $\Sigma$ are natural extensions of standard conditions on the population covariance matrix of $\bf X$ (e.g. \cite{van-de-geer(2014)}) to i.n.i.d. data, and note that $1/\lambda_{\min}(\Sigma) = O(1)$ is sufficient for $\Sigma$ to satisfy the restricted eigenvalue condition. The rate conditions in Assumption \ref{ass:size}.4 extend the results in \cite{van-de-geer(2014)} for $\nu = 2$. Our results generalize beyond those in \cite{van-de-geer(2014)} by not restricting the joint distribution of $\tilde X_i^\top$ (e.g. joint sub-Gaussian), and we allow $\eps$ to have only polynomial moments. \cite{caner2018asymptotically} consider the even more general case when both $\bf X$ and $\eps$ only have polynomial moments, so that their rate conditions can only allow $p$ to be polynomial of $n$, while we still allow $p$ to be exponential in polynomial of $n$. Finally, we allow $m$ to diverge at a slower rate than $n$. Consider the important special case where $s_0^*=O(1)$, $s_0=O(n^{1/4})$, and $p=an$ for some $a>0$, then we allow $m$ to be at most $o(n^{1/2})$ up to a logarithmic factor.
\end{rem}

\begin{prop}
    Suppose Assumption \ref{ass:size} holds and $\lambda_j \asymp \sqrt{\log(np)/n}$ (uniformly in $j \in [p]$) are suitably chosen. There exist $c_{\varpi} \in (0,1]$ and $\underline{c} > 0$ such that, with probability approaching one, $\theta \geq \underline{c}$ and $\Phi \geq \underline{c}$ for all $\varpi \in (0,c_{\varpi})$. 
    \label{prop:ridge_parameter}
\end{prop}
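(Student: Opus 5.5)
Let $A:=R\widehat{\Theta}\widehat{\Sigma}\widehat{\Theta}^\top R^\top$ and $A_0:=R\Theta R^\top$. The plan is to compare $A$ with $A_0$ and then transfer spectral information from $A_0$ to the $\lceil\varpi r_n\rceil$-th eigenvalue of $A$.

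\textbf{Step 1: entrywise approximation.} Under Assumption~\ref{ass:size} and $\lambda_j\asymp\sqrt{\log(np)/n}$, the node-wise LASSO bounds (derived with the sub-Weibull concentration inequalities of \cite{kuchibhotla2022moving}) give
\[
\max_{j\in G}\|\widehat\Theta_j-\Theta_j\|_1=O_\Prob\bigl(s_0^*\sqrt{\log(np)/n}\bigr),
\qquad
\|\widehat\Sigma-\Sigma\|_{\max}=O_\Prob\bigl((\log(np))^{2/\nu}/\sqrt n\bigr).
\]
Only the $m$ rows of $\widehat\Theta$ indexed by $G$ matter. I would expand
\[
\widehat\Theta\widehat\Sigma\widehat\Theta^\top-\Theta
=(\widehat\Theta-\Theta)\widehat\Sigma\widehat\Theta^\top+\Theta(\widehat\Sigma-\Sigma)\widehat\Theta^\top+\Theta\Sigma(\widehat\Theta-\Theta)^\top .
\]
Each piece can also be handled through the KKT identity $\|\widehat\Sigma\widehat\Theta_j^\top-e_j\|_\infty\le\lambda_j/\widehat\tau_j^2$. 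This yields
\[
\max_{j,k\in G}\bigl|(\widehat\Theta\widehat\Sigma\widehat\Theta^\top)_{jk}-\Theta_{jk}\bigr|=o_\Prob(1/m),
\]
using the rates $ms_0^*(\log(np))^{4/\nu}=o(n)$ and $m^2s_0^*\log(np)=o(n)$.

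\textbf{Step 2: operator-norm comparison.} Since $R$ is supported on $G$ and $\lambda_{\max}(RR^\top)\le\overline c$,
\[
\|A-A_0\|_2\le\overline c\,\bigl\|(\widehat\Theta\widehat\Sigma\widehat\Theta^\top-\Theta)_{GG}\bigr\|_2\le\overline c\,m\,\|\cdot\|_{\max}=o_\Prob(1).
\]

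\textbf{Step 3: eigenvalue counting for $A_0$.} We have $\lambda_{\max}(A_0)\le\overline c/\lambda_{\min}(\Sigma)=:M$ and $\operatorname{tr}(A_0)\ge\underline c\,r$. Let $N$ be the number of eigenvalues of $A_0$ that are at least $\underline c/2$. Then $\underline c\,r\le NM+(r-N)\underline c/2$, so $N\ge \underline c\,r/(2M)$. By Weyl's inequality and Step 2, with probability approaching one $A$ has at least $\underline c\,r/(2M)$ eigenvalues that are at least $\underline c/4$.

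\textbf{Step 4: lower bound on $\theta$.} Since $r_n\le r$, taking $c_\varpi=\underline c/(2M)$ gives $\lceil\varpi r_n\rceil\le\varpi r+1\le N$ for all $\varpi<c_\varpi$, and therefore $\theta\ge\underline c/4$. The case of small $r$ needs separate care. If $r$ is bounded, then $N\ge1$ and $\lceil\varpi r_n\rceil=1$ for $\varpi$ small. In that case $\theta=\lambda_{\max}(A)\ge \operatorname{tr}(A)/r\ge\underline c/2$ directly.

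\textbf{Step 5: lower bound on $\Phi$.} Write $\Phi=\frac{2}{K_\theta}\sum_{i\ne j}P_{\theta,ij}^2\sigma_i^2\sigma_j^2\ge 2\underline c^{2}\,\frac{\sum_{i\ne j}P_{\theta,ij}^2}{K_\theta}$. Then
\[
\sum_{i\ne j}P_{\theta,ij}^2=\operatorname{tr}(P_\theta^2)-\sum_iP_{\theta,ii}^2\ge\operatorname{tr}(P_\theta^2)-p_nK_\theta .
\]
The nonzero eigenvalues of $P_\theta$ are $\mu_k/(\mu_k+\theta)$, where $\mu_k$ are the eigenvalues of $A$, because $P_\theta$ shares its nonzero spectrum with $A(A+\theta I)^{-1}$. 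Hence $K_\theta\le r_n$. For the at least $\lceil\varpi r_n\rceil$ indices with $\mu_k\ge\theta$, we have $\mu_k/(\mu_k+\theta)\ge1/2$, so $\operatorname{tr}(P_\theta^2)\ge\varpi r_n/4$. This gives $\operatorname{tr}(P_\theta^2)/K_\theta\ge\varpi/4$. Finally, $p_n=O_\Prob(K_\theta/n)=o_\Prob(1)$ because $r=o(n)$, which yields $\Phi\ge\underline c$ with probability approaching one.

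\textbf{Main obstacle.} The main obstacle is Step 1. A max-norm error of order $o(1/m)$ is needed uniformly over $G\times G$ under only marginal sub-Weibull tails and heteroskedastic, non-identically distributed rows. This is where the rate conditions involving $m$ in Assumption~\ref{ass:size}.4 are consumed.
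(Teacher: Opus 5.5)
\textbf{The gap is in Step 5.} The proposition fixes one $\underline c>0$ that must work for every $\varpi\in(0,c_\varpi)$. You only use the top $\lceil\varpi r_n\rceil$ eigenvalues of $P_\theta$, which gives $\tr(P_\theta^2)/K_\theta\ge\varpi/4$ and hence $\Phi\ge 2\underline c^{2}(\varpi/4-p_n)$. This bound tends to zero as $\varpi\downarrow0$, so it does not give the uniform constant the statement asserts. Your $\theta$ bound in Step 4 is uniform in $\varpi$; your $\Phi$ bound is not. The repair is already in your Step 3. Let $M=\overline c/\lambda_{\min}(\Sigma)$ be your constant. On the event $\|A-A_0\|_2\le\underline c/4$, there are $N\ge\underline c\,r/(2M)$ eigenvalues with $\mu_k\ge\underline c/4$. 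Also $\theta\le\mu_1\le M+\underline c/4\le 2M$, because $\underline c\le\tr(A_0)/r\le M$. Since $x\mapsto x/(x+\theta)$ is increasing, each of these indices contributes at least $\bigl(\underline c/(\underline c+8M)\bigr)^2$ to $\tr(P_\theta^2)$. Using $K_\theta\le r_n\le r$,
\[
\frac{\tr(P_\theta^2)}{K_\theta}\ \ge\ \frac{N}{r}\Bigl(\frac{\underline c}{\underline c+8M}\Bigr)^{2}\ \ge\ \frac{\underline c^{3}}{2M(\underline c+8M)^{2}},
\]
which does not depend on $\varpi$. The paper reaches the same conclusion differently: it uses $\mu_k+\theta\le2\mu_1$ and Cauchy--Schwarz to get $\tr(P_\theta^2)\ge\sum_k\mu_k^2/(4\mu_1^2)\ge(\tr A)^2/(4r_n\mu_1^2)\gtrsim r_n\ge K_\theta$.

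Two smaller points. First, in Step 1 you cannot bound the middle term $\Theta(\widehat\Sigma-\Sigma)\widehat\Theta^\top$ directly through $\|\widehat\Sigma-\Sigma\|_{\max}$. That route only gives $O_\Prob\bigl(m\,s_0^*\,\|\widehat\Sigma-\Sigma\|_{\max}\bigr)$, which is at best of order $m s_0^*\sqrt{\log(np)/n}$. Assumption~\ref{ass:size}.4 gives $m^2s_0^*\log(np)=o(n)$, not $m^2(s_0^*)^2\log(np)=o(n)$. You must rewrite the term as $\Theta(\widehat\Sigma\widehat\Theta^\top-I_p)-(\widehat\Theta-\Theta)^\top$ and use the KKT bound. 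The paper does this in one step with $\widehat\Theta\widehat\Sigma\widehat\Theta^\top-\Theta=\widehat\Theta(\widehat\Sigma\widehat\Theta^\top-I_p)+(\widehat\Theta-\Theta)$, whose max norm is at most $\|\widehat\Theta\|_\infty\|\widehat\Sigma\widehat\Theta^\top-I_p\|_{\max}+\|\widehat\Theta-\Theta\|_{\max}$.

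Second, in Step 4 the chain $\lceil\varpi r_n\rceil\le\varpi r+1\le N$ is false as written. Use instead $\varpi r<c_\varpi r\le N$ together with the fact that $N$ is an integer.

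With these repairs, your eigenvalue-counting route for $\theta$ is a valid alternative to the paper's deterministic inequality $\tr(A)\le r_n\theta+\varpi r_n\lambda_{\max}(A)$. The paper's version has one advantage: it yields a data-computable threshold $c_\varpi=\min\{1,\tr(A)/((1+\mu)r_n\lambda_{\max}(A))\}$, which is the one used in the implementation. Yours, $c_\varpi=\underline c/(2M)$, depends on unknown population constants.
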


\begin{rem}
Proposition \ref{prop:ridge_parameter} shows how we can choose $\varpi$ such that $\left( R\widehat{\Theta} \widehat{\Sigma} \widehat{\Theta}^\top R^\top + \theta I_r \right)^{-1}$ and $\Phi$ are well-defined in the limit under random design. We discuss in Appendix \ref{subsec:ridge_parameter} (see Remark \ref{rem:varpi}) how to choose $\varpi$ in a data-driven way. Our simulations suggest that both the size and the power of our test statistic are actually not very sensitive to the choice of $\varpi$. 
\end{rem}

Before establishing the validity of our test under general random-design settings, we first provide an intermediate result showing that Assumption~\ref{ass:size}, together with suitable choices of the tuning parameters, is sufficient for the conditions of Theorem~\ref{thm:size-control_fixed_design} to hold.
\begin{prop}[Sufficient conditions for Theorem~\ref{thm:size-control_fixed_design}]
    Suppose Assumption~\ref{ass:size} holds,
    $\lambda_0 \asymp \sqrt{\log(np)/n}$,
    $\lambda_j \asymp \sqrt{\log(np)/n}$ uniformly over $j \in [p]$,
    and $\varpi \in (0,1)$ is suitably chosen.
    Then all the conditions of
    Theorem~\ref{thm:size-control_fixed_design} are satisfied.
    \label{prop:sufficient_conditions_for_fixed_design}
\end{prop}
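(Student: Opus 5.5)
We verify the four conditions of Theorem~\ref{thm:size-control_fixed_design} in turn, with probability approaching one. Because the theorem treats ${\bf X}$ as fixed, the plan is to work on a sequence of events $\mathcal{T}_n$ whose probability tends to one, defined by ${\bf X}$ alone. On $\mathcal{T}_n$ the fixed-design requirements hold deterministically, or hold in probability conditional on ${\bf X}$.

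Conditions 1 and 2 are essentially immediate. Assumption~\ref{ass:size}.1 supplies Assumption~\ref{ass:mean_zero}.1--\ref{ass:mean_zero}.3 and $p_n = O_\Prob(K_\theta/n)$. The restriction $r = o(n)$ follows from $r \le m$ together with $m s_0^2\log^2(np) = o(n)$ in Assumption~\ref{ass:size}.4(1). Assumption~\ref{ass:size}.2 supplies Condition 2 directly. Condition 4 follows from Proposition~\ref{prop:ridge_parameter}, which gives $\theta \ge \underline{c}$ and $\Phi \ge \underline{c}$ with probability approaching one once $\varpi < c_\varpi$. Since $\sigma_i^2 \le \overline{c}$ by the moment bound, we have
\[
\Phi \le \frac{2\overline{c}^2}{K_\theta}\sum_i\sum_{j\neq i}P_{\theta,ij}^2,
\]
so $\Phi \gtrsim 1$ yields the required lower bound on $K_\theta^{-1}\sum_i\sum_{j\ne i}P_{\theta,ij}^2$.

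The substantive work is Condition 3, which requires LASSO oracle inequalities under sub-Weibull designs and polynomial-moment errors. The steps are as follows.

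\textbf{(a) Empirical process bound.} Show $\|{\bf X}^\top\eps/n\|_\infty = O_\Prob(\sqrt{\log(np)/n})$. Use a Fuk--Nagaev or truncation inequality for sums of products $X_{ij}\eps_i$, where $X_{ij}$ is sub-Weibull and $\eps_i$ has $h$ moments, following \cite{kuchibhotla2022moving}. The heavy-tail term is controlled by $(\log(np))^{2/\nu-1/2} = o(n^{1/2-1/h})$. This justifies $\lambda_0 \asymp \sqrt{\log(np)/n}$.

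\textbf{(b) Restricted eigenvalue condition.} Show $\|\widehat\Sigma - \Sigma\|_{\max} = O_\Prob\big(\sqrt{\log(np)/n} + (\log(np))^{2/\nu}/n\big)$ via a Bernstein inequality for sub-Weibull products of order $\nu/2$. Combined with $\lambda_{\min}(\Sigma)^{-1} = O(1)$ and $s_0(\log(np))^{4/\nu} = o(n)$, this transfers the compatibility condition to $\widehat\Sigma$.

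\textbf{(c) Standard oracle inequalities.} These give
\[
\|{\bf X}(\widehat\beta-\beta^0)\|_2^2 = O_\Prob(s_0\log(np)), \qquad \|\widehat\beta-\beta^0\|_1 = O_\Prob\big(s_0\sqrt{\log(np)/n}\big).
\]

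\textbf{(d) The sup-norm of the fit.} Bound $\|{\bf X}(\widehat\beta-\beta^0)\|_\infty \le \max_{ij}|X_{ij}|\,\|\widehat\beta-\beta^0\|_1$, with $\max_{ij}|X_{ij}| = O_\Prob((\log(np))^{1/\nu})$. Then
\[
(\sqrt r + \|{\bf X}(\widehat\beta-\beta^0)\|_\infty)\,\|{\bf X}(\widehat\beta-\beta^0)\|_2 = O_\Prob\big(\sqrt{m s_0\log(np)} + s_0^{3/2}(\log(np))^{1/\nu+1}/\sqrt n\big),
\]
which is $o(\sqrt n)$ under Assumption~\ref{ass:size}.4(1). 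This last verification may require juggling the listed rates.

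\textbf{(e) The remainder $\Delta$.} Repeat (a)--(c) for the nodewise regressions with errors $\eta_j$, using the $l$-moment bound and 4(3), uniformly in $j$. This gives
\[
\|\widehat\gamma_j-\gamma_j\|_1 = O_\Prob\big(s_0^*\sqrt{\log(np)/n}\big), \qquad |\widehat\tau_j^2-\tau_j^2| = o_\Prob(1),
\]
so $\widehat\tau_j^{-2} = O_\Prob(1)$. The nodewise KKT conditions give $\|(\widehat\Theta\widehat\Sigma - I)_{j\cdot}\|_\infty \le \lambda_j/\widehat\tau_j^2$. Because $R$ touches only $m$ columns, only $j\in G$ matter, so
\[
\max_{j\in G}|\Delta_j| \lesssim \sqrt n\,\lambda\,\|\widehat\beta-\beta^0\|_1 = O_\Prob\big(s_0\log(np)/\sqrt n\big),
\]
which is $o(1/\sqrt m)$ exactly when $m s_0^2\log^2(np) = o(n)$. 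If the theorem's $\|\Delta\|_\infty$ refers to all of $\Delta$, the same bound applies uniformly over $j \in [p]$.

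The main obstacle is step (a) and its nodewise analogue under only marginal sub-Weibull tails. Without joint tail conditions, the products $\widetilde X_{i,-j}\eta_j(i)$ and the deviations $\widehat\Sigma - \Sigma$ must be handled entrywise with a union bound. Ensuring that the maximal-inequality tail terms, of order $(\log(np))^{2/\nu}/n$ and $n^{1/h}(\log(np))^{1/\nu}/n$, stay dominated by $\sqrt{\log(np)/n}$ is exactly what rate conditions 4(2)--(3) are designed for. I would first establish one generic lemma bounding $\max_{j,k}|n^{-1}\sum_i (U_{ij}V_{ik} - \mathbb{E} U_{ij}V_{ik})|$ for sub-Weibull $U$ and polynomial-moment $V$, and apply it to both regressions.
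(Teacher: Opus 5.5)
Your proposal is correct and follows essentially the same route as the paper. The paper's proof cites Lemma~\ref{lem:random_design_bound}, which packages exactly your steps (a)--(e): the tuning bound on $\|{\bf X}^\top\eps/n\|_\infty$, the RE transfer via $\|\widehat\Sigma-\Sigma\|_{\max}$, the oracle inequalities, the sup-norm bound, and the nodewise KKT bound $\|\Delta\|_\infty=O_\Prob(s_0\log(np)/\sqrt n)$. It then checks the same rate products you do, cites Proposition~\ref{prop:ridge_parameter} for Condition 4, and reads Conditions 1--2 off Assumption~\ref{ass:size}. The differences are cosmetic: you get $r=o(n)$ from $r\le m$ and $ms_0^2\log^2(np)=o(n)$ rather than from $rs_0\log(np)=o(n)$, and you spell out the conversion of $\Phi\gtrsim1$ into the off-diagonal-sum condition via $\sigma_i^2\le\overline{c}^{2/h}$, which the paper leaves implicit.
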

We are now ready to establish the main result for our test under random design, which follows directly from Proposition~\ref{prop:sufficient_conditions_for_fixed_design}.

\begin{thm}
Let \( \{ \Lambda_n \}_{n \in \mathbb{N}} \) be a sequence of classes of DGPs. Suppose that $\lambda_0 \asymp \sqrt{\log(np)/n}$, $\lambda_j \asymp \sqrt{\log(np)/n}$ (uniformly in $j \in [p]$) and $\varpi \in (0,1)$ are suitably chosen. If Assumption \ref{ass:size} holds uniformly over each \( \Lambda_n \), then under the null,
\begin{align*}
    \liminf_{n \rightarrow \infty} \inf_{\lambda_n \in \Lambda_n} \mathbb{P}_{\lambda_n} \left( \widehat{Q}(q) > \mathcal{C}_\alpha  \right) = \limsup_{n \rightarrow \infty} \sup_{\lambda_n \in \Lambda_n} \mathbb{P}_{\lambda_n} \left( \widehat{Q}(q) > \mathcal{C}_\alpha  \right) = \alpha.
\end{align*}
\label{thm:size-control_random_design}
\end{thm}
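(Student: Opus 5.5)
The plan is to derive Theorem~\ref{thm:size-control_random_design} from the fixed-design Theorem~\ref{thm:size-control_fixed_design} by conditioning on ${\bf X}$, with Proposition~\ref{prop:sufficient_conditions_for_fixed_design} supplying the hypotheses. Since several fixed-design conditions hold only with probability approaching one under a random design, I would first introduce, for each $n$, an event $\mathcal{A}_n$ measurable with respect to ${\bf X}$. On this event the deterministic parts of conditions 1--4 of Theorem~\ref{thm:size-control_fixed_design} hold with fixed constants: $p_n\le C K_\theta/n$, $\theta\ge \underline{c}$, $K_\theta^{-1}\sum_{i}\sum_{j\neq i}P_{\theta,ij}^2\ge \underline{c}$, and condition 2 on $R$. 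Proposition~\ref{prop:sufficient_conditions_for_fixed_design}, together with Proposition~\ref{prop:ridge_parameter} for the choice of $\varpi$, then gives $\inf_{\lambda_n\in\Lambda_n}\mathbb{P}_{\lambda_n}(\mathcal{A}_n)\to 1$, provided the uniformity of Assumption~\ref{ass:size} over $\Lambda_n$ is carried through those propositions.

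The remaining requirements in condition 3 are the rates on $\|{\bf X}(\widehat\beta-\beta^0)\|$ and $\|\Delta\|_\infty$. These are $o_\Prob$ statements involving both ${\bf X}$ and $\epsilon$. I would handle them by enlarging the class of DGPs instead of conditioning pointwise. Define the class $\widetilde\Lambda_n$ of conditional laws of $\epsilon$ given ${\bf X}={\bf x}$, indexed by $\lambda_n\in\Lambda_n$ and ${\bf x}\in\mathcal{A}_n\cap\mathcal{B}_n$. Here $\mathcal{B}_n$ is the set of designs on which the conditional probability that condition 3 fails by more than a given $\delta$ is at most some $\eta_n\to0$. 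Such a set exists with $\mathbb{P}(\mathcal{B}_n)\to1$: the unconditional probabilities go to zero, so Markov's inequality applied to the conditional failure probabilities gives it, and a diagonal choice of $\delta_n\downarrow0$ makes the statement uniform.

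On this fixed-design class, Theorem~\ref{thm:size-control_fixed_design} yields
\[
\sup_{\lambda_n\in\Lambda_n}\,\sup_{{\bf x}\in\mathcal{A}_n\cap\mathcal{B}_n}\bigl|\mathbb{P}_{\lambda_n}(\widehat Q(q)>\mathcal{C}_\alpha\mid{\bf X}={\bf x})-\alpha\bigr|\to0.
\]
Integrating over ${\bf X}$ and bounding the contribution of the complement of $\mathcal{A}_n\cap\mathcal{B}_n$ by its probability, which vanishes uniformly, gives both the $\liminf\inf$ and the $\limsup\sup$ equal to $\alpha$.

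The main obstacle is this conditioning step, namely converting ``with probability approaching one'' statements about a random ${\bf X}$ into a sequence of fixed-design DGP classes on which the fixed-design conditions hold uniformly. This is most delicate for the $o_\Prob$ remainder conditions, which mix ${\bf X}$ and $\epsilon$. I would try the Markov/diagonal argument above first. If needed, I would fall back on the observation that the proof of Theorem~\ref{thm:size-control_fixed_design} only uses these conditions through $o_\Prob$ bounds, so it can be rerun with unconditional probabilities.
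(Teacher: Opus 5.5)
Your argument is correct, but your primary route differs from the paper's; your fallback is in fact what the paper does. The paper does not use Theorem~\ref{thm:size-control_fixed_design} as a black box. It reruns that proof with ${\bf X}$ random:
\begin{itemize}
\item the fixed-design auxiliary results are replaced by Lemmas~\ref{lem:variance_consistency_random_design} and \ref{lem:weights_consistency_random_design} and Remark~\ref{rem:strong_approx_random_design};
\item the anti-concentration bounds are shown conditionally on ${\bf X}$ on an event of probability approaching one, and then integrated;
\item the case split is on bounded versus diverging $r$, with Remark~\ref{rem:rn_and_r} used to transfer divergence to $r_n$ and $K_\theta$;
\item conditional weak convergence is made unconditional by dominated convergence.
\end{itemize}

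Your reduction is more modular: ${\bf X}$-measurable good events, Markov's inequality on the conditional failure probabilities with a diagonal $\delta_n\downarrow0$ so that condition~3 holds uniformly over the enlarged class of pairs $(\lambda_n,{\bf x})$, then integration. It reuses the fixed-design theorem verbatim and inherits its $K_\theta$-subsequence split. Remark~\ref{rem:rn_and_r} becomes unnecessary, because condition~4 with $\varpi$ bounded away from zero already gives $K_\theta\asymp r_n$ for each ${\bf x}$ by Lemma~\ref{lem:Ktheta_and_rn}. It also makes the transfer of uniformity over $\Lambda_n$ explicit.

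The paper's route, in turn, works with $O_\Prob$ bounds throughout, so it never needs conditions that hold with fixed constants. That is exactly where your write-up needs a small correction. Assumption~\ref{ass:size}.1 gives only $p_n=O_\Prob(K_\theta/n)$, so no single constant $C$ yields $\inf_{\lambda_n}\mathbb{P}_{\lambda_n}(\mathcal{A}_n)\to1$. Instead, for each $\upsilon>0$:
\begin{itemize}
\item take $C_\upsilon$ with $\liminf_n\inf_{\lambda_n}\mathbb{P}_{\lambda_n}(\mathcal{A}_n^{\upsilon})\ge1-\upsilon$;
\item apply your argument on $\mathcal{A}_n^{\upsilon}$ to conclude $\limsup_n\sup_{\lambda_n}\bigl|\mathbb{P}_{\lambda_n}(\widehat Q(q)>\mathcal{C}_\alpha)-\alpha\bigr|\le\upsilon$;
\item let $\upsilon\downarrow0$.
\end{itemize}
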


\subsection{Extension to Conservative LASSO}
\label{subsec:conservative_LASSO}

In practice, it may be desirable to employ alternative LASSO estimators in
the construction of our test statistic to improve its finite-sample performance. In this section, we consider such an extension using the
conservative LASSO estimator proposed by \cite{caner2018asymptotically}. We let $\widehat\beta_j$ denote the $j$th element of $\widehat\beta$ from the original LASSO estimator \eqref{eq:beta-hat}, and consider the conservative LASSO
\begin{align*}
    \widehat{\beta}_{\rm C} \in \arg \min_{\beta \in \mathbb{R}^p} \frac{1}{n} \lVert Y - {\bf X}\beta \rVert_2^2 + 2\lambda_0 \sum_{j \in [p]}\hat w_j |\beta_j|,
\end{align*}
where $\hat w_j = \lambda_{{\rm prec},0}/(|\widehat{\beta}_j| \vee \lambda_{{\rm prec},0})$ for $j \in [p]$, and $\lambda_{{\rm prec},0}$ is a prespecified tuning parameter.\footnote{We follow \cite{caner2018asymptotically} and set $\lambda_{\mathrm{prec},j}=\frac{9}{4}\lambda_n$, where $\lambda_n$ denotes the penalty parameter for the ordinary LASSO. In our implementation, $\lambda_n$ is either chosen to satisfy $\lambda_n \asymp \sqrt{\log(np)/n}$ or selected using BIC or GIC; see section \ref{sec:sim:description_of_test} for more details.} Similarly, define
\begin{align*}
    \widehat{\gamma}_{{\rm C},j} \in \argmin_{\gamma \in \mathbb{R}^{p-1}} \frac{1}{n} \lVert X_j - {\bf X}_{-j} \gamma  \rVert_2^2+ 2\lambda_j \sum_{k\neq j} \hat w_{j,k} |\gamma_k|,
\end{align*}
where $\hat w_{j,k} = \lambda_{{\rm prec},j}/(|\widehat{\gamma}_{j}(k)| \vee \lambda_{{\rm prec},j})$ and we recall that $\widehat{\gamma}_{j}(k)$ is the $k$th element of $\widehat\gamma_j \in \mathbb{R}^{p-1}$ defined in \eqref{eq:gamma_j}.

Following this, we modify $\widehat{\Theta}$, $\widehat{b}$, and $\theta$ in the test statistic as described in Section~\ref{sec:Test_Statistic}, and modify $\{\widehat{\sigma}_i^2\}_{i\in[n]}$, $\{\widetilde{w}_{i,n}\}_{i\in[r_n]}$, and $\widehat{\Phi}$ in the construction of the critical value in Section~\ref{sec:CV} accordingly. We denote the resulting test statistic and critical value by $\widehat{Q}^{\rm C}(q)$ and $\mathcal{C}^{\rm C}_\alpha$, respectively.

\begin{thm}
Let \( \{ \Lambda_n \}_{n \in \mathbb{N}} \) be a sequence of classes of DGPs. Suppose that $\lambda_{{\rm prec},0} \asymp \lambda_0 \asymp \sqrt{\log(np)/n}$, $\lambda_{{\rm prec},j} \asymp \lambda_j \asymp \sqrt{\log(np)/n}$ (uniformly in $j \in [p]$) and $\varpi \in (0,1)$ are suitably chosen. If Assumption \ref{ass:size} holds uniformly over each \( \Lambda_n \) and $s_0^* = O(1)$, then under the null,
\begin{align*}
    \liminf_{n \rightarrow \infty} \inf_{\lambda_n \in \Lambda_n} \mathbb{P}_{\lambda_n} \left( \widehat{Q}^{\rm C}(q) > \mathcal{C}^{\rm C}_\alpha  \right) = \limsup_{n \rightarrow \infty} \sup_{\lambda_n \in \Lambda_n} \mathbb{P}_{\lambda_n} \left( \widehat{Q}^{\rm C}(q) > \mathcal{C}^{\rm C}_\alpha  \right) = \alpha.
\end{align*}
\label{thm:size-control_random_design_extension}
\end{thm}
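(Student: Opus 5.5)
The plan is to reduce Theorem~\ref{thm:size-control_random_design_extension} to Theorem~\ref{thm:size-control_fixed_design}, exactly as Theorem~\ref{thm:size-control_random_design} follows from Proposition~\ref{prop:sufficient_conditions_for_fixed_design}. The key observation is that Theorem~\ref{thm:size-control_fixed_design} does not use the specific form of the LASSO. It only requires the high-level Conditions 1--4, stated in terms of $\widehat\beta$, $\Delta$, $\theta$ and $P_\theta$. So it suffices to show that, under Assumption~\ref{ass:size}, the added condition $s_0^*=O(1)$ and the stated tuning choices, Conditions 1--4 hold with probability approaching one when $\widehat\beta_{\rm C}$ and $\widehat\gamma_{{\rm C},j}$ replace $\widehat\beta$ and $\widehat\gamma_j$. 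Uniformity over $\Lambda_n$ is obtained by conditioning on the high-probability event where these conditions hold, and noting that all bounds depend on the DGP only through the constants in Assumption~\ref{ass:size}.

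\textbf{Step 1: oracle bounds for the conservative LASSO.} Work on the event $\mathcal{T}$ where the first-step LASSO bounds from the proof of Proposition~\ref{prop:sufficient_conditions_for_fixed_design} hold:
\begin{itemize}
\item $\|\widehat\beta-\beta^0\|_1\lesssim s_0\lambda_0$;
\item $\|{\bf X}(\widehat\beta-\beta^0)\|_2^2/n\lesssim s_0\lambda_0^2$;
\item the empirical process bound $\|{\bf X}^\top\eps/n\|_\infty\lesssim\lambda_0$, which uses the sub-Weibull and polynomial-moment concentration;
\item a compatibility or restricted eigenvalue bound for $\widehat\Sigma$.
\end{itemize}
Since $\hat w_j\in(0,1]$, with $\hat w_j=1$ whenever $|\widehat\beta_j|\le\lambda_{{\rm prec},0}$, the weighted penalty is no larger than the ordinary one. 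It still equals $\lambda_0$ off the set $\{j:|\widehat\beta_j|>\lambda_{{\rm prec},0}\}$, and this set has cardinality $\lesssim s_0$ on $\mathcal{T}$. Following \cite{caner2018asymptotically}, I will run the standard basic-inequality argument with the weighted $\ell_1$ norm. The weights are at most one on $S_0$ and equal one on $S_0^c\setminus\widehat S$, and on the small set $\widehat S\setminus S_0$ the contribution of $|\widehat\beta_j|$ is controlled by $\|\widehat\beta-\beta^0\|_1$. This gives a cone condition with an enlarged constant, and therefore the same rates $\|\widehat\beta_{\rm C}-\beta^0\|_1\lesssim s_0\lambda_0$ and $\|{\bf X}(\widehat\beta_{\rm C}-\beta^0)\|_2^2\lesssim s_0\log(np)$.

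The same argument applied node-wise to $\widehat\gamma_{{\rm C},j}$, uniformly in $j$, gives $\|\widehat\gamma_{{\rm C},j}-\gamma_j\|_1\lesssim s_j\lambda_j$. It also gives $|\widehat\tau_{{\rm C},j}^2-\tau_j^2|$ small and the KKT bound $\|\widehat\Sigma\widehat\Theta_{{\rm C},j}^\top-e_j\|_\infty\lesssim\lambda_j/\widehat\tau^2_{{\rm C},j}$. The assumption $s_0^*=O(1)$ is used here: the weighted KKT condition gives only $\lambda_j\hat w_{j,k}\le\lambda_j$, and handling the effect of the random weights on the cone in the node-wise problems uniformly over $p$ regressions is straightforward only when $s_j$ is bounded.

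\textbf{Step 2: verify the conditions of Theorem~\ref{thm:size-control_fixed_design}.} Condition 2 is unchanged. For Condition 3, bound $\|\Delta\|_\infty\le\sqrt n\max_j\|\widehat\Sigma\widehat\Theta_{{\rm C},j}^\top-e_j\|_\infty\|\widehat\beta_{\rm C}-\beta^0\|_1\lesssim s_0\log(np)/\sqrt n$, which is $o(1/\sqrt m)$ by Assumption~\ref{ass:size}.4. The prediction-norm part of Condition 3 follows from Step 1 together with a sub-Weibull maximal bound on $\|{\bf X}(\widehat\beta_{\rm C}-\beta^0)\|_\infty\le\max_{i,j}|X_{ij}|\,\|\widehat\beta_{\rm C}-\beta^0\|_1$. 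Conditions 1 and 4 concern $p_n$, $\theta$ and $\Phi$, which are functions of $R\widehat\Theta_{\rm C}$. I will reuse the arguments behind Proposition~\ref{prop:ridge_parameter} and Proposition~\ref{prop:sufficient_conditions_for_fixed_design}. Those arguments depend on $\widehat\Theta$ only through $\|R(\widehat\Theta_{\rm C}-\Theta)\|$-type bounds, for example $\max_j\|\widehat\Theta_{{\rm C},j}-\Theta_j\|_1\lesssim s_0^*\lambda_j$ and the closeness of $R\widehat\Theta_{\rm C}\widehat\Sigma\widehat\Theta_{\rm C}^\top R^\top$ to $R\Theta R^\top$ in operator norm at rate $m s_0^*\sqrt{\log(np)/n}=o(1)$. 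Step 1 provides exactly these inputs.

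\textbf{Main obstacle.} I expect Step 1 to be the main obstacle: showing the oracle inequality when the penalty weights are data-dependent and can be small on false positives. The planned approach is to bound the set $\widehat S=\{|\widehat\beta_j|>\lambda_{\rm prec}\}$ using $|\widehat S\setminus S_0|\le\|\widehat\beta-\beta^0\|_1/\lambda_{\rm prec}\lesssim s_0$, and then treat $S_0\cup\widehat S$ as an enlarged support of size $O(s_0)$ in the compatibility argument.
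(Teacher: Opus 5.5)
Your proposal is correct, but it handles the data-dependent weights differently from the paper.

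\textbf{The paper's route.} The paper shows that, with probability approaching one, the conservative weights are exactly one off the true supports: $\hat w_j=1$ on $S_0^c$ and $\hat w_{j,k}=1$ on $S_j^c$. Lemma \ref{lem:main_LASSO} and Lemmas \ref{lem:hatgamma}--\ref{lem:hatTheta} are already stated for weights in $[0,1]$ that equal one off the true support. So they apply verbatim, and the proof of Theorem \ref{thm:size-control_random_design} goes through unchanged. Getting there requires $\ell_\infty$ control of the first-step errors. The KKT conditions give $\lVert\widehat\beta-\beta^0\rVert_\infty\le 2\lambda_0\lVert\Theta\rVert_\infty$ with $\lVert\Theta\rVert_\infty\lesssim\sqrt{s_0^*}$. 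Node-wise, $\lVert\widehat\gamma_j-\gamma_j\rVert_\infty\le\lVert\widehat\gamma_j-\gamma_j\rVert_2\lesssim\lambda_j\sqrt{s_j}$. This is exactly where $s_0^*=O(1)$ enters, along with a sufficiently large constant in $\lambda_{{\rm prec}}\asymp\lambda$.

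\textbf{Your route.} You use only the first-step $\ell_1$ bound. False positives satisfy $\lvert\widehat S\setminus S_0\rvert\le\lVert\widehat\beta-\beta^0\rVert_1/\lambda_{{\rm prec},0}\lesssim s_0$. The basic inequality then gives $\lVert\widehat\Delta_{T^c}\rVert_1\le 3\lVert\widehat\Delta_T\rVert_1$ with $T=S_0\cup\widehat S$. You therefore need the RE condition at sparsity $Cs_0$ (respectively $C(s_0^*+1)$ node-wise) rather than $s_0$. This holds under the same requirements $s_0\delta_n=o_\Prob(1)$ and $s_0^*\delta_n=o_\Prob(1)$, since RE is an infimum over all supports of that size, so the randomness of $T$ is harmless. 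The cost is re-running the lemmas with an enlarged random support instead of citing them directly. The gain is that your argument never uses $s_0^*=O(1)$ or a particular constant in $\lambda_{{\rm prec}}$. It would therefore give the extension under Assumption \ref{ass:size} alone.

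\textbf{Two corrections to your write-up.}
\begin{itemize}
\item What the weighted basic inequality gives is a cone relative to an \emph{enlarged support}, not a cone relative to $S_0$ with an enlarged constant. On $\widehat S\setminus S_0$ the weights can be arbitrarily small, so no cone relative to $S_0$ follows. Your ``Main obstacle'' paragraph states this correctly; Step 1 does not.
\item Your stated reason for needing $s_0^*=O(1)$ in the node-wise step does not hold. The same enlarged-support argument works uniformly in $j$, because $\lvert\widehat S_j\setminus S_j\rvert\lesssim s_j\le s_0^*$. In your proof the assumption enters only through your operator-norm rate $m s_0^*\sqrt{\log(np)/n}$. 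That rate can be replaced by the sharper $m\sqrt{s_0^*\log(np)/n}+m s_0^*(\log(np))^{4/\nu}/n$ from the proof of Lemma \ref{lem:op_norm_bound}, which is $o(1)$ under Assumption \ref{ass:size}.4.
\end{itemize}
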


\section{Monte Carlo Study}
\label{sec:sim}

In this section, we present Monte Carlo simulation results comparing our proposed test (LMTW) with alternative tests in the literature based on the desparsified LASSO framework. In particular, we compare LMTW against the homoskedastic Wald test based on \cite{van-de-geer(2014)} and a heteroskedasticity-robust Wald test based on the conservative LASSO of \cite{caner2018asymptotically}, extended to accommodate $r \geq 1$. For fixed $r$, the theoretical results in these papers can be extended to establish the validity of the corresponding tests under their respective assumptions. It is less clear, however, whether and how their theoretical arguments can be extended to the setting where $r \rightarrow \infty$.

We also consider a variant of LMTW in which the ordinary LASSO estimators are replaced by the conservative LASSO estimators of \cite{caner2018asymptotically}. All reported results are based on $1{,}000$ Monte Carlo replications.

\subsection{Description of Tests}
\label{sec:sim:description_of_test}

We compare the finite-sample performance of our proposed test with two benchmarks, namely homoskedastic and heteroskedasticity-robust Wald tests. Our tests consider the Bayesian Information Crietiron (BIC) and Generalized Information Criterion (GIC). 

\begin{enumerate}

    \item {\bf Wald-HE}: The heteroskedasticity-robust Wald test of \cite{caner2018asymptotically} is implemented using ordinary LASSO estimation. The tuning parameter selected for the initial regression is also used for the node-wise regression.

    \item {\bf Wald-HO}: The homoskedastic Wald test is implemented using the same approximate inverse as Wald-HE and the scaled LASSO estimator of \cite{sun2012scaled} for the initial regression following \cite{van-de-geer(2014)}.

    \item {\bf LMTW}: Our proposed test defined in \eqref{eq:rejection-criteria} is implemented using ordinary LASSO estimation for both the initial and node-wise regressions. Tuning parameters are seperately selected using BIC and GIC.

    \item {\bf LMTW-CL}: Our proposed test defined in \eqref{eq:rejection-criteria} is implemented using conservative LASSO estimation described in Section \ref{subsec:conservative_LASSO} for both the initial and node-wise regressions. Tuning parameters are seperately selected using BIC and GIC.
    
\end{enumerate}

Implementation details such as information criteria, tuning parameter selection and procedures, conservative LASSO weights, and critical value computations are found in Appendix \ref{sec:appendix_implement}.

\subsection{Simulation Design}
We consider two dimensional settings: $(n,p)=(200,100)$ and $(n,p)=(200,300)$. In both settings, the number of nonzero elements of $\beta^0$ is fixed at $s_0=5$. Each nonzero coefficient is set equal to $1$, with their locations equally spaced across the coordinates of $\beta^0$.

We consider two restriction matrices. The first imposes restrictions on one nonzero and four zero elements of $\beta^0$, so that $r=m=5$. The second imposes restrictions on all five nonzero and twenty zero elements of $\beta^0$, so that $r=m=25$.

For the power analysis, we consider deviations from the null by testing against
\[
    \widetilde q
    =
    q
    +
    \delta
    \left(
    \underbrace{\frac{1}{\sqrt r},\ldots,
    \frac{1}{\sqrt r}}_{r}
    \right)^\top,
    \qquad
    \delta\in\{0.2,0.4\}.
\]

We focus on the random-design setting, so that both ${\bf X}$ and $\varepsilon$ are independently resampled in each Monte Carlo replication. For the design matrix ${\bf X}$, we consider Gaussian  and exponential designs, both standardized to have zero mean and unit variance, with covariance structure
\[
    \Sigma_{ij}=(0.5)^{|i-j|},
    \qquad i,j\in[p].
\]
For the errors, we consider Gaussian and Student's $t$ distributions with five degrees of freedom, both standardized to have zero mean and unit variance. We consider both homoskedastic and heteroskedastic specifications. Under heteroskedasticity, the conditional error variance is given by
\[
    \sigma_i^2
    =
    \frac{0.5+|{\bf X}_{i,1}|}
         {0.5+\mathbb{E}|{\bf X}_{i,1}|},
\]
so that $\mathbb{E}[\sigma_i^2]=1$, ensuring that the unconditional error variance remains equal to one, as in the homoskedastic specification.

Tables \ref{tab:size_power_n200_p100_r5_homo}-\ref{tab:size_power_n200_p100_r25_hetero} report the simulation results for $(n,p)=(200,100)$. The results for $(n,p)=(200,300)$ are reported in Appendix \ref{sec:additional_simulation}, and the pattern is similar to the one we observe here. The proposed test statistic delivers substantially better size control than the homoskedastic Wald benchmark under heteroskedasticity, especially when the design is non-Gaussian. Under homoskedasticity and small $r$, the homoskedastic Wald test can be competitive, but its size deteriorates more severely as the number of restrictions increases. For the heteroskedasticity-robust Wald test, the size control is competitive with our test statistic when $r$ is small under both homoskedasticity and heteroskedasticity. However, when $r$ is large, it has even more severe size distortion than the homoskedastic Wald test. This suggests that estimating and inverting the full sandwich covariance matrix can be very unstable in this regime. We also note that the use of conservative LASSO in our test statistic can improve size control when $r$ is large. In addition, our test statistic with ordinary LASSO has better power than the homoskedastic Wald test under homoskedasticity and small $r$ where the sizes of all tests are close to each other. Our proposed test with conservative LASSO exhibits higher rejection rates under alternatives than under ordinary LASSO, although these rates remain below those of the heteroskedasticity-robust Wald test under both homoskedasticity and heteroskedasticity.\footnote{Note, however, that the heteroskedasticity-robust Wald test exhibits greater size distortions when $r$ is large. Rejection rates under alternatives are not adjusted for differences in empirical size and therefore should not be interpreted as size-adjusted power comparisons}. Finally, note that the power of our test statistic is slightly better when BIC is used to select the tuning parameters. Given our emphasis on size control, our proposed test obtains smaller size distortions than the benchmark Wald tests in heteroskedastic designs with large $r$, with the conservative LASSO implementation generally reducing distortions further.

\begin{table}[H]
\centering
\begin{threeparttable}
\caption{Empirical Size and Power: $(n,p,r)=(200,100,5)$, Homoskedastic Errors}
\label{tab:size_power_n200_p100_r5_homo}
\scriptsize
\setlength{\tabcolsep}{2.7pt}
\begin{tabular}{lcccccccccccc}
\toprule\toprule
& \multicolumn{3}{c}{D1} & \multicolumn{3}{c}{D2} & \multicolumn{3}{c}{D3} & \multicolumn{3}{c}{D4} \\
\cmidrule(lr){2-4}\cmidrule(lr){5-7}\cmidrule(lr){8-10}\cmidrule(lr){11-13}
& $\delta=0$ & $\delta=0.2$ & $\delta=0.4$ & $\delta=0$ & $\delta=0.2$ & $\delta=0.4$ & $\delta=0$ & $\delta=0.2$ & $\delta=0.4$ & $\delta=0$ & $\delta=0.2$ & $\delta=0.4$ \\
\midrule
Wald-HO (BIC)
& 0.066 & 0.291 & 0.953 & 0.062 & 0.309 & 0.950 & 0.056 & 0.327 & 0.950 & 0.067 & 0.329 & 0.923 \\
Wald-HO (GIC)
& 0.071 & 0.266 & 0.934 & 0.070 & 0.278 & 0.935 & 0.060 & 0.299 & 0.941 & 0.074 & 0.305 & 0.907 \\
\addlinespace
Wald-HE (BIC)
& 0.059 & 0.503 & 0.992 & 0.051 & 0.554 & 0.994 & 0.052 & 0.564 & 0.992 & 0.065 & 0.593 & 0.993 \\
Wald-HE (GIC)
& 0.061 & 0.476 & 0.986 & 0.050 & 0.522 & 0.991 & 0.055 & 0.530 & 0.989 & 0.067 & 0.555 & 0.987 \\
\addlinespace
LMTW (BIC)
& 0.060 & 0.388 & 0.976 & 0.059 & 0.385 & 0.979 & 0.060 & 0.386 & 0.976 & 0.062 & 0.393 & 0.958 \\
LMTW (GIC)
& 0.069 & 0.342 & 0.970 & 0.062 & 0.337 & 0.961 & 0.061 & 0.349 & 0.967 & 0.062 & 0.349 & 0.939 \\
\addlinespace
LMTW-CL (BIC)
& 0.056 & 0.443 & 0.991 & 0.043 & 0.467 & 0.993 & 0.046 & 0.475 & 0.986 & 0.056 & 0.465 & 0.981 \\
LMTW-CL (GIC)
& 0.056 & 0.417 & 0.986 & 0.043 & 0.449 & 0.989 & 0.046 & 0.457 & 0.984 & 0.058 & 0.446 & 0.975 \\
\bottomrule
\end{tabular}
\begin{tablenotes}[flushleft]
\scriptsize
\item \textit{Notes:} The nominal size is $\alpha=0.05$. The column $\delta=0$ reports empirical size, while $\delta=0.2$ and $\delta=0.4$ report empirical power. D1: Gaussian $X$ and Gaussian errors; D2: centered-exponential $X$ and Gaussian errors; D3: Gaussian $X$ and standardized $t_5$ errors; D4: centered-exponential $X$ and standardized $t_5$ errors. The errors are homoskedastic in all four designs. Wald-HO is the homoskedastic Wald test; Wald-HE is the heteroskedasticity-robust Wald test using conservative-LASSO; LMTW and LMTW-CL use ordinary/plain- and conservative-LASSO respectively.
\end{tablenotes}
\end{threeparttable}
\end{table}

\begin{table}[H]
\centering
\begin{threeparttable}
\caption{Empirical Size and Power: $(n,p,r)=(200,100,5)$, Heteroskedastic Errors}
\label{tab:size_power_n200_p100_r5_hetero}
\scriptsize
\setlength{\tabcolsep}{2.7pt}
\begin{tabular}{lcccccccccccc}
\toprule\toprule
& \multicolumn{3}{c}{D1} & \multicolumn{3}{c}{D2} & \multicolumn{3}{c}{D3} & \multicolumn{3}{c}{D4} \\
\cmidrule(lr){2-4}\cmidrule(lr){5-7}\cmidrule(lr){8-10}\cmidrule(lr){11-13}
& $\delta=0$ & $\delta=0.2$ & $\delta=0.4$ & $\delta=0$ & $\delta=0.2$ & $\delta=0.4$ & $\delta=0$ & $\delta=0.2$ & $\delta=0.4$ & $\delta=0$ & $\delta=0.2$ & $\delta=0.4$ \\
\midrule
Wald-HO (BIC)
& 0.118 & 0.373 & 0.936 & 0.231 & 0.470 & 0.951 & 0.100 & 0.385 & 0.945 & 0.246 & 0.504 & 0.926 \\
Wald-HO (GIC)
& 0.124 & 0.345 & 0.915 & 0.247 & 0.451 & 0.936 & 0.106 & 0.355 & 0.924 & 0.260 & 0.482 & 0.907 \\
\addlinespace
Wald-HE (BIC)
& 0.064 & 0.458 & 0.973 & 0.095 & 0.540 & 0.974 & 0.067 & 0.514 & 0.979 & 0.107 & 0.592 & 0.970 \\
Wald-HE (GIC)
& 0.065 & 0.433 & 0.964 & 0.099 & 0.505 & 0.965 & 0.065 & 0.486 & 0.967 & 0.114 & 0.566 & 0.963 \\
\addlinespace
LMTW (BIC)
& 0.062 & 0.307 & 0.934 & 0.083 & 0.287 & 0.835 & 0.056 & 0.333 & 0.934 & 0.091 & 0.313 & 0.821 \\
LMTW (GIC)
& 0.060 & 0.277 & 0.889 & 0.079 & 0.243 & 0.798 & 0.056 & 0.285 & 0.904 & 0.090 & 0.287 & 0.783 \\
\addlinespace
LMTW-CL (BIC)
& 0.051 & 0.355 & 0.952 & 0.087 & 0.314 & 0.872 & 0.045 & 0.392 & 0.955 & 0.088 & 0.331 & 0.838 \\
LMTW-CL (GIC)
& 0.050 & 0.341 & 0.943 & 0.086 & 0.304 & 0.867 & 0.045 & 0.373 & 0.948 & 0.088 & 0.323 & 0.825 \\
\bottomrule
\end{tabular}
\begin{tablenotes}[flushleft]
\scriptsize
\item \textit{Notes:} The nominal size is $\alpha=0.05$. The column $\delta=0$ reports empirical size, while $\delta=0.2$ and $\delta=0.4$ report empirical power. D1: Gaussian $X$ and Gaussian errors; D2: centered-exponential $X$ and Gaussian errors; D3: Gaussian $X$ and standardized $t_5$ errors; D4: centered-exponential $X$ and standardized $t_5$ errors. The errors are heteroskedastic in all four designs. Wald-HO is the homoskedastic Wald test; Wald-HE is the heteroskedasticity-robust Wald test using conservative-LASSO; LMTW and LMTW-CL use ordinary/plain-- and conservative-LASSO respectively.
\end{tablenotes}
\end{threeparttable}
\end{table}

\begin{table}[H]
\centering
\begin{threeparttable}
\caption{Empirical Size and Power: $(n,p,r)=(200,100,25)$, Homoskedastic Errors}
\label{tab:size_power_n200_p100_r25_homo}
\scriptsize
\setlength{\tabcolsep}{2.7pt}
\begin{tabular}{lcccccccccccc}
\toprule\toprule
& \multicolumn{3}{c}{D1} & \multicolumn{3}{c}{D2} & \multicolumn{3}{c}{D3} & \multicolumn{3}{c}{D4} \\
\cmidrule(lr){2-4}\cmidrule(lr){5-7}\cmidrule(lr){8-10}\cmidrule(lr){11-13}
& $\delta=0$ & $\delta=0.2$ & $\delta=0.4$ & $\delta=0$ & $\delta=0.2$ & $\delta=0.4$ & $\delta=0$ & $\delta=0.2$ & $\delta=0.4$ & $\delta=0$ & $\delta=0.2$ & $\delta=0.4$ \\
\midrule
Wald-HO (BIC)
& 0.147 & 0.267 & 0.734 & 0.149 & 0.241 & 0.772 & 0.137 & 0.248 & 0.736 & 0.137 & 0.256 & 0.764 \\
Wald-HO (GIC)
& 0.172 & 0.271 & 0.712 & 0.173 & 0.245 & 0.737 & 0.162 & 0.261 & 0.699 & 0.164 & 0.266 & 0.730 \\
\addlinespace
Wald-HE (BIC)
& 0.199 & 0.521 & 0.976 & 0.192 & 0.551 & 0.975 & 0.185 & 0.581 & 0.976 & 0.177 & 0.590 & 0.990 \\
Wald-HE (GIC)
& 0.214 & 0.510 & 0.963 & 0.202 & 0.526 & 0.970 & 0.195 & 0.571 & 0.973 & 0.192 & 0.567 & 0.988 \\
\addlinespace
LMTW (BIC)
& 0.092 & 0.202 & 0.753 & 0.089 & 0.171 & 0.748 & 0.082 & 0.189 & 0.721 & 0.071 & 0.164 & 0.726 \\
LMTW (GIC)
& 0.096 & 0.177 & 0.695 & 0.087 & 0.151 & 0.675 & 0.086 & 0.164 & 0.649 & 0.069 & 0.137 & 0.642 \\
\addlinespace
LMTW-CL BIC
& 0.071 & 0.209 & 0.840 & 0.069 & 0.193 & 0.841 & 0.056 & 0.184 & 0.805 & 0.050 & 0.171 & 0.807 \\
LMTW-CL (GIC)
& 0.069 & 0.198 & 0.818 & 0.071 & 0.179 & 0.822 & 0.058 & 0.179 & 0.777 & 0.051 & 0.167 & 0.784 \\
\bottomrule
\end{tabular}
\begin{tablenotes}[flushleft]
\scriptsize
\item \textit{Notes:} The nominal size is $\alpha=0.05$. The column $\delta=0$ reports empirical size, while $\delta=0.2$ and $\delta=0.4$ report empirical power. D1: Gaussian $X$ and Gaussian errors; D2: centered-exponential $X$ and Gaussian errors; D3: Gaussian $X$ and standardized $t_5$ errors; D4: centered-exponential $X$ and standardized $t_5$ errors. The errors are homoskedastic in all four designs. Wald-HO is the homoskedastic Wald test; Wald-HE is the heteroskedasticity-robust Wald test using conservative-LASSO; LMTW and LMTW-CL use ordinary/plain-- and conservative-LASSO respectively.
\end{tablenotes}
\end{threeparttable}
\end{table}

\begin{table}[H]
\centering
\begin{threeparttable}
\caption{Empirical Size and Power: $(n,p,r)=(200,100,25)$, Heteroskedastic Errors}
\label{tab:size_power_n200_p100_r25_hetero}
\scriptsize
\setlength{\tabcolsep}{2.7pt}
\begin{tabular}{lcccccccccccc}
\toprule\toprule
& \multicolumn{3}{c}{D1} & \multicolumn{3}{c}{D2} & \multicolumn{3}{c}{D3} & \multicolumn{3}{c}{D4} \\
\cmidrule(lr){2-4}\cmidrule(lr){5-7}\cmidrule(lr){8-10}\cmidrule(lr){11-13}
& $\delta=0$ & $\delta=0.2$ & $\delta=0.4$ & $\delta=0$ & $\delta=0.2$ & $\delta=0.4$ & $\delta=0$ & $\delta=0.2$ & $\delta=0.4$ & $\delta=0$ & $\delta=0.2$ & $\delta=0.4$ \\
\midrule
Wald-HO (BIC)
& 0.184 & 0.291 & 0.778 & 0.244 & 0.361 & 0.831 & 0.168 & 0.278 & 0.758 & 0.237 & 0.364 & 0.834 \\
Wald-HO (GIC)
& 0.213 & 0.302 & 0.747 & 0.265 & 0.364 & 0.791 & 0.194 & 0.284 & 0.732 & 0.268 & 0.367 & 0.815 \\
\addlinespace
Wald-HE (BIC)
& 0.215 & 0.612 & 0.984 & 0.252 & 0.686 & 0.993 & 0.209 & 0.651 & 0.992 & 0.249 & 0.742 & 0.997 \\
Wald-HE (GIC)
& 0.221 & 0.597 & 0.981 & 0.264 & 0.666 & 0.989 & 0.221 & 0.632 & 0.983 & 0.270 & 0.721 & 0.994 \\
\addlinespace
LMTW (BIC)
& 0.087 & 0.168 & 0.722 & 0.096 & 0.183 & 0.668 & 0.074 & 0.164 & 0.676 & 0.066 & 0.158 & 0.641 \\
LMTW (GIC)
& 0.091 & 0.152 & 0.653 & 0.097 & 0.156 & 0.591 & 0.073 & 0.142 & 0.614 & 0.067 & 0.138 & 0.576 \\
\addlinespace
LMTW-CL (BIC)
& 0.069 & 0.184 & 0.786 & 0.083 & 0.194 & 0.734 & 0.043 & 0.151 & 0.754 & 0.053 & 0.157 & 0.709 \\
LMTW-CL (GIC)
& 0.070 & 0.175 & 0.764 & 0.079 & 0.188 & 0.703 & 0.045 & 0.145 & 0.739 & 0.058 & 0.144 & 0.682 \\
\bottomrule
\end{tabular}
\begin{tablenotes}[flushleft]
\scriptsize
\item \textit{Notes:} The nominal size is $\alpha=0.05$. The column $\delta=0$ reports empirical size, while $\delta=0.2$ and $\delta=0.4$ report empirical power. D1: Gaussian $X$ and Gaussian errors; D2: centered-exponential $X$ and Gaussian errors; D3: Gaussian $X$ and standardized $t_5$ errors; D4: centered-exponential $X$ and standardized $t_5$ errors. The errors are heteroskedastic in all four designs. Wald-HO is the homoskedastic Wald test; Wald-HE is the heteroskedasticity-robust Wald test using conservative-LASSO; LMTW and LMTW-CL use ordinary/plain-- and conservative-LASSO respectively.
\end{tablenotes}
\end{threeparttable}
\end{table}

\section{Empirical Application}
\label{sec:empirical_application}

We illustrate the proposed procedure using a high-dimensional version of the Romer-Romer regression for identifying monetary policy shocks. The empirical problem is to explain changes in the federal funds rate using information available to the Federal Reserve at each Federal Open Market Committee (FOMC) meeting. Omitting information contained in the Fed's own forecasts can confound the regression residuals used to measure policy shocks. The ``narrative identification approach'' of \citet{romer1989does, romer2004new} thus regresses changes in federal funds rate on its prevailing level and Fed's real-time forecasts of macroeconomic conditions, with the residual interpreted as the monetary policy shock.

\cite{aruoba2024identifying} augment these controls with sentiment indicators constructed by scoring economic-concept keywords in the FOMC documents.\footnote{The text consists of Greenbook Parts I and II (merged and renamed Tealbook A in 2010) and the Redbook (renamed Beige Book in 1983). Greenbook Part I summarizes recent economic conditions and presents the Fed staff's economic forecasts. Greenbook part II discusses recent economic and financial developments, whereas the Redbook summarizes economic conditions across the 12 Federal Reserve Districts.} Their data therefore provide a natural high-dimensional setting for illustrating joint inference on the coefficients associated with economically meaningful groups of text-based predictors. We analyze their dataset, which covers $n=210$ FOMC meetings from October 5 1982 to October 29 2008, and contains sentiment indicators for the 296 selected economic-concept keywords.\footnote{The sentiment data is available on the authors' website: \href{https://thomas-drechsel.com/research.html}{https://thomas-drechsel.com/research.html}. The sample terminates in 2008 to avoid the zero-lower-bound period.} 


Our benchmark regression is 
\begin{align}
    \Delta i_t
    =
    \alpha + \gamma i_{t-1} + \delta^\top N_t + \zeta^\top S_t + \varepsilon_t,
    \qquad t=1,\ldots, 210,
    \label{RR_reg}
\end{align}
where $i_{t-1}$ denotes the level of the federal funds rate prior to the $t$-th FOMC announcement, $\Delta i_t$ denotes the Fed's policy decision, $N_t$ represents the 132 numerical forecasts in the FOMC documents, $S_t$ is the vector of 296 sentiment indicators, and $\varepsilon_t$ is conceptually the monetary policy shock. Results for $p=858$, constructed by additionally including quadratic terms of $N_t$ and $S_t$, are reported in Appendix \ref{ERR_RC}. Let $\mathbf{X}_t=(1,i_{t-1},N_t^\top, S_t^\top)^\top$, $\mathbf{X}=(\mathbf{X}_1,\ldots,\mathbf{X}_n)^\top$, $\beta^0 = (\alpha, \gamma, \delta^\top, \zeta^\top)^\top$. We assume that $\beta^0$ is sparse, and let $\Delta i = (\Delta i_1, \ldots, \Delta i_n)^\top$, and $\varepsilon = (\varepsilon_1, \ldots, \varepsilon_n)^\top$. Then the regression (\ref{RR_reg}) can be written in the form (\ref{eq:model}):
\begin{align*}
    \Delta i = \mathbf{X} \beta^0 + \varepsilon, \qquad \text{with } (n,p) = (210, 430).
\end{align*}
We emphasize that our objective is not to use the high-dimensional linear regression to identify monetary policy shocks, but rather to conduct inference on its regression coefficients under the sparsity assumption. This extends the regression to a high-dimensional setting, from the conventional low-dimensional setting by \cite{romer2004new} and \cite{croushore2018fiscal}, and examine joint hypotheses concerning its coefficients. The error term $\varepsilon_t$ in \eqref{RR_reg}, interpreted as the monetary policy shocks, is conceptually treated as serially uncorrelated (\cite{bauer2023reassessment}). For the purpose of applying our procedure, we assume independence across FOMC meetings, which is stronger than serial uncorrelatedness. Note that substantial heteroskedasticity in $\varepsilon_t$ has been documented by \citet{coibion2012effects}; see also \citet{schlaak2023monetary}. These features provide a natural setting for applying heteroskedasticity-robust high-dimensional inference. 


To provide empirical context for the regularity conditions underlying our test, Appendix \ref{cali_ERR} reports tail diagnostics for the standardized covariates and moment diagnostics based on the LASSO residuals. We adopt $(\nu,h)=(1,4)$ as conservative working values, corresponding to marginal sub-exponential covariates and the minimum fourth-moment requirement for the errors. These choices avoid imposing stronger distributional assumptions, such as sub-Gaussian covariates or higher-order error moments. We emphasize that the finite-sample diagnostics do not verify the population assumptions or the asymptotic rate conditions in Assumption \ref{ass:size}. They instead provide descriptive evidence to accompany our empirical application, with the robustness of the findings further examined in Appendix \ref{ERR_RC}.

Let $G$ be one of the six clusters of economic concepts: ``oil'', ``inflation'', ``labor'', ``credit'', ``demand'', and ``housing'', with the constituent keywords of each cluster detailed in Table \ref{tab:lag0-linear-main-concepts}. We apply our test to examine which clusters of sentiment indicators are jointly significant for the Fed's monetary policy decisions. For example, when $G = \text{``oil''}$, the parameter of interest is the subvector $\beta^0_G$ consisting of the regression coefficients associated with the sentiment indicators for the keywords ``crude oil'', ``oil imports'', ``oil prices'', and ``petroleum'', so that $m = |G| = 4$. Recalling the Introduction, our testing problem of interest in subvector form is
\begin{align*}
 R_G \beta^0_{G} = 0 
 \qquad  vs. \qquad 
 R_G \beta^0_{G} \neq 0.
\end{align*}
Equivalently, the null hypothesis can be written as $R\beta^0=0$, where $R$ is obtained by embedding $R_G$ into an $\mathbb{R}^{r \times p}$ matrix and padding the columns corresponding to coefficients outside $G$ with zeros. Our proposed test can then be applied directly. We consider two choices of the restriction matrix:
\begin{align*}
R_G
= I_m \qquad \text{or} \qquad R_G = [ - \mathbb{1}_{m-1}, I_{m-1}]
\end{align*}
where $\mathbb{1}_{m-1}$ is the vector of ones of dimension $m-1$. The first choice, $R_G=I_m$, tests the joint null that all coefficients associated with cluster $G$ are equal to zero. The second choice, $R_G=[-\mathbb{1}_{m-1},I_{m-1}]$, tests whether all coefficients within the cluster are equal, thereby examining whether the sentiment indicators within a given cluster have homogeneous effects on the monetary policy outcome.

\begin{table}[!t]
\centering
\footnotesize
\setlength{\tabcolsep}{5.0pt}
\renewcommand{\arraystretch}{1.05}
\begin{threeparttable}
\caption{Economic Concept Keywords underlying Sentiment Indicators in Each Cluster $G$}
\label{tab:lag0-linear-main-concepts}
\begin{tabular}{@{}>{\raggedright\arraybackslash}p{1.65cm}>{\raggedright\arraybackslash}p{14.25cm}@{}}
\toprule
Group $G$ & Related concepts \\
\midrule
credit & ``banks'', ``credit'', ``credit standards'', ``financing'', ``lending'', ``liquidity'', ``loan demand'', ``loans'' \\
demand & ``aggregate demand'', ``business activity'', ``consumption'', ``domestic demand'', ``economic growth'', ``final sales'', ``industrial production'', ``investment'', ``manufacturing activity'', ``output gap'', ``retail sales'' \\
housing & ``home prices'', ``home sales'', ``household debt'', ``household net worth'', ``housing'', ``housing activity'', ``housing starts'', ``mortgage'', ``residential investment'' \\
inflation & ``consumer prices'', ``core inflation'', ``headline inflation'', ``inflation'', ``inflation expectations'', ``inflationary pressures'', ``price pressures'', ``producer prices'', ``unit labor cost'', ``wage pressures'' \\
labor & ``employment'', ``jobs'', ``labor force participation'', ``labor market'', ``payrolls'', ``unemployment'', ``wage pressures'', ``wages'', ``workers'' \\
oil & ``crude oil'', ``oil imports'', ``oil prices'', ``petroleum'' \\
\bottomrule
\end{tabular}
\begin{tablenotes}[flushleft]
\scriptsize
\item Notes: The concepts are from \cite{aruoba2024identifying}; they are single-, double-, or triple-word expressions extracted from FOMC documents and scored by a dictionary-based procedure.
\end{tablenotes}
\end{threeparttable}
\end{table}

Table \ref{tab:lag0-linear-main-p429} summarizes our empirical findings. The first $p$-value column shows that the joint null of zero coefficients is rejected at the 5\% level for the credit, demand, and housing clusters across both ordinary/plain and conservative LASSO implementations and under both tuning criteria. These results indicate that sentiment indicators in these clusters are jointly associated with monetary policy decisions conditional on the remaining controls. The evidence is particularly strong for demand, with the 1\% significance level in all four implementations. Viewed in the context of monetary-policy history, this pattern is consistent with the sample period, which spans Volcker's disinflation legacy, Greenspan's risk-management approach, and Bernanke's response to mounting housing and credit stress. Against this backdrop, the joint significance of demand, credit, and housing suggests that qualitative information on real activity and financial conditions contains incremental information beyond that captured by the numerical staff forecasts $N_t$. By contrast, the insignificant results for concept groupings, inflation and oil, need not imply that these factors were unimportant for policy; a possibility is that if they were already central to the Fed's forecasting framework, then less incremental information would remain in the associated text. Labor presents a less robust pattern: BIC selects ``jobs'' but yields no joint significance, whereas GIC selects no individual labor concept yet strongly rejects the joint null. This contrast highlights the distinction between variable selection and joint inference, suggesting that conclusions regarding the incremental information in labor market sentiment are sensitive to the chosen regularization criterion. Importantly, these findings illustrate how joint inference can reveal statistically significant associations within groups of predictors even when individual variables are not selected by the initial penalized estimator.

The coefficient-equality tests in the second $p$-value column complement this analysis by examining whether the coefficients associated with sentiment indicators are equal within each cluster. For demand, the equality null is rejected at the 5\% level in all four implementations. Thus, not only is demand sentiment jointly relevant, but different dimensions of demand appear to carry different policy implications. Credit shows a similar, though less robust, pattern, with equality rejected at the 5\% level under GIC but not under BIC. Housing is more distinctive. The cluster is jointly significant in all four implementations, yet coefficient equality is not rejected. This finding is consistent with the possibility that the housing indicators capture a common underlying housing market signal, although the failure to reject equality in coefficients does not establish that these effects are identical.

More generally, the two tests provide complementary insights into cluster-level relevance and within-cluster heterogeneity. When both the joint-zero and coefficient-equality null hypotheses are rejected, as in the case of demand, the results suggest that the constituent sentiment concepts are jointly relevant but exert heterogeneous effects, highlighting the importance
of their composition for policy interpretation.
In contrast, joint significance accompanied by non-rejection of coefficient equality, as observed for housing, is consistent with a more parsimonious common-factor interpretation, in which the constituent sentiment concepts have similar estimated effects. For inflation and oil, the joint-zero null hypothesis is not rejected at the 5\% significance level, limiting the additional
economic insights provided by the coefficient-equality tests. Labor presents a more nuanced case: under GIC, the results indicate joint significance without detectable coefficient heterogeneity, resembling the findings for housing. However, this conclusion is sensitive to the choice of regularization criterion.

\begin{table}[H]
\centering
\footnotesize
\setlength{\tabcolsep}{3.0pt}
\renewcommand{\arraystretch}{1.0}
\begin{threeparttable}
\caption{Concept-group tests of zero coefficients and coefficient equality, $(n,p)=(210,430)$}
\label{tab:lag0-linear-main-p429}
\begin{tabular*}{\textwidth}{@{\extracolsep{\fill}}lcllll@{}}
\toprule
& & & & \multicolumn{2}{c}{$p$-value} \\
\cmidrule(l){5-6}
Group $G$ & $m$ & Method & Selected concepts
& $R_G=I_m$
& $R_G=[-\mathbb{1}_{m-1},\,I_{m-1}]$ \\
\midrule
credit & 8 & LMTW (BIC) & ``banks'', ``lending'' & $0.0271^{**}$ & $0.0716^{*}$ \\
 & & LMTW (GIC) & ``banks'' & $<0.001^{***}$ & $0.0217^{**}$ \\
 & & LMTW-CL (BIC) & ``banks'', ``lending'' & $0.0283^{**}$ & $0.0750^{*}$ \\
 & & LMTW-CL (GIC) & ``banks'' & $<0.001^{***}$ & $0.0283^{**}$ \\
\midrule
demand & 11 & LMTW (BIC) & ``business activity'', ``investment'' & $<0.001^{***}$ & $0.0123^{**}$ \\
 & & LMTW (GIC) & ``business activity'' & $<0.001^{***}$ & $0.0026^{***}$ \\
 & & LMTW-CL (BIC) & ``business activity'', ``investment'' & $<0.001^{***}$ & $0.0123^{**}$ \\
 & & LMTW-CL (GIC) & ``business activity'' & $<0.001^{***}$ & $0.0026^{***}$ \\
\midrule
housing & 9 & LMTW (BIC) & ``household net worth'' & $0.0105^{**}$ & $0.4638$ \\
 & & LMTW (GIC) & None & $<0.001^{***}$ & $0.6572$ \\
 & & LMTW-CL (BIC) & ``household net worth'' & $0.0105^{**}$ & $0.4638$ \\
 & & LMTW-CL (GIC) & None & $<0.001^{***}$ & $0.6572$ \\
\midrule
inflation & 10 & LMTW (BIC) & None & $0.5488$ & $0.5927$ \\
 & & LMTW (GIC) & None & $0.2133$ & $0.1805$ \\
 & & LMTW-CL (BIC) & None & $0.5488$ & $0.5927$ \\
 & & LMTW-CL (GIC) & None & $0.2133$ & $0.1805$ \\
\midrule
labor & 9 & LMTW (BIC) & ``jobs'' & $0.1673$ & $0.1720$ \\
 & & LMTW (GIC) & None & $0.0025^{***}$ & $0.1475$ \\
 & & LMTW-CL (BIC) & ``jobs'' & $0.1673$ & $0.1720$ \\
 & & LMTW-CL (GIC) & None & $0.0025^{***}$ & $0.1475$ \\
\midrule
oil & 4 & LMTW (BIC) & None & $0.3089$ & $0.1457$ \\
 & & LMTW (GIC) & None & $0.2420$ & $0.0592^{*}$ \\
 & & LMTW-CL (BIC) & None & $0.2998$ & $0.1592$ \\
 & & LMTW-CL (GIC) & None & $0.2229$ & $0.0623^{*}$ \\
\bottomrule
\end{tabular*}
\begin{tablenotes}[flushleft]
\scriptsize
\item Notes: For coefficient-equality tests $R_G = [-\mathbb{1}_{m-1}, I_{m-1}]$, the reference is the first
concept listed for its group in Table~\ref{tab:lag0-linear-main-concepts}. LMTW and LMTW-CL denote the proposed test implemented with ordinary and conservative LASSO, respectively. BIC and GIC denote the information criteria used to select the penalty parameters in the main and node-wise regressions. ``Selected concepts'' lists the concept selected by the main LASSO within each group. Simulated $p$-values and critical values use $\widetilde\Phi$ and 200,000 weighted chi-square draws.
Stars denote rejection at the 10\% ($^{*}$), 5\% ($^{**}$), and 1\% ($^{***}$) levels.
\end{tablenotes}
\end{threeparttable}
\end{table}

\section{Conclusion}
\label{sec:conclusion}

We propose a novel test for multiple restrictions in high-dimensional linear regression models, allowing for general heteroskedasticity and heavy-tailed covariates and errors. The method combines desparsified estimation with a ridge-regularized quadratic-form statistic and a critical value construction that remains valid under both fixed and diverging restriction dimensions. Our theoretical framework accommodates independent but non-identically distributed covariate rows with marginal sub-Weibull tails and heteroskedastic errors satisfying finite polynomial moment conditions. Simulation results demonstrate substantially improved size control relative to conventional homoskedastic and heteroskedasticity-robust Wald tests, particularly when the number of restrictions is large. Our empirical application illustrates joint inference for groups of text-based predictors in a high-dimensional monetary policy regression.

Several directions for future research remain. One possibility is to extend the proposed framework to generalized linear and nonparametric models. Another is to weaken the conditions relating the effective degrees of freedom of the projection matrix to its rank. Developing alternative critical value approximations, potentially based on resampling methods, may provide a way to relax these conditions, although establishing their validity in high-dimensional settings would require further theoretical investigation.

Overall, our results provide a unified framework for heteroskedasticity-robust joint inference in sparse high-dimensional models, without requiring practitioners to distinguish between fixed and diverging restriction dimensions when implementing the proposed procedure.


\clearpage

\bibliographystyle{chicago}
\singlespacing
\bibliography{reference}

@STRING{At = {Aussenwirtschaft}}

@STRING{MA = {Moving Average}}

@STRING{NBER = {N.B.E.R.}}

@STRING{OR = {Operations Research}}

@Book{VW96,
  author    = {van der Vaart, Aad and Wellner, Jon A},
  publisher = {Springer, New York},
  title     = {Weak Convergence and Empirical Processes},
  year      = {1996},
}

@article{crudu2021,
  title={Inference in instrumental variable models with heteroskedasticity and many instruments},
  author={Crudu, Federico and Mellace, Giovanni and S{\'a}ndor, Zsolt},
  journal={Econometric Theory},
  volume={37},
  number={2},
  pages={281--310},
  year={2021},
  publisher={Cambridge University Press}
}

@article{KSS2020,
  title={Leave-out estimation of variance components},
  author={Kline, Patrick and Saggio, Raffaele and S{\o}lvsten, Mikkel},
  journal={Econometrica},
  volume={88},
  number={5},
  pages={1859--1898},
  year={2020},
  publisher={Wiley Online Library}
}

@article{belloni2012,
  title={Sparse models and methods for optimal instruments with an application to eminent domain},
  author={Belloni, Alexandre and Chen, Daniel and Chernozhukov, Victor and Hansen, Christian},
  journal={Econometrica},
  volume={80},
  number={6},
  pages={2369--2429},
  year={2012},
  publisher={Wiley Online Library}
}

@article{chernozhukov2013,
  title={GAUSSIAN APPROXIMATIONS AND MULTIPLIER BOOTSTRAP FOR MAXIMA OF SUMS OF HIGH-DIMENSIONAL RANDOM VECTORS},
  author={Chernozhukov, Victor and Chetverikov, Denis and Kato, Kengo},
  journal={The Annals of Statistics},
  volume={41},
  number={6},
  pages={2786--2819},
  year={2013}
}

@misc{LWZ2024,
      title={A Dimension-Agnostic Bootstrap Anderson-Rubin Test For Instrumental Variable Regressions}, 
      author={Dennis Lim and Wenjie Wang and Yichong Zhang},
      year={2024},
      eprint={2412.01603},
      archivePrefix={arXiv},
      primaryClass={econ.EM},
      url={https://arxiv.org/abs/2412.01603}, 
}

@article{zhu2018linear,
  title={Linear hypothesis testing in dense high-dimensional linear models},
  author={Zhu, Yinchu and Bradic, Jelena},
  journal={Journal of the American Statistical Association},
  volume={113},
  number={524},
  pages={1583--1600},
  year={2018},
  publisher={Taylor \& Francis}
}

@article{dezeure2017high,
  title={High-dimensional simultaneous inference with the bootstrap},
  author={Dezeure, Ruben and B{\"u}hlmann, Peter and Zhang, Cun-Hui},
  journal={Test},
  volume={26},
  number={4},
  pages={685--719},
  year={2017},
  publisher={Springer}
}

@article{cattaneo2018inference,
  title={Inference in linear regression models with many covariates and heteroscedasticity},
  author={Cattaneo, Matias D and Jansson, Michael and Newey, Whitney K},
  journal={Journal of the American Statistical Association},
  volume={113},
  number={523},
  pages={1350--1361},
  year={2018},
  publisher={Taylor \& Francis}
}

@article{jochmans2022heteroscedasticity,
  title={Heteroscedasticity-robust inference in linear regression models with many covariates},
  author={Jochmans, Koen},
  journal={Journal of the American Statistical Association},
  volume={117},
  number={538},
  pages={887--896},
  year={2022},
  publisher={Taylor \& Francis}
}

@article{zhang2025novel,
  title={A novel approach of high dimensional linear hypothesis testing problem},
  author={Zhang, Zhe and Yu, Xiufan and Li, Runze},
  journal={Journal of the American Statistical Association},
  volume={120},
  number={552},
  pages={2235--2249},
  year={2025},
  publisher={Taylor \& Francis}
}

@article{shi2019linear,
  title={Linear hypothesis testing for high dimensional generalized linear models},
  author={Shi, Chengchun and Song, Rui and Chen, Zhao and Li, Runze},
  journal={Annals of Statistics},
  volume={47},
  number={5},
  pages={2671},
  year={2019}
}

@article{belloni2014inferenceRES,
  title={Inference on treatment effects after selection among high-dimensional controls},
  author={Belloni, Alexandre and Chernozhukov, Victor and Hansen, Christian},
  journal={Review of Economic Studies},
  volume={81},
  number={2},
  pages={608--650},
  year={2014},
  publisher={Oxford University Press}
}

@article{dovi2024ridge,
  title={A ridge-regularized jackknifed Anderson-Rubin test},
  author={Dov{\`\i}, Max-Sebastian and Kock, Anders Bredahl and Mavroeidis, Sophocles},
  journal={Journal of Business and Economic Statistics},
  volume={42},
  number={3},
  pages={1083--1094},
  year={2024},
  publisher={Taylor \& Francis}
}

@book{Horn-Johnson-Matrix-Analysis,
  title     = {Matrix Analysis},
  author    = {Horn, Roger A. and Johnson, Charles R.},
  edition   = {2},
  publisher = {Cambridge University Press},
  address   = {Cambridge},
  year      = {2013},
  isbn      = {9780521548236}
}

@article{caner2018asymptotically,
  title={Asymptotically honest confidence regions for high dimensional parameters by the desparsified conservative Lasso},
  author={Caner, Mehmet and Kock, Anders Bredahl},
  journal={Journal of Econometrics},
  volume={203},
  number={1},
  pages={143--168},
  year={2018},
  publisher={Elsevier}
}

@article{kuchibhotla2022moving,
  title={Moving beyond sub-Gaussianity in high-dimensional statistics: Applications in covariance estimation and linear regression},
  author={Kuchibhotla, Arun Kumar and Chakrabortty, Abhishek},
  journal={Information and Inference: A Journal of the IMA},
  volume={11},
  number={4},
  pages={1389--1456},
  year={2022},
  publisher={Oxford University Press}
}

@article{javanmard2014confidence,
  title={Confidence intervals and hypothesis testing for high-dimensional regression},
  author={Javanmard, Adel and Montanari, Andrea},
  journal={The Journal of Machine Learning Research},
  volume={15},
  number={1},
  pages={2869--2909},
  year={2014},
  publisher={JMLR. org}
}

@article{zhang2014confidence,
  title={Confidence intervals for low dimensional parameters in high dimensional linear models},
  author={Zhang, Cun-Hui and Zhang, Stephanie S},
  journal={Journal of the Royal Statistical Society Series B: Statistical Methodology},
  volume={76},
  number={1},
  pages={217--242},
  year={2014},
  publisher={Oxford University Press}
}

@article{chernozhukov2018double,
  title={Double/debiased machine learning for treatment and structural parameters},
  author={Chernozhukov, Victor and Chetverikov, Denis and Demirer, Mert and Duflo, Esther and Hansen, Christian and Newey, Whitney and Robins, James},
  journal={The Econometrics Journal},
  volume={21},
  number={1},
  pages={C1--C68},
  year={2018},
  publisher={JSTOR}
}

@article{rudelson2015small,
  title={Small ball probabilities for linear images of high-dimensional distributions},
  author={Rudelson, Mark and Vershynin, Roman},
  journal={International Mathematics Research Notices},
  volume={2015},
  number={19},
  pages={9594--9617},
  year={2015},
  publisher={Oxford University Press}
}

@article{ning2017general,
  title   = {A general theory of hypothesis tests and confidence regions for sparse high dimensional models},
  author  = {Ning, Yang and Liu, Han},
  journal = {The Annals of Statistics},
  year    = {2017},
  volume  = {45},
  number  = {1},
  pages   = {158--195},
}

@article{zhang2017simultaneous,
  title={Simultaneous inference for high-dimensional linear models},
  author={Zhang, Xianyang and Cheng, Guang},
  journal={Journal of the American Statistical Association},
  volume={112},
  number={518},
  pages={757--768},
  year={2017},
  publisher={Taylor \& Francis}
}

@article{aruoba2024identifying,
  title       = {Identifying Monetary Policy Shocks: A Natural Language Approach},
  author      = {Aruoba, S. Bora{\u{g}}an and Drechsel, Thomas},
  year        = {2026},
  journal = {American Economic Journal: Macroeconomics},
  note        = {Forthcoming.}
}

@article{romer2004new,
  title={A new measure of monetary shocks: derivation and implications},
  author={Romer, Christina D and Romer, David H},
  journal={American economic review},
  volume={94},
  number={4},
  pages={1055--1084},
  year={2004},
  publisher={American Economic Association}
}

@article{xia2018two,
  title={Two-sample tests for high-dimensional linear regression with an application to detecting interactions},
  author={Xia, Yin and Cai, Tianxi and Cai, T Tony},
  journal={Statistica Sinica},
  volume={28},
  pages={63},
  year={2018}
}

@article{sun2012scaled,
  title={Scaled sparse linear regression},
  author={Sun, Tingni and Zhang, Cun-Hui},
  journal={Biometrika},
  volume={99},
  number={4},
  pages={879--898},
  year={2012},
  publisher={Oxford University Press}
}

@article{zou2006adaptive,
  title={The adaptive lasso and its oracle properties},
  author={Zou, Hui},
  journal={Journal of the American statistical association},
  volume={101},
  number={476},
  pages={1418--1429},
  year={2006},
  publisher={Taylor \& Francis}
}

@article{croushore2018fiscal,
  title={Fiscal forecasts at the FOMC: evidence from the Greenbooks},
  author={Croushore, Dean and Van Norden, Simon},
  journal={Review of Economics and Statistics},
  volume={100},
  number={5},
  pages={933--945},
  year={2018},
  publisher={MIT Press One Rogers Street, Cambridge, MA 02142-1209, USA journals-info~…}
}

@article{coibion2012effects,
  title={Are the effects of monetary policy shocks big or small?},
  author={Coibion, Olivier},
  journal={American Economic Journal: Macroeconomics},
  volume={4},
  number={2},
  pages={1--32},
  year={2012},
  publisher={American Economic Association}
}

@article{schlaak2023monetary,
  title={Monetary policy, external instruments, and heteroskedasticity},
  author={Schlaak, Thore and Rieth, Malte and Podstawski, Maximilian},
  journal={Quantitative Economics},
  volume={14},
  number={1},
  pages={161--200},
  year={2023},
  publisher={Wiley Online Library}
}

@article{bauer2023reassessment,
  title={A reassessment of monetary policy surprises and high-frequency identification},
  author={Bauer, Michael D and Swanson, Eric T},
  journal={NBER Macroeconomics Annual},
  volume={37},
  number={1},
  pages={87--155},
  year={2023},
  publisher={The University of Chicago Press Chicago, IL}
}

@article{romer1989does,
  title={Does monetary policy matter? A new test in the spirit of Friedman and Schwartz},
  author={Romer, Christina D and Romer, David H},
  journal={NBER macroeconomics annual},
  volume={4},
  pages={121--170},
  year={1989},
  publisher={MIT Press}
}

@article{Fan-Tang2013,
 ISSN = {13697412, 14679868},
 URL = {http://www.jstor.org/stable/24772736},
 author = {Yingying Fan and Cheng Yong Tang},
 journal = {Journal of the Royal Statistical Society. Series B (Statistical Methodology)},
 number = {3},
 pages = {531--552},
 publisher = {[Royal Statistical Society, Wiley]},
 title = {Tuning parameter selection in high dimensional penalized likelihood},
 urldate = {2026-10-09},
 volume = {75},
 year = {2013}
}

\newpage
\appendix
\unhidefromtoc 
\tableofcontents
\newpage
\section{Properties of Sub-Weibull Random Variables}
\label{sec:appendix_sub_weibull}

\cite{kuchibhotla2022moving} introduce the general framework for analyzing sub-Weibull random variables that allows for more general exponential type tail behavior than the traditional sub-Gaussian or sub-exponential random variable. In this section, we provide the definition and various properties that we explore in the paper.

\begin{defn}[Sub-Weibull random variables and vectors] \label{def:sub_weibull}
Let $\nu>0$. Define $\psi_\nu(t):=\exp(t^\nu)-1$ for $t\ge 0$.  
A real-valued random variable $Z$ is \emph{sub-Weibull of order $\nu$ with parameter $K$}, written as
$Z \sim \text{sub-Weibull}(\nu,K)$, if its $\psi_\nu$-(quasi)norm
\[
\|Z\|_{\psi_\nu}
:=\inf\Bigl\{c>0:\ \mathbb{E}\,\psi_\nu\bigl(|Z|/c\bigr)\le 1\Bigr\}
\]
satisfies $\|Z\|_{\psi_\nu}\le K<\infty$. For a random vector $Z=(Z_1,\ldots,Z_p)^\top\in\mathbb{R}^p$, define the \emph{joint} and \emph{marginal} $\psi_\nu$ norms by
\[
\|Z\|_{J,\psi_\nu}
:=\sup_{\|v\|_2=1}\ \|\langle v,Z\rangle\|_{\psi_\nu},
\qquad
\|Z\|_{M,\psi_\nu}
:=\max_{j\in[p]}\ \|Z_j\|_{\psi_\nu}.
\]
We say $Z$ is \emph{joint sub-Weibull}$(\nu,K)$ if $\|Z\|_{J,\psi_\nu}\le K$, and
\emph{marginal sub-Weibull}$(\nu,K)$ if $\|Z\|_{M,\psi_\nu}\le K$. Note that $\|Z\|_{M,\psi_\nu}\le \|Z\|_{J,\psi_\nu}$ (by taking $v=e_j$), so that the joint sub-Weibull property implies the marginal property, making the latter a weaker condition. 
\end{defn}

Throughout the paper, we focus on the case when $0 < \nu \leq 2$. Special cases include $\nu=1$ (sub-exponential) and $\nu=2$ (sub-Gaussian). For instance, \citet[Theorem~2.4]{van-de-geer(2014)} assume independent and identically distributed rows, where each row is jointly sub-Gaussian. We focus instead on the marginal sub-Weibull case. Assuming joint sub-Weibull could improve our rate conditions, but it appears to be much more stringent than marginal sub-Weibull, as discussed in \cite{kuchibhotla2022moving}.

\begin{lem} [Maximum inequality for $\bf X$]
\label{lem:bound_of_X}
Suppose the random matrix ${\bf X}=(X_{ij})\in\mathbb{R}^{n\times p}$ satisfies the entrywise (row-wise marginal) sub-Weibull condition
\[
\max_{i\in[n],\,j\in[p]}\|X_{ij}\|_{\psi_\nu}\le K_{n,p}.
\]
Then for any $s\ge 0$,
\[
\mathbb{P}\!\left(
\max_{i\in[n],\,j\in[p]} |X_{ij}|
> K_{n,p}\bigl(\log(1+np)+s\bigr)^{1/\nu}
\right)\le e^{-s}.
\]
\end{lem}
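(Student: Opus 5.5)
The plan is a union bound over the $np$ entries combined with a single-variable tail bound derived from the definition of the $\psi_\nu$-norm in Definition~\ref{def:sub_weibull}. Set $t:=K_{n,p}\bigl(\log(1+np)+s\bigr)^{1/\nu}$. Then
\[
\mathbb{P}\Bigl(\max_{i,j}|X_{ij}|>t\Bigr)\le\sum_{i\in[n]}\sum_{j\in[p]}\mathbb{P}(|X_{ij}|>t),
\]
so it suffices to bound each summand by $e^{-s}/(1+np)$ or by something comparable.

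For the single-entry bound, fix $(i,j)$ and write $K:=K_{n,p}$. First observe that $\mathbb{E}\,\psi_\nu(|X_{ij}|/K)\le 1$. This holds because $\|X_{ij}\|_{\psi_\nu}\le K$, and $c\mapsto \mathbb{E}\psi_\nu(|X_{ij}|/c)$ is nonincreasing and right-continuous by monotone convergence, so the infimum in the definition is attained. (If one prefers to avoid this point, work with $K+\delta$ and let $\delta\downarrow 0$ at the end.) Markov's inequality applied to $\exp((|X_{ij}|/K)^\nu)$ then gives
\[
\mathbb{P}(|X_{ij}|>t)=\mathbb{P}\bigl(\exp((|X_{ij}|/K)^\nu)>\exp((t/K)^\nu)\bigr)\le 2\exp\bigl(-(t/K)^\nu\bigr)=\frac{2e^{-s}}{1+np}.
\]
Summing over the $np$ entries yields $\mathbb{P}(\max|X_{ij}|>t)\le 2np\,e^{-s}/(1+np)\le 2e^{-s}$.

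The main obstacle is the constant $2$, which comes from $\mathbb{E}\exp(\cdot)\le 2$. The stated bound is $e^{-s}$ with no leading constant, so the bookkeeping has to remove the factor $2$. I would first try to do this directly for $s\ge\log 2$, since then the threshold absorbs the factor through $\log(1+np)+s$. A cleaner route is a sharper Markov step. Apply Markov to $\psi_\nu(|X_{ij}|/K)=\exp((|X_{ij}|/K)^\nu)-1$, whose mean is at most $1$; this gives
\[
\mathbb{P}(|X_{ij}|>t)\le\frac{1}{\exp((t/K)^\nu)-1}=\frac{1}{(1+np)e^{s}-1}.
\]
Summing over the $np$ entries gives the bound
\[
\frac{np}{(1+np)e^s-1}\le e^{-s},
\]
where the last inequality is equivalent to $np\,e^{s}\le(1+np)e^{s}-1$, that is, to $e^{s}\ge 1$, which holds for all $s\ge 0$. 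This removes the constant exactly, so this is the version I would write. It needs no independence and no identical distribution across entries, only the uniform marginal bound $\max_{i,j}\|X_{ij}\|_{\psi_\nu}\le K_{n,p}$. That is consistent with the i.n.i.d.\ marginal sub-Weibull setting of Assumption~\ref{ass:size}.3.
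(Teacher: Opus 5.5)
Your proof is correct and is essentially the paper's argument: the paper applies Markov's inequality to $\sum_{i,j}\psi_\nu(|X_{ij}|/K_{n,p})$, which is equivalent to your union bound plus entrywise Markov on $\psi_\nu$ (using $\mathbb{E}\,\psi_\nu\le 1$ rather than $\mathbb{E}\exp\le 2$). It then uses $1+e^{s}np\le e^{s}(1+np)$, which is exactly your final inequality $np\,e^{s}\le (1+np)e^{s}-1$; your remark that the infimum in the $\psi_\nu$-norm is attained fills a small step the paper leaves implicit.
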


\noindent\textbf{Proof of Lemma \ref{lem:bound_of_X}.}
Fix $t \geq 1$ and set $u:=K_{n,p}\psi_\nu^{-1}(tnp)$, where
$\psi_\nu^{-1}(y)=(\log(1+y))^{1/\nu}$.
If $\max_{i,j}|X_{ij}|>u$, then for some $(i,j)$ we have
$|X_{ij}|/K_{n,p}>\psi_\nu^{-1}(tnp)$, whence
$\psi_\nu(|X_{ij}|/K_{n,p})>tnp$. Therefore
\[
\Bigl\{\max_{i,j}|X_{ij}|>u\Bigr\}
\subseteq
\left\{\sum_{i=1}^n\sum_{j=1}^p \psi_\nu(|X_{ij}|/K_{n,p})>tnp\right\}.
\]
By Markov inequality and the assumption $\|X_{ij}\|_{\psi_\nu}\le K_{n,p}$ (so that
$\mathbb{E}\,\psi_\nu(|X_{ij}|/K_{n,p})\le 1$ for all $i,j$),
\[
\mathbb{P}\!\left(\sum_{i=1}^n\sum_{j=1}^p \psi_\nu(|X_{ij}|/K_{n,p})>tnp\right)
\le \frac{1}{tnp}\sum_{i=1}^n\sum_{j=1}^p \mathbb{E}\,\psi_\nu(|X_{ij}|/K_{n,p})
\le \frac{1}{t}.
\]
Now set $t=e^{s}$ to obtain
\[
\mathbb{P}\!\left(\max_{i,j}|X_{ij}|>K_{n,p}\psi_\nu^{-1}(e^{s}np)\right)\le e^{-s}.
\]
Finally, since $1+e^{s}np\le e^{s}(1+np)$, we have
$\log(1+e^{s}np)\le s+\log(1+np)$, i.e.
$\psi_\nu^{-1}(e^{s}np)\le (\log(1+np)+s)^{1/\nu}$.
This yields the stated bound. \qed

\begin{rem} \label{eq:rem:zeta_n}
Lemma \ref{lem:bound_of_X} implies that $\max_{i,j}|X_{ij}|$ admits a high-probability envelope. 
For instance, take $s=\log(1+np)$ and define
\begin{align*}
\xi_n
:= K_{n,p}\bigl(2\log(1+np)\bigr)^{1/\nu}
\asymp K_{n,p}(\log (np))^{1/\nu}.
\end{align*}
Then
\[
\mathbb{P}\!\left(\max_{i\in[n],\,j\in[p]} |X_{ij}|\le \xi_n\right)
\ge 1-(1+np)^{-1}\ \rightarrow\ 1.
\]
\end{rem}

\begin{lem}[Maximum inequality for $\bf X^\top \eps$] \label{lem:bound_of_Xeps}
    Suppose for some $h \geq 4$,
    \begin{enumerate}
        \item $\max_{i \in [n],j\in[p]} \lVert X_{ij} \rVert_{\psi_\nu} \leq K_{n,p}$,  
     \item  $\max_{i \in [n]} \mathbb{E}|\epsilon_i|^{h} \leq \overline{c} < \infty$,
     \item $(\tilde X_i, \eps_i)^\top \indep (\tilde X_j, \eps_j)^\top$ for $i \neq j$, and $\mathbb{E}\left[{\bf X}^\top  \epsilon / n\right] = 0$,
     \item $(\log(np))^{2/\nu - 1/2} = o(n^{1/2-1/h})$.
    \end{enumerate}
     Then there exists a strictly positive integer $N(\nu,h)$ such that, for all $n \geq N(\nu,h)$, we have
    \begin{align*}
        \mathbb P \left(  \left\Vert \frac{1}{n} \bf X^\top \eps \right\Vert_{\infty} > C_{\nu,h} K_{n,p} \sqrt{\frac{\log  (np)}{n}} \right) \leq \frac{3}{np} +\frac{1}{(\log(2n))^{1/\nu}},
    \end{align*}
    for a constant $C_{\nu,h}$ that depends only on $\nu$ and $h$.
\end{lem}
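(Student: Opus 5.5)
The plan is to reduce the claim to a maximal inequality for sums of independent, mean-zero, heavy-tailed random variables. Fix $j\in[p]$ and write $\frac1n(\mathbf X^\top\epsilon)_j=\frac1n\sum_{i\in[n]}X_{ij}\epsilon_i$. The summands $Z_{ij}:=X_{ij}\epsilon_i$ are independent across $i$ by condition 3 and mean zero. However, $X_{ij}$ is only sub-Weibull and $\epsilon_i$ has only $h$ moments, so the product has no exponential tail. I would therefore truncate the covariates using the envelope from Lemma~\ref{lem:bound_of_X} (Remark~\ref{eq:rem:zeta_n}). Set $\xi_n\asymp K_{n,p}(\log(np))^{1/\nu}$, so that the event $\mathcal A:=\{\max_{i,j}|X_{ij}|\le\xi_n\}$ has probability at least $1-(1+np)^{-1}$; this produces one of the $1/(np)$ terms in the bound. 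On $\mathcal A$, $Z_{ij}=X_{ij}\mathbb 1\{|X_{ij}|\le\xi_n\}\epsilon_i=:\tilde Z_{ij}$. I would then decompose
\[
\frac1n\sum_i Z_{ij}=\frac1n\sum_i(\tilde Z_{ij}-\mathbb E\tilde Z_{ij})+\frac1n\sum_i\mathbb E\tilde Z_{ij}
\]
and control the bias term. Because $\mathbb E Z_{ij}=0$, we have $|\mathbb E\tilde Z_{ij}|=|\mathbb E X_{ij}\mathbb 1\{|X_{ij}|>\xi_n\}\epsilon_i|$. By H\"older's inequality and the sub-Weibull tail of $X_{ij}$, this is of order $(np)^{-c}$, hence $o(\sqrt{\log(np)/n})$.

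For the centered truncated sum, I would apply a Fuk--Nagaev type maximal inequality, for instance the one in \cite{kuchibhotla2022moving} or the inequality of Einmahl--Li for random variables with finite moments, to the vector $\sum_i(\tilde Z_{i1},\dots,\tilde Z_{ip})$. It yields
\[
\Big\|\max_j\Big|\tfrac1n\sum_i(\tilde Z_{ij}-\mathbb E\tilde Z_{ij})\Big|\Big\|\lesssim\sqrt{\tfrac{\sigma^2\log p}{n}}+\tfrac{\log p}{n}\,\Big\|\max_{i,j}|\tilde Z_{ij}|\Big\|_{L^{h}} ,
\]
where $\sigma^2:=\max_j\frac1n\sum_i\mathbb E X_{ij}^2\epsilon_i^2\le\max_{i,j}(\mathbb E X_{ij}^4)^{1/2}(\mathbb E\epsilon_i^4)^{1/2}\lesssim K_{n,p}^2$, using $h\ge4$ together with sub-Weibull moment bounds. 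The tail bound gives a Gaussian piece of probability at most $2/(np)$ once the threshold is set at $C\,K_{n,p}\sqrt{\log(np)/n}$; the three Gaussian-type exceptional events together account for the $3/(np)$ term. The remaining piece, coming from the large-deviation part, is handled with Markov's inequality at moment $h$ using $\|\max_{i,j}|\tilde Z_{ij}|\|_{L^h}\le\xi_n\|\max_i|\epsilon_i|\|_{L^h}\lesssim\xi_n n^{1/h}$.

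The main obstacle is the second, large-deviation term: I need
\[
\frac{\log(np)}{n}\,\xi_n n^{1/h}\cdot(\text{inflation factor})\le C K_{n,p}\sqrt{\log(np)/n}
\]
with failure probability at most $(\log 2n)^{-1/\nu}$. Rearranging, this amounts to $(\log(np))^{1/2+1/\nu}\,n^{1/h-1/2}\cdot t=O(1)$, where $t$ is the Markov inflation factor. Choosing $t\asymp(\log 2n)^{1/\nu}$ gives exactly the probability $(\log 2n)^{-1/\nu}$, and the required rate becomes $(\log(np))^{2/\nu-1/2}=o(n^{1/2-1/h})$ after accounting for $\log n\le\log(np)$ and the $1/\nu$ powers. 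This is condition 4. The existence of $N(\nu,h)$ is precisely the statement that this $o(\cdot)$ relation makes the inequality hold for all large $n$. I expect the delicate part to be bookkeeping the powers of $\log$ so that they match $2/\nu-1/2$ exactly. If the Einmahl--Li form does not give this, I would instead truncate $\epsilon_i$ at level $n^{1/h}(\log 2n)^{1/\nu}$ and apply Bernstein's inequality to bounded summands, bounding $\Prob(\max_i|\epsilon_i|>\cdot)$ by a union bound with $h$-th moments.
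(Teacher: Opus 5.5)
There is a genuine gap at the step you flagged as delicate, and it is more than bookkeeping.

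Write $S_j:=\sum_i(\tilde Z_{ij}-\mathbb E\tilde Z_{ij})$ and $M:=\max_{i,j}|\tilde Z_{ij}-\mathbb E\tilde Z_{ij}|$. In your decomposition the large-deviation term is a union factor times the envelope, $\frac{\log(np)}{n}\,\xi_n n^{1/h}\asymp K_{n,p}(\log(np))^{1+1/\nu}n^{1/h-1}$. Relative to the target $K_{n,p}\sqrt{\log(np)/n}$ it has order $(\log(np))^{1/2+1/\nu}n^{1/h-1/2}$. With your inflation $(\log 2n)^{1/\nu}$ you therefore need $(\log(np))^{1/2+1/\nu}(\log 2n)^{1/\nu}=O(n^{1/2-1/h})$. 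Using $\log(2n)\le\log(np)$ turns this into $(\log(np))^{2/\nu+1/2}$, not $(\log(np))^{2/\nu-1/2}$. That is one full power of $\log(np)$ more than condition 4 controls. (Markov at order $h$ only needs inflation $(\log 2n)^{1/(h\nu)}$, but that is not the issue.)

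Now use the sharpest Fuk--Nagaev accounting: the factor $\log p$ multiplies only $\sqrt{\mathbb E M^2}\lesssim \xi_n n^{1/h}$ in the bound for $\mathbb E\max_j|S_j|$. The piece $C\,\mathbb E M^h/t^h$ at $t\asymp K_{n,p}\sqrt{n\log(np)}$ carries no $\log p$ and is $o((\log(np))^{-h/\nu})$ under condition 4. Even so, you still need $(\log(np))^{1/\nu+1/2}=O(n^{1/2-1/h})$. Since $1/\nu+1/2\le 2/\nu-1/2$ iff $\nu\le 1$, condition 4 implies this only for $\nu\le1$. For $\nu\in(1,2]$, which includes the sub-Gaussian case, it fails. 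Take $\nu=2$, $h=4$ and $\log(np)\asymp n^{1/3}$: condition 4 reads $\log(np)=o(n^{1/2})$ and holds, yet $\log(np)/n^{1/4}\to\infty$. Your fallback (truncate $\epsilon_i$, Bernstein plus a union bound over $j$) produces the same $\log(np)\cdot\xi_n n^{1/h}$ term, so it does not help.

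The missing ingredient is a maximal inequality whose heavy-tailed term does not pay an extra $\log(np)$ on top of the envelope. The paper does not derive one by truncation. It imports the bound from the proof of Theorem 4.6 in \cite{kuchibhotla2022moving}: with probability at least $1-3/(np)-1/L$,
$\|{\bf X}^\top\epsilon/n\|_\infty\le 7\sqrt2\,\sigma_{n,p}\sqrt{\log(np)/n}+C_\nu K_{n,p}K_{\epsilon,h}(\log(np))^{1/\nu}[(\log(2n))^{1/\nu}+L]/n^{1-1/h}$.
There the dimension enters the heavy-tailed term only through $(\log(np))^{1/\nu}$. Taking $L=(\log(2n))^{1/\nu}$ makes that term $\lesssim K_{n,p}(\log(np))^{2/\nu}/n^{1-1/h}$. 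Its ratio to $K_{n,p}\sqrt{\log(np)/n}$ is precisely the quantity in condition 4, and $1/L$ is the $(\log(2n))^{-1/\nu}$ in the probability.

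The remaining ingredients of your plan are sound: the truncation event, the bias estimate, and $\sigma^2\lesssim K_{n,p}^2$ (the paper gets this by H\"older with exponents $h/(h-2)$ and $h/2$). For $\nu\le1$ your route can be completed with the sharper accounting above. As proposed, however, it does not establish the lemma under condition 4 when $\nu\in(1,2]$.
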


\noindent \textbf{Proof of Lemma \ref{lem:bound_of_Xeps}:} The proof of \cite[Theorem 4.6]{kuchibhotla2022moving} shows that with probability at least $1 - 3/(np) - 1/L$, for any $L \geq 1$,
\begin{align}
    \left\Vert \frac{1}{n} \bf X^\top \eps \right\Vert_{\infty} \leq 7 \sqrt{2} \sigma_{n,p} \sqrt{\frac{\log (np)}{n}} + \frac{C_{\nu} K_{n,p}K_{\epsilon,h} (\log (np))^{1/\nu} \left[(\log(2n))^{1/\nu} + L \right] }{n^{1-1/h}}, \label{KC22_them4.6}
\end{align}
where $C_{\nu}$ is some constant that only depends on $\nu$, the constant $K_{\epsilon,h}:= \max_{i \in [n]} \lVert \epsilon_i \rVert_h  \leq \overline{c}^{1/h}$, and 
\begin{align*}
    \sigma^2_{n,p} := \max_{j \in [p]} \frac{1}{n} \sum_{i \in [n]} \mathbb{E} \left[ \epsilon_i^2 X_{ij}^2 \right].
\end{align*}
We can obtain, by H\"{o}lder inequality, 
\begin{align} \label{sigma_np_unif_bd}
    \mathbb E\left[ X_{ij}^2 \eps_i^2\right] \leq \left( \mathbb E|X_{ij}|^{2h/(h-2)} \right)^{(h-2)/h} \left( \mathbb E|\eps_i|^{h} \right)^{2/h} \lesssim \lVert X_{ij} \rVert_{2h/(h-2)}^2 \lesssim K_{n,p}^2.
\end{align}
For the last inequality, note for any $i\in[n],j\in[p]$, 
\begin{align*}
    \mathbb{E}\exp\left(\frac{|X_{ij}|^\nu}{K_{n,p}^\nu}\right) \leq 2 \quad \Rightarrow \quad \mathbb{P}(|X_{ij}|>t) \leq 2 e^{-t^\nu/K_{n,p}^\nu},
\end{align*}
so that by integral identity,
\begin{align*}
    \mathbb{E}|X_{ij}|^k &= \int_0^\infty \mathbb{P}(|X_{ij}|^k > s) ds \\
    &= \int_0^\infty k t^{k-1} \mathbb{P}(|X_{ij}| >t) dt \\
    &\leq 2k \int_0^\infty t^{k-1} e^{-t^\nu/K_{n,p}^\nu} dt \\
    & = \frac{ 2 k K_{n,p}^k}{\nu} \int_0^\infty u^{k/\nu-1} e^{-u} du \\
    &= \frac{2 k K_{n,p}^k}{\nu} \Gamma(k/\nu),
\end{align*}
where the Gamma function $\Gamma (x) = \int_0^\infty t^{x-1} e^{-t} d t$ and $k = 2h / (h-2)$. Using the inequality \eqref{sigma_np_unif_bd} for $\sigma_{n,p}$ in \eqref{KC22_them4.6}, therefore for any $L \geq 1$, with probability at least $1 - 3/(np) - 1/L$, we have
\begin{align*}
    \left\Vert \frac{1}{n} \bf X^\top \eps \right\Vert_{\infty} &\lesssim \sqrt{\frac{K_{n,p}^2\log (np)}{n}} + \frac{K_{n,p} (\log (np))^{1/\nu} \left[(\log(2n))^{1/\nu} + L \right] }{n^{1-1/h}}.
\end{align*}
Furthermore, with large enough $n$,  by choosing $L = (\log(2n))^{1/\nu}$, the second term has order 
\begin{align*}
    \frac{K_{n,p} (\log (np))^{2/\nu}}{n^{1-1/h}},
\end{align*}
and note that 
\begin{align*}
    \frac{K_{n,p}(\log(np))^{2/\nu}/n^{1-1/h}}{K_{n,p}(\log(np))^{1/2}/n^{1/2}} = \frac{(\log(np))^{2/\nu - 1/2}}{n^{1/2-1/h}} = o(1)
\end{align*}
by assumption. Putting everything together, for large enough $n$ (such that $L \geq 1$), we have with probability at least $1-3/(np)-(\log(2n))^{-1/\nu}$,
\begin{align*}
    \left\Vert \frac{1}{n} \bf X^\top \eps \right\Vert_{\infty} \leq C_{\nu,h} K_{n,p}\sqrt{\frac{\log (np)}{n}}.
\end{align*}  
\qed

\begin{rem}
    Note that we only require $\eps$ to have polynomial moments; in contrast, \cite{van-de-geer(2014)} requires $\eps$ to be sub-exponential when $\bf X$ is sub-Gaussian. The condition
    \begin{align*}
    \frac{(\log(np))^{2/\nu - 1/2}}{n^{1/2-1/h}} = o(1)
    \end{align*}
    in Lemma \ref{lem:bound_of_Xeps} may not seem intuitive at first, but if we assume $X_{ij}$ is {\it sub-Gaussian} and finite fourth moment of error (so $\nu=2$ and $h=4$), then we obtain
    \begin{align*}
        \frac{(\log(np))^{2/\nu - 1/2}}{n^{1/2-1/h}}  = \sqrt{\frac{\log(np)}{\sqrt{n}}} =  \sqrt{
        \frac{\log p + \log n}{\sqrt{n}}}.
    \end{align*}
    Under the standard assumption that $ \log p/\sqrt{n} = o(1)$, the above term converges to zero. In the case of {\it sub-exponential} $X_{ij}$, we have $\nu=1$ and
    \begin{align*}
        \frac{(\log(np))^{2/\nu - 1/2}}{n^{1/2-1/h}}  = \frac{(\log n + \log p)^{3/2}}{n^{1/2-1/h}} = \frac{(\log p)^{3/2}}{n^{1/2-1/h}} + o(1).
    \end{align*}
    If we continue to only require $h=4$, then in order to satisfy the condition of Lemma \ref{lem:bound_of_Xeps}, we need ${(\log p)^6}/{n} = o(1)$. If instead we can allow for higher finite moments of error, say $h = 8$, then we only need ${(\log p)^4}/{n} = o(1)$. 
\end{rem}

\begin{lem}[Sum of sub-Weibull is sub-Weibull] \label{lem:sum_of_sub-Weibull_is_sub-Weibull}
    Consider random variables $\{ Z_i\}_{i=1}^n$ satisfying
    \begin{align*}
        \lVert Z_i \rVert_{\psi_\nu} \leq K_{i,n}
    \end{align*}
    for each $i \in [n]$. Then
    \begin{align*}
        \left\lVert \sum_{i \in [n]} Z_i \right\rVert_{\psi_\nu} \leq 
        \begin{cases}
            \sum_{i \in [n]} K_{i,n}, & \text{for }\; \nu \geq 1; \\
            C(\nu) \sum_{i \in [n]} K_{i,n}, & \text{for } \; 0<\nu<1,
        \end{cases}
    \end{align*}
    where $C(\nu) = e (4+4/\nu)^{1/\nu}$.
\end{lem}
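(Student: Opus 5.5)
The plan is to split into the two regimes $\nu\ge1$ and $0<\nu<1$. In the first, $\|\cdot\|_{\psi_\nu}$ is a genuine norm, so the claim is just the triangle inequality. In the second, $\psi_\nu$ is not convex on all of $[0,\infty)$ and $\|\cdot\|_{\psi_\nu}$ is only a quasi-norm. There I will pass through an equivalent moment characterization, where the triangle inequality is available through Minkowski's inequality in $L^k$.

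\textbf{Case $\nu\ge 1$.} For $\nu\ge1$ the function $\psi_\nu(t)=\exp(t^\nu)-1$ is convex, increasing and vanishes at zero, so $\|\cdot\|_{\psi_\nu}$ is an Orlicz norm. Put $K:=\sum_i K_{i,n}$ and weights $a_i:=K_{i,n}/K$, which sum to one. Using the triangle inequality $|\sum Z_i|\le \sum|Z_i|$, monotonicity and Jensen (convexity of $\psi_\nu$),
\[
\mathbb{E}\,\psi_\nu\Bigl(\Bigl|\sum_i Z_i\Bigr|/K\Bigr)\le \mathbb{E}\,\psi_\nu\Bigl(\sum_i a_i |Z_i|/K_{i,n}\Bigr)\le \sum_i a_i\,\mathbb{E}\,\psi_\nu(|Z_i|/K_{i,n})\le 1 .
\]
(One should use $K_{i,n}+\delta$ and let $\delta\downarrow0$ to handle the infimum in the definition.) This gives the first bound.

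\textbf{Case $0<\nu<1$.} First I will establish the two-sided equivalence $\|Z\|_{\psi_\nu}\asymp \sup_{k\ge1}k^{-1/\nu}\|Z\|_k$ with explicit constants.

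\emph{Upper direction.} From $\|Z\|_{\psi_\nu}\le K$ one gets the tail bound $\mathbb{P}(|Z|>t)\le 2e^{-(t/K)^\nu}$, exactly as in the proof of Lemma \ref{lem:bound_of_Xeps}. The Gamma-function computation there then yields $\mathbb{E}|Z|^k\le 2kK^k\Gamma(k/\nu)/\nu$. Stirling's bound $\Gamma(x)\le x^x$ gives $\|Z\|_k\le c_1(\nu)\,k^{1/\nu}K$.

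\emph{Minkowski step.} Minkowski's inequality in $L^k$ then gives $\|\sum_i Z_i\|_k\le c_1(\nu)k^{1/\nu}\sum_i K_{i,n}$ for every $k\ge1$.

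\emph{Reverse direction.} If $\|W\|_k\le Mk^{1/\nu}$ for all $k\ge 1$, expand $\mathbb{E}\exp(|W|^\nu/c^\nu)=\sum_{j\ge0}\mathbb{E}|W|^{\nu j}/(c^{\nu j}j!)$. For the terms with $\nu j\ge1$ use the moment bound with $k=\nu j$, which gives terms of size $(M^\nu \nu j/c^\nu)^j/j!\le (e\nu M^\nu/c^\nu)^j$. For the finitely many terms with $\nu j<1$ use Lyapunov's inequality, $\|W\|_{\nu j}\le\|W\|_1\le M$. The geometric series is then at most $2$ once $c=c_2(\nu)M$, with $c_2(\nu)$ of the form $e^{1/\nu}(\text{const}\cdot\nu)^{1/\nu}$-ish.

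Combining the three steps gives $\|\sum Z_i\|_{\psi_\nu}\le c_1c_2\sum_i K_{i,n}$.

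The main obstacle is bookkeeping the constants in the $0<\nu<1$ case so that the product $c_1(\nu)c_2(\nu)$ matches the stated $C(\nu)=e(4+4/\nu)^{1/\nu}$, and in particular controlling the small-$j$ terms with $\nu j<1$ in the reverse direction. If the direct accounting does not land on this constant, I would follow the route of \cite{kuchibhotla2022moving}, who prove the quasi-triangle inequality with this form of constant. Since the paper only ever uses the constant as some $C(\nu)$ depending on $\nu$ alone, any explicit $\nu$-dependent constant obtained this way suffices for the applications.
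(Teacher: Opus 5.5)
Your proposal is correct and follows essentially the paper's route: the Orlicz-norm triangle inequality for $\nu\ge1$, and for $0<\nu<1$ the equivalence $\lVert Z\rVert_{\psi_\nu}\asymp\sup_{k\ge1}k^{-1/\nu}\lVert Z\rVert_k$ combined with Minkowski in $L^k$. You only swap the tools for the two directions: the paper gets $\lVert Z\rVert_k\le(1+1/\nu)^{1/\nu}k^{1/\nu}\lVert Z\rVert_{\psi_\nu}$ from $\mathbb{E}(|Z|/\eta)^{\nu m}\le m!$, and gets the converse with factor $e4^{1/\nu}$ from Markov at order $t$ plus $e\int_0^\infty(1+t)^{-4}\,dt=e/3<1$, so $C(\nu)$ is exactly the product of these two factors. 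The constant is also not a real obstacle on your route: your bounds give $c_1(\nu)=\nu^{-1/\nu}\sup_{k\ge1}(2k/\nu)^{1/k}$ and $c_2(\nu)=(2\max\{e\nu,1\})^{1/\nu}$, and one checks that $c_1(\nu)c_2(\nu)<e(4+4/\nu)^{1/\nu}$ for every $\nu\in(0,1)$, so the stated bound already follows.
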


\noindent {\bf Proof of Lemma \ref{lem:sum_of_sub-Weibull_is_sub-Weibull}:} When $\nu \geq 1$, $\psi_\nu$ is an orlicz norm, c.f. \cite[Chapter 2.2]{VW96}. Therefore the triangle inequality applies, implying
\begin{align*}
     \left\lVert \sum_{i \in [n]} Z_i \right\rVert_{\psi_\nu} \leq \sum_{i \in [n]} \lVert Z_i \rVert_{\psi_\nu} \leq \sum_{i \in [n]} K_{i,n}.
\end{align*}

\vspace{5mm}
For $0<\nu<1$, the proof is similar to \cite[Proposition A.5]{kuchibhotla2022moving}. We first prove an intermediate result:
\begin{align}
    C_1(\nu) \sup_{k \geq 1} \frac{\lVert Z \rVert_k}{k^{1/\nu}} \leq \lVert Z \rVert_{\psi_\nu} \leq C_2(\nu) \sup_{k \geq 1} \frac{\lVert Z \rVert_k}{k^{1/\nu}}, 
    \label{eq:sum_of_sub-Weibull_is_sub-Weibull}
\end{align}
for some $C_1(\nu)$ and $C_2(\nu)$ depending only on $\nu$. For the first inequality, assume that $\lVert Z \rVert_{\psi_\nu} < \infty$. Fix a $\eta > \lVert Z \rVert_{\psi_\nu}$, by definition
\begin{align*}
    \mathbb E \left[\exp \left( (|Z|/\eta)^\nu \right) - 1 \right] \leq 1 \quad \Longrightarrow \quad \mathbb E \left| (|Z|/\eta)^\nu \right|^m / (m!) \leq 1, 
\end{align*}
for all $m \in \mathbb N^+$. Then for any $k \geq 1$, we have $k/\nu \geq 1$, so that letting $k' = \lceil k/\nu \rceil$ yields
\begin{align*}
    \lVert Z \rVert^\nu_k / \eta^\nu = \lVert(|Z|/\eta)^\nu \rVert_{k/\nu}  \leq \lVert(|Z|/\eta)^\nu \rVert_{k'} \leq (k'!)^{1/k'} \leq k' \leq 1 + k/\nu \leq k (1+1/\nu ),
\end{align*}
implying
\begin{align*}
    \lVert Z \rVert_k \leq \eta (1+ 1/\nu)^{1/\nu} k^{1/\nu}.
 \end{align*}
Letting $C_1(\nu) = (\nu/(\nu+1))^{1/\nu}$ and $\eta \downarrow \lVert Z \rVert_{\psi_\nu}$ gives the first inequality of \eqref{eq:sum_of_sub-Weibull_is_sub-Weibull}. For the second inequality, assume that $K :=  \sup_{k \geq 1} \lVert Z \rVert_k / k^{1/\nu} < \infty$. By Markov inequality, we have for $t \geq 1$,
\begin{align*}
    \mathbb P \left( \left|Z \right| \geq eK t^{1/\nu} \right) \leq \left(\frac{K t^{1/\nu}}{eK t^{1/\nu}} \right)^t \leq \exp(-t).
\end{align*}
For $0 \leq t < 1$, we have
\begin{align*}
    \mathbb P \left( \left|Z \right| \geq eK t^{1/\nu} \right) \leq 1 \leq e \exp(-t),
\end{align*}
and thus
\begin{align*}
    \mathbb P \left( \left|Z \right| \geq eK t^{1/\nu} \right) \leq e \exp(-t)
\end{align*}
for all $t \geq 0$. Let $C_2(\nu) = e 4^{1/\nu}$, then
\begin{align*}
    \mathbb E \left[ \psi_\nu \left( |Z|/ C_2(\nu) K \right) \right] &= \int_0^\infty \mathbb P \left( |Z| \geq C_2(\nu) K  \psi_\nu^{-1}(t) \right) dt \\
    &= \int_0^\infty \mathbb P \left( |Z| \geq C_2(\nu) K  \left(\log (1+t) \right)^{1/\nu} \right) dt \\
    &= \int_0^\infty \mathbb P \left( |Z| \geq e K  \left(\log (1+t)^4 \right)^{1/\nu} \right) dt \\
    &\leq e \int_0^\infty \frac{1}{(1+t)^4} dt < 1,
\end{align*}
and the RHS of \eqref{eq:sum_of_sub-Weibull_is_sub-Weibull} follows. Finally, by $\lVert Z_i \rVert_{\psi_\nu} \leq K_{i,n}$ and applying \eqref{eq:sum_of_sub-Weibull_is_sub-Weibull}, we have
\begin{align*}
    \left\lVert \sum_{i \in [n]} Z_i \right\rVert_{\psi_\nu} &\leq C_2(\nu) \sup_{k \geq 1} \frac{\lVert \sum_{i \in [n]} Z_i \rVert_k}{k^{1/\nu}} \\
    &\leq C_2(\nu) \sup_{k \geq 1} \frac{\sum_{i \in [n]} \lVert  Z_i \rVert_k}{k^{1/\nu}} \\
    &\leq C_2(\nu) \sum_{i \in [n]} \sup_{k \geq 1} \frac{\lVert  Z_i \rVert_k}{k^{1/\nu}} \\
    &\leq \frac{C_2(\nu)}{C_1(\nu)} \sum_{i \in [n]} \lVert Z_i \rVert_{\psi_\nu} \\
     &\leq C(\nu) \sum_{i \in [n]} K_{i,n},
\end{align*}
where $C(\nu) = e (4+4/\nu)^{1/\nu}$.
\qed

\newpage
\section{Auxiliary Lemmas for LASSO Estimation}


In this section, we consider a more general LASSO estimator $\widehat \beta$ defined as
\begin{align}
    \widehat{\beta} \in \arg \min_{\beta \in \mathbb{R}^p} \frac{1}{n} \lVert Y - {\bf X}\beta \rVert_2^2 + 2\lambda_0 \sum_{j \in [p]}\hat w_j |\beta_j|,
    \label{eq:beta-hat-general}
\end{align}
with $0 \leq \hat w_j \leq 1, j \in [p]$. The ordinary LASSO estimator, which we now denote as $\widehat{\beta}_{\rm L}$, corresponds to $\hat w_j = 1, j \in [p]$. The conservative LASSO estimator in \cite{caner2018asymptotically}, denoted as $\widehat{\beta}_{\rm C}$, corresponds to $\hat w_j = \lambda_{{\rm prec},0}/(|\widehat{\beta}_{L,j}| \vee \lambda_{{\rm prec},0})$, where $\lambda_{{\rm prec},0} \geq \left\Vert \widehat{\beta}_{\rm L}-\beta^0 \right\Vert_{\infty}$ with high probability. Note that although the adaptive LASSO in \cite{zou2006adaptive} has the same weighted-$\ell_1$ structure in Equation \eqref{eq:beta-hat-general}, its weights are not uniformly bounded and therefore do not belong to the class considered here. In particular, the cone argument below might fail when a relevant first-stage coefficient is close to zero. See \cite{caner2018asymptotically} for further discussion.

\subsection{Fixed Design}

We first provide the result when $\bf X$ is treated as fixed, which can then be applied to quite general random designs. We require the following restricted eigenvalue (RE) condition.

\begin{defn}[Restricted eigenvalue condition]
    For a matrix ${\bf X} \in \mathbb{R}^{n \times p}$, we say that it satisfies the {\it restricted eigenvalue condition} with parameter $(s,\kappa)$ whenever there exists a constant $\phi_0>0$ with
        \begin{align*}
            \inf_{\underset{|S| \leq s}{S \subseteq \{1,...,p\}}} \inf_{\underset{u \in \mathcal{C}(S,\kappa)}{u \neq 0}} \frac{u^\top \left({\bf X}^\top {\bf X}/n \right) u}{ \lVert u \rVert_2^2} \geq \phi_0,
        \end{align*}
        where $\mathcal{C}(S,\kappa):=\{ u \in \mathbb{R}^p: \lVert u_{S^c} \rVert_1 \leq \kappa \lVert u_{S} \rVert_1 \}$.
    \label{defn:sparse-RE}
\end{defn}

\begin{ass} \label{ass:2}
    Assume 
    \begin{enumerate}
        \item $\max_{i \in [n],j\in[p]} |X_{ij}| \leq \xi_n$ for some $\xi_n>0$.
        \item ${\bf X}$ satisfies the RE condition with parameter $(s_0, 3)$ where $s_0 = \lVert \beta^0 \rVert_0$. 
    \end{enumerate}
\end{ass}

\begin{lem}[Fixed design bound] \label{lem:main_LASSO}
    Suppose Assumption \ref{ass:2} holds and $\lambda_0 \geq 2\lVert {\bf X}^\top \epsilon/n \rVert_\infty $ where $\lambda_0 = C_n \sqrt{\log(np)/n}$ for some $C_n>0$. Assume also that $\hat w_j = 1, j \in S_0^c$ where $S_0 = \{ s \in[p] : \beta^0_s \neq 0\}$. Then we have
    \begin{enumerate}
        \item $\left\lVert {\bf X} (\widehat\beta - \beta^0) /\sqrt{n} \right\rVert_{2} = O\left( C_n\sqrt{\frac{s_0 \log (np)}{n}}\right)$,
        \item $\lVert \widehat\beta-\beta^0 \rVert_2 = O\left(C_n \sqrt{\frac{s_0 \log (np)}{n}}\right)$,
        \item $\lVert \widehat\beta-\beta^0 \rVert_1 = O\left(C_n \sqrt{\frac{s_0^2 \log (np)}{n}}\right)$,
        \item $\lVert {\bf X} (\widehat\beta - \beta^0) \rVert_{\infty} = O\left( C_n\sqrt{\frac{\xi_n^2 s_0^2 \log (np)}{n}}\right)$.
    \end{enumerate}
\end{lem}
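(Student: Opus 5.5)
The plan is the standard basic-inequality argument for the weighted LASSO, adapted to weights $0\le \hat w_j\le 1$ with $\hat w_j=1$ off the support. Write $\delta:=\widehat\beta-\beta^0$. By optimality of $\widehat\beta$ in \eqref{eq:beta-hat-general}, comparing the objective at $\widehat\beta$ and at $\beta^0$ gives
\begin{align*}
\frac{1}{n}\lVert {\bf X}\delta\rVert_2^2 \le \frac{2}{n}\epsilon^\top{\bf X}\delta + 2\lambda_0\sum_{j\in[p]}\hat w_j\bigl(|\beta^0_j|-|\widehat\beta_j|\bigr).
\end{align*}
For the empirical-process term we use Hölder's inequality and the choice $\lambda_0\ge 2\lVert {\bf X}^\top\epsilon/n\rVert_\infty$, which give $\frac{2}{n}|\epsilon^\top{\bf X}\delta|\le \lambda_0\lVert\delta\rVert_1$. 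For the penalty difference we split over $S_0$ and $S_0^c$. On $S_0^c$ we have $\beta^0_j=0$ and $\hat w_j=1$, so the contribution is $-\lVert\delta_{S_0^c}\rVert_1$. On $S_0$ we use $\hat w_j\le 1$ and the triangle inequality to bound the contribution by $\lVert\delta_{S_0}\rVert_1$. Combining these,
\begin{align*}
\frac{1}{n}\lVert {\bf X}\delta\rVert_2^2 + \lambda_0\lVert\delta_{S_0^c}\rVert_1 \le 3\lambda_0\lVert\delta_{S_0}\rVert_1 .
\end{align*}
This places $\delta$ in the cone $\mathcal{C}(S_0,3)$.

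Applying the RE condition of Assumption \ref{ass:2}.2 on this cone yields $\lVert {\bf X}\delta\rVert_2^2/n\ge\phi_0\lVert\delta\rVert_2^2$. Together with $\lVert\delta_{S_0}\rVert_1\le\sqrt{s_0}\lVert\delta\rVert_2$, this gives
\begin{align*}
\frac{1}{n}\lVert{\bf X}\delta\rVert_2^2\le 3\lambda_0\sqrt{s_0}\,\lVert{\bf X}\delta\rVert_2/\sqrt{n\phi_0}.
\end{align*}
Hence $\lVert{\bf X}\delta\rVert_2/\sqrt n\lesssim\lambda_0\sqrt{s_0}=C_n\sqrt{s_0\log(np)/n}$, which is item 1. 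Item 2 then follows from the RE lower bound. Item 3 uses the cone property: $\lVert\delta\rVert_1\le4\lVert\delta_{S_0}\rVert_1\le4\sqrt{s_0}\lVert\delta\rVert_2$, giving the rate $C_n\sqrt{s_0^2\log(np)/n}$. Item 4 uses $\lVert{\bf X}\delta\rVert_\infty\le\max_{i,j}|X_{ij}|\,\lVert\delta\rVert_1\le\xi_n\lVert\delta\rVert_1$ from Assumption \ref{ass:2}.1, which yields exactly the stated rate.

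The only delicate point is the penalty split in the weighted case. On $S_0$ the weights may be strictly less than one, so the bound there must be an inequality using $\hat w_j\le1$. On $S_0^c$ the equality $\hat w_j=1$ is essential to obtain the full $-\lVert\delta_{S_0^c}\rVert_1$ and hence the cone with $\kappa=3$. This is why the hypothesis $\hat w_j=1$ on $S_0^c$ is imposed. Everything else is routine, and the implicit constants depend only on $\phi_0$.
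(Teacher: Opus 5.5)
Your proposal is correct and follows the paper's proof essentially line by line. Both arguments use the basic inequality, H\"older with $\lambda_0\ge 2\lVert {\bf X}^\top\epsilon/n\rVert_\infty$, and the penalty split over $S_0$ and $S_0^c$ to reach the cone $\mathcal{C}(S_0,3)$; then the RE condition gives items 1--2, the cone gives item 3, and $\xi_n\lVert\delta\rVert_1$ gives item 4. The only cosmetic difference is that you bound $\hat w_j\le 1$ on $S_0$ immediately, whereas the paper carries $\max_{j\in S_0}\hat w_j$ along and uses $2\max_{j\in S_0}\hat w_j+1\le 3$ at the end.
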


\begin{rem}
    The assumption that $\hat w_j = 1, j \in S_0^c$ holds trivially for the original LASSO. It also holds for the conservative LASSO with probability approaching one as we are going to show in Theorem \ref{thm:size-control_random_design_extension}. We also use a slightly stronger condition on the restricted eigenvalue than the compatibility condition in \cite{van-de-geer(2014)}, which allows us to derive the $\ell_2$-norm bound for the node-wise LASSO estimator directly. Both this condition and the compatibility condition hold with probability approaching one under our assumption that $0 < \underline{c} \leq\lambda_{\min}(\Sigma)$.
\end{rem}

\noindent \textbf{Proof of Lemma \ref{lem:main_LASSO}:} By the definition of $\widehat\beta$, we have
\begin{align*}
    \frac{1}{n} \lVert Y - {\bf X} \widehat\beta \rVert_2^2 + 2\lambda_0 \sum_{j \in [p]}\hat w_j |\widehat{\beta}_j| \leq \frac{1}{n} \lVert Y - {\bf X} \beta^0 \rVert_2^2 + 2\lambda_0 \sum_{j \in [p]}\hat w_j |\beta^0_j|.
\end{align*}
By the preceding equation, re-arranging and substituting $Y = {\bf X} \beta^0 +\epsilon$ yields the {\it basic inequality}
\begin{align*}
    \frac{1}{n} \sum_{i \in [n]} \left(\widetilde{X}_i^\top(\widehat\beta-\beta^0)\right)^2 + 2\lambda_0 \sum_{j \in [p]}\hat w_j |\widehat{\beta}_j| \leq \frac{2}{n} \sum_{i \in [n]} \widetilde{X}_i^\top(\widehat\beta - \beta^0)\epsilon_i  + 2\lambda_0  \sum_{j \in [p]}\hat w_j |\beta^0_j|.
\end{align*}
Note that by H\"older inequality,
\begin{align*}
    \left| \left(\frac{2}{n} \sum_{i \in [n]} \widetilde{X}_i^\top\epsilon_i \right) (\widehat\beta - \beta^0) \right| \leq 2 \left\lVert\frac{1}{n} \sum_{i \in [n]} \widetilde{X}_i^\top\epsilon_i \right\rVert_\infty \left\lVert \widehat\beta - \beta^0 \right\rVert_1 \leq \lambda_0 \left\lVert \widehat\beta - \beta^0 \right\rVert_1
\end{align*}
by assumption, so that 
\begin{align*}
    \lVert {\bf X} (\widehat\beta - \beta^0)/\sqrt{n}\rVert_{2}^2 \leq \lambda_0 \left\lVert \widehat\beta - \beta^0 \right\rVert_1 + 2\lambda_0 \sum_{j \in [p]}\hat w_j |\beta^0_j| - 2\lambda_0 \sum_{j \in [p]}\hat w_j |\widehat{\beta}_j|.
\end{align*}
Define $\widehat{\Delta} := \widehat\beta - \beta^0$ and note that $\beta^0_j = 0$ and $\hat w_j = 1$ on $S_0^c$, we have
\begin{align*}
    \sum_{j \in S_0^c}\hat w_j |\widehat{\beta}_j| - \sum_{j \in S_0^c}\hat w_j |\beta^0_j| = \lVert \widehat{\Delta}_{S_0^c} \rVert_1
\end{align*}
and 
\begin{align*}
    \sum_{j \in S_0}\hat w_j |\widehat{\beta}_j| - \sum_{j \in S_0}\hat w_j |\beta^0_j|\geq  - \max_{j \in S_0} \hat w_j \lVert \widehat{\Delta}_{S_0} \rVert_1, 
\end{align*}
and thus
\begin{align*}
    \lVert {\bf X} \widehat{\Delta}/\sqrt{n}\rVert_{2}^2 &\leq \lambda_0 \lVert \widehat{\Delta} \rVert_1 + 2\lambda_0 \max_{j \in S_0} \hat w_j \lVert \widehat{\Delta}_{S_0} \rVert_1 - 2\lambda_0 \lVert \widehat{\Delta}_{S_0^c} \rVert_1  \\
    &= \lambda_0 (2\max_{j \in S_0} \hat w_j +1) \lVert \widehat{\Delta}_{S_0} \rVert_1 - \lambda_0 \lVert \widehat{\Delta}_{S_0^c} \rVert_1,
\end{align*}
which further implies that
\begin{align*}
    \lVert \widehat{\Delta}_{S_0^c} \rVert_1 \leq (2\max_{j \in S_0} \hat w_j +1) \lVert \widehat{\Delta}_{S_0} \rVert_1 \leq 3 \lVert \widehat{\Delta}_{S_0} \rVert_1.
\end{align*}
By the preceding equation, $\widehat{\Delta} \in \mathcal{C}(S,3)$, so that by Assumption \ref{ass:2},
\begin{align*}
    \lVert {\bf X}\widehat{\Delta}/\sqrt{n}\rVert_{2}^2 = n^{-1}\widehat{\Delta}^\top {\bf X^\top X} \widehat{\Delta} \geq \phi_0 \lVert \widehat{\Delta} \rVert_2^2 \geq \phi_0 \lVert \widehat{\Delta}_{S_0} \rVert_2^2 \geq \phi_0 \lVert \widehat{\Delta}_{S_0} \rVert_1^2/s_0,
\end{align*}
whence
\begin{align*}
    &\lVert {\bf X} \widehat{\Delta}/\sqrt{n}\rVert_{2}^2 \leq 3\lambda_0 \lVert \widehat{\Delta}_{S_0} \rVert_1 \leq \frac{3\lambda_0\sqrt{s_0}}{ \phi_0^{1/2}} \lVert {\bf X} \widehat{\Delta}/\sqrt{n}\rVert_{2} \\
    &\quad \Rightarrow \quad  \lVert {\bf X} \widehat{\Delta}/\sqrt{n}\rVert_{2} = O(\lambda_0 \sqrt{s_0}) = O\left(C_n\sqrt{\frac{s_0 \log (np)}{n}} \right).
\end{align*}
We also have
\begin{align*}
    \phi_0 \lVert \widehat{\Delta} \rVert_2^2 \leq \lVert {\bf X} \widehat{\Delta}/\sqrt{n}\rVert_{2}^2 = O(\lambda_0^2 s_0) \quad \Rightarrow \quad \lVert \widehat{\Delta} \rVert_2 = O(\lambda_0 \sqrt{s_0}) = O\left(C_n\sqrt{\frac{s_0 \log (np)}{n}} \right).
\end{align*}
Furthermore, we have
\begin{align*}
    \lVert \widehat{\Delta} \rVert_1 = \lVert \widehat{\Delta}_{S_0} \rVert_1 + \lVert \widehat{\Delta}_{S_0^c} \rVert_1 \leq 4 \lVert \widehat{\Delta}_{S_0} \rVert_1 \leq 4 \sqrt{s_0} \lVert \widehat{\Delta}_{S_0} \rVert_2 \leq 4 \sqrt{s_0} \lVert \widehat{\Delta} \rVert_2 = O\left(C_n\sqrt{\frac{s_0^2 \log (np)}{n}} \right).
\end{align*}
Finally, the last claim of the lemma follows from observing
\begin{align*}
    \lVert {\bf X} \widehat{\Delta} \rVert_{\infty} \leq \max_{i \in [n],j \in [p]} |X_{ij}| \times \lVert\widehat{\Delta} \rVert_1 \leq \xi_n  \lVert \widehat{\Delta} \rVert_1 = O\left(C_n \sqrt{\frac{\xi_n^2 s_0^2\log (np)}{n}} \right).
\end{align*}
\qed

\begin{rem} \label{rem:LASSO_tuning_rate}
     
Lemma \ref{lem:bound_of_Xeps} suggests that we can set $C_n \equiv 2 C_{\nu,h} K_{n,p}$ in Lemma \ref{lem:main_LASSO}, so that $\lambda_0 \asymp C_{\nu,h} K_{n,p} \sqrt{\log (np)/n}$ and $\lambda_0 \geq 2\lVert {\bf X}^\top \epsilon/n \rVert_\infty $ with probability approaching one.
\end{rem}

\subsection{Random Design}
We show here that under very general conditions, when we instead treat ${\bf X}$ as a random matrix, then it satisfies Assumption \ref{ass:2} with probability approaching one, which implies the same rate. 

\begin{lem}[Sufficient conditions for Assumption \ref{ass:2}] \label{lem:random_design_X}
    Suppose 
    \begin{enumerate}
        \item $ \max_{i \in [n], j \in [p]} \lVert X_{ij} \rVert_{\psi_\nu} \leq K_{n,p}$,
        \item $K_{n,p}^2 \max\left\{ \sqrt{\frac{s_0^2 \log (np)}{n}}, \frac{s_0 (\log(np))^{4/\nu}}{n} \right\} = o(1)$, 
        \item $\widetilde{X}_i^\top \indep \widetilde{X}_j^\top$ for $i \neq j$, 
        \item $0 < \underline{c} \leq\lambda_{\min}(\Sigma)$.
    \end{enumerate}
    Then Assumption \ref{ass:2} is satisfied for ${\bf X}$ with probability approaching one with $\xi_n \asymp K_{n,p} (\log (np))^{1/\nu}$ and $\phi_0 = \underline{c}/2$. 
\end{lem}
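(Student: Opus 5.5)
Assumption~\ref{ass:2} has two parts, and I would verify them separately. The envelope part, Assumption~\ref{ass:2}.1, follows directly from Lemma~\ref{lem:bound_of_X} and Remark~\ref{eq:rem:zeta_n}. With $\xi_n := K_{n,p}(2\log(1+np))^{1/\nu} \asymp K_{n,p}(\log(np))^{1/\nu}$ we get $\max_{i,j}|X_{ij}|\le \xi_n$ with probability at least $1-(1+np)^{-1}\to 1$. The substance is the restricted eigenvalue condition, Assumption~\ref{ass:2}.2, with $\phi_0=\underline{c}/2$. The standard route is a transfer argument. Since $\lambda_{\min}(\Sigma)\ge \underline{c}$, the population quadratic form satisfies $u^\top\Sigma u\ge \underline{c}\lVert u\rVert_2^2$ for all $u$. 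It therefore suffices to show that
\[
\sup_{|S|\le s_0}\ \sup_{u\in\mathcal{C}(S,3),\,u\neq 0}\frac{|u^\top(\widehat\Sigma-\Sigma)u|}{\lVert u\rVert_2^2}\le \frac{\underline{c}}{2}
\]
with probability approaching one.

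To get this bound, I would first reduce it to the max-norm error. For $u\in\mathcal{C}(S,3)$ with $|S|\le s_0$ we have $\lVert u\rVert_1\le 4\lVert u_S\rVert_1\le 4\sqrt{s_0}\lVert u\rVert_2$. Hence $|u^\top(\widehat\Sigma-\Sigma)u|\le \lVert\widehat\Sigma-\Sigma\rVert_{\max}\lVert u\rVert_1^2\le 16 s_0\lVert\widehat\Sigma-\Sigma\rVert_{\max}\lVert u\rVert_2^2$. The problem is then to show $s_0\lVert\widehat\Sigma-\Sigma\rVert_{\max}=o_\Prob(1)$. Here $\Sigma=n^{-1}\sum_i\Sigma_i$ is exactly the mean of $\widehat\Sigma$, because the rows are independent, though not identically distributed. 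Each entry $\widehat\Sigma_{jk}-\Sigma_{jk}=n^{-1}\sum_i (X_{ij}X_{ik}-\mathbb{E}X_{ij}X_{ik})$ is therefore an average of independent mean-zero terms. Each term is sub-Weibull of order $\nu/2$: by H\"older's inequality for $\psi$-norms, $\lVert X_{ij}X_{ik}\rVert_{\psi_{\nu/2}}\lesssim K_{n,p}^2$, and centering only costs a constant. I would apply the Bernstein-type concentration inequality for sums of independent sub-Weibull variables in \cite{kuchibhotla2022moving} (their Theorem 3.1), together with a union bound over the $p^2$ pairs $(j,k)$. This should give, with probability approaching one,
\[
\lVert\widehat\Sigma-\Sigma\rVert_{\max}\lesssim K_{n,p}^2\left(\sqrt{\frac{\log(np)}{n}}+\frac{(\log(np))^{2/\nu}}{n}\right).
\]
A variance-proxy factor can be bounded by $K_{n,p}^2$ using the moment bound computed in the proof of Lemma~\ref{lem:bound_of_Xeps}.

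Multiplying by $s_0$ gives $K_{n,p}^2\{\sqrt{s_0^2\log(np)/n}+s_0(\log(np))^{2/\nu}/n\}$. The first piece is $o(1)$ directly by condition~2 of the lemma. The second piece is bounded by $K_{n,p}^2 s_0(\log(np))^{4/\nu}/n$, since $\log(np)\ge 1$, and this is also $o(1)$ by condition~2. So the restricted eigenvalue constant is at least $\underline{c}-o_\Prob(1)\ge \underline{c}/2$ with probability approaching one. Combining this with the envelope event, by a union of two vanishing failure probabilities, completes the argument.

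The main obstacle is the concentration step for products of marginally sub-Weibull variables when $\nu/2<1$, where $\psi_{\nu/2}$ is only a quasi-norm. I need the exact form of the tail term in Kuchibhotla--Chakrabortty's inequality for order $\alpha=\nu/2<1$; it is of the form $(\log(np))^{1/\alpha}/n$ up to $\log n$ factors. If that form turns out less favourable than I expect, I would fall back on truncation. On the event $\max|X_{ij}|\le\xi_n$, each product is bounded by $\xi_n^2\asymp K_{n,p}^2(\log(np))^{2/\nu}$, and a plain Bernstein inequality then gives a tail term $\xi_n^2\log(np)/n$. The truncation bias is negligible by the exponential tails. After multiplying by $s_0$, the resulting rate is still covered by the $(\log(np))^{4/\nu}$ allowance in condition~2.
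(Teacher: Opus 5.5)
Your proposal is correct and follows the paper's proof essentially step for step: the envelope event from Lemma~\ref{lem:bound_of_X} and Remark~\ref{eq:rem:zeta_n}, the cone reduction $u^\top\widehat\Sigma u\ge(\underline{c}-16s_0\delta_n)\lVert u\rVert_2^2$ with $\delta_n=\lVert\widehat\Sigma-\Sigma\rVert_{\max}$, and a Kuchibhotla--Chakrabortty concentration bound giving $s_0\delta_n=o_\Prob(1)$ under condition~2. The only difference is the concentration tool: the paper invokes their Theorem~4.1, with variance term $A_{n,p}\lesssim K_{n,p}^2$ and tail term $K_{n,p}^2(\log(np))^{4/\nu}/n$ (which is where the exponent in condition~2 comes from), whereas your entrywise use of their Theorem~3.1 plus a union bound gives the slightly sharper tail $K_{n,p}^2(\log(np))^{2/\nu}/n$, which condition~2 also covers.
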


\noindent \textbf{Proof of Lemma \ref{lem:random_design_X}:}
Define
\begin{align*}
\xi_n
:= K_{n,p}\bigl(2\log(1+np)\bigr)^{1/\nu}.
\end{align*}
Note that by Remark \ref{eq:rem:zeta_n}, we have with probability approaching one
\begin{align*}
    \max_{i \in [n], j \in [p]}| X_{ij} | \leq \xi_n \asymp K_{n,p} (\log (np))^{1/\nu},
\end{align*}
so that Assumption \ref{ass:2}.1 is satisfied. 

\vspace{5mm}
For Assumption \ref{ass:2}.2, let $\delta_n := \lVert \widehat{\Sigma} - \Sigma \rVert_{\max}$. For any $u \in \mathcal{C}(S,3)$ with $S \subseteq \{1,...,p\}$ and 
$|S| \leq s_0$, we have
\begin{align*}
    u^\top \widehat{\Sigma} u = u^\top \Sigma u + u^\top (\widehat{\Sigma}-\Sigma) u \geq \underline{c} \lVert u \rVert_2^2 - \delta_n  \lVert u \rVert_1^2 \geq \underline{c} \lVert u \rVert_2^2 - 16 \delta_n s_0  \lVert u \rVert_2^2,
\end{align*}
where the last inequality is by
\begin{align*}
    \lVert u \rVert_1 \leq 4 \lVert u_S \rVert_1 \leq 4 \sqrt{s_0} \lVert u_S \rVert_2 \leq 4 \sqrt{s_0} \lVert u \rVert_2.
\end{align*}
By \cite[Theorem 4.1]{kuchibhotla2022moving}, we have
\begin{align*}
    \delta_n = O_\Prob\left( \max\left\{A_{n,p} \sqrt{\frac{\log (np)}{n}}, K_{n,p}^2 \frac{(\log (np))^{4/\nu}}{n} \right\} \right),
\end{align*}
where 
\begin{align*}
    A_{n,p}^2 := \max_{1 \leq j\leq k \leq p} \frac{1}{n} \sum_{i \in [n]} \text{Var}(X_{ij} X_{ik}).
\end{align*}
Note that by the proof of Lemma \ref{lem:bound_of_Xeps}, we show that 
\begin{align*}
    \text{Var}(X_{ij} X_{ik}) \leq \mathbb{E}(X_{ij}^2 X_{ik}^2) \leq \frac{\mathbb{E}X_{ij}^4}{2} + \frac{\mathbb{E}X_{ik}^4}{2} \lesssim K_{n,p}^4,
\end{align*}
uniformly over ${i \in [n], j,k \in [p]}$, implying that $A_{n,p}^2 \lesssim K_{n,p}^4$. It follows that $\delta_n s_0 = o_\Prob(1)$ by assumption, so that $16 \delta_n s_0 \leq \underline{c}/2$ with probability approaching one. The proof is complete.

\qed

\newpage
\section{Auxiliary Results for Nodewise LASSO Estimation}
\label{sec:node-wise}

As in the previous section, we consider here a more general LASSO estimator $\widehat{\gamma}_j $ defined as 
\begin{align}
    \widehat{\gamma}_j \in \argmin_{\gamma \in \mathbb{R}^{p-1}} \frac{1}{n} \lVert X_j - {\bf X}_{-j} \gamma  \rVert_2^2 + 2\lambda_j \sum_{k\neq j} \hat w_{j,k} |\gamma_k|,
    \label{eq:approx_inver_LASSO}
\end{align}
with $0 \leq \hat w_{j,k} \leq 1, k \neq j$. Again, this covers both the ordinary LASSO estimator and the conservative LASSO estimator in \cite{caner2018asymptotically}. We also redefine $\widehat{\tau}_j^2 := \lVert X_j - {\bf X}_{-j} \widehat{\gamma}_j \rVert_2^2/n + \lambda_j \sum_{k\neq j} \hat w_{j,k} |\widehat{\gamma}_{j,k}|$ in this section.

\subsection{Properties of \texorpdfstring{\(\gamma_j\) and \(\tau_j^2\)}{Properties of gamma\_j and tau\_j squared}}\label{subsec:1}

Consider the minimization problem
\begin{align*}
    \gamma_j \in \arg \min_{\gamma \in \mathbb{R}^{p-1}} \mathbb{E} \lVert X_j - {\bf X}_{-j} \gamma \rVert_2^2/n,
\end{align*}
so that $\gamma_j$ is the population minimizer. We define the matrix 
\begin{align*}
    C := 
    \begin{bmatrix}
        1 & -\gamma_{1,2}& \cdots &-\gamma_{1,p} \\
        -\gamma_{2,1}& 1 & \cdots& -\gamma_{2,p} \\
        \vdots & \vdots & \ddots & \vdots \\
        -\gamma_{p,1} & -\gamma_{p,2} & \cdots & 1
    \end{bmatrix},
\end{align*}
and
\begin{align*}
    \tau_j^2 := \mathbb{E} \lVert X_j - {\bf X}_{-j} \gamma_j \rVert_2^2/n.
\end{align*}
Furthermore, note that for $j \in [p]$, there exists permutation matrix $\mathcal P_j$ such that
\begin{align*}
    \mathcal P_j \Sigma \mathcal P_j^\top = 
\begin{bmatrix}
\Sigma_{j,j} & \Sigma_{j,-j} \\
\Sigma_{-j,j} & \Sigma_{-j,-j}
\end{bmatrix},
\end{align*}
where $\Sigma_{j,j} \in \mathbb{R}$ is the average of variance for $X_j$, $\Sigma_{-j,-j} \in \mathbb{R}^{(p-1) \times (p-1)}$ is the average of covariance matrix for ${\bf X}_{-j}$, $\Sigma_{j,-j} = \Sigma_{-j,j}^\top \in \mathbb{R}^{1 \times (p-1)}$ are the average of cross-covariances between $X_j$ and ${\bf X}_{-j}$. Using the standard formula for the inverse of a block matrix, we obtain
\begin{align}
\Sigma^{-1} = \mathcal P_j^\top  
\begin{bmatrix}
(\Sigma_{j,j} - \Sigma_{j,-j} \Sigma_{-j,-j}^{-1} \Sigma_{-j,j})^{-1} & 
- (\Sigma_{j,j} - \Sigma_{j,-j} \Sigma_{-j,-j}^{-1} \Sigma_{-j,j})^{-1} \Sigma_{j,-j} \Sigma_{-j,-j}^{-1} \\
- \Sigma_{-j,-j}^{-1} \Sigma_{-j,j} (\Sigma_{j,j} - \Sigma_{j,-j} \Sigma_{-j,-j}^{-1} \Sigma_{-j,j})^{-1} &
\ast
\end{bmatrix} \mathcal P_j.
\label{eq:Sigma-inverse}
\end{align}
The following results are standard in the literature and we provide them here for completeness.

\begin{lem}[Relationship between \texorpdfstring{$\gamma_j$}{gamma\_j} and \texorpdfstring{$\Theta$}{Theta}]
Suppose $0<\lambda_{\min}(\Sigma)$. Then for every $j \in [p]$,
\begin{align*}
    (a)\;\lVert \gamma_j \rVert_0 = s_j, \quad and\quad (b)\; \gamma_j = -\frac{\Theta_{j,-j}}{{\Theta_{j,j}}}.
\end{align*}
    \label{lem:aux:1}
\end{lem}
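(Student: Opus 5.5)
The plan is to read both claims off the block-inverse formula \eqref{eq:Sigma-inverse}, after first writing the population node-wise problem explicitly.

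\emph{Population normal equations.} Since the rows $\widetilde X_i$ are independent, the objective $\mathbb{E}\lVert X_j - {\bf X}_{-j}\gamma\rVert_2^2/n$ equals
\begin{align*}
\Sigma_{j,j} - 2\gamma^\top \Sigma_{-j,j} + \gamma^\top \Sigma_{-j,-j}\gamma ,
\end{align*}
using $\Sigma=\sum_i \Sigma_i/n$. This is a quadratic in $\gamma$.

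\emph{Uniqueness of the minimizer.} $\Sigma_{-j,-j}$ is a principal submatrix of $\Sigma$, so by eigenvalue interlacing $\lambda_{\min}(\Sigma_{-j,-j}) \ge \lambda_{\min}(\Sigma) > 0$. Hence the objective is strictly convex. Its minimizer is unique and solves the normal equations, giving
\begin{align*}
\gamma_j = \Sigma_{-j,-j}^{-1}\Sigma_{-j,j}.
\end{align*}
This also removes the ambiguity in the ``$\in\arg\min$'' in the definition of $\gamma_j$.

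\emph{Part (b).} Let $\tau^2 := \Sigma_{j,j} - \Sigma_{j,-j}\Sigma_{-j,-j}^{-1}\Sigma_{-j,j}$ be the Schur complement. It is positive because it equals $1/(\Sigma^{-1})_{jj}$ and $\Sigma^{-1}$ is positive definite; equivalently $\tau^2 = \tau_j^2$, the minimized objective. Read row $j$ of $\Theta=\Sigma^{-1}$ from \eqref{eq:Sigma-inverse}, undoing the permutation $\mathcal P_j$ (which only reorders coordinates):
\begin{align*}
\Theta_{j,j} = \tau^{-2}, \qquad \Theta_{j,-j} = -\tau^{-2}\,\Sigma_{j,-j}\Sigma_{-j,-j}^{-1}.
\end{align*}
By symmetry of $\Sigma_{-j,-j}^{-1}$, the second identity gives $\Theta_{j,-j}^\top = -\tau^{-2}\gamma_j$. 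Dividing by $\Theta_{j,j}=\tau^{-2}>0$ yields $\gamma_j = -\Theta_{j,-j}/\Theta_{j,j}$, with $\Theta_{j,-j}$ read as a vector indexed by $k \ne j$ in the same order as $\gamma_j$.

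\emph{Part (a).} By (b), $\gamma_{j,k} = -\Theta_{j,k}/\Theta_{j,j}$ with $\Theta_{j,j}\neq 0$. So $\gamma_{j,k}\neq 0$ if and only if $\Theta_{j,k}\neq 0$. Counting over $k\neq j$ gives $\lVert\gamma_j\rVert_0 = |\{k\neq j:\Theta_{j,k}\neq 0\}| = s_j$.

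\emph{Main point of care.} There is no real obstacle. The only delicate steps are the ones above that keep the argument clean: the minimizer is unique because $\Sigma_{-j,-j}$ is invertible, $\Theta_{j,j}>0$, and the permutation $\mathcal P_j$ is tracked consistently so the indexing of $\Theta_{j,-j}$ matches that of $\gamma_j$.
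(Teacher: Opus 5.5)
Your proposal is correct and follows the paper's proof: read $\Theta_{j,j}$ and $\Theta_{j,-j}$ off the block-inverse formula \eqref{eq:Sigma-inverse}, identify $\gamma_j=\Sigma_{-j,-j}^{-1}\Sigma_{-j,j}$, and deduce (a) from (b). Your extra checks (uniqueness of the population minimizer, $\Theta_{j,j}>0$, consistent indexing) fill in details the paper leaves implicit. One small correction: expanding $\mathbb{E}\lVert X_j-{\bf X}_{-j}\gamma\rVert_2^2/n$ into $\Sigma_{j,j}-2\gamma^\top\Sigma_{-j,j}+\gamma^\top\Sigma_{-j,-j}\gamma$ needs only linearity of expectation, not independence of the rows, and independence is not a hypothesis of the lemma.
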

{\noindent \textbf{Proof of Lemma \ref{lem:aux:1}:} 
By \eqref{eq:Sigma-inverse}, we have
\begin{align*}
\Theta_{j,j} = \left(\Sigma_{j,j} - \Sigma_{j,-j} \Sigma_{-j,-j}^{-1} \Sigma_{-j,j} \right)^{-1},
\end{align*}
and
\begin{align*}
\Theta_{j,-j} = -\Theta_{j,j} \Sigma_{j,-j} \Sigma_{-j,-j}^{-1}.
\end{align*}
Recall that the regression coefficients from regressing $X_j$ on ${\bf X}_{-j}$ are given by
\begin{align*}
    \gamma_j = \Sigma_{-j,-j}^{-1} \Sigma_{-j,j} \quad \Rightarrow \quad \gamma_j^\top = \Sigma_{j,-j} \Sigma_{-j,-j}^{-1}.
\end{align*}
Substituting into the formula for $\Theta_{j,-j}$ gives:
\begin{align*}
\Theta_{j,-j} = -\Theta_{j,j} \gamma_j^\top \quad \Rightarrow \quad \gamma_j = -\frac{1}{\Theta_{j,j}} \Theta_{-j,j}.
\end{align*}
Since the cardinality of $\Theta_{j,-j}$ is $s_j$, $\gamma_j$ will also have cardinality $s_j$.
\qed

\vspace{5mm}
\begin{lem}[Relationship between \texorpdfstring{$\tau_j^2$}{tau_j^2} and \texorpdfstring{$\Theta$}{Theta}]
Suppose $0<\lambda_{\min}(\Sigma)$. Then for every $j \in [p]$,
\begin{align*}
    (a) \;\tau_j^2 = 1/\Theta_{j,j}, \quad \text{and} \quad (b) \; \lambda_{\min}(\Sigma) \leq \tau_j^2 \leq \Sigma_{j,j}.  
\end{align*} \label{lem:aux:2}
\end{lem}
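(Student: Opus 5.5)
Both parts follow from the block-inverse representation \eqref{eq:Sigma-inverse} of $\Sigma^{-1}$ together with the characterization of $\gamma_j$ as the population least-squares coefficient. For part (a), I will expand $\tau_j^2 = \mathbb{E}\lVert X_j - {\bf X}_{-j}\gamma_j\rVert_2^2/n$ using $\Sigma = \sum_i \Sigma_i/n$. This gives
\begin{align*}
\tau_j^2 = \Sigma_{j,j} - 2\Sigma_{j,-j}\gamma_j + \gamma_j^\top \Sigma_{-j,-j}\gamma_j .
\end{align*}
The first-order condition of the population minimization is $\Sigma_{-j,-j}\gamma_j = \Sigma_{-j,j}$. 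Since $\lambda_{\min}(\Sigma)>0$ makes $\Sigma_{-j,-j}$ invertible, as already used in the proof of Lemma \ref{lem:aux:1}, this yields $\gamma_j = \Sigma_{-j,-j}^{-1}\Sigma_{-j,j}$. Substituting, I obtain $\tau_j^2 = \Sigma_{j,j} - \Sigma_{j,-j}\Sigma_{-j,-j}^{-1}\Sigma_{-j,j}$, which is the Schur complement of $\Sigma_{-j,-j}$. By \eqref{eq:Sigma-inverse} this Schur complement equals $1/\Theta_{j,j}$, so $\tau_j^2 = 1/\Theta_{j,j}$.

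For the upper bound in (b), note that $\Sigma_{-j,-j}^{-1}$ is positive definite. Hence $\Sigma_{j,-j}\Sigma_{-j,-j}^{-1}\Sigma_{-j,j}\ge 0$, and so $\tau_j^2\le\Sigma_{j,j}$. An alternative route is to observe that $\gamma=0$ is feasible in the minimization, which gives $\tau_j^2\le \mathbb{E}\lVert X_j\rVert_2^2/n=\Sigma_{j,j}$ directly.

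For the lower bound in (b), I will use (a) and bound $\Theta_{j,j}$ from above. Write $\Theta_{j,j} = e_j^\top \Sigma^{-1} e_j \le \lambda_{\max}(\Sigma^{-1}) = 1/\lambda_{\min}(\Sigma)$. Therefore $\tau_j^2 = 1/\Theta_{j,j}\ge\lambda_{\min}(\Sigma)$. Equivalently, letting $v=(1,-\gamma_j^\top)$ arranged according to $\mathcal P_j$, we have $\tau_j^2 = v^\top\Sigma v \ge \lambda_{\min}(\Sigma)\lVert v\rVert_2^2\ge\lambda_{\min}(\Sigma)$, since $\lVert v\rVert_2\ge 1$.

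The argument has no real obstacle. The only care needed is in the permutation bookkeeping: $(\Sigma^{-1})_{jj}$ must be read off correctly as the top-left block of $\mathcal P_j\Sigma^{-1}\mathcal P_j^\top$, and this is immediate because $\mathcal P_j$ is orthogonal.
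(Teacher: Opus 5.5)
Your proposal is correct and follows the paper's proof essentially step for step. You expand $\tau_j^2$, use the normal equations to reduce it to the Schur complement $\Sigma_{j,j}-\Sigma_{j,-j}\Sigma_{-j,-j}^{-1}\Sigma_{-j,j}$, read off $1/\Theta_{j,j}$ from \eqref{eq:Sigma-inverse}, and bound $\Theta_{j,j}\le\lambda_{\max}(\Theta)=1/\lambda_{\min}(\Sigma)$; your explicit justification that the subtracted quadratic form is nonnegative fills in a step the paper leaves implicit.
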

\noindent\textbf{Proof of Lemma \ref{lem:aux:2}:} We have
\begin{align*}
    \tau_j^2 &= \mathbb{E}[X_j^\top X_j]/n + \gamma_j^\top \mathbb{E}[{\bf X}_{-j}^\top{\bf X}_{-j}] \gamma_j/n - 2 \gamma_j^\top \mathbb{E}[{\bf X}_{-j}^\top X_j]/n \\
    & \overset{(i)}{=} \Sigma_{j,j} - \gamma_j^\top \mathbb{E}[{\bf X}_{-j}^\top X_j]/n \\ 
    & \overset{(ii)}{=} \Sigma_{j,j} - \Sigma_{-j,j}^\top \Sigma_{-j,-j}^{-1} \Sigma_{-j,j} \\
    &\leq \Sigma_{j,j},
\end{align*}
where $(i)$ and $(ii)$ follow from $\gamma_j = \mathbb{E}[{\bf X}_{-j}^\top {\bf X}_{-j}]^{-1} \mathbb{E}[{\bf X}_{-j}^\top X_j]$, which yields the second inequality in $(b)$; by recalling the top-left of \eqref{eq:Sigma-inverse}, we see that the second-last term of the preceding equation equals $\Theta_{j,j}^{-1}$, proving part (a). Finally, $\Theta_{j,j} \leq \lambda_{\max}(\Theta) = 1/\lambda_{\min}(\Sigma)$ so that the first inequality in $(b)$ follows by $(a)$.
\qed 

\vspace{5mm}
\begin{lem}[$\ell_1$ and $\ell_2$ bound for $\gamma_j$]
    Suppose $0<\lambda_{\min}(\Sigma)$. Then for every $j \in [p]$,
    \begin{align*}
        (a) \;\lVert \gamma_j \rVert_2 \leq \sqrt{\Sigma_{j,j}/ \lambda_{\min}(\Sigma)},  \quad and \quad (b) \;\lVert \gamma_j \rVert_1 \leq \sqrt{s_j \Sigma_{j,j} / \lambda_{\min}(\Sigma)}.
    \end{align*}
    \label{lem:aux:3}
\end{lem}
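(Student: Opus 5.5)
For part (a), I will combine the closed form $\gamma_j = \Sigma_{-j,-j}^{-1}\Sigma_{-j,j}$ (used in the proof of Lemma~\ref{lem:aux:1}) with a variance decomposition. Since $\gamma_j$ is the population projection coefficient, we have
\[
\Sigma_{j,j} = \tau_j^2 + \gamma_j^\top \Sigma_{-j,-j}\gamma_j .
\]
This follows by expanding $\mathbb{E}\lVert X_j\rVert_2^2/n = \mathbb{E}\lVert \eta_j + {\bf X}_{-j}\gamma_j\rVert_2^2/n$. The cross term $\mathbb{E}[\eta_j^\top {\bf X}_{-j}]\gamma_j/n$ vanishes by the normal equations defining $\gamma_j$. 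Equivalently, the first line in the proof of Lemma~\ref{lem:aux:2} gives the same identity.

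From this identity, since $\tau_j^2 \ge 0$ (indeed $\tau_j^2\ge\lambda_{\min}(\Sigma)>0$ by Lemma~\ref{lem:aux:2}), we get
\[
\gamma_j^\top\Sigma_{-j,-j}\gamma_j \le \Sigma_{j,j}.
\]
Next, $\Sigma_{-j,-j}$ is a principal submatrix of $\mathcal P_j\Sigma\mathcal P_j^\top$. By eigenvalue interlacing, or directly via the Rayleigh quotient restricted to vectors supported off coordinate $j$, we have $\lambda_{\min}(\Sigma_{-j,-j})\ge\lambda_{\min}(\Sigma)>0$. Hence
\[
\lambda_{\min}(\Sigma)\lVert\gamma_j\rVert_2^2 \le \gamma_j^\top\Sigma_{-j,-j}\gamma_j\le \Sigma_{j,j},
\]
which rearranges to (a).

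For part (b), I will use Lemma~\ref{lem:aux:1}(a), which gives $\lVert\gamma_j\rVert_0 = s_j$. Applying Cauchy--Schwarz on the support of $\gamma_j$ yields $\lVert\gamma_j\rVert_1\le\sqrt{s_j}\,\lVert\gamma_j\rVert_2$, and combining this with (a) gives the claim.

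No step here is a real obstacle. The only point needing care is the identity $\Sigma_{j,j}=\tau_j^2+\gamma_j^\top\Sigma_{-j,-j}\gamma_j$. It relies on $\Sigma_{-j,-j}$ being invertible, which follows from $\lambda_{\min}(\Sigma)>0$, so that $\gamma_j$ is unique and satisfies the normal equations $\Sigma_{-j,-j}\gamma_j=\Sigma_{-j,j}$. I will state this invertibility explicitly before invoking the identity.
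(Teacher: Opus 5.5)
Your proof is correct and follows essentially the same route as the paper. Both arguments rest on $\tau_j^2 = \Sigma_{j,j} - \Sigma_{j,-j}\Sigma_{-j,-j}^{-1}\Sigma_{-j,j} \ge 0$ (your $\gamma_j^\top\Sigma_{-j,-j}\gamma_j$ is exactly this quadratic form), then bound $\lVert\gamma_j\rVert_2^2$ by $\lambda_{\min}(\Sigma_{-j,-j})^{-1}\le\lambda_{\min}(\Sigma)^{-1}$ times it, and get (b) from Cauchy--Schwarz on the $s_j$-sparse support. Your explicit justification of $\lambda_{\min}(\Sigma_{-j,-j})\ge\lambda_{\min}(\Sigma)$ fills in a step the paper leaves implicit when it writes $\Sigma_{-j,-j}^{-2}\preceq \lambda_{\min}(\Sigma)^{-1}\Sigma_{-j,-j}^{-1}$.
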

\noindent \textbf{Proof of Lemma \ref{lem:aux:3}:} Note that from the proof of Lemma \ref{lem:aux:2}, we see that
\begin{align*}
    0 \leq \tau^2_j = \Sigma_{j,j} - \Sigma_{j,-j} \Sigma_{-j,-j}^{-1} \Sigma_{-j,j},
\end{align*}
which implies that
\begin{align*}
    \Sigma_{j,j} \geq \Sigma_{j,-j} \Sigma_{-j,-j}^{-1} \Sigma_{-j,j}.
\end{align*}
Therefore,
\begin{align*}
    \gamma_j^\top\gamma_j = \Sigma_{j,-j} \Sigma_{-j,-j}^{-2} \Sigma_{-j,j} \leq  \frac{1}{\lambda_{\min}(\Sigma)} \Sigma_{j,-j} \Sigma_{-j,-j}^{-1} \Sigma_{-j,j} \leq \Sigma_{j,j}/\lambda_{\min}(\Sigma),
\end{align*}
yielding (a); for (b), simply note
\begin{align*}
    \lVert \gamma_j \rVert_1 \leq \sqrt{s_j} \lVert \gamma_j \rVert_2,
\end{align*}
where we note that the cardinality of $\gamma_j$ is $s_j$, by Lemma \ref{lem:aux:1}.
\qed

\subsection{Properties of \texorpdfstring{$\widehat \gamma_j$}{gamma\_j}}

Recall the definition of $\widehat \Theta$ and $\widehat C$, and we use $\widehat{\Theta}_j$ and $\widehat{C}_j$ to denote the $j$-th rows of $\widehat{\Theta}$ and $\widehat{C}$, respectively,
then
\begin{align}
    \widehat{\Theta}_j = \widehat{C}_j/\widehat{\tau}_j^2.
    \label{eq:Theta-hat_j}
\end{align}
The KKT condition implies that $\widehat{\tau}^2_j =  (X_j - {\bf X}_{-j} \widehat\gamma_j)^\top X_j/ n$ for $\widehat{\gamma}_j$ in Equation \eqref{eq:approx_inver_LASSO}, so that
\begin{align}
    \frac{1}{n} X_j^\top {\bf X} \widehat\Theta_j^\top = \frac{1}{n} X_j^\top (X_j - {\bf X}_{-j} \widehat{\gamma}_j) / \widehat{\tau}_j^2 = 1.
    \label{eq:KKT:1}
\end{align}
Furthermore, by the KKT condition, we have
\begin{align}
    \left\lVert \frac{1}{n} {\bf X}_{-j}^\top {\bf X} \widehat{\Theta}_j^\top \right\rVert_\infty = \left\lVert \frac{1}{n} {\bf X}_{-j}^\top (X_j - {\bf X}_{-j} \widehat{\gamma}_j)/\widehat{\tau}_j^2 \right\rVert_\infty \leq \lambda_j/\widehat{\tau}_j^2,
    \label{eq:KKT:2}
\end{align}
by $0 \leq \hat w_{j,k} \leq 1, k \neq j$. Combining \eqref{eq:KKT:1} and \eqref{eq:KKT:2} yields
\begin{align}
    \left\lVert \widehat{\Sigma} \widehat{\Theta}_j^\top - e_j\right\rVert_\infty \leq \lambda_j/\widehat{\tau}_j^2,
    \label{eq:extended-KKT}
\end{align}
where $e_j \in \mathbb{R}^p$ has element one on its $j$-th coordinate, zero otherwise. \eqref{eq:extended-KKT} is the so-called {\it extended KKT equation} in \cite{van-de-geer(2014)}. Finally, consideration over all $j=1,...,p$ yields
\begin{align}
    \left\lVert \widehat{\Sigma} \widehat{\Theta}^\top - I_p\right\rVert_{\max} \leq \max_{j \in [p]} \lambda_j/\widehat{\tau}_j^2.
    \label{node-wise_LASSO_kkt}
\end{align}
See, for example, \cite{van-de-geer(2014)} for the original LASSO and \cite{caner2018asymptotically} for the weighted LASSO. The following results are useful for deriving the properties of $\widehat \tau_j^2$ and $\widehat \Theta$. 

\begin{lem}[$\ell_1$ and $\ell_2$ bound for $\widehat \gamma_j$]\label{lem:hatgamma}
    Suppose the RE condition (see Definition \ref{defn:sparse-RE}) holds for ${\bf X}$ with parameter $(s_0^*+1,3)$ where $s_0^*$ is defined in Section \ref{subsec:1}. Assume also that $\hat w_{j,k} = 1, k \in S_{j}^c$ where $S_{j} = \{k \neq j: \gamma_{j,k} \neq 0\}$. For each $j \in [p]$, recall that $\eta_j = X_j - {\bf X}_{-j} \gamma_j$, then for any $\lambda_j \geq  2 \left\lVert {{\bf X}_{-j}^\top \eta_j}/{n}\right\lVert_\infty $, we have
    \begin{align*}
        (a) \;\lVert \widehat{\gamma}_j - \gamma_j \rVert_2 \lesssim \lambda_j \frac{\sqrt{s_j}}{\phi_0}, \quad (b) \;\lVert {\bf X}_{-j}(\widehat{\gamma}_j - \gamma_j) \rVert_2^2/n \lesssim \lambda^2_j \frac{s_j}{\phi_0}, \quad (c)\; \lVert \widehat{\gamma}_j - \gamma_j \rVert_1 \lesssim \lambda_j \frac{s_j}{\phi_0},
    \end{align*}
    where $\widehat \gamma_j$ is defined in \eqref{eq:approx_inver_LASSO}.
\end{lem}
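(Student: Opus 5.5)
The plan is to repeat the basic-inequality argument of Lemma \ref{lem:main_LASSO}, now with $X_j$ as the response, ${\bf X}_{-j}$ as the design, $\gamma_j$ as the true parameter and $\eta_j$ as the error. By Lemma \ref{lem:aux:1}, $\gamma_j$ is $s_j$-sparse with support $S_j$, and $X_j = {\bf X}_{-j}\gamma_j + \eta_j$.

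\textbf{Basic inequality and cone.} Set $\widehat\Delta_j := \widehat\gamma_j - \gamma_j$. Optimality of $\widehat\gamma_j$ in \eqref{eq:approx_inver_LASSO} gives
\begin{align*}
\frac{1}{n}\lVert {\bf X}_{-j}\widehat\Delta_j\rVert_2^2 + 2\lambda_j\sum_{k\neq j}\hat w_{j,k}|\widehat\gamma_{j,k}| \leq \frac{2}{n}\eta_j^\top {\bf X}_{-j}\widehat\Delta_j + 2\lambda_j\sum_{k\neq j}\hat w_{j,k}|\gamma_{j,k}|.
\end{align*}
By H\"older's inequality and $\lambda_j \geq 2\lVert {\bf X}_{-j}^\top\eta_j/n\rVert_\infty$, the cross term is at most $\lambda_j\lVert\widehat\Delta_j\rVert_1$. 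The penalty terms are handled as before. On $S_j^c$ we have $\gamma_{j,k}=0$ and $\hat w_{j,k}=1$, so the penalty difference there equals $\lVert\widehat\Delta_{j,S_j^c}\rVert_1$. On $S_j$ we use $\hat w_{j,k}\le 1$ and the triangle inequality. Together these yield
\begin{align*}
\lVert {\bf X}_{-j}\widehat\Delta_j\rVert_2^2/n \leq 3\lambda_j\lVert\widehat\Delta_{j,S_j}\rVert_1 - \lambda_j\lVert\widehat\Delta_{j,S_j^c}\rVert_1 .
\end{align*}
In particular $\widehat\Delta_j \in \mathcal C(S_j,3)$, with $|S_j| = s_j \le s_0^*$.

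\textbf{Transfer the RE condition to ${\bf X}_{-j}$.} Embed $\widehat\Delta_j$ in $\mathbb R^p$ as the vector $u$ with $u_j=0$ and $u_{-j}=\widehat\Delta_j$. Then ${\bf X}u = {\bf X}_{-j}\widehat\Delta_j$. Moreover $u$ lies in $\mathcal C(S_j,3)$ viewed as a subset of $[p]$, since the extra coordinate $j$ contributes zero. The RE condition with parameter $(s_0^*+1,3)$ covers all $|S|\le s_0^*+1$, which is ample, so $\lVert{\bf X}_{-j}\widehat\Delta_j\rVert_2^2/n \ge \phi_0\lVert\widehat\Delta_j\rVert_2^2$. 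This embedding is the only point needing care; the slack of $+1$ in the sparsity level makes it harmless.

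\textbf{Conclude the three bounds.} Combining the display with $\lVert\widehat\Delta_{j,S_j}\rVert_1 \le \sqrt{s_j}\lVert\widehat\Delta_j\rVert_2 \le \sqrt{s_j/\phi_0}\,\lVert{\bf X}_{-j}\widehat\Delta_j\rVert_2/\sqrt n$ gives
\begin{align*}
\lVert{\bf X}_{-j}\widehat\Delta_j\rVert_2/\sqrt n \le 3\lambda_j\sqrt{s_j/\phi_0},
\end{align*}
which is (b). Next, $\phi_0\lVert\widehat\Delta_j\rVert_2^2 \le 9\lambda_j^2 s_j/\phi_0$ gives $\lVert\widehat\Delta_j\rVert_2 \lesssim \lambda_j\sqrt{s_j}/\phi_0$, which is (a). Finally, the cone property gives $\lVert\widehat\Delta_j\rVert_1 \le 4\sqrt{s_j}\lVert\widehat\Delta_j\rVert_2 \lesssim \lambda_j s_j/\phi_0$, which is (c). All constants are universal, so the bounds hold uniformly in $j\in[p]$.
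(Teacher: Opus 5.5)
Your proposal is correct and follows the paper's proof. The paper likewise embeds $\widehat\gamma_j-\gamma_j$ into $\mathbb{R}^p$ with a zero in coordinate $j$ and reruns the basic-inequality, cone and RE argument of Lemma~\ref{lem:main_LASSO}; you write out those steps in full and keep the explicit $\phi_0$ dependence, which the paper hides inside $O(\cdot)$.
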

\noindent \textbf{Proof of Lemma \ref{lem:hatgamma}:} Define $\widehat{\Delta}_j := \widehat{\gamma}_j - \gamma_j$ and let $\widetilde{\Delta}_j \in \mathbb{R}^p$ be such that $\widetilde{\Delta}_j(j)=0$ and $\widetilde{\Delta}_j(k) = \widehat{\Delta}_j(k),k\neq j$. The proof is then the same as in Lemma \ref{lem:main_LASSO} by replacing $\widehat{\Delta}$ with $\widetilde{\Delta}_j,j\in[p]$.
\qed

\begin{lem}[Implications of KKT conditions]
    For $\lambda_j \geq 2 \left\lVert {\bf X}_{-j}^\top \eta_j/n \right\rVert_\infty$ where $\eta_j=X_j - {\bf X}_{-j} \gamma_j$, we have 
    \begin{align*}
        \left\lVert {\bf X}^{\top}_{-j} {\bf X}_{-j}(\widehat{\gamma}_j - \gamma_j) \right\rVert_\infty /n = O(\lambda_j).
    \end{align*}
    \label{lem:kkt}
\end{lem}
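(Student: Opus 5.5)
The bound follows from the KKT (subgradient) conditions of the node-wise problem \eqref{eq:approx_inver_LASSO} together with the triangle inequality. No restricted eigenvalue argument is needed.

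\emph{Step 1 (KKT bound on the fitted residual).} Write the stationarity condition of \eqref{eq:approx_inver_LASSO} at $\widehat\gamma_j$:
\begin{align*}
\frac{2}{n}{\bf X}_{-j}^\top (X_j - {\bf X}_{-j}\widehat\gamma_j) = 2\lambda_j \widehat\kappa_j,
\end{align*}
where $\widehat\kappa_j(k) = \hat w_{j,k}\,\mathrm{sign}(\widehat\gamma_{j,k})$ if $\widehat\gamma_{j,k}\neq 0$, and $|\widehat\kappa_j(k)|\le \hat w_{j,k}$ otherwise. Since $0\le \hat w_{j,k}\le 1$, this gives
\begin{align*}
\|{\bf X}_{-j}^\top (X_j - {\bf X}_{-j}\widehat\gamma_j)\|_\infty/n \le \lambda_j.
\end{align*}
This is exactly the bound already recorded in \eqref{eq:KKT:2} before dividing by $\widehat\tau_j^2$.

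\emph{Step 2 (decomposition).} Substitute $X_j = {\bf X}_{-j}\gamma_j + \eta_j$ to get
\begin{align*}
X_j - {\bf X}_{-j}\widehat\gamma_j = \eta_j - {\bf X}_{-j}(\widehat\gamma_j-\gamma_j).
\end{align*}
Hence
\begin{align*}
\frac{1}{n}{\bf X}_{-j}^\top{\bf X}_{-j}(\widehat\gamma_j-\gamma_j) = \frac{1}{n}{\bf X}_{-j}^\top\eta_j - \frac{1}{n}{\bf X}_{-j}^\top(X_j-{\bf X}_{-j}\widehat\gamma_j).
\end{align*}

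\emph{Step 3 (triangle inequality).} Take sup norms. The first term is at most $\lambda_j/2$ by the hypothesis $\lambda_j\ge 2\|{\bf X}_{-j}^\top\eta_j/n\|_\infty$, and the second is at most $\lambda_j$ by Step 1. Therefore
\begin{align*}
\|{\bf X}_{-j}^\top{\bf X}_{-j}(\widehat\gamma_j-\gamma_j)\|_\infty/n \le \tfrac{3}{2}\lambda_j = O(\lambda_j).
\end{align*}
The bound is deterministic and holds uniformly in $j$.

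There is no real obstacle here. The only care point is Step 1: the weights $\hat w_{j,k}$ enter the subgradient, so the bound relies on $\hat w_{j,k}\le 1$. Notably, the condition $\hat w_{j,k}=1$ on $S_j^c$ used in Lemma \ref{lem:hatgamma} is not needed for this result.
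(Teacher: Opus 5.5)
Your proof is correct and matches the paper's: the paper also gets $\|{\bf X}_{-j}^\top(X_j-{\bf X}_{-j}\widehat\gamma_j)\|_\infty/n\le\lambda_j$ from the KKT conditions, substitutes $X_j-{\bf X}_{-j}\widehat\gamma_j=\eta_j-{\bf X}_{-j}(\widehat\gamma_j-\gamma_j)$, and applies the triangle inequality to reach $\tfrac32\lambda_j$. The only difference is cosmetic: the paper obtains the KKT bound by rescaling \eqref{eq:KKT:2} by $\widehat\tau_j^2$, whereas you read it directly off the subgradient condition and so never need to divide by $\widehat\tau_j^2$.
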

\noindent \textbf{Proof of Lemma \ref{lem:kkt}:} An application of \eqref{eq:Theta-hat_j}, KKT conditions \eqref{eq:KKT:2}, as well as noting that ${\bf X} \widehat{C}_j = X_j - {\bf X}_{-j} \widehat{\gamma}_j$ yields
\begin{align*}
    \lVert {\bf X}_{-j}^\top (X_j - {\bf X}_{-j} \widehat{\gamma}_j) \lVert_\infty/n = \lVert{\bf X}_{-j}^\top {\bf X} \widehat{C}_j^\top \rVert_\infty/n = \widehat{\tau}_j^2 \lVert {\bf X}_{-j}^\top {\bf X} \widehat{\Theta}_j^\top \rVert_\infty/n \leq \lambda_j.
\end{align*}
Recall that $\widehat{\Delta}_j= \widehat{\gamma}_j - \gamma_j$, so that the preceding equation can be rewritten as
\begin{align*}
     \left\lVert {\bf X}^{\top}_{-j}(\eta_j - {\bf X}_{-j} \widehat{\Delta}_j) \right\rVert_\infty /n \leq \lambda_j.
\end{align*}
Since $\lVert a - b \rVert_\infty \geq \lVert a \rVert_\infty - \lVert b \rVert_\infty$, letting $a = {\bf X}_{-j}^\top {\bf X}_{-j} \widehat{\Delta}_j$ and $b = {\bf X}_{-j}^\top \eta_j$ yields
\begin{align*}
    \lVert {\bf X}_{-j}^\top {\bf X}_{-j} \widehat{\Delta}_j \rVert_\infty/n \leq \lambda_j + \lVert {\bf X}_{-j}^\top \eta_j \rVert_\infty/n \leq \frac{3}{2} \lambda_j.
\end{align*}
The proof is complete.
\qed

\subsection{Properties of \texorpdfstring{$\widehat \tau_j^2$}{tau\_j} and \texorpdfstring{$\widehat \Theta$}{Theta}}

We impose the following conditions to derive the properties of $\widehat \tau_j^2$ and $\widehat \Theta$.
\begin{ass} \label{ass:marginal_SW}
    Suppose 
    \begin{enumerate}
    \item $ \max_{i \in [n], j \in [p]} \lVert X_{ij} \rVert_{\psi_\nu} \leq K_{n,p}$,
    \item $\widetilde{X}_i^\top \indep \widetilde{X}_j^\top$ for $i \neq j$,
    \item $\max_{i \in [n], j \in [p]} \lVert \eta_{j}(i) \rVert_l \leq \overline{c} <\infty$ for some $l \geq 4$,
    \item $0 < \underline{c} \leq\lambda_{\min}(\Sigma)$ and $\max_{j \in [p]}\Sigma_{jj} = O(1)$.
    \end{enumerate}
\end{ass}

\begin{lem}  \label{lem:hattau}
    Suppose Assumption \ref{ass:marginal_SW} holds. On the event that $\hat w_{j,k} = 1, k \in S_{j}^c$ where $S_{j} = \{k \neq j: \gamma_{j,k} \neq 0\}$, $\lambda_j \geq  2 \left\lVert {{\bf X}_{-j}^\top \eta_j}/{n}\right\lVert_\infty $, and the RE condition holds for ${\bf X}$ with parameter $(s_0^*+1,3)$, we have
    \begin{align*}
        \max_{j \in [p]} |\widehat{\tau}_j^2 - \tau_j^2| = O_\Prob\left(\max_{j \in [p]} \lambda_j\sqrt{s_j} + \max_{j \in [p]} \lambda_j^2 s_j\right) + O_\Prob \left( \sqrt{\frac{\log (np)}{n} } + \frac{K_{n,p}^2 s_0^* (\log (np))^{4/\nu}}{n} \right).
    \end{align*}
\end{lem}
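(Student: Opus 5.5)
We decompose $\widehat{\tau}_j^2-\tau_j^2$ into a regression-error part and a pure sampling part, and bound each uniformly in $j$. Write $\widehat{\Delta}_j=\widehat{\gamma}_j-\gamma_j$, so that $X_j-{\bf X}_{-j}\widehat{\gamma}_j=\eta_j-{\bf X}_{-j}\widehat{\Delta}_j$. Then
\begin{align*}
\widehat{\tau}_j^2-\tau_j^2
=\Bigl(\tfrac{1}{n}\lVert\eta_j\rVert_2^2-\tau_j^2\Bigr)
-\tfrac{2}{n}\eta_j^\top{\bf X}_{-j}\widehat{\Delta}_j
+\tfrac{1}{n}\lVert{\bf X}_{-j}\widehat{\Delta}_j\rVert_2^2
+\lambda_j\sum_{k\neq j}\hat w_{j,k}|\widehat{\gamma}_{j,k}|.
\end{align*}
We bound the four terms separately. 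We work on the stated event: the weights equal one off $S_j$, the penalty dominates $2\lVert{\bf X}_{-j}^\top\eta_j/n\rVert_\infty$, and RE holds with parameter $(s_0^*+1,3)$. On this event Lemma~\ref{lem:hatgamma} applies simultaneously for all $j$, and $\phi_0$ is a fixed constant.

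\emph{Regression-error terms.} The third term is $O(\lambda_j^2 s_j)$ by Lemma~\ref{lem:hatgamma}(b). By H\"older's inequality and the penalty condition, the second term is at most $2\lVert{\bf X}_{-j}^\top\eta_j/n\rVert_\infty\lVert\widehat{\Delta}_j\rVert_1\le\lambda_j\lVert\widehat{\Delta}_j\rVert_1$. By Lemma~\ref{lem:hatgamma}(c) this is $O(\lambda_j^2 s_j)$. For the penalty term, $0\le\hat w_{j,k}\le1$ gives the bound $\lambda_j(\lVert\gamma_j\rVert_1+\lVert\widehat{\Delta}_j\rVert_1)$. Lemma~\ref{lem:aux:3}(b), together with $\lambda_{\min}(\Sigma)\ge\underline{c}$ and $\max_j\Sigma_{jj}=O(1)$ from Assumption~\ref{ass:marginal_SW}.4, gives $\lVert\gamma_j\rVert_1\lesssim\sqrt{s_j}$. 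Hence this term is $O(\lambda_j\sqrt{s_j}+\lambda_j^2 s_j)$. All constants here are uniform in $j$, so taking the maximum over $j$ yields the first $O_\Prob$ term of the lemma.

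\emph{Sampling term.} Since $\mathbb{E}\lVert\eta_j\rVert_2^2/n=\tau_j^2$ by definition, we need a uniform bound on $\max_j\bigl|\frac1n\sum_i(\eta_j(i)^2-\mathbb{E}\eta_j(i)^2)\bigr|$. The natural route is the maximal inequality for sums of products used in Lemma~\ref{lem:random_design_X}, namely \cite[Theorem 4.1]{kuchibhotla2022moving}. The difficulty is that $\eta_j(i)=X_{ij}-\sum_{k\in S_j}\gamma_{j,k}X_{ik}$ is not assumed sub-Weibull directly; only marginal tails of the $X_{ik}$ are given.

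\emph{Main step.} We would handle this sampling term through the identity
\begin{align*}
\tfrac{1}{n}\lVert\eta_j\rVert_2^2-\tau_j^2=\widetilde{C}_j^\top(\widehat{\Sigma}-\Sigma)\widetilde{C}_j,
\end{align*}
where $\widetilde{C}_j$ is the $j$-th row of $C$, with support of size at most $s_j+1$. There are then two options.

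\begin{enumerate}[label=(\roman*)]
\item Bound the quadratic form by $\lVert\widetilde{C}_j\rVert_1^2\,\lVert\widehat{\Sigma}-\Sigma\rVert_{\max}$. Using $\lVert\widetilde{C}_j\rVert_1^2\lesssim s_j$, the mean and tail parts of $\delta_n$ contribute $s_0^*\sqrt{\log(np)/n}$ and $K_{n,p}^2 s_0^*(\log(np))^{4/\nu}/n$ respectively. This matches the second term of the lemma except for an extra $s_0^*$ factor on the square-root part.
\item To remove that factor, apply the Bernstein-type part of Kuchibhotla--Chakrabortty directly to the scalar variables $\eta_j(i)^2$. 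By Lemma~\ref{lem:sum_of_sub-Weibull_is_sub-Weibull} (after bounding $\lVert\gamma_j\rVert_1\lesssim\sqrt{s_0^*}$), $\eta_j(i)$ is sub-Weibull$(\nu)$ with parameter $\lesssim\sqrt{s_0^*}K_{n,p}$, so $\eta_j(i)^2$ is sub-Weibull$(\nu/2)$ with parameter $\lesssim s_0^*K_{n,p}^2$. This parameter feeds only the $(\log(np))^{4/\nu}/n$ tail term. The variance term is governed by $\mathbb{E}\eta_j(i)^4\le\overline{c}$ from Assumption~\ref{ass:marginal_SW}.3 with $l\ge4$, which gives $\sqrt{\log(np)/n}$ after a union bound over $j\in[p]$.
\end{enumerate}

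We expect this last step to be the main obstacle: separating the variance proxy, which uses the fourth moment of $\eta_j$, from the tail proxy, which uses the sub-Weibull parameter and carries $s_0^*$. We would use route (ii) and fall back to route (i) if the constants do not separate cleanly.
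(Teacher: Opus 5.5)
Your proposal is correct. Route (ii) for the sampling term is exactly the paper's argument. Lemma~\ref{lem:sum_of_sub-Weibull_is_sub-Weibull} with $\lVert\gamma_j\rVert_1\lesssim\sqrt{s_0^*}$ makes $\eta_j(i)^2$ sub-Weibull$(\nu/2)$ with parameter $\lesssim K_{n,p}^2 s_0^*$. Then \cite[Theorem 3.4]{kuchibhotla2022moving} with $t=\log(np)$ uses the fourth moment from $\lVert\eta_j(i)\rVert_l\le\overline{c}$ in the variance term and the $\psi_{\nu/2}$ parameter only in the tail term.

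So the separation you flag as the main obstacle is already built into that inequality, and you should drop route (i). Route (i) would leave an extra factor $K_{n,p}^2 s_0^*$, not just $s_0^*$, on the square-root term. The reason is that the variance proxy of $\lVert\widehat\Sigma-\Sigma\rVert_{\max}$ is $A_{n,p}\lesssim K_{n,p}^2$, and Assumption~\ref{ass:marginal_SW} does not bound $K_{n,p}$.

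Where you genuinely differ from the paper is the regression-error part. The paper first uses the KKT identity $\widehat\tau_j^2=(X_j-{\bf X}_{-j}\widehat\gamma_j)^\top X_j/n$. This yields the bilinear terms $\eta_j^\top{\bf X}_{-j}\widehat\Delta_j/n$, $\eta_j^\top{\bf X}_{-j}\gamma_j/n$ and $\gamma_j^\top{\bf X}_{-j}^\top{\bf X}_{-j}\widehat\Delta_j/n$, the last of which is controlled by Lemma~\ref{lem:kkt}.

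You instead expand the definition of $\widehat\tau_j^2$ directly. That trades the KKT identity and Lemma~\ref{lem:kkt} for the prediction bound in Lemma~\ref{lem:hatgamma}(b) and the simple bound $\lambda_j(\lVert\gamma_j\rVert_1+\lVert\widehat\Delta_j\rVert_1)$ on the penalty term.

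Both routes give $O(\lambda_j\sqrt{s_j}+\lambda_j^2 s_j)$ uniformly in $j$. In your version the $\lambda_j\sqrt{s_j}$ contribution comes only from $\lambda_j\lVert\gamma_j\rVert_1$, whereas in the paper it comes from two cross terms.
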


\noindent \textbf{Proof of Lemma \ref{lem:hattau}:} Recall that $\widehat{\Delta}_j = \widehat{\gamma}_j - \gamma_j$. We can write for each $j = 1,...,p$,
\begin{align*}
    |\widehat{\tau}_j^2 - \tau_j^2| &\leq |\eta_j^\top\eta_j/n - \tau_j^2| + |\eta_j^\top {\bf X}_{-j} \widehat{\Delta}_j/n| + | \eta_j^\top {\bf X}_{-j} \gamma_j/n| + |\gamma_j^\top {\bf X}_{-j}^\top {\bf X}_{-j } \widehat{\Delta}_j/n| \\
    & = I + II + III + IV.
\end{align*}

We first prove the order for $I$. Fix any $j \in [p]$ and note first that for any $i \in [n]$,
\begin{align*}
    \eta_{j}(i) = X_{ij} - \sum_{k \neq j, k \in [p]} X_{ik} \gamma_{j}(k),
\end{align*}
where we denote $\gamma_j(k)$ and $\eta_j(k)$ to be the $k$-th element of the vectors $\gamma_j$ and $\eta_j$ respectively. By Lemma \ref{lem:sum_of_sub-Weibull_is_sub-Weibull}, $\eta_j(i)$ is sub-Weibull with
\begin{align*}
    \lVert \eta_{j}(i) \rVert_{\psi_\nu} \leq \max\{1,C(\nu)\} K_{n,p} \left( 1+ \lVert \gamma_j \rVert_1 \right) \leq C_\nu K_{n,p} \sqrt{s_0^*},
\end{align*} 
uniformly over $i \in [n], j \in [p]$ and $C_\nu>0$ is a universal constant, where we use Lemma \ref{lem:aux:3}(b) in the last inequality. Then for every $i \in [n], j \in [p]$,
\begin{align*}
    \lVert \eta_{j}(i)^2 - \mathbb{E}(\eta_{j}(i)^2) \rVert_{\psi_{\nu/2}} \leq C_{\psi_{\nu/2}} \lVert \eta^2_{j}(i)\rVert_{\psi_{\nu/2}} = C_{\psi_{\nu/2}} \lVert \eta_{j}(i) \rVert_{\psi_\nu}^2 \lesssim K_{n,p}^2 s_0^*,
\end{align*}
so that
\begin{align*}
    \frac{1}{n}\eta_j^\top\eta_j - \tau_j^2 = \frac{1}{n} \sum_{i=1}^n \eta_{j}(i)^2 - \tau_j^2 = \frac{1}{n} \sum_{i=1}^n (\eta_{j}(i)^2 - \mathbb{E}[\eta_{j}(i)^2])
\end{align*}
is the sum of $n$ independent mean-zero random variables with sub-Weibull parameter $\nu/2$ and $O(K_{n,p}^2 s_0^*)$. Applying \cite[Theorem 3.4]{kuchibhotla2022moving} with $t= \log (np)$ yields
\begin{align*}
    \max_{j \in [p]} |\eta_j^\top\eta_j/n - \tau_j^2| = O_\Prob \left( \sqrt{\frac{\log (np)}{n} } + \frac{K_{n,p}^2 s_0^* (\log (np))^{4/\nu}}{n} \right),
\end{align*}
where we also use the assumption that $\max_{i \in [n], j \in [p]} \lVert \eta_{j}(i) \rVert_l \leq \overline{c} <\infty$ for some $l \geq 4$.

We now prove the order for the remaining terms $II,III,IV$. For $II$, by using Lemma \ref{lem:hatgamma}(c) we have
\begin{align*}
    II \leq \lVert\eta_j^\top {\bf X}_{-j}/n \rVert_\infty \lVert \widehat{\Delta}_j \rVert_1 = \lVert\eta_j^\top {\bf X}_{-j}/n \rVert_\infty \times O(\lambda_j s_j) \leq \frac{1}{2} \lambda_j \times O(\lambda_j s_j) = O(s_j \lambda_j^2).
\end{align*}
For $III$ we have
\begin{align*}
    III \leq \lVert \gamma_j \rVert_1 \lVert\eta_j^\top {\bf X}_{-j}/n \rVert_\infty = O(\sqrt{s_j}) \times\lVert\eta_j^\top {\bf X}_{-j}/n \rVert_\infty = O(\sqrt{s_j} \lambda_j).
\end{align*}
For $IV$ we have
\begin{align*}
    IV \leq \lVert \gamma_j \rVert_1 \lVert {\bf X}_{-j}^\top {\bf X}_{-j } \widehat{\Delta}_j/n \rVert_\infty = O(\sqrt{s_j}) O(\lambda_j) = O(\sqrt{s_j} \lambda_j).
\end{align*}
where the last equality follows from the KKT conditions shown in Lemma \ref{lem:kkt}. The proof is complete. 
\qed

\begin{lem}\label{lem:hatTheta}
    Suppose Assumption \ref{ass:marginal_SW} holds, and
    \begin{align*}
         \max_{j \in [p]} |\widehat{\tau}_j^2 - \tau_j^2| = o_\Prob(1).
    \end{align*}
    On the event that $\hat w_{j,k} = 1, k \in S_{j}^c$ where $S_{j} = \{k \neq j: \gamma_{j,k} \neq 0\}$, $\lambda_j \geq  2 \left\lVert {{\bf X}_{-j}^\top \eta_j}/{n}\right\lVert_\infty $, and the RE condition holds for ${\bf X}$ with parameter $(s_0^*+1,3)$, we have
    \begin{align*}
        &(a) \; \max_{j \in [p]} \lVert  \widehat{\Theta}_j - \Theta_j \rVert_1 = O_\Prob\left(\max_{j \in [p]} \lambda_js_j \right) + O_\Prob\left(\sqrt{s_0^*} \max_{j \in [p]} |\widehat{\tau}_j^2 - \tau_j^2| \right), \\
        &(b) \; \max_{j \in [p]} \lVert  \widehat{\Theta}_j - \Theta_j \rVert_2 =  O_\Prob\left(\max_{j \in [p]} \lambda_j\sqrt{s_j} \right) + O_\Prob\left(\max_{j \in [p]} |\widehat{\tau}_j^2 - \tau_j^2| \right).
    \end{align*}   
\end{lem}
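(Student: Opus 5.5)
The plan is to write $\widehat{\Theta}_j - \Theta_j$ row by row, using $\widehat{\Theta}_j = \widehat{C}_j/\widehat{\tau}_j^2$ from \eqref{eq:Theta-hat_j}. For the population counterpart, Lemmas \ref{lem:aux:1} and \ref{lem:aux:2} give $\Theta_j = C_j/\tau_j^2$, where $C_j$ has a one in coordinate $j$ and $-\gamma_j$ elsewhere.

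Add and subtract $\widehat{C}_j/\tau_j^2$ to obtain
\begin{align*}
\widehat{\Theta}_j - \Theta_j = \frac{\widehat{C}_j - C_j}{\tau_j^2} + \widehat{C}_j\left(\frac{1}{\widehat{\tau}_j^2} - \frac{1}{\tau_j^2}\right) = \frac{\widehat{C}_j - C_j}{\tau_j^2} + \widehat{C}_j \frac{\tau_j^2 - \widehat{\tau}_j^2}{\widehat{\tau}_j^2\tau_j^2}.
\end{align*}
Note that $\widehat{C}_j - C_j$ vanishes in coordinate $j$ and equals $-(\widehat{\gamma}_j - \gamma_j)$ elsewhere.

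Lemma \ref{lem:aux:2}(b) and Assumption \ref{ass:marginal_SW}.4 give $\tau_j^2 \geq \lambda_{\min}(\Sigma) \geq \underline{c}$ uniformly in $j$. The hypothesis $\max_j|\widehat{\tau}_j^2 - \tau_j^2| = o_\Prob(1)$ then yields $\min_j \widehat{\tau}_j^2 \geq \underline{c}/2$ with probability approaching one. Hence both denominators are uniformly bounded away from zero, and everything below is stated on that event.

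For part (a), the first term is at most $\underline{c}^{-1}\lVert \widehat{\gamma}_j - \gamma_j\rVert_1$. Since we are on the stated event (weights, tuning, RE with $(s_0^*+1,3)$), Lemma \ref{lem:hatgamma}(c) bounds this by $O(\lambda_j s_j)$ uniformly in $j$, with $\phi_0$ a fixed constant. For the second term we need $\max_j\lVert \widehat{C}_j\rVert_1$. The bound is
\[
\lVert \widehat{C}_j\rVert_1 \leq 1 + \lVert\gamma_j\rVert_1 + \lVert \widehat{\gamma}_j-\gamma_j\rVert_1 .
\]
Lemma \ref{lem:aux:3}(b) gives $\lVert\gamma_j\rVert_1 \leq \sqrt{s_j\Sigma_{jj}/\lambda_{\min}(\Sigma)} = O(\sqrt{s_0^*})$, and $\lVert \widehat{\gamma}_j-\gamma_j\rVert_1 = O(\lambda_j s_j)$. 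Combining,
\[
\max_j\lVert\widehat{\Theta}_j-\Theta_j\rVert_1 \lesssim \max_j \lambda_j s_j + \bigl(\sqrt{s_0^*} + \max_j\lambda_j s_j\bigr)\max_j|\widehat{\tau}_j^2-\tau_j^2| .
\]
The cross term $\max_j \lambda_j s_j \cdot \max_j|\widehat{\tau}_j^2-\tau_j^2|$ is $o_\Prob(\max_j \lambda_j s_j)$ because the $\tau$-error is $o_\Prob(1)$, so it is absorbed into the first term. This gives (a).

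Part (b) is identical with $\ell_2$ norms. Lemma \ref{lem:hatgamma}(a) gives $\lVert \widehat{\gamma}_j - \gamma_j\rVert_2 = O(\lambda_j\sqrt{s_j})$. For the coefficient row, $\lVert \widehat{C}_j\rVert_2 \leq 1 + \lVert \gamma_j\rVert_2 + \lVert \widehat{\gamma}_j-\gamma_j\rVert_2$. Lemma \ref{lem:aux:3}(a) gives $\lVert \gamma_j\rVert_2 = O(1)$ using $\max_j\Sigma_{jj}=O(1)$ and $\lambda_{\min}(\Sigma)\geq\underline{c}$, so $\lVert\widehat{C}_j\rVert_2 = O(1) + O(\lambda_j\sqrt{s_j})$. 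The resulting cross term is again absorbed, which yields (b) without the $\sqrt{s_0^*}$ factor.

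The only delicate point is making every bound uniform in $j \in [p]$. The implicit constants in Lemma \ref{lem:hatgamma} depend only on $\phi_0$, and those in Lemmas \ref{lem:aux:2}--\ref{lem:aux:3} only on $\underline{c}$ and $\max_j\Sigma_{jj}$. Consequently, taking the maximum over $j$ on the intersection of the given event with $\{\min_j\widehat{\tau}_j^2\geq \underline{c}/2\}$, which has probability approaching one, preserves the rates.
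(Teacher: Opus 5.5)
Your proof is correct and follows essentially the same route as the paper. Both arguments write $\widehat{\Theta}_j=\widehat{C}_j/\widehat{\tau}_j^2$ and $\Theta_j=C_j/\tau_j^2$, split the difference by the triangle inequality, bound $\lVert\widehat{\gamma}_j-\gamma_j\rVert$ via Lemma \ref{lem:hatgamma} and $\lVert\gamma_j\rVert$ via Lemma \ref{lem:aux:3}, and use that $\widehat{\tau}_j^2$ is uniformly bounded away from zero. The only difference is that the paper adds and subtracts $C_j/\widehat{\tau}_j^2$ rather than $\widehat{C}_j/\tau_j^2$; the reciprocal-difference term is then multiplied by the population row $C_j$, whose norm is $1+\lVert\gamma_j\rVert$, so no cross term $\lVert\widehat{\gamma}_j-\gamma_j\rVert\cdot|\widehat{\tau}_j^2-\tau_j^2|$ appears, whereas your split produces that cross term and you absorb it correctly.
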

\noindent \textbf{Proof of Lemma \ref{lem:hatTheta}:} Recall by Lemma \ref{lem:aux:1}{(b)} and Lemma \ref{lem:aux:2}{(a)} that $\gamma_j = - \frac{\Theta_{-j,j}}{\Theta_{j,j}} = - \tau_j^2 \Theta_{-j,j}$, so that $\Theta_j = C_j/\tau_j^2$. Therefore, by Equation \eqref{eq:Theta-hat_j},
\begin{align*}
    \lVert  \widehat{\Theta}_j - \Theta_j \rVert_1 &= \left\lVert \widehat{C}_j/\widehat{\tau}_j^2 - C_j/\tau_j^2 \right\rVert_1 \\
    &\leq \left\lVert \widehat{C}_j/\widehat{\tau}_j^2 - C_j/\widehat{\tau}_j^2 \right\rVert_1 + \left\lVert C_j/\widehat{\tau}_j^2 - C_j/\tau_j^2 \right\rVert_1 \\
    &\leq \lVert \widehat{\gamma}_j - \gamma_j \rVert_1 / \widehat{\tau}_j^2 + \left(1+\lVert \gamma_j \rVert_1 \right) \left|\frac{1}{\widehat{\tau}_j^2} - \frac{1}{\tau_j^2}\right| \\
    &= I + II.
\end{align*} 
For $I$, by an application of Lemma \ref{lem:hatgamma}{(c)} and Lemma \ref{lem:hattau}, we have
\begin{align}
    I &= \frac{O(\lambda_j s_j)}{\tau_j^2 + (\widehat{\tau_j}^2 - \tau_j^2)} = O(\lambda_j s_j) O_\Prob(1) = O_\Prob(\max_{j \in [p]} \lambda_j s_j),
    \label{eq:hatTheta:1}
\end{align}
uniformly over $j \in [p]$, where the first equality follows by assumption of $\widehat{\tau}_j^2$ converging to $\tau_j^2$ so that 
\begin{align*}
    \frac{1}{\tau_j^2 + (\widehat{\tau_j}^2 - \tau_j^2)} = \frac{1}{\tau_j^2} - \frac{\widehat{\tau_j}^2 - \tau_j^2}{\tau_j^2(\tau_j^2+(\widehat{\tau_j}^2 - \tau_j^2))} = \frac{1}{\tau_j^2} + o_\Prob(1) = O_\Prob(1),
\end{align*}
the last equality following from Lemma \ref{lem:aux:2}(b). For $II$, recall from Lemma \ref{lem:aux:3}(b) that
$\lVert \gamma_j \rVert_1 = O\left(\sqrt{s_j}\right)$. By using Lemma \ref{lem:hattau} we have 
\begin{align}
    II  = O_\Prob\left(\sqrt{s_0^*} \max_{j \in [p]} |\widehat{\tau}_j^2 - \tau_j^2| \right),
    \label{eq:hatTheta:2}
\end{align}
since
\begin{align*}
    \frac{1}{\widehat{\tau}_j^2} - \frac{1}{\tau_j^2} = -\frac{\widehat{\tau}_j^2 - \tau_j^2}{\tau_j^2 ((\widehat{\tau}_j^2 - \tau_j^2) + \tau_j^2)} = - \frac{\widehat{\tau}_j^2 - \tau_j^2}{\tau_j^2(o_\Prob(1) + \tau_j^2)} = O_\Prob(\max_{j \in [p]} |\widehat{\tau}_j^2 - \tau_j^2|).
\end{align*}
Combining \eqref{eq:hatTheta:1} and \eqref{eq:hatTheta:2} yields part (a). 

\vspace{5mm}
For part (b), we have
\begin{align*}
    \lVert  \widehat{\Theta}_j - \Theta_j \rVert_2 &= \left\lVert \widehat{C}_j/\widehat{\tau}_j^2 - C_j/\tau_j^2 \right\rVert_2 \\
    &\leq \left\lVert \widehat{C}_j/\widehat{\tau}_j^2 - C_j/\widehat{\tau}_j^2 \right\rVert_2 + \left\lVert C_j/\widehat{\tau}_j^2 - C_j/\tau_j^2 \right\rVert_2 \\
    &\leq \lVert \widehat{\gamma}_j - \gamma_j \rVert_2 / \widehat{\tau}_j^2 + \left(1+\lVert \gamma_j \rVert_2 \right) \left|\frac{1}{\widehat{\tau}_j^2} - \frac{1}{\tau_j^2}\right| \\
    &= I + II.
\end{align*}
Repeat as in \eqref{eq:hatTheta:1} and use Lemma \ref{lem:hatgamma}{(a)} (instead of \ref{lem:hatgamma}{(c)}) to obtain
\begin{align}
    I = O(\lambda_j \sqrt{s_j}) O_\Prob(1) = O_\Prob(\max_{j \in [p]} \lambda_j \sqrt{s_j}).
    \label{eq:hatTheta:3}
\end{align}
For $II$, repeat as in \eqref{eq:hatTheta:2} to obtain
\begin{align}
    II =  O_\Prob(\max_{j \in [p]} |\widehat{\tau}_j^2 - \tau_j^2|), 
    \label{eq:hatTheta:4}
\end{align}
since $\lVert \gamma_j \rVert_2 = O(1)$ by Lemma \ref{lem:aux:3}(a). Combining \eqref{eq:hatTheta:3} and \eqref{eq:hatTheta:4} yields part (b). The proof is complete.
\qed

\begin{rem}
By repeating the proof of Lemma \ref{lem:random_design_X}, we show that the RE condition holds with probability approaching one if $\delta_n s_0^* = o_\Prob(1)$, where
\begin{align*}
    \delta_n = O_\Prob\left( \max\left\{K_{n,p}^2 \sqrt{\frac{\log (np)}{n}}, K_{n,p}^2 \frac{(\log (np))^{4/\nu}}{n} \right\} \right). 
\end{align*}
In addition, we show that $\hat w_{j,k} = 1, k \in S_{j}^c$ holds for the conservative LASSO with probability approaching one in Theorem \ref{thm:size-control_random_design_extension}. Finally, the Lemmas \ref{lem:hatgamma}--\ref{lem:hatTheta} depend on the unknown parameters $\lambda_j$, and we provide this rate below.
\end{rem}

\begin{lem}[Sub-Weibull tuning rate for $\lambda_j$]
\label{lem:node-wise_tuning_rate}
Suppose Assumption \ref{ass:marginal_SW} holds and $(\log(np))^{2/\nu - 1/2} = o(n^{1/2-1/l})$. Let $\left\{\lambda_j\right\}_{j=1}^p$ be a sequence satisfying
\begin{align*}
\lambda_j \geq 2 C_{\nu,l} 
K_{n,p} \sqrt{ \frac{2\log (np)}{n}},
\qquad j\in[p],
\end{align*}
where the constant $C_{\nu,l}$ only depends on $\nu$ and $l$. Then
\begin{align*}
\mathbb{P}\!\left( 2 \left\lVert \frac{{\bf X}_{-j}^\top \eta_j}{n} \right\rVert_\infty
\leq \lambda_j, \quad \forall j \in [p]
\right)\to 1
\qquad\text{as}\; n\to\infty.
\end{align*}
\end{lem}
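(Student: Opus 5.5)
The plan is to mirror the proof of Lemma \ref{lem:bound_of_Xeps}, with $\eps$ replaced by $\eta_j$ and $h$ replaced by $l$. One point needs care. The regressors ${\bf X}_{-j}$ are not independent of $\eta_j$. However, the population normal equations defining $\gamma_j$ give $\mathbb{E}[{\bf X}_{-j}^\top \eta_j]/n = 0$. This is exactly the centering required by \cite[Theorem 4.6]{kuchibhotla2022moving}. Independence across rows $i$ also holds, since $(X_{ik})_{k\neq j}$ and $\eta_j(i)$ are functions of the row $\widetilde X_i$ alone, and the rows are independent by Assumption \ref{ass:marginal_SW}.2. One subtlety is that for i.n.i.d. rows the individual terms $X_{ik}\eta_j(i)$ need not be mean zero, only their average. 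I would handle this either by noting that the cited maximal inequality only requires control of the centered sum $\sum_i (X_{ik}\eta_j(i) - \mathbb{E} X_{ik}\eta_j(i))$, which is what we need, or by accepting centering of the average.

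The key step treats all $(j,k)$ pairs at once. The quantity $\max_j \|{\bf X}_{-j}^\top \eta_j/n\|_\infty$ is a maximum over at most $p^2$ coordinates of sums $n^{-1}\sum_i X_{ik}\eta_j(i)$. Each product has a sub-Weibull factor $X_{ik}$ with $\|X_{ik}\|_{\psi_\nu}\le K_{n,p}$ and a factor $\eta_j(i)$ with finite $l$-th moment bounded by $\overline c$. So the structure is the same as in Lemma \ref{lem:bound_of_Xeps}. I would apply the bound (\ref{KC22_them4.6}) with $p$ replaced by $p^2$; since $\log(np^2)\le 2\log(np)$, this accounts for the factor $\sqrt{2\log(np)}$ in the threshold. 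The result is that, with probability at least $1-3/(np^2)-1/L$,
\[
\max_{j}\Big\|\tfrac1n{\bf X}_{-j}^\top\eta_j\Big\|_\infty \le 7\sqrt2\,\sigma_{n,p}\sqrt{\tfrac{2\log(np)}{n}} + \frac{C_\nu K_{n,p}\overline c^{1/l}(2\log(np))^{1/\nu}[(\log(2n))^{1/\nu}+L]}{n^{1-1/l}} .
\]
Here $\sigma^2_{n,p}=\max_{j,k} n^{-1}\sum_i\mathbb{E}[X_{ik}^2\eta_j(i)^2]$.

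Next I would bound the variance proxy and the remainder term. By Hölder, exactly as in (\ref{sigma_np_unif_bd}),
\[
\mathbb{E}[X_{ik}^2\eta_j(i)^2]\le \|X_{ik}\|_{2l/(l-2)}^2\|\eta_j(i)\|_l^2\lesssim K_{n,p}^2 ,
\]
using the Gamma-function moment computation for sub-Weibull variables together with $\|\eta_j(i)\|_l\le\overline c$. For the second term, take $L=(\log(2n))^{1/\nu}\to\infty$. Its ratio to $K_{n,p}\sqrt{\log(np)/n}$ is then of order $(\log(np))^{2/\nu-1/2}/n^{1/2-1/l}=o(1)$ by hypothesis. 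Hence, for $n$ large, the whole right-hand side is at most $C_{\nu,l}K_{n,p}\sqrt{2\log(np)/n}$, with $C_{\nu,l}$ depending only on $\nu$ and $l$. The failure probability $3/(np^2)+(\log(2n))^{-1/\nu}$ tends to zero. Since $\lambda_j\ge 2C_{\nu,l}K_{n,p}\sqrt{2\log(np)/n}$ for every $j$, the event $2\|{\bf X}_{-j}^\top\eta_j/n\|_\infty\le\lambda_j$ holds simultaneously for all $j\in[p]$ with probability tending to one.

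I expect the main obstacle to be justifying that \cite[Theorem 4.6]{kuchibhotla2022moving} applies when the second factor $\eta_j$ varies with the coordinate index $j$, rather than being a single error vector shared across coordinates. I would argue that the proof of that theorem only uses three ingredients: a union bound over coordinates, truncation of the polynomial-moment factor at level $\max_i|\eta_j(i)|$, and a sub-Weibull envelope for $\max_{i,k}|X_{ik}|$. For the truncation, I would replace $\max_i\|\eta_j(i)\|$ by the bound $\mathbb{E}\max_{i,j}|\eta_j(i)|^l$-type control obtained from $\|\eta_j(i)\|_l\le\overline c$. If the union over $j$ inflates this maximum, the fallback is the sub-Weibull bound $\|\eta_j(i)\|_{\psi_\nu}\lesssim K_{n,p}\sqrt{s_0^*}$ from the proof of Lemma \ref{lem:hattau}, which costs only logarithmic factors.
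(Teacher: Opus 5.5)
Your route is the paper's. You stack the $p(p-1)$ coordinates of ${\bf X}_{-j}^\top\eta_j/n$, obtain the centering from the population normal equations, and rerun the proof of \cite[Theorem 4.6]{kuchibhotla2022moving} with $\log(np(p-1))\le 2\log(np)$. You then bound $\sigma_{n,p}$ by H\"older, take $L=(\log(2n))^{1/\nu}$, and invoke the rate hypothesis. The paper does not deal with the obstacle you flag. It writes the Theorem~4.6 bound with $K_{\eta,l}:=\max_{i,j}\lVert\eta_j(i)\rVert_l$ and the same $n^{1-1/l}$ denominator, as though the heavy-tailed factor were common to all coordinates, as $\epsilon$ is in Lemma~\ref{lem:bound_of_Xeps}.

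Your worry is justified, and your first remedy does not work.
\begin{itemize}
\item The factor $K_{\epsilon,h}n^{1/h}$ in \eqref{KC22_them4.6} is an $L_h$-envelope for $\max_{i\le n}|\epsilon_i|$. In the stacked problem the corresponding quantity is $\max_{i\le n,\,j\le p}|\eta_j(i)|$. Moment control alone yields only $\mathbb{E}\max_{i,j}|\eta_j(i)|^l\le np\,\overline{c}^{\,l}$, an envelope of order $(np)^{1/l}$.
\item Absorbing this into $K_{n,p}\sqrt{\log(np)/n}$ would require $p^{1/l}(\log(np))^{2/\nu-1/2}=o(n^{1/2-1/l})$. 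This is impossible for $l=4$ and $p\ge n$, so the stated hypothesis does not cover it.
\end{itemize}

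The fallback is therefore necessary, and it costs more than logarithmic factors.
\begin{itemize}
\item Use the bound $|\eta_j(i)|\le(1+\lVert\gamma_j\rVert_1)\max_{k\in[p]}|X_{ik}|\lesssim\sqrt{s_0^*}\max_{k\in[p]}|X_{ik}|$ from Lemma~\ref{lem:aux:3}(b), or equivalently the $\psi_\nu$ bound from the proof of Lemma~\ref{lem:hattau}.
\item This replaces $\overline{c}\,n^{1/l}$ by a term of order $\sqrt{s_0^*}K_{n,p}(\log(np))^{1/\nu}$. The $l$-th moment is then needed only for $\sigma_{n,p}$.
\item If you keep the form of \eqref{KC22_them4.6}, the remainder is negligible under roughly $s_0^*(\log(np))^{4/\nu-1}(\log(2n))^{2/\nu}=o(n)$.
\end{itemize}
That is a rate condition on $s_0^*$ which is not among the lemma's hypotheses. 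Conditions of this type appear in Assumption~\ref{ass:size}.4, where the lemma is applied. You need to add such a condition explicitly; with it, the argument goes through.
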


\noindent \textbf{Proof of Lemma \ref{lem:node-wise_tuning_rate}:} We may repeat the argument used in Lemma~\ref{lem:bound_of_Xeps} by working with the stacked object
\begin{align*}
\begin{bmatrix}
{\bf X}_{-1}^\top \eta_1/n \\
{\bf X}_{-2}^\top \eta_2/n \\
\vdots \\
{\bf X}_{-p}^\top \eta_p/n
\end{bmatrix}
\in \mathbb{R}^{p(p-1)},
\end{align*}
which stacks the vectors $\{{\bf X}_{-j}^\top \eta_j/n\}_{j=1}^p$. For each $j\in[p]$, recall that $\eta_j = X_j-{\bf X}_{-j}\gamma_j$, where $\gamma_j$ is defined at the population level by the normal equation
\begin{align*}
\mathbb E[{\bf X}_{-j}^\top X_j] = \mathbb E[{\bf X}_{-j}^\top {\bf X}_{-j}]\,\gamma_j .
\end{align*}
Hence,
\begin{align*}
\mathbb E[{\bf X}_{-j}^\top \eta_j]
&= \mathbb E[{\bf X}_{-j}^\top (X_j-{\bf X}_{-j}\gamma_j)]
 = \mathbb E[{\bf X}_{-j}^\top X_j]-\mathbb E[{\bf X}_{-j}^\top {\bf X}_{-j}]\,\gamma_j
=0.
\end{align*}
We can thus follow the proof of \cite[Theorem~4.6]{kuchibhotla2022moving} to obtain that, with probability at least $1 - 3/(np(p-1)) - 1/L$ for any $L\ge 1$,
\begin{align*}
\max_{j \in [p]}\left\Vert \frac{{\bf X}_{-j}^\top \eta_j}{n}  \right\Vert_{\infty}
\le
7\sqrt{2}\,\sigma_{n,p}\sqrt{\frac{\log(np(p-1))}{n}}
+
\frac{C_{\nu} K_{n,p}K_{\eta,l} \,(\log(np(p-1)))^{1/\nu}\Big[(\log(2n))^{1/\nu}+L\Big]}{n^{1-1/l}},
\end{align*}
where $K_{\eta,l} := \max_{i \in [n], j \in [p]} \lVert \eta_{j}(i) \rVert_l$ and
\begin{align*}
    \sigma^2_{n,p} := \max_{j \in [p]} \max_{k \in [p-1]} \frac{1}{n} \sum_{i \in [n]} \mathbb{E}\left[ \eta_{j}(i)^2 [{\bf X}_{-j}]^2_{i,k}  \right].
\end{align*}
 Repeating the proof of Lemma \ref{lem:bound_of_Xeps} reveals that there exists $N(\nu,l) \in \mathbb N^+$ such that, for all $n \geq N(\nu,l)$, we have
    \begin{align*}
        \mathbb P \left(  \max_{j \in [p]}\left\Vert \frac{{\bf X}_{-j}^\top \eta_j}{n}  \right\Vert_{\infty} > C_{\nu,l} K_{n,p} \sqrt{\frac{\log (np(p-1))}{n}} \right) \leq \frac{3}{np(p-1)}+\frac{1}{(\log(2n))^{1/\nu}}.
    \end{align*}
    Setting $\lambda_j$ as above gives the result.
\qed

\begin{rem}
    The condition $\max_{i \in [n], j \in [p]} \lVert \eta_{j}(i) \rVert_l \leq \overline{c}<\infty$ is also made in \cite{van-de-geer(2014)} for the bounded case and in \cite{caner2018asymptotically}. This condition appears to be necessary to obtain the sharp tuning rate above when we only have control over the marginal distribution of $\tilde X_i$. 
    In fact, if we assume that $\tilde X_i$ is jointly sub-Weibull, then this assumption holds for all $l > 0$.
\end{rem}

\newpage
\section{Consistency of Auxiliary Estimators}
\label{sec:appendix_other_estimators}

In this section, we derive the consistency of $\widehat{\Phi}$ (and $\widetilde{\Phi}$) as well as $\{\tilde w_{i,n}\}_{i \in [r_n]}$. We derive these results under both fixed and random designs.

\subsection{Fixed Design}

We first treat $\bf X$ as fixed. To begin with, note that the following holds for $K_\theta$ and $r_n$.

\begin{lem}
    We have
    \begin{align}
    \frac{1 \vee \varpi r_n }{2} \leq  K_\theta \leq r_n,
    \label{eq:K_lambda:summary}
    \end{align}
    under both fixed and random designs.
    \label{lem:Ktheta_and_rn}
\end{lem}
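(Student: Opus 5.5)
The plan is to reduce $K_\theta$ to a spectral sum over the nonzero eigenvalues of a Gram-type matrix, and then bound that sum termwise.

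\textbf{Spectral representation.} Write $A := {\bf X}\widehat{\Theta}^\top R^\top/\sqrt{n} \in \mathbb{R}^{n\times r}$. Then
\[
R\widehat{\Theta}\widehat{\Sigma}\widehat{\Theta}^\top R^\top = A^\top A ,
\qquad
P_\theta = A(A^\top A+\theta I_r)^{-1}A^\top .
\]
Let $\mu_1\ge \mu_2\ge\cdots\ge\mu_r\ge 0$ be the eigenvalues of $A^\top A$. Since $\rk(A^\top A)=\rk(A)=r_n$, exactly $r_n$ of them are positive. Using the cyclic property of the trace and diagonalizing $A^\top A$, we get
\[
K_\theta=\tr(P_\theta)=\tr\bigl((A^\top A+\theta I_r)^{-1}A^\top A\bigr)=\sum_{k=1}^{r_n}\frac{\mu_k}{\mu_k+\theta}.
\]
This identity does not depend on whether ${\bf X}$ is fixed or random. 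The argument is therefore pathwise, which is why the lemma holds under both designs.

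\textbf{Upper bound.} Because $\theta>0$, each summand satisfies $\mu_k/(\mu_k+\theta)<1$. Summing the $r_n$ terms gives $K_\theta\le r_n$.

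\textbf{Lower bound.} Set $k^*:=\lceil \varpi r_n\rceil$. Since $0<\varpi<1$ and $r_n\ge1$, we have $1\le k^*\le r_n$, so $\theta=\mu_{k^*}>0$. For every $k\le k^*$ we have $\mu_k\ge\theta$, and hence $\mu_k/(\mu_k+\theta)\ge 1/2$. Dropping the remaining nonnegative terms yields
\[
K_\theta\ \ge\ \frac{k^*}{2}.
\]
Finally, $k^*\ge \varpi r_n$ and $k^*\ge 1$, so $k^*\ge 1\vee \varpi r_n$. This gives $K_\theta \ge (1\vee\varpi r_n)/2$.

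\textbf{Main obstacle.} The only delicate point is the implicit requirement $r_n\ge1$. If $r_n=0$, then $A=0$ and $\theta$ is undefined. I would note that the construction in \eqref{eq:theta} presupposes $r_n\ge 1$, and treat this as a standing convention. Everything else in the argument is elementary linear algebra.
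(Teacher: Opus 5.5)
Your proposal is correct and follows the paper's proof. Both write $K_\theta=\sum_{i\le r_n}D_i^2/(D_i^2+\theta)$: you get there by trace cyclicity and the eigenvalues of $A^\top A$, the paper by the SVD of $R\widehat{\Theta}{\bf X}^\top/\sqrt{n}$. Both then bound each term by one, and use $D_i^2\ge\theta$ for $i\le\lceil\varpi r_n\rceil$ for the lower bound, where your one-line bound $K_\theta\ge\lceil\varpi r_n\rceil/2\ge(1\vee\varpi r_n)/2$ combines the paper's two separate lower-bound displays.
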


\noindent {\bf Proof of Lemma \ref{lem:Ktheta_and_rn}:} Consider the singular value decomposition 
\begin{align*}
    R \widehat{\Theta} {\bf X}^\top / \sqrt{n} = U S Q^\top,
\end{align*}
where $U \in \mathbb{R}^{r \times r}$ and $Q \in \mathbb{R}^{n \times n}$ are orthonormal matrices, and $S \in \mathbb{R}^{r \times n}$ contains singular values $D_1 \geq \cdots \geq D_{r_n} > 0 $ on its diagonal and zeros elsewhere. Consequently, $ P_\theta$ admits the spectral decomposition 
\begin{align*}
    P_\theta = Q \tilde{\mathcal{D}} Q^\top, \quad \tilde{\mathcal{D}} = \diag{ \left( \frac{D_{1}^2}{(D^2_{1} + \theta)}, \cdots, \frac{D_{r_n}^2}{(D^2_{r_n} + \theta)}, \mathbf{0}_{n - r_n} \right)}.
\end{align*}
It follows that 
\begin{align*}
    K_\theta = \tr (\tilde{\mathcal{D}}) = \sum_{i=1}^{r_n} \frac{D_{i}^2} {D^2_i + \theta}.
\end{align*}
The upper bound is immediate as
\begin{align*}
    \sum_{i=1}^{r_n} \frac{D_{i}^2}{D^2_i + \theta}\leq \sum_{i=1}^{r_n} 1 = r_n.
\end{align*}
The lower bound follows from 
\begin{align*}
    \sum_{i=1}^{r_n} \frac{D_{i}^2}{D^2_i + \theta}  \geq \frac{D_{\lceil \varpi r_n \rceil}^2}{D^2_{\lceil \varpi r_n \rceil} + D^2_{\lceil \varpi r_n \rceil}} = 1/2,
\end{align*}
and 
\begin{align*}
    \sum_{i=1}^{r_n} \frac{D_{i}^2}{D^2_i + \theta}  \geq \sum_{i=1}^{\lceil \varpi r_n \rceil}  \frac{D_{i}^2}{D^2_i + \theta} \geq \sum_{i=1}^{\lceil \varpi r_n \rceil} \frac{\theta}{\theta + \theta} \geq  \varpi r_n  /2.
\end{align*}
\qed

\begin{lem}  \label{lem:variance-estimator-consistent}
Under the setup of \eqref{eq:model}, suppose the following holds: 
\begin{enumerate}
    \item $\operatorname{Var}(\epsilon_i) \geq \underline{c}$ and $ \max_{i \in [n]} \mathbb{E}[\epsilon_i^4] \leq \overline{c}$ for some universal constants $ 0 <\underline{c} \leq\overline{c}< \infty$,
    \item $p_n = \max_{i \in [n]} P_{\theta,ii} =  O(K_\theta/n)$,
    \item $\left\lVert {\bf X} (\widehat{\beta} - \beta^0 ) / \sqrt{n}\right\rVert_2 = o_\mathbb{P}(1)$.
\end{enumerate}
Then we have
\begin{align*}
    \frac{\sum_{i \in [n]} P_{\theta,ii} \widehat{\sigma}_i^2}{K_\theta} = \frac{\sum_{i \in [n]} P_{\theta,ii} \epsilon_i^2}{K_\theta} + o_\Prob(1),
\end{align*}
and 
\begin{align*}
    \frac{\sum_{i \in [n]} P_{\theta,ii} \epsilon_i^2}{K_\theta} = \frac{\sum_{i \in [n]} P_{\theta,ii} \sigma_i^2}{K_\theta} + o_\Prob(1).
\end{align*}
In particular, we have
\begin{align*}
    \frac{\sum_{i \in [n]} P_{\theta,ii} \widehat{\sigma}_i^2}{K_\theta}  = \frac{\sum_{i \in [n]} P_{\theta,ii} \sigma_i^2}{K_\theta} + o_\Prob(1).
\end{align*}
\end{lem}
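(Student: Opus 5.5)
The plan is to split the claim into a deterministic perturbation step and a weak law of large numbers for a weighted sum with nonnegative weights. Throughout, ${\bf X}$ is treated as fixed (or we condition on it), so $P_{\theta}$ is nonrandom. Write $\widehat\Delta := \widehat\beta - \beta^0$ and $d_i := X_i\widehat\Delta$, where $X_i$ is the $i$-th row of ${\bf X}$. Then $\widehat\sigma_i^2 = (\epsilon_i - d_i)^2 = \epsilon_i^2 - 2\epsilon_i d_i + d_i^2$, so that
\begin{align*}
\frac{1}{K_\theta}\sum_{i\in[n]} P_{\theta,ii}(\widehat\sigma_i^2-\epsilon_i^2) = -\frac{2}{K_\theta}\sum_{i\in[n]} P_{\theta,ii}\epsilon_i d_i + \frac{1}{K_\theta}\sum_{i\in[n]} P_{\theta,ii} d_i^2 .
\end{align*}
Condition 2 gives $0\le P_{\theta,ii}\le p_n \lesssim K_\theta/n$, since $P_\theta$ is positive semi-definite. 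Hence the quadratic term is at most $(p_n/K_\theta)\lVert {\bf X}\widehat\Delta\rVert_2^2 \lesssim \lVert {\bf X}\widehat\Delta/\sqrt n\rVert_2^2 = o_\Prob(1)$ by condition 3.

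For the cross term, I would avoid any independence between $\widehat\beta$ and $\epsilon$ and use Cauchy--Schwarz instead:
\begin{align*}
\frac{1}{K_\theta}\Bigl|\sum_{i\in[n]} P_{\theta,ii}\epsilon_i d_i\Bigr| \le \frac{p_n}{K_\theta}\Bigl(\sum_{i\in[n]}\epsilon_i^2\Bigr)^{1/2}\lVert {\bf X}\widehat\Delta\rVert_2 \lesssim \Bigl(\frac{1}{n}\sum_{i\in[n]}\epsilon_i^2\Bigr)^{1/2}\bigl\lVert {\bf X}\widehat\Delta/\sqrt n\bigr\rVert_2 .
\end{align*}
Markov's inequality and $\mathbb{E}\epsilon_i^2\le \overline{c}^{1/2}$ give $n^{-1}\sum_i\epsilon_i^2=O_\Prob(1)$. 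Combined with condition 3, the cross term is $o_\Prob(1)$, which proves the first display.

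For the second display, the summands $P_{\theta,ii}(\epsilon_i^2-\sigma_i^2)/K_\theta$ are independent and mean zero by Assumption \ref{ass:mean_zero}.1. A Chebyshev argument then suffices:
\begin{align*}
\operatorname{Var}\Bigl(\frac{1}{K_\theta}\sum_{i\in[n]} P_{\theta,ii}\epsilon_i^2\Bigr) \le \frac{\overline{c}}{K_\theta^2}\sum_{i\in[n]} P_{\theta,ii}^2 \le \frac{\overline{c}\,p_n}{K_\theta^2}\sum_{i\in[n]} P_{\theta,ii} = \overline{c}\,\frac{p_n}{K_\theta} = O(n^{-1}) \to 0.
\end{align*}
The third display follows by the triangle inequality. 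For the random design, one runs the same argument conditionally on ${\bf X}$ and uses $p_n=O_\Prob(K_\theta/n)$, which works because convergence in conditional probability implies unconditional convergence by dominated convergence.

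The main (mild) obstacle is the cross term. Since $\widehat\beta$ depends on $\epsilon$, one cannot treat $d_i$ as fixed weights. The Cauchy--Schwarz bound sidesteps this at no cost, because the weight normalization $p_n/K_\theta\lesssim 1/n$ exactly converts sums into averages. The only other point to watch is that $K_\theta\ge 1/2$ by Lemma \ref{lem:Ktheta_and_rn}, so dividing by $K_\theta$ is harmless.
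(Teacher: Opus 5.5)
Your proposal is correct and follows the paper's proof essentially step by step: the same expansion $\widehat\sigma_i^2=\epsilon_i^2-2\epsilon_i d_i+d_i^2$, the same $p_n/K_\theta\lesssim 1/n$ bound on the quadratic term, Cauchy--Schwarz plus Markov for the cross term, and a second-moment (Chebyshev) bound of order $p_n/K_\theta$ for the centered weighted sum. The one cosmetic difference is in the cross term: you pull out $p_n$ before applying Cauchy--Schwarz and then control $n^{-1}\sum_i\epsilon_i^2$, whereas the paper keeps the weights inside and controls $\sum_i P_{\theta,ii}^2\epsilon_i^2/K_\theta^2$; both give the same $O_\Prob\bigl(\lVert {\bf X}\widehat\Delta\rVert_2/\sqrt n\bigr)$ rate.
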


\begin{rem} \label{rem:variance-estimator-consistent}
    Condition {\it 3} of Lemma \ref{lem:variance-estimator-consistent} holds under quite general settings of random design ${\bf X}$. In particular, it holds under marginal sub-Weibull design with parameter $(\nu,K_{n,p})$ (see Definition \ref{def:sub_weibull}). Indeed by Lemmas \ref{lem:bound_of_Xeps} and \ref{lem:main_LASSO}
    \begin{align*}
    \lVert {\bf X} (\widehat{\beta} - \beta^0) \rVert_2^2 = O_\mathbb{P} \left( K_{n,p}^2 s_0 \log (np)  \right).
\end{align*}
If we assume $K_{n,p}^2 s_0 \log (np) = o(n)$, then
\begin{align*}
     \lVert {\bf X} (\widehat{\beta} - \beta^0) \rVert_2^2 = o_\mathbb{P}(n).
\end{align*}
\end{rem}

\noindent \textbf{Proof:} By noting that $Y_i = \widetilde{X}_i^\top \beta^0 + \epsilon_i$ and recalling that $\widehat{\Delta}= \widehat{\beta} - \beta^0$, we claim that
\begin{align*}
    \frac{\sum_{i \in [n]} P_{\theta,ii} \widehat{\sigma}_i^2}{K_\theta} &= \frac{\sum_{i \in [n]} P_{\theta,ii} \epsilon_i^2}{K_\theta} - 2 \frac{\sum_{i \in [n]} P_{\theta,ii} \epsilon_i d_i }{K_\theta} + \frac{\sum_{i \in [n]} P_{\theta,ii} d_i^2}{K_\theta} \notag \\
    &= \frac{\sum_{i \in [n]} P_{\theta,ii} \epsilon_i^2}{K_\theta} + I + II \\
    &= \frac{\sum_{i \in [n]} P_{\theta,ii} \epsilon_i^2}{K_\theta} + o_\mathbb{P} (1),
\end{align*}
where $d_i = \widetilde{X}_i^\top \widehat{\Delta}$. For $II$, observe 
\begin{align*}
    II  \leq \frac{p_n}{K_\theta} \sum_{i \in [n]} d_i^2 = \frac{n \, p_n}{K_\theta} \frac{\lVert {\bf X} \widehat{\Delta} \rVert_2^2 }{n} = O \left( \frac{\lVert {\bf X} \widehat{\Delta} \rVert_2^2 }{n}  \right),
\end{align*}
since $n p_n / K_\theta = O (1)$ by assumption. For $I$, by Cauchy-Schwarz inequality 
\begin{align*}
    |I| \lesssim  \sqrt{\frac{\sum_{i \in [n]} P_{\theta,ii}^2 \epsilon_i^2  }{K_\theta^2}}  \times  \sqrt{\sum_{i \in [n]} d_i^2} = O_\mathbb{P} \left( \frac{\lVert {\bf X} \widehat{\Delta} \rVert_2}{\sqrt{n}} \right),
\end{align*}
where the inequality follows from Markov inequality,
\begin{align*}
    &\mathbb{E}_{\epsilon} \left[\frac{\sum_{i \in [n]} P_{\theta,ii}^2 \epsilon_i^2 }{K_\theta^2}\right] \leq \frac{p_n}{K_\theta^2} \sum_{i \in [n]} P_{\theta,ii} \sigma_i^2 \lesssim \frac{p_n}{K_\theta} = O\left( \frac{1}{n}  \right).
\end{align*}
Therefore,
\begin{align*}
    \frac{\sum_{i \in [n]} P_{\theta,ii} \widehat{\sigma}_i^2}{K_\theta} &= \frac{\sum_{i \in [n]} P_{\theta,ii} \epsilon_i^2}{K_\theta} +  O_\mathbb{P} \left(\frac{\lVert {\bf X} \widehat{\Delta} \rVert_2^2 }{n} + \frac{\lVert {\bf X} \widehat{\Delta} \rVert_2}{\sqrt{n}} \right) \\
    &= \frac{\sum_{i \in [n]} P_{\theta,ii} \epsilon_i^2}{K_\theta} + o_\Prob(1).
\end{align*}
We also have
\begin{align*}
    \mathbb{E}_{\eps} \left( \frac{\sum_{i \in [n]} P_{\theta,ii} (\epsilon_i^2 - \sigma_i^2)}{K_\theta} \right)^2  \lesssim \frac{\sum_{i \in [n]} P_{\theta,ii}^2}{K_\theta^2} = O\left(\frac{p_n}{K_\theta}\right) = O\left(\frac{1}{n}\right) = o(1).
\end{align*}
This completes the proof.
\qed

\begin{lem}
    Under the conditions of Lemma \ref{lem:variance-estimator-consistent}, and assume that $\lVert {\bf X} (\widehat{\beta} - \beta^0 ) \rVert_{\infty} \times \lVert {\bf X} (\widehat{\beta} - \beta^0 ) \rVert_2 = o_\Prob(\sqrt{n})$ and $r=o(n)$. Then, 
    \begin{align*}
         \widehat{\Phi}-\Phi  = o_\Prob(1).
    \end{align*}
    If we further assume that $\Phi \geq \underline{c}>0$, then
    \begin{align*}
        \frac{\widehat\Phi}{\Phi} = 1 + o_\Prob(1).
    \end{align*}
    The same results hold for the adjusted variance estimator
    \begin{align*}
        \widetilde \Phi = \widehat \Phi + \frac{1}{K_\theta} \sum_{i \in [n]} P_{\theta,ii}^2 \widehat \sigma_i^4.
    \end{align*}
    \label{lem:consistency:phi-hat}
\end{lem}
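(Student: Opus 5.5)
We decompose $\widehat\Phi-\Phi$ into an estimation-error part and a sampling part. Write $\widehat\sigma_i^2=\epsilon_i^2-2\epsilon_i d_i+d_i^2$ with $d_i=\widetilde X_i^\top\widehat\Delta$, as in the proof of Lemma \ref{lem:variance-estimator-consistent}. Then
\begin{align*}
\widehat\Phi-\Phi=\underbrace{\frac{2}{K_\theta}\sum_{i}\sum_{j\neq i}P_{\theta,ij}^2(\widehat\sigma_i^2\widehat\sigma_j^2-\epsilon_i^2\epsilon_j^2)}_{A}+\underbrace{\frac{2}{K_\theta}\sum_{i}\sum_{j\neq i}P_{\theta,ij}^2(\epsilon_i^2\epsilon_j^2-\sigma_i^2\sigma_j^2)}_{B}.
\end{align*}

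\textbf{The sampling part $B$.} We will show $B=o_\Prob(1)$ through a second-moment calculation conditional on ${\bf X}$. First write $\epsilon_i^2\epsilon_j^2-\sigma_i^2\sigma_j^2=u_iu_j+u_i\sigma_j^2+\sigma_i^2u_j$, where $u_i=\epsilon_i^2-\sigma_i^2$. The $u_i$ are independent, mean zero, and have variance at most $\overline c$ by the fourth-moment bound.

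For the linear piece $\frac{4}{K_\theta}\sum_i u_i\sum_{j\neq i}P_{\theta,ij}^2\sigma_j^2$, the variance is bounded by $K_\theta^{-2}\sum_i\big(\sum_jP_{\theta,ij}^2\big)^2\lesssim K_\theta^{-2}\sum_iP_{\theta,ii}^2\le p_n/K_\theta=O(1/n)$. Here we use the projection property $\sum_jP_{\theta,ij}^2\le P_{\theta,ii}$, which holds because $P_\theta$ has eigenvalues in $[0,1)$.

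For the degenerate U-statistic piece, the variance is bounded by $K_\theta^{-2}\sum_{i,j}P_{\theta,ij}^4\le K_\theta^{-2}p_n\sum_iP_{\theta,ii}\lesssim p_n/K_\theta=O(1/n)$. Both pieces therefore vanish.

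\textbf{The estimation-error part $A$.} Expand $\widehat\sigma_i^2\widehat\sigma_j^2-\epsilon_i^2\epsilon_j^2=\delta_i\epsilon_j^2+\epsilon_i^2\delta_j+\delta_i\delta_j$, where $\delta_i:=\widehat\sigma_i^2-\epsilon_i^2=-2\epsilon_id_i+d_i^2$. By symmetry it suffices to bound two terms:
\begin{align*}
A_1=\frac{1}{K_\theta}\sum_i|\delta_i|\sum_jP_{\theta,ij}^2\epsilon_j^2,\qquad A_2=\frac{1}{K_\theta}\sum_{i,j}P_{\theta,ij}^2|\delta_i||\delta_j|.
\end{align*}

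For $A_2$, apply $|\delta_i||\delta_j|\le(\delta_i^2+\delta_j^2)/2$ and then the projection property. This gives $A_2\le K_\theta^{-1}\sum_iP_{\theta,ii}\delta_i^2\le (p_n/K_\theta)\sum_i\delta_i^2$. Next, $\sum_i\delta_i^2\lesssim\sum_i\epsilon_i^2d_i^2+\sum_id_i^4\le \|{\bf X}\widehat\Delta\|_\infty^2\big(\sum_i\epsilon_i^2+\|{\bf X}\widehat\Delta\|_2^2\big)$. Since $\sum_i\epsilon_i^2=O_\Prob(n)$ and $p_n/K_\theta=O(1/n)$, this yields $A_2=O_\Prob(\|{\bf X}\widehat\Delta\|_\infty^2)+O_\Prob(\|{\bf X}\widehat\Delta\|_\infty^2\|{\bf X}\widehat\Delta\|_2^2/n)$.

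This is the main obstacle. The assumed product rate $\|{\bf X}\widehat\Delta\|_\infty\|{\bf X}\widehat\Delta\|_2=o_\Prob(\sqrt n)$ controls the second piece, but the first piece needs $\|{\bf X}\widehat\Delta\|_\infty=o_\Prob(1)$, which is not assumed directly. To avoid needing it, I would first try a sharper route that never bounds $\sum_i\epsilon_i^2d_i^2$ crudely. Instead, use $\sum_jP_{\theta,ij}^2\epsilon_j^2$ weights and Cauchy--Schwarz so that each $d_i$ appears paired with $d_i$ only through $\sum_iP_{\theta,ii}d_i^2\le p_n\|{\bf X}\widehat\Delta\|_2^2$, or through $\max_i|d_i|\cdot\|{\bf X}\widehat\Delta\|_2$.

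Concretely, for $A_1$ the $\epsilon_id_i$ part becomes
\begin{align*}
\frac{1}{K_\theta}\sum_i|\epsilon_i||d_i|S_i\le \frac{1}{K_\theta}\Big(\sum_iP_{\theta,ii}d_i^2\Big)^{1/2}\Big(\sum_i\epsilon_i^2S_i^2/P_{\theta,ii}\Big)^{1/2},
\end{align*}
where $S_i:=\sum_jP_{\theta,ij}^2\epsilon_j^2$. Its conditional expectation is handled as in part $B$, using $\mathbb E S_i\lesssim P_{\theta,ii}$ and $\mathbb E S_i^2\lesssim P_{\theta,ii}^2$ by the fourth moments; this gives order $\|{\bf X}\widehat\Delta\|_2/\sqrt n=o_\Prob(1)$. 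The $d_i^2$ part of $A_1$ is bounded by $K_\theta^{-1}\sum_id_i^2S_i\le \|{\bf X}\widehat\Delta\|_2^2\max_iS_i/K_\theta$. A maximal bound $\max_iS_i=O_\Prob(p_n\,n^{1/2})$, from fourth moments and a union bound, makes this $O_\Prob(\|{\bf X}\widehat\Delta\|_2^2/\sqrt n)$. That is $o_\Prob(1)$ provided $\|{\bf X}\widehat\Delta\|_2^2=o_\Prob(\sqrt n)$; this follows from the stated product condition together with Condition 3 of Theorem \ref{thm:size-control_fixed_design} whenever $r\ge1$, and in the lemma's setting I would obtain it from $\|{\bf X}\widehat\Delta\|_2\le\sqrt n\|{\bf X}\widehat\Delta\|_\infty$. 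For $A_2$ I would similarly split $\delta_i\delta_j$ into its $\epsilon_i\epsilon_jd_id_j$, $\epsilon_id_id_j^2$ and $d_i^2d_j^2$ components and bound each by the same pairing.

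\textbf{Conclusion.} Once $\widehat\Phi-\Phi=o_\Prob(1)$ is established, the ratio statement follows immediately from $\Phi\ge\underline c$. For $\widetilde\Phi$, it remains to show $K_\theta^{-1}\sum_iP_{\theta,ii}^2\widehat\sigma_i^4=o_\Prob(1)$. This holds because $K_\theta^{-1}\sum_iP_{\theta,ii}^2\widehat\sigma_i^4\le (p_n/K_\theta)\sum_iP_{\theta,ii}\widehat\sigma_i^4\lesssim n^{-1}\sum_i P_{\theta,ii}(\epsilon_i^4+d_i^4)$. The $\epsilon_i^4$ part is $O_\Prob(K_\theta/n)=o_\Prob(1)$ by $r=o(n)$, and the $d_i^4$ part is bounded by $p_n\|{\bf X}\widehat\Delta\|_\infty^2\|{\bf X}\widehat\Delta\|_2^2/n=o_\Prob(p_n)=o_\Prob(1)$.
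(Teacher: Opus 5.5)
Your part $B$, the $\epsilon_i d_i$ piece of $A_1$, and the $\widetilde\Phi$ step are correct. The $d_i^2$ piece of $A_1$, however, fails as written.

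Bounding $K_\theta^{-1}\sum_i d_i^2 S_i$ by $\lVert {\bf X}\widehat\Delta\rVert_2^2\max_i S_i/K_\theta=O_\Prob(\lVert {\bf X}\widehat\Delta\rVert_2^2/\sqrt n)$ requires $\lVert {\bf X}\widehat\Delta\rVert_2^2=o_\Prob(\sqrt n)$. The lemma does not assume this, and it does not follow from the hypotheses. Neither justification you offer works:
\begin{itemize}
\item Condition 3 of Theorem~\ref{thm:size-control_fixed_design} only gives $\lVert {\bf X}\widehat\Delta\rVert_2=o_\Prob(\sqrt{n/r})$.
\item $\lVert {\bf X}\widehat\Delta\rVert_2\le\sqrt n\lVert {\bf X}\widehat\Delta\rVert_\infty$, combined with the product condition, only gives $\lVert {\bf X}\widehat\Delta\rVert_2^2\le\sqrt n\,\lVert {\bf X}\widehat\Delta\rVert_\infty\lVert {\bf X}\widehat\Delta\rVert_2=o_\Prob(n)$.
\end{itemize}
Concretely, a residual vector with $|d_i|\equiv n^{-1/4}\log n$ meets every hypothesis of the lemma, yet $\sum_i d_i^2=\sqrt n\log^2 n$. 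The same problem reaches $A_2$: after $|ab|\le(a^2+b^2)/2$, its $\epsilon_i\epsilon_j d_i d_j$ component reduces to $K_\theta^{-1}\sum_j d_j^2 S_j$, so your sketch for $A_2$ inherits the gap.

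The fix is to keep both copies of $d$ inside one Cauchy--Schwarz factor rather than taking $\max_i S_i$:
\begin{itemize}
\item $K_\theta^{-1}\sum_i d_i^2S_i\le K_\theta^{-1}(\sum_i d_i^4)^{1/2}(\sum_i S_i^2)^{1/2}$.
\item $(\sum_id_i^4)^{1/2}\le\lVert {\bf X}\widehat\Delta\rVert_\infty\lVert {\bf X}\widehat\Delta\rVert_2$ and $\mathbb E_\epsilon\sum_iS_i^2\lesssim\sum_iP_{\theta,ii}^2\le p_nK_\theta$.
\item Together these give $O_\Prob(\lVert {\bf X}\widehat\Delta\rVert_\infty\lVert {\bf X}\widehat\Delta\rVert_2/\sqrt n)=o_\Prob(1)$.
\end{itemize}
This is essentially how the paper handles its term $I.1$, after first centering $\epsilon_j^2$ at $\sigma_j^2$. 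Keep $j\neq i$ in $S_i$ throughout; otherwise $\epsilon_i^2S_i^2$ contains $\epsilon_i^6$, which fourth moments do not control.

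In fact, the obstacle that made you abandon your first route is not real. Since $\sum_i\epsilon_i^2d_i^2\le(\sum_i\epsilon_i^4)^{1/2}(\sum_id_i^4)^{1/2}=O_\Prob(\sqrt n)\,o_\Prob(\sqrt n)$, you get $\sum_i\delta_i^2=o_\Prob(n)$. Then $A_2\le(p_n/K_\theta)\sum_i\delta_i^2=o_\Prob(1)$, and $A_1\le K_\theta^{-1}(\sum_i\delta_i^2)^{1/2}(\sum_iS_i^2)^{1/2}=o_\Prob(\sqrt{np_n/K_\theta})=o_\Prob(1)$. This closes part $A$ in two lines, more compactly than the paper's term-by-term expansion of $V_iV_j$.
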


\begin{rem} \label{rem:consistency:phi-hat}
    Again the conditions here hold under general marginal sub-Weibull design with parameter $(\nu,K_{n,p})$ (see Definition \ref{def:sub_weibull}), as in Remark \ref{rem:variance-estimator-consistent}. Indeed, it follows from Lemmas \ref{lem:bound_of_X}, \ref{lem:main_LASSO} and \ref{lem:random_design_X} that
    \begin{align*}
    \lVert {\bf X} (\widehat{\beta} - \beta^0 ) \rVert_{\infty} \times \lVert {\bf X} (\widehat{\beta} - \beta^0 ) \rVert_2 \lesssim K_{n,p}^3 \sqrt{s_0^3 \log(np)^{2+2/\nu}} / \sqrt{n} = o(\sqrt{n}),
    \end{align*}
    whenever $K_{n,p}^3 s_0^{3/2} \log(np)^{1+1/\nu} = o(n)$.
\end{rem}

\noindent{\bf Proof of Lemma \ref{lem:consistency:phi-hat}:}
We begin by first establishing
\begin{align}
    \frac{2}{K_\theta} \sum_{i \in [n]} \sum_{j \neq i} P_{\theta,ij}^2 \epsilon_i^2 \epsilon_j^2 - \Phi  = o_\Prob(1).
    \label{eq:consistency:phi-hat:1}
\end{align}
We have
\begin{align*}
    \frac{2}{K_\theta} \sum_{i \in [n]} \sum_{j \neq i} P_{\theta,ij}^2 \epsilon_i^2 \epsilon_j^2 - \Phi  &= \frac{2}{K_\theta} \sum_{i \in [n]} \sum_{j \neq i} P_{\theta,ij}^2 \epsilon_i^2 (\epsilon_j^2 - \sigma_j^2) + \frac{2}{K_\theta} \sum_{i \in [n]} \sum_{j \neq i} P_{\theta,ij}^2 (\epsilon_i^2 - \sigma_i^2) \sigma_j^2 = I + II.
\end{align*}
We claim that $II = o_\Prob(1)$ by Markov inequality and observing
\begin{align*}
    \mathbb{E}_{\eps} \left[ II^2 \right] &\lesssim \frac{1}{K_\theta^2} \sum_{i \in [n]} \mathbb{E}[(\epsilon_i^2 - \sigma_i^2)^2] \left(\sum_{j \neq i} P_{\theta,ij}^2 \sigma_j^2 \right)^2 \\
    &\lesssim \frac{1}{K_\theta^2} \sum_{i \in [n]} \left(  \sum_{j \neq i} P_{\theta,ij}^2 \right)^2 \\
    &\leq \frac{1}{K_\theta^2} \sum_{i \in [n]} P_{\theta,ii}^2 \\
    &\leq \frac{p_n}{K_\theta} = O(\frac{1}{n}) = o(1),
\end{align*}
where the second-last inequality follows from $0\leq P_{\theta,ii} \leq 1$ by the proof of Lemma \ref{lem:Ktheta_and_rn} and 
\begin{align}
    \sum_{j \neq i} P_{\theta,ij}^2 \leq \sum_{j \in [n]} P_{\theta,ij}^2 \leq P_{\theta,ii}
    \label{off_diagonal_ineq}
\end{align}
by Lemma 1(ii) in the Appendix of \cite{dovi2024ridge}. For $I$, first observe that
\begin{align*}
    \frac{2}{K_\theta} \sum_{i \in [n]} \sum_{j \neq i} P_{\theta,ij}^2 (\epsilon_i^2 - \sigma_i^2)(\epsilon_j^2 - \sigma_j^2) = o_\Prob(1),
\end{align*}
which follows from Markov inequality and
\begin{align*}
    \mathbb{E}_{\eps} \left[\frac{2}{K_\theta}  \sum_{i \in [n]} \sum_{j \neq i}  P_{\theta,ij}^2 (\epsilon_i^2 - \sigma_i^2)(\epsilon_j^2 - \sigma_j^2) \right]^2 &\lesssim \frac{1}{K_\theta^2} \sum_{i \in [n]} \sum_{j \neq i} P_{\theta,ij}^4 \mathbb{E}[(\epsilon_i^2 - \sigma_i^2)^2] \mathbb{E}[(\epsilon_j^2 - \sigma_j^2)^2] \\
    &\lesssim \frac{1}{K_\theta^2} \sum_{i \in [n]} \sum_{j \neq i} P_{\theta,ij}^4 \\
    &\leq \frac{p_n^2}{K_\theta^2}  \sum_{i \in [n]} \sum_{j \neq i} P_{\theta,ij}^2 \\
    &\leq \frac{p_n^2}{K_\theta} = O(\frac{1}{n}) = o(1).
\end{align*}
Therefore,
\begin{align*}
    I = \frac{2}{K_\theta} \sum_{i \in [n]} \sum_{j \neq i} P_{\theta,ij}^2 \sigma_i^2 (\epsilon_j^2 - \sigma_j^2) + o_\Prob(1) = II + o_\Prob(1) = o_\Prob(1).
\end{align*}
We have thus established \eqref{eq:consistency:phi-hat:1}.

\vspace{5mm}
Next, we will show that
\begin{align}
     \frac{2}{K_\theta} \sum_{i \in [n]} \sum_{j \neq i} P_{\theta,ij}^2 \widehat{\sigma}_i^2 \widehat{\sigma}_j^2 - \frac{2}{K_\theta} \sum_{i \in [n]} \sum_{j \neq i} P_{\theta,ij}^2 \epsilon_i^2 \epsilon_j^2 = o_\Prob(1),
     \label{eq:consistency:phi-hat:2}
\end{align}
so that combining with \eqref{eq:consistency:phi-hat:1} completes the proof of Lemma \ref{lem:consistency:phi-hat}. 
Recall that 
\begin{align*}
    \widehat{\sigma}_i^2 = \epsilon_i^2 + d_i^2 - 2 \epsilon_i d_i \equiv \epsilon_i^2 + V_i.
\end{align*}
Therefore, proving \eqref{eq:consistency:phi-hat:2} is equivalent to showing 
\begin{align*}
     &\frac{2}{K_\theta} \sum_{i \in [n]} \sum_{j \neq i} P_{\theta,ij}^2 \epsilon_i^2 V_j + \frac{2}{K_\theta} \sum_{i \in [n]} \sum_{j \neq i} P_{\theta,ij}^2 \epsilon_j^2 V_i + \frac{2}{K_\theta} \sum_{i \in [n]} \sum_{j \neq i} P_{\theta,ij}^2 V_i V_j \notag \\
     &\equiv I + II + III = o_\mathbb{P}(1).
\end{align*}
For $I$, we claim that
\begin{align*}
    I = \frac{2}{K_\theta} \sum_{i \in [n]} \sum_{j \neq i} P_{\theta,ij}^2 \sigma_i^2 V_j + o_\Prob(1) = o_\Prob(1).
\end{align*}
To begin, observe
\begin{align*}
    I - \frac{2}{K_\theta} \sum_{i \in [n]} \sum_{j \neq i} P_{\theta,ij}^2 \sigma_i^2 V_j &= \frac{2}{K_\theta} \sum_{i \in [n]} \sum_{j \neq i} P_{\theta,ij}^2 (\epsilon_i^2 -\sigma_i^2) d_j^2 - \frac{4}{K_\theta} \sum_{i \in [n]} \sum_{j \neq i} P_{\theta,ij}^2 (\epsilon_i^2 -\sigma_i^2) \epsilon_j d_j \\
    & = \frac{2}{K_\theta} \sum_{i \in [n]} \sum_{j \neq i} P_{\theta,ij}^2 (\epsilon_j^2 -\sigma_j^2) d_i^2 - \frac{4}{K_\theta} \sum_{i \in [n]} \sum_{j \neq i} P_{\theta,ij}^2 (\epsilon_j^2 -\sigma_j^2) \epsilon_i d_i \\
    & = I.1 - I.2.
\end{align*}
Then $I.1 = o_\Prob(1)$ since 
\begin{align*}
    |I.1| &\lesssim \sqrt{\frac{1}{K_\theta} \sum_{i \in [n]} \sum_{j \neq i} P_{\theta,ij}^2 \left(\epsilon_j^2 -\sigma_j^2\right)^2} \sqrt{\frac{1}{K_\theta}\sum_{i \in [n]} \sum_{j \neq i} P_{\theta,ij}^2 d_i^4} \\
    &= O_\Prob \left(\sqrt{\frac{p_n}{K_\theta} \sum_{i \in [n]} d_i^4} \right) \\
    &= O_\Prob \left(\sqrt{\frac{\lVert {\bf X} \widehat{\Delta} \rVert_{\infty}^2 \times \lVert {\bf X} \widehat{\Delta} \rVert_2^2}{n}} \right) = o_\Prob(1)
\end{align*}
by Cauchy-Schwarz inequality, where
\begin{align*}
    \mathbb{E}_\epsilon \left[ \frac{1}{K_\theta} \sum_{i \in [n]} \sum_{j \neq i} P_{\theta,ij}^2 \left(\epsilon_j^2 -\sigma_j^2\right)^2 \right] &=  \frac{1}{K_\theta} \sum_{i \in [n]} \sum_{j \neq i} P_{\theta,ij}^2 \mathbb{E}_\epsilon\left(\epsilon_j^2 -\sigma_j^2\right)^2 \\
    &\lesssim \frac{1}{K_\theta} \sum_{i \in [n]} \sum_{j \neq i} P_{\theta,ij}^2 \\
    &= O(1),
\end{align*}
and 
\begin{align*}
    \frac{1}{K_\theta}\sum_{i \in [n]} \sum_{j \neq i} P_{\theta,ij}^2 d_i^4 \leq \frac{1}{K_\theta}\sum_{i \in [n]}P_{\theta,ii} d_i^4 \leq \frac{p_n}{K_\theta} \sum_{i \in [n]} d_i^4.
\end{align*}
For $I.2$, by Cauchy-Schwarz inequality we have
\begin{align*}
    |I.2| &\lesssim \frac{1}{K_\theta} \sqrt{\sum_{i \in [n]} \epsilon_i^2 \left(  \sum_{j \neq i} P_{\theta,ij}^2 (\epsilon_j^2 - \sigma_j^2) \right)^2 } \sqrt{\sum_{i \in [n]} d_i^2} \\
    &= o_\mathbb{P}\left(\frac{\lVert {\bf X} \widehat{\Delta} \rVert_2}{\sqrt{n}}\right) = o_\mathbb{P}(1),
\end{align*}
where the first equality follows from Markov inequality and (\ref{off_diagonal_ineq})
\begin{align*}
    &\mathbb{E}_{\eps} \left[ \frac{1}{K_\theta^2} \sum_{i \in [n]}  \epsilon_i^2 \left( \sum_{j \neq i} P_{\theta,ij}^2 (\epsilon_j^2 -\sigma_j^2) \right)^2  \right] \\
    &= \frac{1}{K_\theta^2} \sum_{i \in [n]} \sum_{j \neq i} P_{\theta,ij}^4\mathbb{E}[(\epsilon_j^2 - \sigma_j^2)^2]  \sigma_i^2  \\
    & \lesssim \frac{p_n^2}{K_\theta^2} \sum_{i \in [n]} \sum_{j \neq i} P_{\theta,ij}^2 \\
    &\leq \frac{p_n^2}{K_\theta} = O(\frac{p_n}{n}) =  o(\frac{1}{n}),
\end{align*}
since $p_n = O(K_\theta/n) = O(r/n) = o(1)$. Using the same argument, we also have
\begin{align*}
    \frac{2}{K_\theta} \sum_{i \in [n]} \sum_{j \neq i} P_{\theta,ij}^2 \sigma_i^2 V_j = o_\Prob(1),
\end{align*}
whence $I = o_\Prob(1)$. Since $I = II$, we have $II = o_\Prob(1)$. Finally, it remains to show that $III = o_\Prob(1)$. To this end, express
\begin{align*}
    III &= \frac{2}{K_\theta} \sum_{i \in [n]} \sum_{j \neq i} P_{\theta,ij}^2 d_i^2 d_j^2 -  \frac{8}{K_\theta} \sum_{i \in [n]} \sum_{j \neq i} P_{\theta,ij}^2 d_i^2 \epsilon_j d_j + \frac{8}{K_\theta} \sum_{i \in [n]} \sum_{j \neq i} P_{\theta,ij}^2 \epsilon_i d_i \epsilon_j d_j \\
    &\equiv III.1 - III.2 + III.3.
\end{align*}
For $III.1$, by (\ref{off_diagonal_ineq}) we have
\begin{align*}
    III.1 \lesssim \frac{p_n}{K_\theta} \max_{i \in [n]}d_i^2 \sum_{i \in [n]} d_i^2  =  O \left(\frac{\lVert {\bf X} \widehat{\Delta} \rVert_{\infty}^2 \times \lVert {\bf X} \widehat{\Delta} \rVert_2^2}{n} \right) = o_\Prob(1).
\end{align*}
For $III.2$, by Cauchy-Schwarz inequality we have
\begin{align*}
    |III.2| &\lesssim \sqrt{\frac{1}{K_\theta} \sum_{i \in [n]} \sum_{j \neq i} P_{\theta,ij}^2 d_i^2 d_j^2} \times \sqrt{\frac{1}{K_\theta} \sum_{i \in [n]} \sum_{j \neq i} P_{\theta,ij}^2 d_i^2 \epsilon_j^2} = o_\Prob(1),
\end{align*}
where the equality follows from the proof of $III.1$ and 
\begin{align*}
    \frac{1}{K_\theta} \sum_{i \in [n]} \sum_{j \neq i} P_{\theta,ij}^2 d_i^2 \epsilon_j^2 &= \frac{1}{K_\theta} \sum_{i \in [n]} \sum_{j \neq i} P_{\theta,ij}^2 d_i^2 \sigma_j^2 + \frac{1}{K_\theta} \sum_{i \in [n]} \sum_{j \neq i} P_{\theta,ij}^2 d_i^2 (\epsilon_j^2 - \sigma_j^2) \\
    &= O \left(\frac{ \lVert {\bf X} \widehat{\Delta} \rVert_2^2}{n} \right) + o_\Prob(1) = o_\Prob(1),
\end{align*}
similar to the proof of $I.1$. Finally, by the Cauchy-Schwarz inequality,
\begin{align*}
    |III.3| &\lesssim \sqrt{\frac{1}{K_\theta} \sum_{i \in [n]} \sum_{j \neq i} P_{\theta,ij}^2 d_j^2 \epsilon_i^2} \times \sqrt{\frac{1}{K_\theta} \sum_{i \in [n]} \sum_{j \neq i} P_{\theta,ij}^2 d_i^2 \epsilon_j^2} = o_\Prob(1).
\end{align*}
This establishes the first part of the proof in Lemma \ref{lem:consistency:phi-hat}. The second part follows easily from
\begin{align*}
    \frac{\widehat\Phi}{\Phi} = \frac{\Phi + o_\Prob(1)}{\Phi} = 1 + o_\Prob(\Phi^{-1}) = 1 + o_\Prob(1).
\end{align*}

Finally, note that $\widehat \sigma_i^4 = (\eps_i - d_i)^4 \leq 8 (\eps_i^4 + d_i^4)$, so that
\begin{align*}
    \frac{1}{K_\theta} \sum_{i \in [n]} P_{\theta,ii}^2 \widehat \sigma_i^4  \lesssim \frac{1}{K_\theta} \sum_{i \in [n]} P_{\theta,ii}^2 \eps_i^4 + \frac{1}{K_\theta} \sum_{i \in [n]} P_{\theta,ii}^2 d_i^4.
\end{align*}
We have
\begin{align*}
    \mathbb E \frac{1}{K_\theta} \sum_{i \in [n]} P_{\theta,ii}^2 \eps_i^4 \lesssim p_n = o(1),
\end{align*}
and
\begin{align*}
    \frac{1}{K_\theta} \sum_{i \in [n]} P_{\theta,ii}^2 d_i^4 \lesssim \frac{r \lVert {\bf X} \widehat{\Delta} \rVert_{\infty}^2 \times \lVert {\bf X} \widehat{\Delta} \rVert_2^2}{n^2} = o_\Prob(1),
\end{align*}
and thus
\begin{align*}
    \frac{1}{K_\theta} \sum_{i \in [n]} P_{\theta,ii}^2 \widehat \sigma_i^4 = o_\Prob(1),
\end{align*}
and the result for $\widetilde \Phi$ follows.

\qed

\begin{lem}
    Under the conditions of Lemma \ref{lem:variance-estimator-consistent}, and assume that $\lVert {\bf X} (\widehat{\beta} - \beta^0 ) \rVert_{\infty} \times \lVert {\bf X} (\widehat{\beta} - \beta^0 ) \rVert_2 = o_\Prob(\sqrt{n})$ and $r=o(n)$. Then we have
    \begin{align*}
        \max_{i \in [r_n]} w_{i,n}^2 = O(1/K_\theta^2), \quad \max_{i \in [r_n]} \left(\tilde w_{i,n} - w_{i,n} \right)^2 = o_\Prob(1/K_\theta).
    \end{align*}
    In particular, in the case where $K_\theta$ is bounded, we have $\left \Vert \tilde{w}_{n} - w_{n} \right\Vert_2 = o_\Prob(1)$; in the case where $K_\theta \rightarrow \infty$, we have $ \max_{i \in [r_n]} w_{i,n}^2 = o(1/K_\theta)$ and $ \max_{i \in [r_n]} \tilde {w}_{i,n}^2 = o_\Prob(1/K_\theta)$.
    \label{lem:consistency:weights}
\end{lem}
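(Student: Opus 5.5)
The plan is to replace $\Omega$ and $\widehat\Omega$ by $n\times n$ matrices whose spectra are easy to compare. Write $S:=\sum_{i\in[n]}P_{\theta,ii}\sigma_i^2$ and $\widehat S:=\sum_{i\in[n]}P_{\theta,ii}\widehat\sigma_i^2$. Since $AB$ and $BA$ have the same nonzero eigenvalues, the nonzero eigenvalues of $\Omega$ coincide with those of
\begin{align*}
\frac{1}{nS}\Lambda^{1/2}{\bf X}\widehat\Theta^\top R^\top\bigl(R\widehat\Theta\widehat\Sigma\widehat\Theta^\top R^\top+\theta I_r\bigr)^{-1}R\widehat\Theta{\bf X}^\top\Lambda^{1/2}=\frac{\Lambda^{1/2}P_\theta\Lambda^{1/2}}{S},
\end{align*}
and hence with those of $B:=P_\theta^{1/2}\Lambda P_\theta^{1/2}/S$. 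Likewise, $\widetilde w_n$ consists of the top $r_n$ eigenvalues of $\widehat B:=P_\theta^{1/2}\widehat\Lambda P_\theta^{1/2}/\widehat S$. Both matrices are $n\times n$ positive semi-definite with rank at most $r_n$, so their ordered eigenvalues can be compared directly.

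For the first claim, the proof of Lemma \ref{lem:Ktheta_and_rn} gives $\lambda_{\max}(P_\theta)\le 1$. Assumption \ref{ass:mean_zero}.2--3 give $\underline c\le\sigma_i^2\le \overline c^{2/h}$, so $S\ge\underline c K_\theta$. Therefore $w_{1,n}=\lambda_{\max}(B)\le \max_i\sigma_i^2/S\lesssim 1/K_\theta$, which yields $\max_i w_{i,n}^2=O(1/K_\theta^2)$.

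For the second claim, Weyl's inequality gives $\max_{i\in[r_n]}|\widetilde w_{i,n}-w_{i,n}|\le\|\widehat B-B\|_{2}$, and I split
\begin{align*}
\|\widehat B-B\|_2\le \frac{\|P_\theta^{1/2}(\widehat\Lambda-\Lambda)P_\theta^{1/2}\|_2}{\widehat S}+\|P_\theta^{1/2}\Lambda P_\theta^{1/2}\|_2\left|\frac1{\widehat S}-\frac1S\right|.
\end{align*}
By Lemma \ref{lem:variance-estimator-consistent}, $\widehat S/K_\theta=S/K_\theta+o_\Prob(1)$ with $S/K_\theta\gtrsim1$. Hence the second term is $O(1)\cdot o_\Prob(K_\theta)/K_\theta^2=o_\Prob(1/K_\theta)$.

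For the first term it suffices to show $\|P_\theta^{1/2}D P_\theta^{1/2}\|_F^2=\sum_{i,j}P_{\theta,ij}^2D_iD_j=o_\Prob(K_\theta)$, where $D_i:=\widehat\sigma_i^2-\sigma_i^2$. This makes the first term $o_\Prob(K_\theta^{1/2})/K_\theta=o_\Prob(K_\theta^{-1/2})$, whose square is $o_\Prob(1/K_\theta)$. Write $D_i=(\epsilon_i^2-\sigma_i^2)+V_i$ with $V_i=d_i^2-2\epsilon_id_i$ and $d_i=\widetilde X_i^\top\widehat\Delta$, as in the proof of Lemma \ref{lem:consistency:phi-hat}. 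The pure-error terms are handled as follows.
\begin{itemize}
\item Diagonal: $\mathbb E\sum_iP_{\theta,ii}^2(\epsilon_i^2-\sigma_i^2)^2\lesssim p_nK_\theta=O(K_\theta^2/n)=o(K_\theta)$, using $K_\theta\le r=o(n)$.
\item Off-diagonal: the sum is mean zero with second moment $\lesssim\sum_{i\ne j}P_{\theta,ij}^4\le p_n^2K_\theta=o(K_\theta^2)$, using \eqref{off_diagonal_ineq}.
\end{itemize}
The off-diagonal cross terms and the $V_iV_j$ terms, each divided by $K_\theta$, are exactly the quantities shown to be $o_\Prob(1)$ in the proof of Lemma \ref{lem:consistency:phi-hat}. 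This uses $p_n=O(K_\theta/n)$ and $\|{\bf X}\widehat\Delta\|_\infty\|{\bf X}\widehat\Delta\|_2=o_\Prob(\sqrt n)$, so I would reuse those bounds verbatim.

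The main obstacle is the diagonal terms involving $V_i$, because $d_i$ depends on $\epsilon$ and one cannot condition. For these I would use $\sum_iP_{\theta,ii}^2d_i^4\le p_n^2\|d\|_\infty^2\|d\|_2^2=p_n^2\,o_\Prob(n)$. For the mixed piece, Cauchy--Schwarz gives
\begin{align*}
\sum_iP_{\theta,ii}^2\epsilon_i^2d_i^2\le\Bigl(\sum_iP_{\theta,ii}^2\epsilon_i^4\Bigr)^{1/2}\Bigl(\sum_iP_{\theta,ii}^2d_i^4\Bigr)^{1/2}=O_\Prob\bigl(\sqrt{p_nK_\theta}\bigr)\,p_n\,o_\Prob(\sqrt n)=o_\Prob(K_\theta^2/n)=o_\Prob(K_\theta).
\end{align*}
Here the first factor uses $h\ge4$ and Markov's inequality. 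This completes $\max_i(\widetilde w_{i,n}-w_{i,n})^2=o_\Prob(1/K_\theta)$.

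For the particular cases, Lemma \ref{lem:Ktheta_and_rn} with $\varpi$ bounded away from zero gives $r_n\le 2K_\theta/\varpi$. If $K_\theta$ is bounded, then $r_n$ is bounded and $\|\widetilde w_n-w_n\|_2^2\le r_n\max_i(\widetilde w_{i,n}-w_{i,n})^2=o_\Prob(1)$. If $K_\theta\to\infty$, then $O(1/K_\theta^2)=o(1/K_\theta)$. Combining this with $\widetilde w_{i,n}^2\le 2w_{i,n}^2+2(\widetilde w_{i,n}-w_{i,n})^2$ gives $\max_i\widetilde w_{i,n}^2=o_\Prob(1/K_\theta)$.
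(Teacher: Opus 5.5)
Your argument is correct and follows the paper's proof essentially step by step. The shared steps are:
\begin{itemize}
\item passing from $\Omega,\widehat\Omega$ to $P_\theta^{1/2}\Lambda P_\theta^{1/2}$ and $P_\theta^{1/2}\widehat\Lambda P_\theta^{1/2}$;
\item Weyl's inequality, with the same split into a numerator-error term and a denominator-error term;
\item the Frobenius reduction to $\sum_{i,j}P_{\theta,ij}^2D_iD_j=o_\Prob(K_\theta)$;
\item reusing the off-diagonal bounds from the proof of Lemma \ref{lem:consistency:phi-hat}, and a direct Cauchy--Schwarz treatment of the diagonal $V_i$ terms.
\end{itemize}

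The one place you deviate is the bounded-$K_\theta$ case. There you use $r_n\le 2K_\theta/\varpi$ with $\varpi$ bounded away from zero. That is not among this lemma's hypotheses, which are only those of Lemma \ref{lem:variance-estimator-consistent} plus the two rate conditions. Without it, $r_n$ need not be bounded when $K_\theta$ is.

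The paper avoids this assumption. Since $\tr(\Omega)=\tr(\widehat\Omega)=1$ and both matrices have rank at most $r_n$, both weight vectors are nonnegative and sum to one. Hence $\|\tilde w_n-w_n\|_1\le 2$, and
\begin{align*}
\|\tilde w_n-w_n\|_2^2\le\|\tilde w_n-w_n\|_1\,\|\tilde w_n-w_n\|_\infty\le 2\max_{i\in[r_n]}|\tilde w_{i,n}-w_{i,n}|=o_\Prob(K_\theta^{-1/2})=o_\Prob(1),
\end{align*}
where the last step uses $K_\theta\ge 1/2$ from Lemma \ref{lem:Ktheta_and_rn}.
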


\noindent \textbf{Proof of Lemma \ref{lem:consistency:weights}:} Note that $\Omega$ shares the same non-zero eigenvalues with
\begin{align*}
    \mathcal{O} := \frac{P_\theta^{1/2} \Lambda P_\theta^{1/2}}{\sum_{i \in [n]} P_{\theta,ii} \sigma_i^2},
\end{align*}
and similarly, $\widehat \Omega$ shares the same non-zero eigenvalues with 
\begin{align*}
    \widehat{\mathcal{O}} := \frac{P_\theta^{1/2} \widehat \Lambda P_\theta^{1/2}}{\sum_{i \in [n]} P_{\theta,ii} \hat \sigma_i^2}.
\end{align*}
The first conclusion follows because $\lambda_{\max}(\mathcal{O}) = O(1/K_\theta)$. By an application of the Weyl's theorem, we have
\begin{align*}
    \max_{i \in [r_n]} \left(\tilde w_{i,n} - w_{i,n} \right)^2 \lesssim  \left\Vert \widehat{\mathcal{O}} - \mathcal{O} \right\Vert_{2}^2 \lesssim  \left\Vert \widehat{\mathcal{O}} - \widetilde{\mathcal{O}} \right\Vert_{2}^2 + \left\Vert \widetilde{\mathcal{O}} - \mathcal{O} \right\Vert_{2}^2,
\end{align*}
where $$\widetilde{\mathcal{O}} := \frac{P_\theta^{1/2} \Lambda P_\theta^{1/2}}{\sum_{i \in [n]} P_{\theta,ii} \hat \sigma_i^2}, $$ and note that for the second term
\begin{align*}
    \left\Vert \widetilde{\mathcal{O}} - \mathcal{O} \right\Vert_{2}^2 &= \left(\frac{1}{\sum_{i \in [n]} P_{\theta,ii} \hat \sigma_i^2} -  \frac{1}{\sum_{i \in [n]} P_{\theta,ii} \sigma_i^2} \right)^2 \left \Vert P_\theta^{1/2} \Lambda P_\theta^{1/2} \right\Vert_{2}^2 \\
    &\lesssim O \left(\frac{1}{K_\theta^2} \right) \times \left(\frac{1}{\frac{1}{K_\theta}\sum_{i \in [n]} P_{\theta,ii} \hat \sigma_i^2} -  \frac{1}{\frac{1}{K_\theta}\sum_{i \in [n]} P_{\theta,ii} \sigma_i^2} \right)^2 \\
    &= o_\Prob\left(\frac{1}{K_\theta^2}\right)
\end{align*}
by Lemma \ref{lem:variance-estimator-consistent}, and the fact that $\lambda_{\max} (\Lambda) = O(1)$ by Assumption \ref{ass:mean_zero}.3 and $\lambda_{\max}(P_\theta) = O(1)$ from the proof of Lemma \ref{lem:Ktheta_and_rn}. For the first term, note that 
\begin{align*}
    \left\Vert \widehat{\mathcal{O}} - \widetilde{\mathcal{O}} \right\Vert_{2}^2 &= \left(\frac{\sum_{i \in [n]} P_{\theta,ii} \sigma_i^2}{\sum_{i \in [n]} P_{\theta,ii} \hat \sigma_i^2} \right)^2 \left(\frac{1}{\sum_{i \in [n]} P_{\theta,ii} \sigma_i^2} \right)^2 \left\Vert P_\theta^{1/2} \widehat \Lambda P_\theta^{1/2} - P_\theta^{1/2} \Lambda P_\theta^{1/2} \right\Vert_{2}^2 \\
    &= \frac{1}{K_\theta^2} \left(\frac{\frac{1}{K_\theta}\sum_{i \in [n]} P_{\theta,ii} \sigma_i^2}{\frac{1}{K_\theta}\sum_{i \in [n]} P_{\theta,ii} \hat \sigma_i^2} \right)^2 \left(\frac{1}{\frac{1}{K_\theta}\sum_{i \in [n]} P_{\theta,ii} \sigma_i^2} \right)^2 \left\Vert P_\theta^{1/2} \widehat \Lambda P_\theta^{1/2} - P_\theta^{1/2} \Lambda P_\theta^{1/2} \right\Vert_{2}^2 \\
    &\lesssim O_\Prob\left(\frac{1}{K_\theta^2}\right) \left\Vert P_\theta^{1/2} \widehat \Lambda P_\theta^{1/2} - P_\theta^{1/2} \Lambda P_\theta^{1/2} \right\Vert_{2}^2,
\end{align*}
by Lemma \ref{lem:variance-estimator-consistent} and that
\begin{align*}
   \frac{1}{K_\theta}\sum_{i \in [n]} P_{\theta,ii} \sigma_i^2 \geq \frac{\underline{c} }{K_\theta}\sum_{i \in [n]} P_{\theta,ii} = \underline{c}>0
\end{align*}
by Assumption \ref{ass:mean_zero}.2. We also have
\begin{align*}
    \left\Vert P_\theta^{1/2} \widehat \Lambda P_\theta^{1/2} - P_\theta^{1/2} \Lambda P_\theta^{1/2} \right\Vert_{2}^2 &= \left\Vert P_\theta^{1/2} \left(\widehat \Lambda - \Lambda \right) P_\theta^{1/2} \right\Vert_{2}^2 \\
    &\leq \left\Vert P_\theta^{1/2} \left(\widehat \Lambda - \Lambda \right) P_\theta^{1/2} \right\Vert_{F}^2 \\
    &= \tr \left( P_\theta \left(\widehat \Lambda - \Lambda \right) P_\theta \left(\widehat \Lambda - \Lambda \right) \right) \\
    &= \sum_{i \in [n]} \sum_{j \in [n]} P_{\theta, ij}^2 \left(\hat \sigma_i^2 - \sigma_i^2 \right) \left(\hat \sigma_j^2 - \sigma_j^2 \right).
\end{align*}
Therefore, the second conclusion of Lemma \ref{lem:consistency:weights} follows if 
\begin{align*}
    \frac{1}{K_\theta} \sum_{i \in [n]} \sum_{j \in [n]} P_{\theta, ij}^2 \left(\hat \sigma_i^2 - \sigma_i^2 \right) \left(\hat \sigma_j^2 - \sigma_j^2 \right) = o_\Prob(1).
\end{align*}
Recall that $\widehat{\sigma}_i^2 = \epsilon_i^2 + V_i$, we have
\begin{align*}
    \frac{1}{K_\theta} \sum_{i \in [n]} \sum_{j \neq i} P_{\theta, ij}^2 \left(\hat \sigma_i^2 - \sigma_i^2 \right) \left(\hat \sigma_j^2 - \sigma_j^2 \right) = \frac{1}{K_\theta} \sum_{i \in [n]} \sum_{j \neq i} P_{\theta, ij}^2 \left(\eps_i^2 - \sigma_i^2 + V_i \right) \left(\eps_j^2 - \sigma_j^2 + V_j \right) = o_\Prob(1),
\end{align*}
as in the proof of Lemma \ref{lem:consistency:phi-hat}. It remains to show that
\begin{align*}
    \frac{1}{K_\theta} \sum_{i \in [n]} P_{\theta, ii}^2 (\hat \sigma_i^2 - \sigma_i^2)^2 = o_\Prob(1),
\end{align*}
and note that $p_n = O(K_\theta/n) = O(r/n) = o(1)$. We can write
\begin{align*}
    \frac{1}{K_\theta} \sum_{i \in [n]} P_{\theta, ii}^2 (\hat \sigma_i^2 - \sigma_i^2)^2 = \frac{1}{K_\theta} \sum_{i \in [n]} P_{\theta, ii}^2 (\eps_i^2 - \sigma_i^2 + V_i)^2 \lesssim \frac{1}{K_\theta} \sum_{i \in [n]} P_{\theta, ii}^2 (\eps_i^2 - \sigma_i^2)^2 + \frac{1}{K_\theta} \sum_{i \in [n]} P_{\theta, ii}^2 V_i^2,
\end{align*}
and since
\begin{align*}
    \mathbb E_\epsilon \left[ \frac{1}{K_\theta} \sum_{i \in [n]} P_{\theta, ii}^2 (\eps_i^2 - \sigma_i^2)^2 \right] \lesssim  \frac{p_n}{K_\theta} \sum_{i \in [n]} P_{\theta, ii} \mathbb E_\epsilon \left[ (\eps_i^2 - \sigma_i^2)^2 \right] = O(p_n) = o(1),
\end{align*}
by Markov inequality we have
\begin{align*}
    \frac{1}{K_\theta} \sum_{i \in [n]} P_{\theta, ii}^2 (\eps_i^2 - \sigma_i^2)^2 = o_\Prob(1).
\end{align*}
We also have
\begin{align*}
    \frac{1}{K_\theta} \sum_{i \in [n]} P_{\theta, ii}^2 V_i^2 \lesssim \frac{1}{K_\theta} \sum_{i \in [n]} P_{\theta, ii}^2 d_i^4 + \frac{1}{K_\theta} \sum_{i \in [n]} P_{\theta, ii}^2 d_i^2 \eps_i^2,
\end{align*}
where
\begin{align*}
    \frac{1}{K_\theta} \sum_{i \in [n]} P_{\theta, ii}^2 d_i^4 \leq \frac{p_n^2}{K_\theta} \sum_{i \in [n]} d_i^4  = o\left(\frac{\left\Vert {\bf{X}} \widehat \Delta \right\Vert_{\infty}^2 \times \left\Vert {\bf{X}} \widehat \Delta \right\Vert_2^2}{n}\right) = o(1),
\end{align*}
and by the Cauchy-Schwarz inequality,
\begin{align*}
    \frac{1}{K_\theta} \sum_{i \in [n]} P_{\theta, ii}^2 d_i^2 \eps_i^2 \leq \sqrt{\frac{1}{K_\theta} \sum_{i \in [n]} P_{\theta, ii}^2 d_i^4} \sqrt{\frac{1}{K_\theta} \sum_{i \in [n]} P_{\theta, ii}^2 \eps_i^4} = o_\Prob(1),
\end{align*}
where we use the Markov inequality and that
\begin{align*}
    \mathbb E_\epsilon \left[ \frac{1}{K_\theta} \sum_{i \in [n]} P_{\theta, ii}^2 \eps_i^4 \right] \lesssim \frac{p_n}{K_\theta} \sum_{i \in [n]} P_{\theta, ii} \mathbb E_\epsilon\left(\eps_i^4 \right) = O(p_n) = o(1).
\end{align*}
Finally, note that $\left \Vert \tilde{w}_{n} - w_{n} \right\Vert_1 \leq 2$ because both $\{w_i\}_{i \in [r_n]}$ and $\{\tilde w_i\}_{i \in [r_n]}$ are nonnegative and sum to one, and thus $\left \Vert \tilde{w}_{n} - w_{n} \right\Vert_2^2 \leq \left \Vert \tilde{w}_{n} - w_{n} \right\Vert_1 \left \Vert \tilde{w}_{n} - w_{n} \right\Vert_\infty = o_\Prob(1)$. This concludes the proof.
\qed

\subsection{Random Design}

We now derive the same results under random $\bf X$. The following lemma is useful for the proof.

\begin{lem} \label{lem:random_design_bound}
    Let Assumption \ref{ass:size} hold. For a suitable choice of $\lambda_0 \asymp \sqrt{\log(np)/n}$, we have
    \begin{align*}
        &\left\lVert {\bf X} (\widehat\beta - \beta^0) / \sqrt{n} \right\rVert_{2} = O_\Prob\left( \sqrt{\frac{s_0 \log (np)}{n}}\right), \\
        &\lVert \widehat\beta-\beta^0 \rVert_1 = O_\Prob \left(\sqrt{\frac{s_0^2 \log (np)}{n}}\right), \\
        &\lVert {\bf X} (\widehat\beta - \beta^0) \rVert_{\infty} = O_\Prob\left( \sqrt{\frac{s_0^2 (\log (np))^{1+2/\nu}}{n}}\right).
    \end{align*}
    In addition, for a suitable choice of $\lambda_j \asymp \sqrt{\log(np)/n}$ uniformly in $j \in [p]$, we have
    \begin{align*}
    &\max_{j \in [p]}\left\Vert \widehat \Theta_j^\top - \Theta_j^\top \right\Vert_1 = O_\mathbb{P} \left( s_0^*\sqrt{\frac{\log (np)}{n}} + \frac{(s_0^*)^{3/2}(\log(np))^{4/\nu}}{n} \right), \\
    &\max_{j \in [p]} \left\Vert \widehat \Theta_j^\top - \Theta_j^\top \right\Vert_2 = O_\mathbb{P} \left( \sqrt{\frac{s_0^* \log (np)}{n}} + \frac{s_0^* (\log(np))^{4/\nu}}{n} \right), \\
    &\max_{j \in [p]}\left| \hat \tau_j^2 - \tau_j^2 \right| = O_\mathbb{P} \left( \sqrt{\frac{s_0^* \log (np)}{n}} + \frac{s_0^* (\log(np))^{4/\nu}}{n}\right).
    \end{align*}
    Finally, for the choice of $\lambda_0$ and $\lambda_j$ above, we have
    \begin{align*}
        \left\Vert \Delta \right\Vert_{\infty} = O_\Prob \left( \frac{s_0 \log(np)}{\sqrt{n}} \right).
    \end{align*}
\end{lem}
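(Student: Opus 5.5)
The plan is to assemble Lemma \ref{lem:random_design_bound} from the fixed-design lemmas, each applied on a high-probability event built from the random-design results. First, I define the event $\mathcal{A}_n$ on which four things hold: (i) $\max_{i,j}|X_{ij}|\le\xi_n\asymp(\log(np))^{1/\nu}$ (Remark \ref{eq:rem:zeta_n}, using $K_{n,p}=O(1)$); (ii) $\lambda_0\ge 2\lVert{\bf X}^\top\eps/n\rVert_\infty$ with $\lambda_0=C_n\sqrt{\log(np)/n}$ and $C_n=2C_{\nu,h}K_{n,p}=O(1)$ (Lemma \ref{lem:bound_of_Xeps} and Remark \ref{rem:LASSO_tuning_rate}, valid by Assumption \ref{ass:size}.4(2)); (iii) the RE condition holds with parameters $(s_0,3)$ and $(s_0^*+1,3)$ (Lemma \ref{lem:random_design_X} and the remark following Lemma \ref{lem:hatTheta}); (iv) $\lambda_j\ge 2\lVert{\bf X}_{-j}^\top\eta_j/n\rVert_\infty$ for all $j$ (Lemma \ref{lem:node-wise_tuning_rate}, valid by Assumption \ref{ass:size}.4(3)). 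The RE steps need $s_0(\log(np))^{4/\nu}=o(n)$, $s_0^2\log(np)=o(n)$ and $(s_0^*)^2\log(np)=o(n)$; these follow from Assumption \ref{ass:size}.4(1), since $m\ge1$ there. Hence $\Prob(\mathcal{A}_n)\to1$. The weights equal one for ordinary LASSO, so the conditions $\hat w_j=1$ are trivial.

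On $\mathcal{A}_n$, Lemma \ref{lem:main_LASSO} with $C_n=O(1)$ gives the three bounds on $\widehat\beta$ directly. In particular, $\lVert{\bf X}\widehat\Delta\rVert_\infty\le\xi_n\lVert\widehat\Delta\rVert_1=O(\sqrt{s_0^2(\log(np))^{1+2/\nu}/n})$. For the nodewise part, Lemma \ref{lem:hattau} with $\lambda_j\asymp\sqrt{\log(np)/n}$ gives
\[
\max_j|\widehat\tau_j^2-\tau_j^2|=O_\Prob\Bigl(\sqrt{s_0^*\log(np)/n}+s_0^*\log(np)/n+\sqrt{\log(np)/n}+s_0^*(\log(np))^{4/\nu}/n\Bigr).
\]
The second and third terms are absorbed because $\nu\le2$ and $s_0^*\ge1$, which yields the stated rate. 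This rate is $o(1)$ by the rate conditions, so Lemma \ref{lem:hatTheta} applies. Part (a) of that lemma gives $O(s_0^*\sqrt{\log(np)/n})+\sqrt{s_0^*}\cdot(\text{the }\tau\text{ rate})$, which is exactly the stated $\ell_1$ rate. Part (b) gives the $\ell_2$ rate.

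The main obstacle is $\lVert\Delta\rVert_\infty$, where I avoid any $\lVert\widehat\Theta\rVert$-operator bound. I write $\lVert\Delta\rVert_\infty\le\sqrt n\,\lVert\widehat\Theta\widehat\Sigma-I_p\rVert_{\max}\lVert\widehat\beta-\beta^0\rVert_1$. By symmetry of $\widehat\Sigma$, $\lVert\widehat\Theta\widehat\Sigma-I_p\rVert_{\max}=\lVert\widehat\Sigma\widehat\Theta^\top-I_p\rVert_{\max}$, and the extended KKT inequality \eqref{node-wise_LASSO_kkt} bounds this by $\max_j\lambda_j/\widehat\tau_j^2$. Lemma \ref{lem:aux:2}(b) gives $\tau_j^2\ge\lambda_{\min}(\Sigma)\gtrsim1$, and the uniform consistency $\max_j|\widehat\tau_j^2-\tau_j^2|=o_\Prob(1)$ shown above then gives $\min_j\widehat\tau_j^2\ge\underline c/2$ with probability approaching one. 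So $\lVert\widehat\Theta\widehat\Sigma-I_p\rVert_{\max}=O_\Prob(\sqrt{\log(np)/n})$. Combining this with the $\ell_1$ bound $O_\Prob(s_0\sqrt{\log(np)/n})$ yields $\lVert\Delta\rVert_\infty=O_\Prob(s_0\log(np)/\sqrt n)$. The delicate point is ensuring that the lower bound on $\widehat\tau_j^2$ is uniform in $j$; this is exactly what the $\max_j$ statement of Lemma \ref{lem:hattau} delivers.
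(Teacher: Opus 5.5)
Your proposal is correct and follows the paper's proof. The $\widehat\beta$ bounds come from Lemma \ref{lem:main_LASSO} on the high-probability events supplied by Lemmas \ref{lem:bound_of_Xeps} and \ref{lem:random_design_X}, the nodewise bounds come from Lemmas \ref{lem:hattau}, \ref{lem:hatTheta} and \ref{lem:node-wise_tuning_rate}, and $\lVert\Delta\rVert_\infty$ comes from the extended KKT bound $\max_j\lambda_j/\widehat\tau_j^2$ together with uniform consistency of $\widehat\tau_j^2$ and Lemma \ref{lem:aux:2}; you only make explicit the rate bookkeeping the paper leaves implicit (one small addition: the RE condition with parameter $(s_0^*+1,3)$ also needs $s_0^*(\log(np))^{4/\nu}=o(n)$, which follows from the $ms_0^*(\log(np))^{4/\nu}$ term in Assumption \ref{ass:size}.4(1) since $m\ge1$).
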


\noindent \textbf{Proof of Lemma \ref{lem:random_design_bound}:} The first result follows from Lemma \ref{lem:bound_of_Xeps} (see Remark \ref{rem:LASSO_tuning_rate}) and Lemma \ref{lem:random_design_X}. The second result follows from Lemma \ref{lem:hattau}, Lemma \ref{lem:hatTheta} and Lemma \ref{lem:node-wise_tuning_rate}. For the last result, note that 
\begin{align*}
    \left\Vert \Delta \right\Vert_{\infty} = \sqrt{n} \left\Vert (\widehat \Theta \widehat \Sigma - I_p) (\widehat\beta-\beta^0)\right\Vert_{\infty} \leq \sqrt{n} \left\Vert \widehat \Theta \widehat \Sigma - I_p \right\Vert_{\max} \lVert \widehat\beta-\beta^0 \rVert_1.
\end{align*}
By the extended KKT conditions \eqref{node-wise_LASSO_kkt}, we have
\begin{align*}
    \left\Vert \widehat \Theta \widehat \Sigma - I_p \right \Vert_{\max} \leq \max_j \frac{\lambda_j}{\hat \tau_j^2}, 
\end{align*}
and by the choice of $\lambda_j$ we have
\begin{align*}
    \left\Vert \widehat \Theta \widehat \Sigma - I_p \right \Vert_{\max} \lesssim \sqrt{\frac{\log (np)}{n}} \max_j \frac{1}{\hat \tau_j^2-\tau_j^2 + \tau_j^2} = O_\mathbb{P} \left(\sqrt{\frac{\log (np)}{n}}\right),
\end{align*}
by using the second result, where we also use Lemma \ref{lem:aux:2}. Combining this with the first result concludes the proof. 
\qed

\begin{lem} \label{lem:variance_consistency_random_design}
    Let Assumption \ref{ass:size} hold, and suppose that $\lambda_0 \asymp \sqrt{\log(np)/n}$ and $\lambda_j \asymp \sqrt{\log(np)/n}$ (uniformly in $j \in [p]$) are suitably chosen. Then
    \begin{align*}
        \frac{\sum_{i \in [n]} P_{\theta,ii} \widehat{\sigma}_i^2}{K_\theta} = \frac{\sum_{i \in [n]} P_{\theta,ii} \epsilon_i^2}{K_\theta} + o_\Prob(1),
    \end{align*}
    and
    \begin{align*}
        \frac{\sum_{i \in [n]} P_{\theta,ii} \epsilon_i^2}{K_\theta} = \frac{\sum_{i \in [n]} P_{\theta,ii} \sigma_i^2}{K_\theta} + o_\Prob(1).
    \end{align*}
    In particular, we have
    \begin{align*}
        \frac{\sum_{i \in [n]} P_{\theta,ii} \widehat{\sigma}_i^2}{K_\theta}  = \frac{\sum_{i \in [n]} P_{\theta,ii} \sigma_i^2}{K_\theta} + o_\Prob(1).
    \end{align*}
    In addition, we have
    \begin{align*}
        \frac{\widehat\Phi}{\Phi} = 1 + o_\Prob(1)
    \end{align*}
    if $1/\Phi = O_\Prob(1)$, and the same results hold for the adjusted variance estimator $\widetilde \Phi$.
\end{lem}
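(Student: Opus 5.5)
The plan is to reduce Lemma \ref{lem:variance_consistency_random_design} to its fixed-design counterparts, Lemmas \ref{lem:variance-estimator-consistent} and \ref{lem:consistency:phi-hat}. We condition on ${\bf X}$ and work on a sequence of events $\mathcal{A}_n$ with $\Prob(\mathcal{A}_n)\to1$. On $\mathcal{A}_n$ the deterministic hypotheses of those lemmas hold, up to the $O_\Prob$ versions of their rate conditions. Every bound in the fixed-design proofs is obtained by Markov's inequality applied to $\mathbb{E}_\epsilon[\cdot]$ conditional on ${\bf X}$, with constants depending only on $\underline{c}$, $\overline{c}$ and the ratio $np_n/K_\theta$. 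A statement of the form $o_\Prob(1)$ conditionally on ${\bf X}\in\mathcal{A}_n$ therefore transfers to an unconditional $o_\Prob(1)$. To see this, write $\Prob(|T_n|>\delta)\le \Prob(\mathcal{A}_n^c)+\mathbb{E}[\Prob(|T_n|>\delta\mid{\bf X})\mathbf{1}_{\mathcal{A}_n}]$ and use dominated convergence. The requirement $p_n=O_\Prob(K_\theta/n)$ in Assumption \ref{ass:size}.1 is handled by also intersecting $\mathcal{A}_n$ with $\{np_n/K_\theta\le M\}$ for a large constant $M$, at the cost of an arbitrarily small probability.

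We then verify the inputs on $\mathcal{A}_n$ in turn.

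\emph{Moment and variance conditions.} Conditions 1 of Lemma \ref{lem:variance-estimator-consistent} follow from Assumption \ref{ass:size}.1 via Assumption \ref{ass:mean_zero}.2--3 with $h\ge4$.

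\emph{Prediction error in $\ell_2$.} Condition 3 follows from Lemma \ref{lem:random_design_bound}:
\[
\lVert{\bf X}(\widehat\beta-\beta^0)/\sqrt n\rVert_2=O_\Prob\bigl(\sqrt{s_0\log(np)/n}\bigr)=o_\Prob(1),
\]
because $s_0(\log(np))^{4/\nu}=o(n)$ by Assumption \ref{ass:size}.4(1).

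\emph{Cross condition for Lemma \ref{lem:consistency:phi-hat}.} Again by Lemma \ref{lem:random_design_bound},
\[
\lVert{\bf X}\widehat\Delta\rVert_\infty\lVert{\bf X}\widehat\Delta\rVert_2=O_\Prob\Bigl(\sqrt{s_0^2(\log(np))^{1+2/\nu}/n}\cdot\sqrt{s_0\log(np)}\Bigr),
\]
and this is $o_\Prob(\sqrt n)$ iff $s_0^{3}(\log(np))^{2+2/\nu}=o(n^2)$. That rate follows from $ms_0^2\log^2(np)=o(n)$ together with $s_0(\log(np))^{4/\nu}=o(n)$. Indeed, the product of these two is $s_0^3(\log(np))^{2+4/\nu}m=o(n^2)$, and it dominates the target because $m\ge1$ and $4/\nu\ge2/\nu$.

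\emph{Dimension condition.} We need $r=o(n)$, which follows from $r\le m$ and $m^2s_0^*\log(np)=o(n)$.

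With these inputs, the first three displays are exactly the conclusions of Lemma \ref{lem:variance-estimator-consistent} transferred as above. The step $\widehat\Phi-\Phi=o_\Prob(1)$ is the conclusion of Lemma \ref{lem:consistency:phi-hat}, and the same holds for $\widetilde\Phi$.

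For the ratio, write $\widehat\Phi/\Phi-1=(\widehat\Phi-\Phi)\cdot\Phi^{-1}$. Under $1/\Phi=O_\Prob(1)$ this is $o_\Prob(1)O_\Prob(1)=o_\Prob(1)$, and we do not need the deterministic lower bound used in the fixed-design lemma.

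The main obstacle I expect is the conditioning step, since $P_\theta$, $K_\theta$ and $p_n$ are random and depend on ${\bf X}$ through $\widehat\Theta$. Two points need checking. First, the errors remain conditionally independent and mean zero given ${\bf X}$ with uniformly bounded moments; Assumption \ref{ass:size}.3 supplies this. Second, the LASSO error bounds hold on a ${\bf X}$-measurable event, or at least jointly with high probability. These bounds depend on $\epsilon$ through $\lVert{\bf X}^\top\epsilon/n\rVert_\infty$, so I would not put them inside $\mathcal{A}_n$. Instead I would keep them as unconditional $O_\Prob$ statements, which the fixed-design proof uses only through products with terms that are $O_\Prob(1)$ conditionally. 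Those conditional bounds are controlled by conditional Markov inequalities that do not involve $\widehat\beta$. This matches how Lemma \ref{lem:variance-estimator-consistent} is structured, where the terms $I$ and $II$ separate into a $\widehat\Delta$ factor and an $\epsilon$-only factor.
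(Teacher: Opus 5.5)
Your proposal is correct and follows essentially the same route as the paper. You verify the inputs of Lemmas \ref{lem:variance-estimator-consistent} and \ref{lem:consistency:phi-hat} from Lemma \ref{lem:random_design_bound} and Assumption \ref{ass:size}.4 exactly as the paper does: the $\ell_2$ prediction rate, the cross condition $\lVert{\bf X}\widehat\Delta\rVert_\infty\lVert{\bf X}\widehat\Delta\rVert_2=o_\Prob(\sqrt n)$, and $r=o(n)$. You then rerun the fixed-design Markov arguments conditionally on ${\bf X}$, as the paper does. The one difference is how conditional $o_\Prob$ statements become unconditional: the paper cites Lemma 6.1 of \cite{chernozhukov2018double}, whereas you argue directly through a high-probability event (including $\{np_n/K_\theta\le M\}$) and dominated convergence, which handles the $O_\Prob$ leverage condition a little more explicitly.
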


\noindent{\bf Proof of Lemma \ref{lem:variance_consistency_random_design}:} We have $p_n = \max_{i \in [n]} P_{\theta,ii} =  O_\Prob(K_\theta/n)$ by assumption. We can also verify the rate conditions in Lemma \ref{lem:variance-estimator-consistent} and Lemma \ref{lem:consistency:phi-hat}. By Lemma \ref{lem:random_design_bound}, we have
\begin{align*}
    \frac{\lVert {\bf X} \widehat \Delta \rVert_2}{\sqrt{n}} = O_\Prob\left( \sqrt{\frac{s_0 \log (np)}{n}}\right) = o_\Prob(1),
\end{align*}
and
\begin{align*}
    \frac{\lVert {\bf X} \widehat \Delta  \rVert_{\infty} \times \lVert {\bf X} \widehat \Delta  \rVert_2}{\sqrt{n}} &= O_\Prob\left( \sqrt{\frac{s_0 \log (np)}{n}}\right) \times O_\Prob\left( \sqrt{\frac{s_0^2 (\log (np))^{1+2/\nu}}{n}}\right) \\
    &= O_\Prob\left(\sqrt{\frac{s_0^2 (\log (np))^{2}}{n}} \times \sqrt{\frac{s_0 (\log (np))^{2/\nu}}{n}} \right)\\
    &= o_\Prob(1).
\end{align*}
Note also that $r\leq m = o(\sqrt{n/\log(np)}) = o(n)$. We may then repeat the proof of Lemma \ref{lem:variance-estimator-consistent} and Lemma \ref{lem:consistency:phi-hat}. The only difference here is that $\mathbb E_\epsilon (\cdot)$ is now random as $\bf X$ is random. However, note that by Markov inequality and an application of \cite[Lemma 6.1]{chernozhukov2018double}, we have $\mathbb E_\epsilon \left| A_n \right| = O_\Prob(l_n)$ implies that $ A_n = O_\Prob(l_n)$, where $A_n$ is a sequence of random variables and $l_n$ is a sequence of positive constants. In particular, $l_n = O(1)$ implies that $A_n = O_\Prob(1)$ and $l_n = o(1)$ implies that $A_n = o_\Prob(1)$. The proof then follows essentially the same lines as in the fixed design case.
\qed

\begin{lem} \label{lem:weights_consistency_random_design}
    Let Assumption \ref{ass:size} hold, and suppose that $\lambda_0 \asymp \sqrt{\log(np)/n}$ and $\lambda_j \asymp \sqrt{\log(np)/n}$ (uniformly in $j \in [p]$) are suitably chosen. Then
    \begin{align*}
        \max_{i \in [r_n]} w_{i,n}^2 = O_\Prob(1/K_\theta^2), \quad \max_{i \in [r_n]} \left(\tilde w_{i,n} - w_{i,n} \right)^2 = o_\Prob(1/K_\theta).
    \end{align*}
    In particular, in the case where $K_\theta$ is bounded (with probability approaching one), we have $\left \Vert \tilde{w}_{n} - w_{n} \right\Vert_2 = o_\Prob(1)$; in the case where $K_\theta \xrightarrow{\Prob} \infty$, we have $ \max_{i \in [r_n]} w_{i,n}^2 = o_\Prob(1/K_\theta)$ and $ \max_{i \in [r_n]} \tilde {w}_{i,n}^2 = o_\Prob(1/K_\theta)$.
\end{lem}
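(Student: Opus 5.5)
The plan is to repeat the proof of Lemma \ref{lem:consistency:weights} conditionally on ${\bf X}$, and replace each deterministic rate there by its stochastic analogue under Assumption \ref{ass:size}. This is the same device used in the proof of Lemma \ref{lem:variance_consistency_random_design}.

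First I will bound the largest weight. The nonzero eigenvalues of $\Omega$ coincide with those of $\mathcal{O}=P_\theta^{1/2}\Lambda P_\theta^{1/2}/\sum_i P_{\theta,ii}\sigma_i^2$. By the spectral decomposition in Lemma \ref{lem:Ktheta_and_rn}, $\lambda_{\max}(P_\theta)\le 1$. Assumption \ref{ass:mean_zero}.3 gives $\lambda_{\max}(\Lambda)\lesssim 1$, and Assumption \ref{ass:mean_zero}.2 gives $\sum_i P_{\theta,ii}\sigma_i^2\ge \underline{c}K_\theta$. Together these yield $\max_i w_{i,n}\lesssim 1/K_\theta$ deterministically given ${\bf X}$, hence $O_\Prob(1/K_\theta^2)$ after squaring.

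Next I will handle the perturbation, using Weyl's inequality and the split $\widehat{\mathcal{O}}-\widetilde{\mathcal{O}}$ plus $\widetilde{\mathcal{O}}-\mathcal{O}$ exactly as in the fixed-design proof.
\begin{itemize}
\item The normalisation term needs $\frac{1}{K_\theta}\sum_i P_{\theta,ii}\widehat\sigma_i^2 - \frac{1}{K_\theta}\sum_i P_{\theta,ii}\sigma_i^2=o_\Prob(1)$. This is supplied directly by Lemma \ref{lem:variance_consistency_random_design}.
\item The remaining term reduces to showing $K_\theta^{-1}\sum_{i,j}P_{\theta,ij}^2(\widehat\sigma_i^2-\sigma_i^2)(\widehat\sigma_j^2-\sigma_j^2)=o_\Prob(1)$. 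I split this into off-diagonal and diagonal parts and expand $\widehat\sigma_i^2=\epsilon_i^2+V_i$ with $d_i=\widetilde X_i^\top\widehat\Delta$.
\end{itemize}
Every bound in those expansions uses only four ingredients: $p_n\lesssim K_\theta/n$ (now $O_\Prob$ by Assumption \ref{ass:size}.1), $r=o(n)$, $\|{\bf X}\widehat\Delta\|_2/\sqrt n=o_\Prob(1)$, and $\|{\bf X}\widehat\Delta\|_\infty\|{\bf X}\widehat\Delta\|_2=o_\Prob(\sqrt n)$. The last two rates were already verified from Lemma \ref{lem:random_design_bound} and Assumption \ref{ass:size}.4 in the proof of Lemma \ref{lem:variance_consistency_random_design}.

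The main obstacle is that the Markov-inequality steps of the fixed-design proof bound conditional expectations $\mathbb E_\epsilon[\cdot]$, and these are now random through ${\bf X}$. For example, $\mathbb E_\epsilon[K_\theta^{-1}\sum_i P_{\theta,ii}^2\epsilon_i^4]\lesssim p_n$ is now only $O_\Prob(K_\theta/n)=o_\Prob(1)$. I will resolve this as in Lemma \ref{lem:variance_consistency_random_design}, invoking \cite[Lemma 6.1]{chernozhukov2018double}: if $\mathbb E_\epsilon|A_n|=O_\Prob(l_n)$ then $A_n=O_\Prob(l_n)$, and likewise with $o$. A minor point also needs checking. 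The event on which $p_n\le CK_\theta/n$ and the LASSO rates hold has probability tending to one, and ${\bf X}$-measurable quantities such as $P_\theta$ can be conditioned on without changing the conditional moment bounds on $\epsilon$, by Assumption \ref{ass:size}.3.

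The final consequences follow exactly as before. If $K_\theta$ is bounded with probability approaching one, then $\|\tilde w_n-w_n\|_2^2\le\|\tilde w_n-w_n\|_1\|\tilde w_n-w_n\|_\infty\le 2\,o_\Prob(K_\theta^{-1/2})=o_\Prob(1)$. If $K_\theta\xrightarrow{\Prob}\infty$, then $O_\Prob(1/K_\theta^2)=o_\Prob(1/K_\theta)$. Combining this with $(\tilde w-w)^2=o_\Prob(1/K_\theta)$ and $\tilde w^2\le 2w^2+2(\tilde w-w)^2$ gives $\max_i\tilde w_{i,n}^2=o_\Prob(1/K_\theta)$.
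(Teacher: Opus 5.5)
Your proposal is correct and follows the paper's own argument: the paper proves this lemma in one line by rerunning the fixed-design proof of Lemma~\ref{lem:consistency:weights} with the conditional-on-${\bf X}$ device via \cite[Lemma 6.1]{chernozhukov2018double}, already used for Lemma~\ref{lem:variance_consistency_random_design}, and that is exactly your route. Your write-up adds detail the paper leaves implicit, namely the $O_\Prob$ leverage bound, $r=o(n)$, the LASSO prediction rates from Lemma~\ref{lem:random_design_bound}, and the fact that $P_\theta$ depends only on ${\bf X}$ while $d_i$ is handled separately through Cauchy--Schwarz.
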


\noindent{\bf Proof of Lemma \ref{lem:weights_consistency_random_design}:} Using the same argument as in the proof of Lemma \ref{lem:variance_consistency_random_design}, we can repeat the proof of Lemma \ref{lem:consistency:weights} to show Lemma \ref{lem:weights_consistency_random_design}.
\qed

\newpage
\section{Proofs of Main Results} \label{sec:appendix_main_thm}

In this section, we prove the main results in Section \ref{section:strong_approx}-\ref{section:size_control}.

\subsection{Proof of Theorem \ref{theorem:strong_approx}}

The proof of \eqref{eq:strong-approximation:1} follows from a minor modification to the proof of \cite[Theorem 3.1, {\bf Step 2.}]{LWZ2024}. Note that
\begin{align*}
&\quad \sup_{y \in \mathbb{R}}\left| \mathbb{P}_\epsilon \left( \sum_{i,j \in [n], j \neq i}  P_{ij}({\bf X}) \epsilon_i \epsilon_j  \leq y \right) - \mathbb{P}_\mathcal{E} \left( \sum_{i,j \in [n], j \neq i} P_{ij}({\bf X}) \mathcal{E}_i \mathcal{E}_j  \leq y \right) \right| \\
&= \sup_{y \in \mathbb{R}}\left| \mathbb{P}_\epsilon \left( \frac{\sum_{i,j \in [n], j \neq i}  P_{ij}({\bf X}) \epsilon_i \epsilon_j}{\sqrt{K_n}}  \leq y \right) - \mathbb{P}_\mathcal{E} \left( \frac{\sum_{i,j \in [n], j \neq i} P_{ij}({\bf X}) \mathcal{E}_i \mathcal{E}_j}{\sqrt{K_n}}  \leq y \right) \right|,
\end{align*}
and define 
\begin{align*}
    \breve{\mathcal Q} := \frac{\sum_{i,j \in [n], j \neq i}  P_{ij}({\bf X}) \epsilon_i \epsilon_j}{\sqrt{K_n}}, \quad \mathcal Q := \frac{\sum_{i,j \in [n], j \neq i} P_{ij}({\bf X}) \mathcal{E}_i \mathcal{E}_j}{\sqrt{K_n}}.
\end{align*}
Let $\hat p_n := \tilde p_n/K_n$ and note that $0 \leq \hat p_n \leq 1$. For any $\zeta \in (0,1)$, let $\tilde \zeta = \zeta/2$, the proof in \cite[Theorem 3.1, {\bf Step 2.}]{LWZ2024} implies that
\begin{align*}
    \sup_{y \in \mathbb{R}}\left| \mathbb{P}_{\eps} \left( \breve{\mathcal Q}  \leq y \right) - \mathbb{P}_{\mathcal{E}} \left( \mathcal Q  \leq y \right) \right| \lesssim_{\zeta} \frac{1}{1-B(C_n)} \times \frac{\hat p_n^{1/2}}{C_n h_n^3} + \frac{B(C_n)}{1-B(C_n)} + (C_n h_n)^{\frac{1-\tilde \zeta}{2}},
\end{align*}
where $\lesssim_{\zeta}$ means that the implicit constant depends on $\zeta$ (but not on $n$). Here we replace their $C_h$ with $C_n$ as we now need it to change with $n$, and $B(C_n) = \left(C_n^2/\exp(1-C_n^2)\right)^{1/2}$. We may assume without loss of generality that $\hat p_n \rightarrow 0$, as the bound can be made trivial by using a large $C_\zeta$ when $\hat p_n$ is not small. We can then optimize the right-hand side in terms of $\hat p_n$. Note that $1/(1-B(C_n)) = O(1)$ as $B(C_n)\rightarrow 0$ (i.e. $C_n \rightarrow \infty$). By choosing $$h_n = O\left( \hat p_n^{\frac{1}{7-\tilde \zeta}} C_n^{-\frac{3-\tilde \zeta}{7-\tilde \zeta}} \right),$$ the first term and the third term are both of order $$\hat p_n^{\frac{1-\tilde \zeta}{2(7-\tilde \zeta)}} C_n^{\frac{2(1-\tilde \zeta)}{7-\tilde \zeta}}.$$ By choosing $$C_n = \left(2 \left( \frac{1-\tilde \zeta}{2(7-\tilde \zeta)}+ \tilde \zeta \right) \log(1/\hat p_n)\right)^{1/2},$$ we have
\begin{align*}
    \hat p_n^{\frac{1-\tilde \zeta}{2(7-\tilde \zeta)}} C_n^{\frac{2(1-\tilde \zeta)}{7-\tilde \zeta}} \asymp_{\zeta} \hat p_n^{\frac{1-\tilde \zeta}{2(7-\tilde \zeta)}} (\log(1/\hat p_n))^{\frac{1-\tilde \zeta}{7-\tilde \zeta}},
\end{align*}
and
\begin{align*}
    B(C_n) \lesssim_{\zeta} C_n \times \exp(-C_n^2/2) \lesssim_{\zeta} \log^{1/2}(1/\hat p_n) \times \hat p_n^{\frac{1-\tilde \zeta}{2(7-\tilde \zeta)}+ \tilde \zeta} = o\left( \hat p_n^{\frac{1-\tilde \zeta}{2(7-\tilde \zeta)}} (\log(1/\hat p_n))^{\frac{1-\tilde \zeta}{7-\tilde \zeta}} \right),
\end{align*}
and thus the second term is negligible. It follows that
\begin{align*}
    \sup_{y \in \mathbb{R}}\left| \mathbb{P}_{\eps} \left( \breve{\mathcal Q}  \leq y \right) - \mathbb{P}_{\mathcal{E}} \left( \mathcal Q  \leq y \right) \right| \lesssim_{\zeta} \left( \hat p_n \log^2(1/\hat p_n) \right)^{\frac{1-\tilde \zeta}{2(7-\tilde \zeta)}} \lesssim_{\zeta} \hat p_n^{\frac{1-\zeta}{2(7-\zeta)}},
\end{align*}
because $\tilde \zeta < \zeta$ and 
\begin{align*}
    \hat p_n^{\left(\frac{1-\tilde \zeta}{2(7-\tilde \zeta)}-\frac{1-\zeta}{2(7-\zeta)}\right)}  (\log(1/\hat p_n))^{\frac{1-\tilde \zeta}{7-\tilde \zeta}} \rightarrow 0
\end{align*}
as $\hat p_n \rightarrow 0$.
This gives the first part of Theorem \ref{theorem:strong_approx}.

\vspace{5mm}
\noindent For the second part of Theorem \ref{theorem:strong_approx}, we first derive an anti-concentration inequality that extends \cite[Lemma H.1.]{LWZ2024}'s Gaussian errors to arbitrary errors, which may be of independent interest.
\begin{lem}[Anti-Concentration]
    Suppose Assumption \ref{ass:mean_zero} holds. In addition, assume that $$\frac{1}{K_n} \sum_{i \in [n]} \sum_{j \neq i} P_{ij}^2({\bf X}) \gtrsim 1.$$ Then for any $\ell>0$ and $\zeta \in (0,1)$, there exists a $C_\zeta$ only depending on $\zeta$ such that 
    \begin{align*}
        \sup_{y \in \mathbb{R}} \mathbb{P}_\epsilon\left( \left|\frac{\sum_{i \in [n]} \sum_{j \neq i} P_{ij}({\bf X}) \epsilon_i \epsilon_j}{\sqrt{K_n}} - y\right| \leq \ell  \right) \leq C_\zeta \ell^{(1-\zeta)/2} + C_\zeta \left(\frac{\tilde p_n}{K_n} \right)^{(1-\zeta)/(14-2\zeta)}.
    \end{align*}
    \label{lem:anti-concentration:main}
\end{lem}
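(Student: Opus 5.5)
We prove the anti-concentration bound by transferring it from the Gaussian counterpart through the strong approximation in \eqref{eq:strong-approximation:1}, which has already been established in the first part of the proof of Theorem~\ref{theorem:strong_approx}. Write $\breve{\mathcal Q}$ and $\mathcal Q$ as above. For any $y\in\mathbb{R}$ and $\ell>0$,
\begin{align*}
\mathbb{P}_\epsilon\left(|\breve{\mathcal Q}-y|\leq \ell\right)
&= \mathbb{P}_\epsilon(\breve{\mathcal Q}\leq y+\ell)-\mathbb{P}_\epsilon(\breve{\mathcal Q}< y-\ell)\\
&\leq \mathbb{P}_{\mathcal E}(\mathcal Q\leq y+\ell)-\mathbb{P}_{\mathcal E}(\mathcal Q< y-\ell) + 2\sup_{t}\left|\mathbb{P}_\epsilon(\breve{\mathcal Q}\leq t)-\mathbb{P}_{\mathcal E}(\mathcal Q\leq t)\right|.
\end{align*}
The strict inequality is handled by taking limits $t\uparrow y-\ell$ in the Kolmogorov bound. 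The second term is at most $2C_\zeta(\tilde p_n/K_n)^{(1-\zeta)/(14-2\zeta)}$ by \eqref{eq:strong-approximation:1}, which uses only Assumption~\ref{ass:mean_zero} and the condition $\sum_i\sum_{j\neq i}P_{ij}^2/K_n\gtrsim 1$. This produces the second term of the claimed bound.

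It remains to bound the first term, $\sup_y \mathbb{P}_{\mathcal E}(|\mathcal Q-y|\leq\ell)\leq C_\zeta\ell^{(1-\zeta)/2}$. This is a Gaussian anti-concentration for an off-diagonal Gaussian chaos, and we invoke \cite[Lemma H.1]{LWZ2024}. Writing $\mathcal E=\Lambda^{1/2}Z$ with $Z\sim\mathcal N(0,I_n)$, we have $\mathcal Q = Z^\top A Z/\sqrt{K_n}$, where $A=\Lambda^{1/2}(P-\diag(P))\Lambda^{1/2}$. Hence $\mathcal Q$ is a degree-two Gaussian polynomial with mean zero and variance
\[
2\sum_i\sum_{j\neq i}P_{ij}^2\sigma_i^2\sigma_j^2/K_n\geq 2\underline{c}^2\sum_i\sum_{j\neq i}P_{ij}^2/K_n\gtrsim 1,
\]
by Assumption~\ref{ass:mean_zero}.2 and the extra condition. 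The variance is also bounded above, using $\sum_{j}P_{ij}^2\leq P_{ii}$ and the bound $\sigma_i^2\leq \overline{c}^{2/h}$ implied by Assumption~\ref{ass:mean_zero}.3. With the variance bounded away from zero, the Carbery--Wright inequality for polynomials of degree $d=2$ gives
\[
\mathbb{P}(|\mathcal Q-y|\leq\ell)\lesssim (\ell/\sqrt{\operatorname{Var}\mathcal Q})^{1/d}\lesssim \ell^{1/2}\leq C_\zeta \ell^{(1-\zeta)/2}
\]
for $\ell\leq1$. For $\ell>1$ the bound is trivial once $C_\zeta\geq1$. This Carbery--Wright route can be used directly if it is cleaner than citing the lemma.

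Combining the two pieces and taking the supremum over $y$ yields the lemma. The main point requiring care is the dependence of the constants. The Gaussian step needs a variance lower bound that is uniform in $n$, which is exactly what the condition $\sum_i\sum_{j\neq i}P_{ij}^2/K_n\gtrsim1$ together with $\inf_i\sigma_i^2\geq\underline c$ provides. The approximation step in \eqref{eq:strong-approximation:1} must hold with a constant depending only on $\zeta$, which we take from the first part of the proof of Theorem~\ref{theorem:strong_approx}.
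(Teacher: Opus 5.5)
Your proposal is correct and follows the paper's own proof: you transfer to the Gaussian off-diagonal chaos through the Kolmogorov bound \eqref{eq:strong-approximation:1}, then apply the Gaussian anti-concentration of \cite[Lemma H.1]{LWZ2024}. The only difference in the transfer step is that the paper widens the window to $2\ell$ where you take left limits, which is immaterial. Your optional Carbery--Wright route for the Gaussian step is also valid and needs only the variance lower bound; it gives the exponent $1/2$ in that term, with no $\zeta$-loss.
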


\noindent{\bf Proof of Lemma \ref{lem:anti-concentration:main}:}
We first require an additional lemma\footnote{This Lemma only requires minimal modification, hence we state it without proof.}.
\begin{lem}[\cite{LWZ2024}, Lemma H.1.]
    Suppose Assumption \ref{ass:mean_zero} holds. In addition, assume that $$\frac{1}{K_n} \sum_{i \in [n]} \sum_{j \neq i} P_{ij}^2({\bf X}) \gtrsim 1.$$ Then for any $\ell >0$ and $\zeta \in (0,1)$, there exists a $C_\zeta$ only depending on $\zeta$ such that 
    \begin{align*}
        \sup_{y \in \mathbb{R}} \mathbb{P}_\mathcal{E}\left( \left|\frac{\sum_{i \in [n]} \sum_{j \neq i} P_{ij}({\bf X}) \mathcal{E}_i \mathcal{E}_j}{\sqrt{K_n}} - y\right| \leq \ell  \right) \leq C_\zeta \ell^{(1-\zeta)/2},
    \end{align*}
    where $\mathcal{E}_i \overset{d}{=} \mathcal{N}\left(0, \sigma_i^2 \right)$ are independent.
    \label{lem:LWZ:Lemma_H.1.}
\end{lem}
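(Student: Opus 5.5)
The plan is to reduce the statement to an anti-concentration bound for a degree-two polynomial of a standard Gaussian vector, and then apply the Carbery--Wright inequality. The assumption that $\frac{1}{K_n}\sum_{i}\sum_{j\neq i}P_{ij}^2({\bf X})\gtrsim 1$ will supply the required lower bound on the second moment.

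\textbf{Step 1: rewrite as a Gaussian quadratic form.} Write $\mathcal{E}=\Lambda^{1/2}Z$ with $Z\sim\mathcal N(0,I_n)$ and $\Lambda=\diag(\sigma_1^2,\dots,\sigma_n^2)$. Let $\bar P:=P({\bf X})-\diag(P_{11}({\bf X}),\dots,P_{nn}({\bf X}))$, which removes the diagonal. Then
\begin{align*}
\mathcal Q:=\frac{1}{\sqrt{K_n}}\sum_{i\in[n]}\sum_{j\neq i}P_{ij}({\bf X})\mathcal{E}_i\mathcal{E}_j=Z^\top M Z,\qquad M:=\frac{\Lambda^{1/2}\bar P\Lambda^{1/2}}{\sqrt{K_n}} .
\end{align*}
The matrix $M$ is symmetric with zero diagonal, so $\mathbb E\mathcal Q=\tr(M)=0$. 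A direct computation gives
\begin{align*}
\operatorname{Var}(\mathcal Q)=2\lVert M\rVert_F^2=\frac{2}{K_n}\sum_{i}\sum_{j\neq i}P_{ij}^2({\bf X})\sigma_i^2\sigma_j^2 .
\end{align*}
By Assumption~\ref{ass:mean_zero}.2 and the extra hypothesis, this is at least $2\underline c^2\cdot\frac{1}{K_n}\sum_i\sum_{j\neq i}P_{ij}^2({\bf X})\ge c_0>0$, uniformly in $n$.

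\textbf{Step 2: Carbery--Wright.} Fix $y\in\mathbb R$ and let $p_y(Z):=Z^\top MZ-y$. This is a polynomial of degree at most $2$ in Gaussian variables. Its second moment satisfies
\begin{align*}
\mathbb E p_y(Z)^2=\operatorname{Var}(\mathcal Q)+y^2\ge c_0 .
\end{align*}
The Carbery--Wright inequality, with an absolute constant $C$, then gives
\begin{align*}
\mathbb P_{\mathcal E}\bigl(|\mathcal Q-y|\le \ell\bigr)\le C\Bigl(\ell/(\mathbb E p_y^2)^{1/2}\Bigr)^{1/2}\le C c_0^{-1/4}\,\ell^{1/2}.
\end{align*}
This bound is uniform in $y$.

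\textbf{Step 3: pass to the stated exponent.} If $\ell\le1$, then $\ell^{1/2}\le\ell^{(1-\zeta)/2}$ for every $\zeta\in(0,1)$. If $\ell>1$, the probability is at most $1\le\ell^{(1-\zeta)/2}$. In both cases the claim holds with $C_\zeta:=\max\{1,Cc_0^{-1/4}\}$.

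Only Assumption~\ref{ass:mean_zero}.2 and the off-diagonal mass condition are used; the moment and leverage conditions are not needed, because the variables are exactly Gaussian.

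The only delicate point is Step 1: the lower bound on $\mathbb E p_y^2$ must hold uniformly in both $y$ and $n$. This works because centering kills the cross term, and the off-diagonal Frobenius mass is bounded below via $\sigma_i^2\ge\underline c$.

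If one prefers to avoid Carbery--Wright, there is a fallback. Diagonalize $M=\sum_k\mu_k u_ku_k^\top$, so that $\mathcal Q=\sum_k\mu_k(g_k^2-1)$ with $\sum_k\mu_k^2\gtrsim1$, and bound the density of this weighted centered chi-square via its characteristic function. That route is messier, since it requires case analysis on whether one weight dominates. This case analysis is where the $\zeta$-loss in the original proof of \cite{LWZ2024} arises.
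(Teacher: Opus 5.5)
Your proof is correct, but it takes a different route from the one the paper relies on.

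The paper states this lemma without proof and defers to \cite{LWZ2024}. Its later uses of the same bound ("by \cite[Theorem 1.5]{rudelson2015small}, similar to the proof of \cite[Lemma H.1]{LWZ2024}") indicate the underlying argument, which runs roughly as follows. One rotates the Gaussian quadratic form into $\sum_k \mu_k (g_k^2-1)$ with independent $g_k\sim\mathcal N(0,1)$, using $\operatorname{tr}(M)=0$. One notes that each $g_k^2$ has L\'evy concentration $\sup_y \mathbb P(|g_k^2-y|\le t)\lesssim \sqrt t$. One then applies the small-ball inequality for linear images of independent vectors to the row vector $\mu^\top$, whose norm $\|\mu\|_2=\|M\|_F$ is bounded below by the off-diagonal mass condition. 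The $(1-\varepsilon)$ factor in the exponent of that small-ball inequality is what produces the $\zeta$-loss. It does not come from a characteristic-function case analysis, as you conjecture at the end.

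Your use of Carbery--Wright bypasses the diagonalization and the independent-coordinates structure altogether. It only needs that $Z^\top M Z-y$ is a polynomial of degree at most two in a log-concave (here Gaussian) vector with $\mathbb E(Z^\top MZ-y)^2\ge 2\|M\|_F^2\ge c_0$, which you correctly derive from $\sigma_i^2\ge\underline c$ and the off-diagonal mass condition. It yields the dimension-free and sharper bound $C c_0^{-1/4}\ell^{1/2}$, with a constant independent of $\zeta$, from which the stated inequality is immediate.

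The Rudelson--Vershynin route, in exchange, needs only concentration-function control of independent coordinates rather than polynomial structure under a log-concave law. It also extends to multidimensional linear images through the stable rank. None of that generality is needed for this lemma.
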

\noindent Define
\begin{align*}
    \varsigma_n := \sup_{y \in \mathbb{R}}\left| \mathbb{P}_{\epsilon} \left( \frac{1}{\sqrt{K_n}} \sum_{i \in [n]} \sum_{j \neq i}  P_{ij}({\bf X}) \epsilon_i \epsilon_j \leq y \right) - \mathbb{P}_{\mathcal{E}} \left( \frac{1}{\sqrt{K_n}} \sum_{i \in [n]} \sum_{j \neq i} P_{ij}({\bf X}) \mathcal{E}_i \mathcal{E}_j \leq y \right) \right|.
\end{align*}
Then observe that for any $y \in \mathbb{R}$ and $\ell > 0$, we have
\begin{align*}
    &\mathbb{P}_\epsilon\left( \left|\frac{\sum_{i \in [n]} \sum_{j \neq i} P_{ij}({\bf X}) \epsilon_i \epsilon_j}{\sqrt{K_n}} - y\right| \leq \ell  \right) \\
    &\leq \mathbb{P}_\epsilon\left( \frac{\sum_{i \in [n]} \sum_{j \neq i} P_{ij}({\bf X}) \epsilon_i \epsilon_j}{\sqrt{K_n}} \leq y+ 2\ell \right) - \mathbb{P}_\epsilon\left( \frac{\sum_{i \in [n]} \sum_{j \neq i} P_{ij}({\bf X}) \epsilon_i \epsilon_j}{\sqrt{K_n}} \leq y- 2\ell \right) \\
    &\leq \mathbb{P}_\mathcal{E}\left( \frac{\sum_{i \in [n]} \sum_{j \neq i} P_{ij}({\bf X}) \mathcal{E}_i \mathcal{E}_j}{\sqrt{K_n}} \leq y+ 2\ell \right) - \mathbb{P}_\mathcal{E}\left( \frac{\sum_{i \in [n]} \sum_{j \neq i} P_{ij}({\bf X}) \mathcal{E}_i \mathcal{E}_j}{\sqrt{K_n}} \leq y- 2\ell \right)  + 2 \varsigma_n  \\
    &\leq C_\zeta (2\ell)^{(1-\zeta)/2} + 2 \varsigma_n,
\end{align*}
where we used Lemma \ref{lem:LWZ:Lemma_H.1.} in the last inequality. Finally, we apply \eqref{eq:strong-approximation:1} to complete the proof.
\qed

\vspace{5mm}
Define
\begin{align*}
    A_{1,n} := \frac{\sum_{i \in [n]} P_{ii}({\bf X})\epsilon_i^2}{\sqrt{K_n}} - B_n, \quad A_{2,n} := \frac{\sum_{i \in [n]} P_{ii}({\bf X})\mathcal{E}_i^2}{\sqrt{K_n}} - B_n, \quad B_n := \frac{\sum_{i \in [n]} P_{ii}({\bf X})\sigma_i^2}{\sqrt{K_n}},
\end{align*}
where we note that we have
\begin{align*}
    \mathbb{E}_{\eps} \left( \frac{\sum_{i \in [n]} P_{ii}({\bf X}) (\epsilon_i^2 - \sigma_i^2)}{\sqrt{K_n}} \right)^2  \lesssim \frac{\sum_{i \in [n]} P_{ii}({\bf X})^2}{K_n},
\end{align*}
and the same result holds if we replace $\eps$ with $\mathcal E$, whence
\begin{align}
    \mathbb{E}_\epsilon [A_{1,n}^2] \lesssim \tilde p_n, \quad \mathbb{E}_\epsilon [A_{2,n}^2] \lesssim \tilde p_n.
    \label{eq:Theorem_3.1:1}
\end{align}
To control $A_{1,n}$ and $A_{2,n}$, we use the following lemma.

\begin{lem}[Kolmogorov Distance Inequality]\label{lem:anti-concentration}
    Let $X_n$ and $Y_n$ be random variables. Then for any $\ell >0$,
    \begin{align*}
        \sup_{y \in \mathbb{R}} \left| \mathbb{P}(X_n+Y_n \leq y) - \mathbb{P}(X_n \leq y)  \right| \leq \sup_{y \in \mathbb{R}}\mathbb{P}(|X_n - y|\leq \ell) + \mathbb{P}(|Y_n|>\ell).
    \end{align*}
\end{lem}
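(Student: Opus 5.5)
The plan is a standard sandwich argument that splits on the event $\{|Y_n|\le \ell\}$. Fix $y\in\mathbb{R}$ and $\ell>0$, and write $\mathbb{P}(X_n+Y_n\le y)=\mathbb{P}(X_n+Y_n\le y,\ |Y_n|\le \ell)+\mathbb{P}(X_n+Y_n\le y,\ |Y_n|>\ell)$. The second term is at most $\mathbb{P}(|Y_n|>\ell)$. On the event $\{|Y_n|\le \ell\}$ we have $X_n-\ell\le X_n+Y_n\le X_n+\ell$. This gives the inclusions
\begin{align*}
\{X_n\le y-\ell\}\cap\{|Y_n|\le \ell\}\subseteq \{X_n+Y_n\le y\}\cap\{|Y_n|\le\ell\}\subseteq \{X_n\le y+\ell\}.
\end{align*}

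For the upper bound, the right inclusion yields
\begin{align*}
\mathbb{P}(X_n+Y_n\le y)\le \mathbb{P}(X_n\le y+\ell)+\mathbb{P}(|Y_n|>\ell).
\end{align*}
We then subtract $\mathbb{P}(X_n\le y)$ and bound the difference $\mathbb{P}(y<X_n\le y+\ell)$ by $\mathbb{P}(|X_n-y|\le \ell)$.

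For the lower bound, the left inclusion yields
\begin{align*}
\mathbb{P}(X_n+Y_n\le y)\ge \mathbb{P}(X_n\le y-\ell)-\mathbb{P}(|Y_n|>\ell),
\end{align*}
and $\mathbb{P}(X_n\le y)-\mathbb{P}(X_n\le y-\ell)=\mathbb{P}(y-\ell<X_n\le y)\le \mathbb{P}(|X_n-y|\le\ell)$.

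Combining the two directions gives $|\mathbb{P}(X_n+Y_n\le y)-\mathbb{P}(X_n\le y)|\le \mathbb{P}(|X_n-y|\le\ell)+\mathbb{P}(|Y_n|>\ell)$ for every $y$. Bounding $\mathbb{P}(|X_n-y|\le\ell)$ by its supremum over $y$ and then taking the supremum of the left side completes the argument. There is no real obstacle. The only point needing care is that the strip events are closed, which is why the statement uses $|X_n-y|\le\ell$. This makes the one-sided intervals $(y,y+\ell]$ and $(y-\ell,y]$ both contained in it, with no extra constant.
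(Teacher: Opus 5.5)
Your proof is correct and is essentially the paper's argument. Both split on $\{|Y_n|\le\ell\}$, on which $X_n+Y_n$ and $X_n$ differ by at most $\ell$; the paper writes this directly as the inclusions $\{X_n+Y_n\le y\}\subseteq\{X_n\le y\}\cup\{|X_n-y|\le\ell\}\cup\{|Y_n|>\ell\}$ and $\{X_n\le y\}\subseteq\{X_n+Y_n\le y\}\cup\{|X_n-y|\le\ell\}\cup\{|Y_n|>\ell\}$, whereas you pass through the shifted thresholds $y\pm\ell$, which is only a difference in bookkeeping.
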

\noindent \textbf{Proof of Lemma \ref{lem:anti-concentration}:} For any $y \in \mathbb{R}$ and $\ell>0$, note that
\begin{align*}
    \left\{ X_n + Y_n > y  \right\} \supseteq \left\{  X_n > y \right\} \cap \left\{ |X_n-y| > \ell \right\} \cap \left\{ |Y_n| \leq \ell   \right\}
\end{align*}
implies 
\begin{align*}
    \left\{ X_n + Y_n \leq y  \right\} \subseteq \left\{  X_n \leq y \right\} \cup \left\{ |X_n-y| \leq \ell \right\} \cup \left\{ |Y_n| > \ell  \right\}.
\end{align*}
Therefore,
\begin{align*}
    \mathbb{P}\left( X_n + Y_n \leq y \right) - \mathbb{P}\left( X_n \leq y  \right) \leq \mathbb{P}(|X_n - y|\leq \ell) + \mathbb{P}(|Y_n| > \ell).
\end{align*}
Similarly, we show that
\begin{align*}
    \left\{ X_n \leq y  \right\} \subseteq \left\{  X_n + Y_n \leq y \right\} \cup \left\{ |X_n-y| \leq \ell \right\} \cup \left\{ |Y_n| > \ell   \right\}.
\end{align*}
So
\begin{align*}
    \left|\mathbb{P}\left( X_n + Y_n \leq y \right) - \mathbb{P}\left( X_n \leq y  \right)\right| &\leq \mathbb{P}(|X_n - y|\leq \ell) + \mathbb{P}(|Y_n| > \ell).
\end{align*}
Taking supremum over $y \in \mathbb{R}$ yields the result.
\qed

\vspace{5mm}
For notational simplicity, we simply write $\mathbb{P}$ instead of $\mathbb{P}_\epsilon$ and $\mathbb{P}_\mathcal{E}$, and note that we take $P({\bf X})$ as fixed here. Then observe that for any $\ell>0$,
\begin{align*}
    \sup_{y \in \mathbb{R}} \left| \mathbb{P} \left( \epsilon^\top P({\bf X}) \epsilon \leq y \right) - \mathbb{P} \left( \mathcal{E}^\top P({\bf X}) \mathcal{E} \leq y \right)  \right| &= \sup_{y \in \mathbb{R}} \left| \mathbb{P} \left( \epsilon^\top P({\bf X}) \epsilon \leq \sqrt{K_n} y \right) - \mathbb{P} \left( \mathcal{E}^\top P({\bf X}) \mathcal{E} \leq \sqrt{K_n} y \right)  \right| \\
    & \leq C_\zeta \ell^{(1-\zeta)/2} + C_\zeta \left( \frac{\tilde p_n}{K_n} \right)^{(1-\zeta)/(14-2\zeta)} + \overline{C} \frac{\tilde p_n}{\ell^2},
\end{align*}
where the inequality follows from observing that $\forall \ell > 0$,
\begin{align*}
    &\sup_{y \in \mathbb{R}} \left| \mathbb{P} \left( \epsilon^\top P({\bf X}) \epsilon \leq \sqrt{K_n} y \right) - \mathbb{P} \left( \mathcal{E}^\top P({\bf X}) \mathcal{E} \leq \sqrt{K_n} y \right)  \right| \\
    & = \sup_{y \in \mathbb{R}}  \Bigg| \mathbb{P} \left(\frac{\sum_{i \in [n]} \sum_{j \neq i} P_{ij}({\bf X}) \epsilon_i \epsilon_j}{\sqrt{K_n}} +A_{1,n} \leq y -B_n\right)  -  \mathbb{P} \left( \frac{\sum_{i \in [n]} \sum_{j \neq i} P_{ij}({\bf X}) \mathcal{E}_i \mathcal{E}_j}{\sqrt{K_n}}+A_{2,n} \leq y -B_n \right)  \Bigg| \\
    & = \sup_{y \in \mathbb{R}}  \Bigg| \mathbb{P} \left(\frac{\sum_{i \in [n]} \sum_{j \neq i} P_{ij}({\bf X}) \epsilon_i \epsilon_j}{\sqrt{K_n}} +A_{1,n} \leq y \right)  -  \mathbb{P} \left( \frac{\sum_{i \in [n]} \sum_{j \neq i} P_{ij}({\bf X}) \mathcal{E}_i \mathcal{E}_j}{\sqrt{K_n}}+A_{2,n} \leq y\right)  \Bigg| \\
    & \leq \sup_{y \in \mathbb{R}}  \Bigg| \mathbb{P} \left(\frac{\sum_{i \in [n]} \sum_{j \neq i} P_{ij}({\bf X}) \epsilon_i \epsilon_j}{\sqrt{K_n}} +A_{1,n} \leq y \right)  -  \mathbb{P} \left( \frac{\sum_{i \in [n]} \sum_{j \neq i} P_{ij}({\bf X}) \epsilon_i \epsilon_j}{\sqrt{K_n}} \leq y\right)  \Bigg| \\
    & \quad + \sup_{y \in \mathbb{R}}  \Bigg| \mathbb{P} \left(\frac{\sum_{i \in [n]} \sum_{j \neq i} P_{ij}({\bf X}) \epsilon_i \epsilon_j}{\sqrt{K_n}} \leq y \right)  -  \mathbb{P} \left( \frac{\sum_{i \in [n]} \sum_{j \neq i} P_{ij}({\bf X}) \mathcal{E}_i \mathcal{E}_j}{\sqrt{K_n}} \leq y\right)  \Bigg| \\
    & \quad + \sup_{y \in \mathbb{R}}  \Bigg| \mathbb{P} \left(\frac{\sum_{i \in [n]} \sum_{j \neq i} P_{ij}({\bf X}) \mathcal{E}_i \mathcal{E}_j}{\sqrt{K_n}} +A_{2,n} \leq y \right)  -  \mathbb{P} \left( \frac{\sum_{i \in [n]} \sum_{j \neq i} P_{ij}({\bf X}) \mathcal{E}_i \mathcal{E}_j}{\sqrt{K_n}} \leq y\right)  \Bigg| \\
    & \leq \sup_{y \in \mathbb{R}}  \Bigg| \mathbb{P} \left(\frac{\sum_{i \in [n]} \sum_{j \neq i} P_{ij}({\bf X}) \epsilon_i \epsilon_j}{\sqrt{K_n}} \leq y \right)  -  \mathbb{P} \left( \frac{\sum_{i \in [n]} \sum_{j \neq i} P_{ij}({\bf X}) \mathcal{E}_i \mathcal{E}_j}{\sqrt{K_n}} \leq y\right)  \Bigg| \\
    &\quad + \sup_{y \in \mathbb{R}} \mathbb{P} \left(\left|\frac{\sum_{i \in [n]} \sum_{j \neq i} P_{ij}({\bf X}) \epsilon_i \epsilon_j}{\sqrt{K_n}} - y \right| \leq \ell \right) + \sup_{y \in \mathbb{R}} \mathbb{P} \left(\left|\frac{\sum_{i \in [n]} \sum_{j \neq i} P_{ij}({\bf X}) \mathcal{E}_i \mathcal{E}_j}{\sqrt{K_n}} - y \right| \leq \ell \right) \\
    &\quad + \mathbb{P}(|A_{1,n}|>\ell) + \mathbb{P}(|A_{2,n}|>\ell) \\
    & \leq \sup_{y \in \mathbb{R}}  \Bigg| \mathbb{P} \left(\frac{\sum_{i \in [n]} \sum_{j \neq i} P_{ij}({\bf X}) \epsilon_i \epsilon_j}{\sqrt{K_n}} \leq y \right)  -  \mathbb{P} \left( \frac{\sum_{i \in [n]} \sum_{j \neq i} P_{ij}({\bf X}) \mathcal{E}_i \mathcal{E}_j}{\sqrt{K_n}} \leq y\right)  \Bigg| \\
    & \quad + C_\zeta \ell^{(1-\zeta)/2} + C_\zeta \left( \frac{\tilde p_n}{K_n} \right)^{(1-\zeta)/(14-2\zeta)} +\mathbb{P}(|A_{1,n}|>\ell) + \mathbb{P}(|A_{2,n}|>\ell) \\
    & \leq C_\zeta \ell^{(1-\zeta)/2} + C_\zeta \left( \frac{\tilde p_n}{K_n} \right)^{(1-\zeta)/(14-2\zeta)} +\mathbb{P}(|A_{1,n}|>\ell) + \mathbb{P}(|A_{2,n}|>\ell) \\
    & \leq C_\zeta \ell^{(1-\zeta)/2} + C_\zeta \left( \frac{\tilde p_n}{K_n} \right)^{(1-\zeta)/(14-2\zeta)} + \overline{C} \frac{\tilde p_n}{\ell^2},
\end{align*}
where we used Lemma \ref{lem:anti-concentration} in the second inequality,  Lemma \ref{lem:anti-concentration:main} and Lemma \ref{lem:LWZ:Lemma_H.1.} in the third  inequality, \eqref{eq:strong-approximation:1} in the second-last inequality, and \eqref{eq:Theorem_3.1:1} in the last inequality. The proof is complete. 
\qed

\begin{rem} \label{rem:strong_approx_random_design}
    Under Assumption \ref{ass:size}, by setting $\ell = (r/n)^{1/4}$ and noting that $r \leq m = o(n)$, we have
    \begin{align*}
        \sup_{y \in \mathbb{R}} \left| \mathbb{P}_\epsilon \left( \epsilon^\top P_\theta \epsilon \leq y \mid \bf X \right) 
        - \mathbb{P}_\mathcal{E} \left( \mathcal{E}^\top P_\theta \mathcal{E} \leq y \mid \bf X \right) \right| = o_\Prob(1).
    \end{align*}
    It follows that 
    \begin{align*}
        &\quad \sup_{y \in \mathbb{R}} \left| \mathbb{P}_\epsilon \left( \epsilon^\top P_\theta \epsilon \leq y  \right) 
        - \mathbb{P}_\mathcal{E} \left( \mathcal{E}^\top P_\theta \mathcal{E} \leq y  \right) \right| \\
        &= \sup_{y \in \mathbb{R}} \left|\mathbb E \left( \mathbb{P}_\epsilon \left( \epsilon^\top P_\theta \epsilon \leq y \big| \bf X \right) 
        - \mathbb{P}_\mathcal{E} \left( \mathcal{E}^\top P_\theta \mathcal{E} \leq y \big| \bf X \right) \right) \right| \\
        &\leq \mathbb E \left( \sup_{y \in \mathbb{R}} \left| \mathbb{P}_\epsilon \left( \epsilon^\top P_\theta \epsilon \leq y \big| \bf X \right) 
        - \mathbb{P}_\mathcal{E} \left( \mathcal{E}^\top P_\theta \mathcal{E} \leq y \big| \bf X \right) \right| \right)\\
        &= o(1),
    \end{align*}
    and thus Corollary \ref{cor:strong_approx} holds for random $\bf X$ as well.
\end{rem}

\subsection{Proof of Theorem \ref{thm:size-control_fixed_design}}

Consider any sequence of DGPs $\lambda_{n_k} \in \Lambda_{n_k}$. We aim to show that
\begin{align}
    & \lim_{n_k \rightarrow \infty} \mathbb{P}_{\lambda_{n_k}} \left( \widehat{Q}(q) > \mathcal C_{\alpha} \right) = \alpha. \label{eq:assumption_B*} 
\end{align}
Any sub-sequence of $\lambda_{n_k}$ admits a further sub-sequence along which $K_\theta$ (or $r_n$, by \eqref{eq:K_lambda:summary}) is either bounded or diverging, and we will show that we have asymptotically exact size control in both cases. Then \eqref{eq:assumption_B*} satisfies \textbf{Assumption B*} of \cite{Andrews-cheng-guggenberger(2020)}. By \textbf{Corollary 2.1(c)} of their paper, Theorem \ref{thm:size-control_fixed_design} follows. Without loss of generality, we implicitly consider some arbitrary sequence $\lambda_n \in \Lambda_n$ and show that it satisfies \eqref{eq:assumption_B*}. We break the proof into two parts, part $I$ and $II$.

\vspace{5mm}
\noindent \begin{Large} \textbf{Part $I$ (Bounded $K_\theta$ case):} \end{Large}
Assume that $K_\theta$ is bounded. We break our proof into three steps.

\vspace{5mm}
\noindent {\it Step 1:} We show that in any case (bounded or diverging $K_\theta$), 
\begin{align*}
    \widehat{Q}(q) = \frac{\epsilon^\top P_\theta \epsilon}{\sum_{i \in [n]} P_{\theta,ii} \sigma_i^2} + o_\Prob(K_\theta^{-1/2}).
\end{align*}
Note that we need the remainder term to be $o_\Prob(K_\theta^{-1/2})$ instead of $o_\Prob(1)$ for the diverging $K_\theta$. Denote $\mathcal{A} \equiv R W - R \Delta$ where $W = \widehat\Theta {\bf X}^\top \epsilon / \sqrt{n}$, and we want to show that
\begin{align*}
    \frac{\mathcal{A}^\top M_\theta \mathcal{A}}{\widehat{D}} = \frac{\epsilon^\top P_\theta \epsilon }{D} + o_\Prob(K_\theta^{-1/2}),
\end{align*}
where
\begin{align*}
    \widehat{D} \equiv \sum_{i \in [n]} P_{\theta, ii} \widehat{\sigma}_i^2, \qquad D \equiv \sum_{i \in [n]} P_{\theta,ii} \sigma_i^2, \qquad M_\theta \equiv \left(R \widehat{\Theta} \widehat\Sigma \widehat\Theta^\top R^\top + \theta I_r\right)^{-1}.\quad 
\end{align*}
We have
\begin{align*}
    \frac{\mathcal{A}^\top M_\theta \mathcal{A}}{\widehat{D}} &= \frac{\epsilon^\top P_\theta \epsilon }{\widehat{D}} -  \frac{2 \epsilon^\top {\bf X} \widehat{\Theta}^\top R^\top M_\theta R \Delta}{\sqrt{n}\widehat{D}} + \frac{\Delta^\top R^\top M_\theta R \Delta}{\widehat{D}},
\end{align*}
and we claim that
\begin{align}
   &\frac{\epsilon^\top P_\theta \epsilon }{\widehat{D}} = \frac{\epsilon^\top P_\theta \epsilon }{{D}} + o_\Prob(K_\theta^{-1/2}), \\
   &\frac{ \epsilon^\top {\bf X} \widehat{\Theta}^\top R^\top M_\theta R \Delta}{\sqrt{n}\widehat{D}} = O_\Prob\left(\sqrt{\frac{m \lVert \Delta \rVert_\infty^2}{ K_\theta} }\right) = o_\Prob(K_\theta^{-1/2}),\label{eq:step1:1} \\
   & \frac{\Delta^\top R^\top M_\theta R \Delta}{\widehat{D}} = O_\Prob\left(\frac{m \lVert \Delta \rVert_\infty^2}{ K_\theta} \right) = o_\Prob(K_\theta^{-1/2}).\label{eq:step1:2}
\end{align}
First, the proof of Lemma \ref{lem:variance-estimator-consistent} implies that
\begin{align*}
    \frac{1}{K_\theta} \left( \widehat{D} - D \right) = O_\Prob \left( \frac{1+\lVert {\bf X} (\widehat{\beta} - \beta^0 ) \rVert_2}{\sqrt{n}} + \frac{\lVert {\bf X} (\widehat{\beta} - \beta^0 ) \rVert_2^2}{n} \right) = o_\Prob(r^{-1/2})
\end{align*}
by assumption. As $o_\Prob(r^{-1/2}) = o_\Prob(K_\theta^{-1/2})$ by $K_\theta \leq r_n \leq r$, this further implies that
\begin{align*}
    \frac{\epsilon^\top P_\theta \epsilon }{\widehat{D}} = \frac{\epsilon^\top P_\theta \epsilon }{{D}} \frac{D / K_\theta}{\widehat{D} / K_\theta} = \frac{\epsilon^\top P_\theta \epsilon }{{D}} \left(1 + o_\Prob\left(K_\theta^{-1/2}\right)\right) = \frac{\epsilon^\top P_\theta \epsilon }{{D}} + o_\Prob (K_\theta^{-1/2}),
\end{align*}
by $K_\theta/D = O(1)$, $\mathbb{E}_\epsilon \left[ \left|{\eps^\top P_{\theta} \epsilon}/{D}\right|\right] = 1$ and the Markov inequality. Second, observe by H\"older's inequality
\begin{align*}
    \left\lVert \epsilon^\top n^{-1/2} {\bf X} \widehat{\Theta}^\top R^\top M_\theta R \Delta \right\rVert_1 & \leq \underbrace{\left\lVert \epsilon^\top n^{-1/2} {\bf X} \widehat{\Theta}^\top R^\top M_\theta \right\rVert_2}_{\Xi_1} \underbrace{\left\lVert R \Delta \right\rVert_2}_{\Xi_{2}}.
\end{align*}
Denote $S \subseteq \{1,...,p\}$ to be the set of indexes corresponding to the non-zero column vectors of $R$. By assumption, $ |S| \equiv m$, where we recall $m$ is the number of non-zero columns in $R$, noting that $r \leq m$ by the full-ranked assumption of $R$. For $\Xi_2$, we have 
\begin{align*}
    \Xi_2 = \lVert R_S \Delta_S \rVert_2 \leq \lVert R_S \rVert_2 \lVert \Delta_S \rVert_2 = \lVert R \rVert_2 \lVert \Delta_S \rVert_2 \lesssim \lVert \Delta_S \rVert_2 \leq \sqrt{m}  \lVert \Delta_S \rVert_\infty \leq \sqrt{m}  \lVert \Delta \rVert_\infty,
\end{align*}
where $R_S \in \mathbb{R}^{r \times m}$ is the $S$-indexed column sub-matrix of $R$, and we use the fact that $\lVert R \rVert_2 = O(1)$ by assumption. For $\Xi_1$, we have that
\begin{align*}
    \Xi_1^2 \leq \frac{\epsilon^\top P_{\theta} \epsilon}{\theta} \lesssim \epsilon^\top P_{\theta} \epsilon = O_\Prob\left(K_\theta\right).
\end{align*}
where we use the fact that $\theta \gtrsim 1$ by condition {\it 4} of Theorem \ref{thm:size-control_fixed_design}, and the last equality follows from Markov inequality and
\begin{align*}
    \mathbb{E}_\epsilon \left[\eps^\top P_{\theta} \epsilon\right] = \sum_{i \in [n]} P_{\theta,ii} \sigma_i^2 \lesssim K_\theta.
\end{align*}
Combining with Lemma \ref{lem:variance-estimator-consistent} yields the first equality of \eqref{eq:step1:1}. Third, note that
\begin{align*}
    \Delta^\top R^\top M_\theta R \Delta = O_\Prob\left(m \lVert \Delta \rVert_\infty^2 \right),
\end{align*}
which follows from
\begin{align*}
     \Delta^\top R^\top M_\theta R \Delta \leq \frac{\lVert R \Delta \rVert_2^2}{\theta}  = \frac{\lVert R_S \Delta_S \rVert_2^2}{\theta} \lesssim m \lVert \Delta \rVert_\infty^2.
\end{align*}
Combining with Lemma \ref{lem:variance-estimator-consistent} yields the first equality of \eqref{eq:step1:2}. The second equality of \eqref{eq:step1:1} and \eqref{eq:step1:2} follow from assumption {\it 3} of Theorem \ref{thm:size-control_fixed_design}.

\vspace{5mm}
\noindent {\it Step 2:} We can write the rejection criteria as
\begin{align}
    \widehat{Q}(q) > q_{1-\alpha}(F_{\widetilde{w}_{n}}) + (q_{1-\alpha}(F_{\widetilde{w}_{n}})-1) \left( \frac{  {\sqrt{ K_\theta \widehat{\Phi}}}  }{\left({\sum_{i \in [n]} P_{\theta,ii} \widehat{\sigma}_i^2}\right)\sqrt{2 \sum_{i \in [r_n]} \widetilde{w}_{i,n}^2 } } -1 \right),
    \label{eq:limit:fixed_K}
\end{align} 
and we show that the second term is $o_\Prob(1)$. We note the following holds:
\begin{align*}
& (a) \quad \lVert w_n \rVert^2_2 \times \left(\sum_{i \in [n]} P_{\theta,ii} \sigma_i^2 \right)^2 = \tr (\Omega^2) \left(\sum_{i \in [n]} P_{\theta,ii} \sigma_i^2 \right)^2 =  \tr(P_\theta \Lambda P_\theta \Lambda) = \sum_{i \in [n]} \sum_{j \in [n]} P_{\theta,ij}^2 \sigma_i^2 \sigma_j^2, \\
& (b) \quad {\widehat\Phi}/{\Phi} = 1+ o_\Prob(1), \\
& (c) \quad \frac{1}{K_\theta} \sum_{i \in [n]} P_{\theta, ii} \widehat{\sigma}_i^2 = \frac{1}{K_\theta} \sum_{i \in [n]} P_{\theta, ii} \sigma_i^2 + o_\Prob(1) .
\end{align*}
where $(b)$ follows from Lemma \ref{lem:consistency:phi-hat} and the equality in $(c)$ follows from Lemma \ref{lem:variance-estimator-consistent}. Then from (c) we obtain
\begin{align*}
  & (d) \quad \frac{\frac{1}{K_\theta} \sum_{i \in [n]} P_{\theta,ii} \sigma_i^2 }{\frac{1}{K_\theta} \sum_{i \in [n]} P_{\theta,ii} \widehat{\sigma}_i^2  } = \frac{\frac{1}{K_\theta} \sum_{i \in [n]} P_{\theta,ii} \sigma_i^2  }{ \frac{1}{K_\theta} \sum_{i \in [n]} P_{\theta,ii} \sigma_i^2 + o_\Prob(1) } \overset{\Prob}{\rightarrow } 1.
\end{align*}
Furthermore, we note that
\begin{align*}
    \Phi = \frac{2}{K_\theta}\sum_{i,j \in [n]}  P_{\theta,ij}^2 \sigma_i^2 \sigma_j^2 + o(1) 
\end{align*}
by
\begin{align*}
    \frac{2}{K_\theta} \sum_{i \in [n]} P_{\theta,ii}^2 
    \sigma_i^4 \lesssim \frac{1}{K_\theta} \sum_{i \in [n]} P_{\theta,ii}^2 \leq p_n = O\left(\frac{K_\theta}{n}\right) = o(1),
\end{align*}
so that by (b) we can obtain
\begin{align*}
 &  (e) \quad \frac{\sqrt{\widehat{\Phi}} }{\sqrt{ \frac{1}{K_\theta} \sum_{i \in [n]} \sum_{j \in [n]} P_{\theta,ij}^2 \sigma_i^2 \sigma_j^2 }} = \sqrt{\frac{ \Phi}{ \frac{1}{K_\theta} \sum_{i,j \in [n]} P_{\theta,ij}^2 \sigma_i^2 \sigma_j^2}} (1+o_\Prob(1)) = \sqrt{2} + o_\Prob(1).
\end{align*}
Putting it together,
\begin{align*}
\frac{\sqrt{\widehat{\Phi}}}{\frac{1}{\sqrt{K_\theta}} \sum_{i \in [n]} P_{\theta,ii} \widehat{\sigma}_i^2 } &= \frac{\sqrt{\frac{1}{K_\theta} \sum_{i \in [n]} \sum_{j \in [n]} P_{\theta,ij}^2 \sigma_i^2 \sigma_j^2 }}{\frac{1}{\sqrt{K_\theta}} \sum_{i \in [n]} P_{\theta,ii} \sigma_i^2 } \times \frac{\frac{1}{K_\theta} \sum_{i \in [n]} P_{\theta,ii} \sigma_i^2 }{\frac{1}{K_\theta} \sum_{i \in [n]} P_{\theta,ii} \widehat{\sigma}_i^2  } \times \frac{\sqrt{\widehat{\Phi}} }{\sqrt{ \frac{1}{K_\theta} \sum_{i \in [n]} \sum_{j \in [n]} P_{\theta,ij}^2 \sigma_i^2 \sigma_j^2 }} \\
&\overset{(d),(e)}{=}  \frac{\sqrt{\frac{1}{K_\theta} \sum_{i \in [n]} \sum_{j \in [n]} P_{\theta, ij}^2 \sigma_i^2 \sigma_j^2 }}{\frac{1}{\sqrt{K_\theta}} \sum_{i \in [n]} P_{\theta, ii} \sigma_i^2 }(1+ o_\Prob(1)) (\sqrt{2} + o_\Prob(1))  \\
&= \sqrt{2} \frac{\sqrt{ \sum_{i \in [n]} \sum_{j \in [n]} P_{\theta, ij}^2 \sigma_i^2 \sigma_j^2 }}{ \sum_{i \in [n]} P_{\theta, ii} \sigma_i^2 } + o_\Prob(1) \\
& \overset{(a)}{=} \sqrt{2} \lVert w_n \rVert_2 + o_\Prob(1),
\end{align*}
so that 
\begin{align*}
    \frac{\sqrt{\widehat{\Phi}}}{\frac{1}{\sqrt{K_\theta}}\sum_{i \in [n]} P_{\theta,ii} \widehat{\sigma}_i^2} \times \frac{1}{\sqrt{2 \sum_{i \in [r_n]} \widetilde{w}_{i,n}^2  } } = \frac{\sqrt{2} \lVert w_n \rVert_2 + o_\Prob(1)}{\sqrt{2} \lVert \widetilde{w}_n \lVert_2} = 1 + o_\Prob(1),
\end{align*}
where the last equality follows from
\begin{align*}
    \left| \lVert \widetilde{w}_n \rVert_2 - \lVert w_n \rVert_2 \right| \leq \lVert \widetilde{w}_n - w_n \rVert_2 = o_\Prob(1),
\end{align*}
by Lemma \ref{lem:consistency:weights}, and that $\lVert w_n \rVert_2 \geq 1/\sqrt{r_n}$ is bounded away from zero. Therefore, 
\begin{align*}
   (q_{1-\alpha} (F_{\widetilde{w}_n}) - 1) \left( \frac{\sqrt{\widehat{\Phi}}}{\frac{1}{\sqrt{K_\theta}}\sum_{i \in [n]} P_{\theta,ii} \widehat{\sigma}_i^2}  \frac{  1 }{\sqrt{2 \sum_{i \in [r_n]} \widetilde{w}_{i,n}^2 } } -1 \right) = O_\Prob(1) \times o_\Prob(1) = o_\Prob(1),
\end{align*}
since $q_{1-\alpha} (F_{\widetilde{w}_n}) = O_\Prob(1)$ by Markov inequality.

\vspace{5mm}
\noindent {\it Step 3:} We require two additional lemmas before proceeding.

\begin{lem}
    Let $F_n$ and $G_n$ be two cumulative distribution functions, and let $q_\alpha$ be the quantile function for fixed $\tilde \alpha \in (0,1)$. Denote $q_n := q_{\tilde \alpha}(F_n)$. If
    \begin{align*}
        \sup_{x \in \mathbb R} \left| F_n(x) - G_n(x) \right| = o_\Prob (1),
    \end{align*}
    and for every $\eps > 0$, there exists $\eta(\eps) > 0$ such that
    \begin{align*}
        \mathbb P \left( \left\{F_n(q_n - \eps) \leq \tilde \alpha -\eta(\eps) \right\} \cap \left\{F_n(q_n + \eps) \geq \tilde \alpha + \eta(\eps) \right\} \right) \rightarrow 1,
    \end{align*}
    then 
    \begin{align*}
        q_{\tilde \alpha}(G_n) = q_n + o_\Prob(1).
    \end{align*}
    \label{lem:quantile_function}
\end{lem}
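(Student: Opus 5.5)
The plan is a deterministic sandwich argument on a high-probability event. Fix $\epsilon>0$ and write $\delta_n := \sup_{x}|F_n(x)-G_n(x)|$. Define the event
\[
\mathcal{B}_n(\epsilon) := \left\{F_n(q_n-\epsilon)\le \tilde\alpha-\eta(\epsilon)\right\}\cap\left\{F_n(q_n+\epsilon)\ge \tilde\alpha+\eta(\epsilon)\right\}\cap\left\{\delta_n<\eta(\epsilon)\right\}.
\]
By the two hypotheses, $\mathbb{P}(\mathcal{B}_n(\epsilon))\to 1$. For the second hypothesis this is direct. For the first, $\delta_n=o_\Prob(1)$ gives $\mathbb{P}(\delta_n<\eta(\epsilon))\to1$ because $\eta(\epsilon)>0$ is fixed.

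On $\mathcal{B}_n(\epsilon)$ I will show $|q_{\tilde\alpha}(G_n)-q_n|\le\epsilon$, using the generalized inverse $q_{\tilde\alpha}(G)=\inf\{x:G(x)\ge\tilde\alpha\}$.

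For the upper bound, note
\[
G_n(q_n+\epsilon)\ge F_n(q_n+\epsilon)-\delta_n>\tilde\alpha+\eta(\epsilon)-\eta(\epsilon)=\tilde\alpha,
\]
so $q_{\tilde\alpha}(G_n)\le q_n+\epsilon$ by definition of the infimum.

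For the lower bound, take any $x\le q_n-\epsilon$. By monotonicity,
\[
G_n(x)\le G_n(q_n-\epsilon)\le F_n(q_n-\epsilon)+\delta_n<\tilde\alpha-\eta(\epsilon)+\eta(\epsilon)=\tilde\alpha.
\]
So no such $x$ belongs to $\{x:G_n(x)\ge\tilde\alpha\}$, and hence $q_{\tilde\alpha}(G_n)\ge q_n-\epsilon$.

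Combining the two bounds, $\mathbb{P}(|q_{\tilde\alpha}(G_n)-q_n|>\epsilon)\le \mathbb{P}(\mathcal{B}_n(\epsilon)^c)\to0$ for every $\epsilon>0$, which is exactly $q_{\tilde\alpha}(G_n)=q_n+o_\Prob(1)$.

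There is no real obstacle here. The only care needed is to make the inequalities strict so that the infimum bounds hold despite possible atoms or flat stretches of $G_n$. This is handled by choosing the event with $\delta_n<\eta(\epsilon)$ strictly, and by observing that the lower bound holds uniformly over all $x\le q_n-\epsilon$, not just at that point. The random $F_n,G_n$ (data-dependent weights) cause no difficulty, since all statements are pathwise on $\mathcal{B}_n(\epsilon)$.
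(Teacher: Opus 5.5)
Your proof is correct and follows the paper's argument: on the high-probability event where both separation inequalities hold and $\sup_x|F_n(x)-G_n(x)|$ is below $\eta(\epsilon)$ (the paper uses $\eta(\epsilon)/2$), you compare $G_n$ with $F_n$ at $q_n\pm\epsilon$ to get $|q_{\tilde\alpha}(G_n)-q_n|\le\epsilon$. Your monotonicity argument over all $x\le q_n-\epsilon$ spells out a step the paper leaves implicit when it concludes $q_n-\epsilon\le q_{\tilde\alpha}(G_n)$.
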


\noindent \textbf{Proof of Lemma \ref{lem:quantile_function}:} Denote $\tilde q_n := q_{\tilde \alpha}(G_n)$ and $\Delta_n := \sup_{x \in \mathbb R} \left| F_n(x) - G_n(x) \right|$. Fix a $\eps > 0$, on the event $$\left\{F_n(q_n - \eps) \leq \tilde \alpha -\eta(\eps) \right\} \cap \left\{F_n(q_n + \eps) \geq \tilde \alpha + \eta(\eps) \right\} \cap \left\{ \Delta_n \leq \eta(\eps)/2 \right\},$$
which has probability approaching one, we have
\begin{align*}
    G_n(q_n - \eps) \leq F_n(q_n - \eps) + \Delta_n < \tilde \alpha,
\end{align*}
so that $q_n - \eps \leq \tilde q_n$. Similarly, on the same event,
\begin{align*}
    G_n(q_n + \eps) \geq F_n(q_n + \eps) - \Delta_n \geq \tilde \alpha,
\end{align*}
so that $q_n + \eps \geq \tilde q_n$. It follows that
\begin{align*}
    \mathbb P \left( \left| \tilde q_n - q_n \right| \geq \eps \right) \rightarrow 0,
\end{align*}
and since $\eps > 0$ is arbitrary, the proof is complete. \qed

\begin{lem}
    Let $F_n$ be the cumulative distribution function of $X_n = \sum_{i \in [r_n]} w_{i,n}\,\chi^2_{1,i}$ with $r_n$ bounded, where $\{\chi_{1,i}^2\}_{i \in [r_n]}$ are $r_n$ independent chi-squared random variables with one degree of freedom, and $\{w_{i,n}\}_{i \in [r_n]}$ are non-negative and sum up to one. For every $\eps > 0$, there exists $\eta(\eps) > 0$ such that
    \begin{align*}
        F_n(q_n - \eps) \leq \tilde \alpha -\eta(\eps) \quad \text{and} \quad F_n(q_n + \eps) \geq \tilde \alpha + \eta(\eps)
    \end{align*}
    for every $n$, where $q_n := q_{\tilde \alpha}(F_n)$ for fixed $\tilde \alpha \in (0,1)$.
    \label{lem:weighted_chi_squared}
\end{lem}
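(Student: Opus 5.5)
The approach is a compactness argument over the finite-dimensional simplex of admissible weights. Fix a bound $\bar r$ with $r_n \le \bar r$ for all $n$. Padding $w_n$ with zeros, each $F_n$ is the CDF of $X(w)=\sum_{i\in[\bar r]} w_i\chi^2_{1,i}$ for some $w$ in the compact simplex $\mathcal S:=\{w\in\mathbb R^{\bar r}: w_i\ge 0,\ \sum_i w_i=1\}$. It therefore suffices to prove the statement with $F_n$ replaced by $F_w$, the CDF of $X(w)$, and with $\eta(\eps)$ taken uniform over $w\in\mathcal S$.

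\emph{Step 1 (continuity and strict monotonicity).} For each $w\in\mathcal S$, $X(w)$ is a nonnegative random variable with a density on $(0,\infty)$. At least one weight $w_i\ge 1/\bar r$ is positive, so $X(w)$ is a convolution of a scaled $\chi^2_1$ density with independent nonnegative variables. Its density is therefore strictly positive on $(0,\infty)$. Hence $F_w$ is continuous and strictly increasing on $[0,\infty)$, and $q(w):=q_{\tilde\alpha}(F_w)$ is the unique solution of $F_w(x)=\tilde\alpha$. Moreover $q(w)>0$.

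\emph{Step 2 (joint continuity).} The map $(w,x)\mapsto F_w(x)$ is jointly continuous on $\mathcal S\times\mathbb R$. Indeed, $w\mapsto X(w)$ is continuous almost surely, so $w_k\to w$ gives $X(w_k)\to X(w)$ in distribution. Since the limit law is continuous, Pólya's theorem gives uniform convergence of the CDFs, and joint continuity follows. Combined with the uniqueness from Step 1, a standard argument (every subsequential limit of $q(w_k)$ solves $F_w(x)=\tilde\alpha$) shows that $w\mapsto q(w)$ is continuous. Tightness is available here since $\mathbb E X(w)=1$.

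\emph{Step 3 (uniform gap).} Define $g_-(w):=\tilde\alpha-F_w(q(w)-\eps)$ and $g_+(w):=F_w(q(w)+\eps)-\tilde\alpha$. Both are continuous on $\mathcal S$ by Step 2, and both are strictly positive by the strict monotonicity in Step 1. For $g_-$ this needs a small check: if $q(w)-\eps<0$, then $F_w(q(w)-\eps)=0<\tilde\alpha$, which is still fine. By compactness of $\mathcal S$, set $\eta(\eps):=\min_{w\in\mathcal S}\min\{g_-(w),g_+(w)\}>0$. This gives both inequalities for every $n$.

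The main obstacle is Step 2, namely continuity of $q(w)$ in the weights, including at boundary points of $\mathcal S$ where some weights vanish. This is exactly where uniqueness of the quantile (Step 1) is needed. I would handle it by the subsequence argument: take $w_k\to w$; the values $q(w_k)$ are bounded, using Markov's inequality $1-F_{w_k}(x)\le 1/x$. Joint continuity then forces any limit point $x^*$ to satisfy $F_w(x^*)=\tilde\alpha$, so $x^*=q(w)$.
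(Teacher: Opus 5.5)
Your proof is correct, but it takes a different route from the paper's. The paper argues constructively. After padding to $R$ coordinates and relabeling so that $w_{1,n}\ge 1/R$, it traps $q_n$ in a fixed interval $[\underline{c}_{\tilde\alpha},\bar c_{\tilde\alpha}]$: the lower end uses $X_n\ge \chi^2_{1,1}/R$, and the upper end uses Markov's inequality. It then writes the density of $X_n$ as $f_n(x)=\mathbb{E}[f_{1,n}(x-X_{2,n})]$ with $X_{1,n}=w_{1,n}\chi^2_{1,1}$, and restricts to the event $\{X_{2,n}\le \underline{c}_{\tilde\alpha}/4\}$. That event has probability at least $\mathbb{P}(\chi^2_1\le \underline{c}_{\tilde\alpha}/4)^{R-1}$, which gives a bound $f_n\ge c(\tilde\alpha)>0$ on $[\underline{c}_{\tilde\alpha}/2,\bar c_{\tilde\alpha}+\underline{c}_{\tilde\alpha}/2]$ uniformly in $n$. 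Integrating this density bound yields the explicit choice $\eta(\eps)=c(\tilde\alpha)\min\{\eps,\underline{c}_{\tilde\alpha}/2\}$.

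You instead replace all density estimates by a soft compactness argument on the simplex. Strict positivity of the density gives a unique quantile. P\'{o}lya's theorem together with this uniqueness gives continuity of $w\mapsto q(w)$. The gaps $g_\pm$ are then continuous and strictly positive on a compact set, so they attain a positive minimum.

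The paper's route buys a quantitative statement: a uniform density lower bound near the quantile, and an $\eta(\eps)$ that is linear in $\eps$ for small $\eps$. Your route is shorter and needs no relabeling trick. It handles boundary points of the simplex, where some weights vanish, with no extra work. It also applies verbatim to any compact, continuously parametrized family of laws with continuous, strictly increasing CDFs. The cost is that it gives no explicit form for $\eta(\eps)$.

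Both arguments use the uniform bound on $r_n$ in the same essential way. For you it makes the parameter set compact; for the paper it keeps $1/R$ and the $(R-1)$-fold product bounded away from zero.
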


\noindent \textbf{Proof of Lemma \ref{lem:weighted_chi_squared}:} Let $R < \infty$ be an upper bound for $r_n$. Padding with zeros if necessary, we can write $X_n = \sum_{i \in [R]} w_{i,n}\,\chi^2_{1,i}$, where $\{\chi_{1,i}^2\}_{i \in [R]}$ are $R$ independent chi-squared random variables with one degree of freedom.  Because $\{w_{i,n}\}_{i \in [R]}$ are non-negative and sum up to one, we may assume without loss of generality that $w_{1,n} \geq 1/R$. This can be made uniformly in $n$ by switching labels if necessary. By choosing $\underline c_{\tilde \alpha}$ small enough, we have
\begin{align*}
    \mathbb P \left( X_n \leq \underline c_{\tilde \alpha} \right) \leq \mathbb P \left( \chi^2_{1,1} \leq R \underline c_{\tilde \alpha} \right) < \tilde \alpha, 
\end{align*}
so that $q_n \geq \underline c_{\tilde \alpha}$. On the other hand, by choosing $\bar c_{\tilde \alpha}$ large enough, we have
\begin{align*}
    \mathbb P \left( X_n \leq \bar c_{\tilde \alpha} \right) = 1 - \mathbb P \left( X_n >  \bar c_{\tilde \alpha} \right) \geq 1 - \frac{1}{\bar c_{\tilde \alpha}} \geq \tilde \alpha
\end{align*}
by Markov inequality, so that $q_n \leq \bar c_{\tilde \alpha}$. Let $X_{1,n} := w_{1,n} \chi^2_{1,1}$ and $X_{2,n} := \sum_{i=2}^R w_{i,n}\,\chi^2_{1,i}$, then by conditioning on $X_{2,n}$, we can write the density function of $F_n$ as
\begin{align*}
    f_n(x) = \mathbb E_{X_{2,n}} \left[ f_{1,n} \left( x - X_{2,n} \right) \right],
\end{align*}
where $f_{1,n}(x)$ is the density function of $X_{1,n}$. For $x \in [\underline c_{\tilde \alpha}/2,\bar c_{\tilde \alpha}+\underline c_{\tilde \alpha}/2]$, and on the event $X_{2,n} \leq \underline c_{\tilde \alpha}/4$, we have $x - X_{2,n} \in [\underline c_{\tilde \alpha}/4,\bar c_{\tilde \alpha}+\underline c_{\tilde \alpha}/2]$ and note that $w_{1,n} \in [1/R, 1]$, so that $f_{1,n}(x-X_{2,n})$ can be lower bounded on this compact set. In addition, we have
\begin{align*}
    \mathbb P \left(X_{2,n} \leq \underline c_{\tilde \alpha}/4 \right) \geq \mathbb P \left(\bigcap_{i=2}^{R} \{ \chi^2_{1,i} \leq \underline c_{\tilde \alpha}/4 \} \right) = \mathbb P \left( \chi^2_{1} \leq \underline c_{\tilde \alpha}/4 \right)^{R-1}
\end{align*}
which can also be lower bounded as $R$ is bounded. Therefore, we have
\begin{align*}
    f_n(x) \geq \mathbb E_{X_{2,n}} \left[ f_{1,n} \left( x - X_{2,n} \right) 1_{\{ X_{2,n} \leq \underline c_{\tilde \alpha}/4 \} } \right] \geq c(\tilde \alpha) > 0, \quad x \in [\underline c_{\tilde \alpha}/2,\bar c_{\tilde \alpha} +\underline c_{\tilde \alpha}/2],
\end{align*}
for some $c(\tilde \alpha)$ independent of $n$. Finally, for every $\eps > 0$, let $\delta = \min\{\eps, \underline c_{\tilde \alpha}/2\}$, so that $[q_n - \delta, q_n + \delta] \subset [\underline c_{\tilde \alpha}/2,\bar c_{\tilde \alpha}+\underline c_{\tilde \alpha}/2]$. We have
\begin{align*}
    \tilde \alpha - F_n(q_n - \eps) \geq \tilde \alpha - F_n(q_n - \delta) = \int_{q_n-\delta}^{q_n} f_n(x) dx \geq c(\tilde \alpha) \delta,
\end{align*}
and
\begin{align*}
    F_n(q_n + \eps) - \tilde \alpha \geq F_n(q_n + \delta) - \tilde \alpha = \int_{q_n}^{q_n+\delta} f_n(x) dx \geq c(\tilde \alpha) \delta,
\end{align*}
where we also use the fact that $F_n(q_n) = \tilde \alpha$. Thus we can choose $\eta(\eps) = c(\tilde \alpha) \min\{\eps, \underline c_{\tilde \alpha}/2\}$. This concludes the proof. 
\qed

\vspace{5mm}
From {\it Step 1} and {\it Step 2}, we can write \eqref{eq:limit:fixed_K} as 
\begin{align*}
    \frac{\epsilon^\top P_\theta \epsilon}{\sum_{i \in [n]} P_{\theta,ii} \sigma_i^2} > q_{1-\alpha}(F_{\widetilde{w}_{n}}) + R_n,
\end{align*}
with $R_n = o_\Prob(1)$. By using \cite[Theorem 1.5]{rudelson2015small}, similar to the proof of 
\cite[Lemma H.1]{LWZ2024}, we have the anti-concentration inequality
\begin{align}
    \sup_{y \in \mathbb{R}} \mathbb{P}\left( \left|\sum_{i \in [r_n]} w_{i,n}\,\chi^2_{1,i} - y\right| \leq \ell  \right) \leq C_\zeta \ell^{(1-\zeta)/2}
    \label{eq:Eps_anti}
\end{align}
for any $\ell >0$ and $\zeta \in (0,1)$, where $C_\zeta$ only depends on $\zeta$. From Lemma \ref{lem:consistency:weights}, we have
\begin{align*}
    \sum_{i \in [r_n]} \widetilde{w}_{i,n}\,\chi^2_{1,i} = \sum_{i \in [r_n]} w_{i,n}\,\chi^2_{1,i} + o_\Prob(1),
\end{align*}
where we also use the fact that $r_n$ is bounded. By using the same argument as in the proof of 
Lemma \ref{lem:anti-concentration}, we have
\begin{align*}
        &\quad \sup_{y \in \mathbb{R}} \left| \mathbb{P}\left(\sum_{i \in [r_n]} \widetilde{w}_{i,n}\,\chi^2_{1,i}\leq y \bigg| Y \right) - \mathbb{P} \left(\sum_{i \in [r_n]} w_{i,n}\,\chi^2_{1,i} \leq y \right)  \right| \\
        &\leq \sup_{y \in \mathbb{R}}\mathbb{P}\left( \left|\sum_{i \in [r_n]} w_{i,n}\,\chi^2_{1,i} - y \right|\leq \ell \right) + \mathbb{P}\left( \left|\sum_{i \in [r_n]} \widetilde{w}_{i,n}\,\chi^2_{1,i} - \sum_{i \in [r_n]} w_{i,n}\,\chi^2_{1,i} \right|>\ell \bigg| Y  \right).
    \end{align*}
Together with \eqref{eq:Eps_anti}, this implies that the Kolmogorov distance between the distribution of $F_{\widetilde{w}_n} $ conditional on the data and the distribution of $F_{w_n}$ is $o_\Prob(1)$, which further implies that $q_{1-\alpha}(F_{\widetilde{w}_n}) = q_{1-\alpha}(F_{w_n}) + o_\Prob(1)$, in view of Lemmas \ref{lem:quantile_function} and \ref{lem:weighted_chi_squared}. In addition, we note that
\begin{align*}
    \frac{\mathcal{E}^\top P_\theta \mathcal{E}}{\sum_{i \in [n]}P_{\theta,ii}\sigma_i^2} &= \frac{\mathcal{Z}^\top \Lambda^{1/2} P_\theta \Lambda^{1/2} \mathcal{Z}}{\sum_{i \in [n]}P_{\theta,ii}\sigma_i^2} = \mathcal{Z}^\top \mathcal{G}_n \mathcal{Z},
\end{align*}
where $\mathcal{Z}\overset{d}{=} \mathcal{N}(0,I_n)$, $\Lambda = \diag (\sigma_1^2, \cdots, \sigma_n^2)$ and 
\begin{align*}
    \mathcal{G}_n = \frac{ \Lambda^{1/2} P_\theta \Lambda^{1/2}}{\sum_{i \in [n]}P_{\theta,ii}\sigma_i^2} = \frac{ \Lambda^{1/2} {\bf X} \widehat{\Theta}^\top R^\top (R \widehat{\Theta} \widehat{\Sigma} \widehat{\Theta}^\top R^\top + \theta I_r)^{-1} R \widehat{\Theta} {\bf X}^\top \Lambda^{1/2}}{n \sum_{i \in [n]}P_{\theta,ii}\sigma_i^2}.
\end{align*}
By Theorem 1.3.22 of \cite{Horn-Johnson-Matrix-Analysis}, the matrix $\mathcal{G}_n$ has the same $r_n$ non-zero eigenvalues as $\Omega$, where the remaining $(n - r_n)$ eigenvalues of $\mathcal{G}_n$ are zero. Then by singular value decomposition, we have 
\begin{align*}
    \mathcal{G}_n = U D^* U^\top,
\end{align*}
where $UU^\top = U^\top U = I_n$ and $D^* = \diag(w_{1,n},...,w_{r_n,n}, \mathbf{0}_{n - r_n})$ collects the eigenvalues of $\mathcal{G}_n$. Since $U^\top \mathcal{Z} \overset{d}{=} \mathcal{N}(0,I_n)$, we obtain
\begin{align*}
     \mathcal{Z}^\top \mathcal{G}_n \mathcal{Z} = \mathcal{Z}^\top U D^* U^\top \mathcal{Z} \overset{d}{=} \sum_{i \in [r_n]} w_{i,n} \chi^2_{1,i}.
\end{align*}
By Corollary \ref{cor:strong_approx} and \eqref{eq:Eps_anti}, we have
\begin{align}
    \sup_{y \in \mathbb{R}} \mathbb{P}\left( \left|\frac{\epsilon^\top P_\theta \epsilon}{\sum_{i \in [n]} P_{\theta,ii} \sigma_i^2} - y\right| \leq \ell  \right) \leq C'_\zeta \ell^{(1-\zeta)/2} + o(1),
    \label{eq:eps_anti}
\end{align}
as in the proof of Lemma \ref{lem:anti-concentration:main}, where the $o(1)$ term does not depend on $y \in \mathbb{R}$. It follows that, with $\tilde R_n = R_n + o_\Prob(1)$,
\begin{align*}
    \mathbb P \left( \widehat Q(q) > \mathcal{C}_{\alpha} \right) &= \mathbb P \left(  \frac{\epsilon^\top P_\theta \epsilon}{\sum_{i \in [n]} P_{\theta,ii} \sigma_i^2} > q_{1-\alpha}(F_{w_{n}}) + \tilde R_n \right) \\
    &= \mathbb P \left(  \frac{\epsilon^\top P_\theta \epsilon}{\sum_{i \in [n]} P_{\theta,ii} \sigma_i^2} > q_{1-\alpha}(F_{w_{n}}) \right) + o(1) \\
    &= \mathbb P \left(  \frac{\mathcal{E}^\top P_\theta \mathcal{E}}{\sum_{i \in [n]}P_{\theta,ii}\sigma_i^2} > q_{1-\alpha}(F_{w_{n}}) \right) + o(1) \\
    &= \mathbb P \left(  \sum_{i \in [r_n]} w_{i,n}\,\chi^2_{1,i} > q_{1-\alpha}(F_{w_{n}})  \right) + o(1) \\
    &= \alpha + o(1).
\end{align*}
where the second equality follows from Lemma \ref{lem:anti-concentration} (with an additional $o(1)$ term) and \eqref{eq:eps_anti}, and the third equality follows from Corollary \ref{cor:strong_approx}. This completes the proof of \eqref{eq:assumption_B*}.

\vspace{5mm}
\noindent \begin{Large} \textbf{Part $II$ (Diverging $K_\theta$ case):} \end{Large} Assume now that $K_\theta \rightarrow \infty$. We break our proof into three steps.

\vspace{5mm}
\noindent {\it Step 1:} We claim that
\begin{align}
    \frac{F_{\widetilde{w}_n} - 1}{\sqrt{2 \sum_{i \in [r_n]} \widetilde{w}_{i,n}^2 } } \convD \mathcal{N}(0,1)
    \label{eq:distributional_conV}
\end{align}
in probability conditional on the data. Note that, conditional on the data, $F_{\widetilde{w}_n}$ has mean one and variance $2 \sum_{i \in [r_n]} \widetilde{w}_{i,n}^2$. Additionally, we claim that 
\begin{align}
    \frac{\max_i \widetilde{w}_{i,n}^2  }{\sum_{i \in [r_n]} \widetilde{w}_{i,n}^2}  = o_\Prob(1).
    \label{eq:ratio_max_w_sum_w}
\end{align}
By the AM-QM inequality, $\sum_{i \in [r_n]} \widetilde{w}_{i,n} = \tr (\widehat\Omega) = 1$, Lemma \ref{lem:consistency:weights}, and $K_\theta \asymp r_n$ (c.f. \eqref{eq:K_lambda:summary}), we have 
\begin{align*}
    \frac{\max_{i \in [r_n]} \widetilde{w}_{i,n}^2}{\sum_{i \in [r_n]} \widetilde{w}_{i,n}^2} \leq \frac{\max_{i\in [r_n]} \widetilde w_{i,n}^2}{r_n \left(\sum_{i \in [r_n]} \widetilde{w}_{i,n} / r_n\right)^2} = r_n \max_{i \in [r_n]} \widetilde{w}_{i,n}^2 = o_\Prob(\frac{r_n}{K_\theta}) = o_\Prob(1), 
\end{align*}
showing that \eqref{eq:ratio_max_w_sum_w} holds. To show \eqref{eq:distributional_conV} conditional on the data, we apply the Lindeberg-Feller Central Limit Theorem (CLT). The conditions of the CLT are satisfied by noting that for any $\eta>0$, $T_{i,n} \equiv \frac{ \widetilde{w}_{i,n} (\chi^2_{1,i}-1)}{\sqrt{2 \sum_{i \in [r_n]} \widetilde{w}_{i,n}^2}} $ and $a_{i,n} \equiv \frac{2 \sum_{i \in [r_n]}\widetilde{w}_{i,n}^2}{\widetilde{w}_{i,n}^2}$ for $i \in \mathcal I_n$ where $\mathcal I_n = \left\{i \in [r_n]: \widetilde{w}_{i,n} > 0\right\}$,
\begin{align*}
    \sum_{i \in [r_n]} \mathbb{E} \left[ T_{i,n}^2 \mathbb{1}\{ |T_{i,n}| > \eta \} \right] &= \sum_{i \in \mathcal I_n} \frac{\widetilde{w}_{i,n}^2}{2 \sum_{i \in [r_n]}\widetilde{w}_{i,n}^2} \mathbb{E}\left[ (\chi^2_{1,i}-1)^2 \mathbb{1}\left\{ (\chi^2_{1,i}-1)^2 \geq \eta^2\; a_{i,n} \right\}\right] \\
    &\lesssim \max_{i \in \mathcal I_n}  \mathbb{E}\left[ (\chi^2_{1}-1)^2 \mathbb{1}\left\{ (\chi^2_{1}-1)^2 \geq \eta^2\;a_{i,n} \right\}\right] \\
    &= \mathbb{E}\left[ (\chi^2_{1}-1)^2 \mathbb{1}\left\{ (\chi^2_{1}-1)^2 \geq \eta^2\;\min_{i \in \mathcal I_n} a_{i,n} \right\}\right].
\end{align*}
By the dominated convergence theorem and $\mathbb E(\chi^2_1-1)^2 = O(1)$, we have $\mathbb{E}\left[ (\chi^2_{1}-1)^2 \mathbb{1}\left\{ (\chi^2_{1}-1)^2 \geq \eta^2\;a \right\}\right] \rightarrow 0$ as $a \rightarrow \infty$. It follows that 
\begin{align*}
    \sum_{i \in [r_n]} \mathbb{E} \left[ T_{i,n}^2 \mathbb{1}\{ |T_{i,n}| > \eta \} \right] \convP 0
\end{align*}
by noting that
\begin{align*}
    \min_{i \in \mathcal I_n} a_{i,n} = \frac{2 \sum_{i \in [r_n]}\widetilde{w}_{i,n}^2}{\max_{i \in [r_n]} \widetilde{w}_{i,n}^2}  \overset{\mathbb{P}}{\rightarrow} \infty
\end{align*}
by \eqref{eq:ratio_max_w_sum_w}. 

\vspace{5mm}
\noindent {\it Step 2:} We claim that
\begin{align}
     \frac{ \frac{1}{\sqrt{K_\theta}} \sum_{i \in [n]} P_{\theta,ii} \widehat{\sigma}_i^2 }{\sqrt{\widehat{\Phi} } } \left( \widehat{Q}(q) - 1 \right) &= \frac{\epsilon^\top P_\theta \epsilon - \sum_{i \in [n]} P_{\theta,ii} \sigma_i^2}{\sqrt{ \widehat{\Phi} K_\theta}} + o_\Prob(1) \notag \\
     &= \frac{\sum_{i \in [n], j \neq i} P_{\theta,ij} \epsilon_i \epsilon_j }{\sqrt{\widehat{\Phi} K_\theta}} + o_\Prob(1) \notag \\
     &\convD \mathcal{N}\left(0, 1 \right).
     \label{eq:moving_K_convergence}
 \end{align}
Indeed, the first equality follows from Step 1 of the bounded $K_\theta$ case and Lemma \ref{lem:consistency:phi-hat}. For 
the second equality, note that
\begin{align*}
    \frac{\sum_{i \in [n]} P_{\theta,ii} (\epsilon_i^2 - \sigma_i^2)}{\sqrt{K_\theta}} = o_\Prob(1)
\end{align*}
by Markov inequality. Then, by Lemma \ref{lem:variance-estimator-consistent} and the fact that
\begin{align*}
    \Phi \gtrsim \frac{\sum_{i \in [n]}\sum_{j \neq i}P_{\theta,ij}^2}{K_\theta} \gtrsim 1 
\end{align*}
by condition {\it 4} of Theorem \ref{thm:size-control_fixed_design}, we have
\begin{align*}
    \frac{\sum_{i \in [n]} P_{\theta,ii} (\epsilon_i^2 - \sigma_i^2)}{\sqrt{\widehat{\Phi}K_\theta}} &= \frac{\sum_{i \in [n]} P_{\theta,ii} (\epsilon_i^2 - \sigma_i^2)}{\sqrt{K_\theta}} \sqrt{\frac{\Phi}{\widehat\Phi}} \frac{1}{\sqrt{\Phi}} = o_\Prob(1) \left(1+o_\Prob(1)\right) O(1) = o_\Prob(1).
\end{align*}
Finally, note that by \cite[Theorem 3.1]{dovi2024ridge}
 \begin{align*}
     \frac{\sum_{i \in [n], j \neq i} P_{\theta,ij} \epsilon_i \epsilon_j }{\sqrt{\Phi K_\theta}} \convD \mathcal{N}(0,1), 
 \end{align*}
which is justified because 
\begin{align*}
    \frac{1}{r_n} \sum_{i \in [n]} \sum_{j \neq i} P_{\theta,ij}^2 \gtrsim 1,
\end{align*}
by the fact that $K_\theta \asymp r_n$ and condition {\it 4} of Theorem \ref{thm:size-control_fixed_design}. Therefore, by Lemma \ref{lem:consistency:phi-hat}, 
\begin{align*}
    \frac{\sum_{i \in [n], j \neq i} P_{\theta,ij} \epsilon_i \epsilon_j }{\sqrt{\widehat{\Phi} K_\theta}} = \frac{\sum_{i \in [n], j \neq i} P_{\theta,ij} \epsilon_i \epsilon_j }{\sqrt{\Phi K_\theta}} \sqrt{\frac{\Phi}{\widehat\Phi}} = \frac{\sum_{i \in [n], j \neq i} P_{\theta,ij} \epsilon_i \epsilon_j }{\sqrt{\Phi K_\theta}} (1+o_\Prob(1)) \convD \mathcal{N}(0,1).
\end{align*}

\vspace{5mm}
\noindent {\it Step 3: } We can re-express our rejection criteria $\widehat{Q}(q)>\mathcal{C}_\alpha$ as
\begin{align*}
    \frac{ \frac{1}{\sqrt{K_\theta}} \sum_{i \in [n]} P_{\theta,ii} \widehat{\sigma}_i^2 }{\sqrt{\widehat{\Phi} } } \left( \widehat{Q}(q) - 1 \right) > q_{1-\alpha} \left( \frac{F_{\widetilde{w}_n} - 1}{\sqrt{2 \sum_{i \in [r_n]} \widetilde{w}_{i,n}^2}} \right).
\end{align*}
By \cite[Example 1.4.7]{VW96}, 
\begin{align*}
    \left( \frac{ \frac{1}{\sqrt{K_\theta}} \sum_{i \in [n]} P_{\theta,ii} \widehat{\sigma}_i^2 }{\sqrt{\widehat{\Phi} } } \left( \widehat{Q}(q) - 1 \right), q_{1-\alpha} \left( \frac{F_{\widetilde{w}_n} - 1}{\sqrt{2 \sum_{i \in [r_n]} \widetilde{w}_{i,n}^2}} \right) \right) \convD \left(\mathcal N , q_{1-\alpha} (\mathcal N) \right),
\end{align*}
with $\mathcal N \overset{d}{=} \mathcal N(0,1)$, by using \eqref{eq:distributional_conV} and \eqref{eq:moving_K_convergence}. Then, an application of \cite[Theorem 1.3.6]{VW96} yields
\begin{align*}
    \frac{ \frac{1}{\sqrt{K_\theta}} \sum_{i \in [n]} P_{\theta,ii} \widehat{\sigma}_i^2 }{\sqrt{\widehat{\Phi} } } \left( \widehat{Q}(q) - 1 \right) - q_{1-\alpha} \left( \frac{F_{\widetilde{w}_n} - 1}{\sqrt{2 \sum_{i \in [r_n]} \widetilde{w}_{i,n}^2}} \right) \convD \mathcal N - q_{1-\alpha} (\mathcal N).
\end{align*}
Applying \cite[Theorem 1.3.4(vi)]{VW96} yields \eqref{eq:assumption_B*}. The proof is complete. 
\qed

\subsection{Proof of Proposition \ref{prop:ridge_parameter}} \label{subsec:ridge_parameter}

Let $D_1 \geq D_2 \geq \cdots \geq D_{r_n}>0$ be the non-zero singular values of $M \equiv { R \widehat\Theta {\bf X}^\top}/{\sqrt{n}}$. We require two additional lemmas for the proof. 

\begin{lem}
    Suppose there exists constants of $0<t_0,u_0<\infty$ such that,
    \begin{enumerate}
        \item $\frac{\lVert M \rVert_F^2}{r_n} \geq t_0$ and
        \item $D_1 \leq u_0$.
    \end{enumerate}
    Then for any fixed  $\varpi \in \left(0, \min \left\{1,\frac{t_0}{(1+\mu) u_0^2}\right\}\right]$ with $\mu > 0$, we have $\theta \geq \underline{c}$ for some $\underline{c} > 0$. In addition, assume that $\inf_{i \in [n]} \sigma_i^2 \geq \underline{c} >0$, then if we use the $\varpi$ defined above for $\theta$, we have 
    \begin{align*}
        \Phi \geq 2\underline{c}^2 \left(\frac{t_0^2}{4 u_0^4} - p_n\right),
    \end{align*}
    where $p_n = \max_{i \in [n]} P_{\theta,ii}$.
    \label{lem:ridge_parameter_sufficient_condition}
\end{lem}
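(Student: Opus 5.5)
The proof rests on the spectral representation of $P_\theta$ from the proof of Lemma~\ref{lem:Ktheta_and_rn}. There $P_\theta=Q\tilde{\mathcal D}Q^\top$ with eigenvalues $d_i:=D_i^2/(D_i^2+\theta)$ for $i\in[r_n]$. Moreover, $\theta$ is the $k$-th largest eigenvalue of $R\widehat\Theta\widehat\Sigma\widehat\Theta^\top R^\top=MM^\top$, where $k:=\lceil\varpi r_n\rceil$, so $\theta=D_k^2$. Both claims then reduce to elementary counting on the singular values, using $\lVert M\rVert_F^2=\sum_{i\in[r_n]}D_i^2\ge t_0 r_n$ and $D_i^2\le u_0^2$.

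\emph{Lower bound on $\theta$.} First split the Frobenius norm at the index $k$. Since $k-1<\varpi r_n$ and $\varpi\le t_0/((1+\mu)u_0^2)$, we have
\begin{align*}
\sum_{i<k}D_i^2\le (k-1)u_0^2\le \varpi r_n u_0^2\le \frac{t_0 r_n}{1+\mu}.
\end{align*}
The remaining terms satisfy $\sum_{i\ge k}D_i^2\le r_n D_k^2$ because the $D_i$ are decreasing. Combining the two pieces gives $r_nD_k^2\ge t_0r_n-t_0r_n/(1+\mu)$. Hence $\theta=D_k^2\ge t_0\mu/(1+\mu)=:\underline c>0$. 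The condition $\varpi\le1$ ensures that $k\le r_n$, so this index is well defined.

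\emph{Lower bound on $\Phi$.} Using $\sigma_i^2\ge\underline c$, I write
\begin{align*}
\Phi\ \ge\ \frac{2\underline c^2}{K_\theta}\sum_{i}\sum_{j\ne i}P_{\theta,ij}^2
=2\underline c^2\Big(\frac{\tr(P_\theta^2)}{K_\theta}-\frac{\sum_i P_{\theta,ii}^2}{K_\theta}\Big).
\end{align*}
The subtracted term is at most $p_n$, since $\sum_iP_{\theta,ii}^2\le p_n K_\theta$.

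For the main term, Cauchy--Schwarz on the eigenvalues gives $\tr(P_\theta^2)=\sum_i d_i^2\ge K_\theta^2/r_n$, so $\tr(P_\theta^2)/K_\theta\ge K_\theta/r_n$. It remains to bound $K_\theta/r_n$ from below. Since $\theta=D_k^2\le u_0^2$, each denominator satisfies $D_i^2+\theta\le 2u_0^2$. Therefore
\begin{align*}
K_\theta\ge \frac{\lVert M\rVert_F^2}{2u_0^2}\ge \frac{t_0 r_n}{2u_0^2}.
\end{align*}
Finally, $t_0r_n\le\lVert M\rVert_F^2\le r_nu_0^2$ gives $t_0/u_0^2\le1$. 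Hence $t_0/(2u_0^2)\ge t_0^2/(2u_0^4)\ge t_0^2/(4u_0^4)$, which yields the stated bound $\Phi\ge 2\underline c^2\big(t_0^2/(4u_0^4)-p_n\big)$.

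The argument is short throughout. The only delicate step is the counting bound for $\theta$: the choice $k=\lceil\varpi r_n\rceil$ must be matched with the upper bound on $\varpi$ so that the top $k-1$ singular values cannot absorb the whole Frobenius mass. This is exactly where the factor $1+\mu$ enters.
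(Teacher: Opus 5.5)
Your proposal is correct and follows essentially the same route as the paper: the same identification $\theta = D_{\lceil\varpi r_n\rceil}^2$ with the Frobenius-mass split at that index for the $\theta$ bound, and the same ingredients ($\sigma_i^2\ge\underline c$, $\sum_i P_{\theta,ii}^2\le p_nK_\theta$, $D_i^2+\theta\le 2u_0^2$, Cauchy--Schwarz) for the $\Phi$ bound. The only difference is the order of steps. You apply Cauchy--Schwarz to the eigenvalues $D_i^2/(D_i^2+\theta)$ of $P_\theta$ and then bound $K_\theta/r_n$, which gives the sharper constant $t_0/(2u_0^2)$ and needs your extra observation $t_0\le u_0^2$ to recover the stated one; the paper instead bounds each eigenvalue below by $D_i^2/(2u_0^2)$, applies Cauchy--Schwarz to $\sum_i D_i^4$, and finishes with $K_\theta\le r_n$.
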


\noindent \textbf{Proof of Lemma \ref{lem:ridge_parameter_sufficient_condition}:} For notational simplicity, denote $k_n \equiv \lceil \varpi r_n \rceil$ and observe
\begin{align*}
    \lVert M \rVert_F^2 =\sum_{i \in [r_n]} D_i^2 &\leq (k_n-1)D_1^2 + (r_n - k_n+1) D_{k_n}^2 \\
    &= r_n D^2_{k_n} + (k_n-1)(D_1^2 - D^2_{k_n}) \\
    &\leq r_n D^2_{k_n} + \varpi r_n D_1^2,
\end{align*}
where we used $k_n \leq \varpi r_n + 1$ in the last inequality. Rearranging yields
\begin{align*}
    \theta = D_{k_n}^2 \geq \frac{\lVert M \rVert_F^2}{r_n} - \varpi D_1^2 \geq t_0 - \frac{t_0}{1+\mu} \geq \frac{\mu}{1+\mu} t_0,
\end{align*}
by the assumptions that $\varpi \leq \frac{t_0}{(1+\mu) u_0^2}$ and $D_1 \leq u_0$. This proves the first part of the lemma.

\vspace{5mm}
For the second part, observe
\begin{align*}
    \sum_{i \in [n]} \sum_{j \neq i} P_{\theta,ij}^2 = \sum_{i \in [n]} \sum_{j \in [n]} P_{\theta,ij}^2 - \sum_{i \in [n]} P_{\theta,ii}^2,
\end{align*}
where we note
\begin{align*}
    \sum_{i \in [n]} \sum_{j \in [n]} P_{\theta,ij}^2 = \tr(P_\theta^2) &= \sum_{i \in [r_n]} \left( \frac{D_i^2}{D_i^2+\theta} \right)^2 \\
    &\geq \frac{1}{(2u_0^2)^2} \sum_{i \in [r_n]} D_i^4 \\
    &\geq \frac{1}{4u_0^4r_n} \left( \sum_{i \in [r_n]} D_i^2 \right)^2 \\
    &\geq \frac{t_0^2}{4u_0^4} r_n \\
    &\geq \frac{t_0^2}{4u_0^4} K_\theta,
\end{align*}
where we use the definition of $\theta$ and the assumption that $D_1 \leq u_0$ in the first inequality, Cauchy-Schwarz inequality in the second inequality, the assumption that $\frac{\lVert M \rVert_F^2}{r_n} \geq t_0$ in the third inequality, and Lemma \ref{lem:Ktheta_and_rn} in the last inequality.
We also have
\begin{align*}
    \sum_{i \in [n]} P_{\theta,ii}^2 \leq p_n K_\theta,
\end{align*}
whence
\begin{align*}
    \frac{1}{K_\theta} \sum_{i \in [n]} \sum_{j \neq i} P_{\theta,ij}^2 \geq \frac{t_0^2}{4u_0^4} - p_n.
\end{align*}
By the assumption that $\inf_{i \in [n]} \sigma_i^2 \geq \underline{c}>0$, the result follows.
\qed

\begin{lem}\label{lem:op_norm_bound}
    Suppose Assumption \ref{ass:size} holds, and $\lambda_j \asymp \sqrt{\log(np)/n}$ (uniformly in $j \in [p]$) are suitably chosen, then we have 
    \begin{align*}
        \left\lVert R \widehat \Theta \widehat \Sigma \widehat \Theta^\top R^\top - R \Theta R^\top \right\rVert_2 = o_\mathbb{P}(1).
    \end{align*}    
\end{lem}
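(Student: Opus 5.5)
We will prove Lemma~\ref{lem:op_norm_bound} by reducing everything to the column block indexed by the set $S$ of nonzero columns of $R$, with $|S|\le m$. Write $R=R_S E_S^\top$, where $E_S$ selects the coordinates in $S$. Then
\begin{align*}
R\widehat\Theta\widehat\Sigma\widehat\Theta^\top R^\top - R\Theta R^\top = R_S\bigl(\widehat\Theta_S\widehat\Sigma\widehat\Theta_S^\top - \Theta_S\Sigma\Theta_S^\top\bigr)R_S^\top ,
\end{align*}
where $\widehat\Theta_S,\Theta_S\in\mathbb{R}^{m\times p}$ are the rows of $\widehat\Theta$ and $\Theta$ indexed by $S$, and we have used $\Theta\Sigma\Theta^\top=\Theta$. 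Since $\lVert R_S\rVert_2^2=\lambda_{\max}(RR^\top)\le\overline{c}$, it is enough to show that the $m\times m$ matrix $A:=\widehat\Theta_S\widehat\Sigma\widehat\Theta_S^\top-\Theta_S\Sigma\Theta_S^\top$ satisfies $\lVert A\rVert_2=o_\Prob(1)$. We will bound $\lVert A\rVert_2\le m\lVert A\rVert_{\max}$, and then show $\lVert A\rVert_{\max}=o_\Prob(1/m)$.

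We split $A$ into three pieces:
\begin{align*}
A = \bigl(\widehat\Theta_S\widehat\Sigma\widehat\Theta_S^\top - \widehat\Theta_S E_S\bigr) + \bigl(\widehat\Theta_S E_S - \Theta_S E_S\bigr) + \bigl(\Theta_S E_S-\Theta_S\Sigma\Theta_S^\top\bigr).
\end{align*}
The third piece vanishes, because $\Theta_S\Sigma\Theta_S^\top=\Theta_S E_S$ is the $S\times S$ block of $\Theta$.

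For the first piece, the $(j,k)$ entry is $\widehat\Theta_j(\widehat\Sigma\widehat\Theta_k^\top-e_k)$. By the extended KKT condition \eqref{node-wise_LASSO_kkt} and H\"older's inequality, this is at most
\begin{align*}
\lVert\widehat\Theta_j\rVert_1\max_k\lambda_k/\widehat\tau_k^2 \lesssim \bigl(\lVert\Theta_j\rVert_1+\lVert\widehat\Theta_j-\Theta_j\rVert_1\bigr)\sqrt{\log(np)/n}.
\end{align*}
By Lemma~\ref{lem:aux:3}, $\lVert\Theta_j\rVert_1\lesssim\sqrt{s_0^*}$, and Lemma~\ref{lem:random_design_bound} controls $\widehat\tau_j^2$ and $\lVert\widehat\Theta_j-\Theta_j\rVert_1$. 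Together these give a max-norm bound of order $O_\Prob\bigl(\sqrt{s_0^*\log(np)/n}\bigr)$.

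For the second piece, the entries are $\widehat\Theta_{jk}-\Theta_{jk}$, which are bounded by $\max_j\lVert\widehat\Theta_j-\Theta_j\rVert_2$. Lemma~\ref{lem:random_design_bound} gives this as
\begin{align*}
O_\Prob\bigl(\sqrt{s_0^*\log(np)/n}+s_0^*(\log(np))^{4/\nu}/n\bigr).
\end{align*}
Before multiplying by $m$, it is better to bound the second piece directly in operator norm, since it is a submatrix of $\widehat\Theta-\Theta$ restricted to $S\times S$. Its Frobenius norm is at most $\sqrt{m}\max_j\lVert\widehat\Theta_j-\Theta_j\rVert_2$, which only costs a factor $\sqrt m$ rather than $m$.

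The main obstacle is verifying that the rate conditions in Assumption~\ref{ass:size}.4(1) absorb these factors of $m$. For the second piece, we need $m s_0^*\log(np)=o(n)$ and $m s_0^{*2}(\log(np))^{8/\nu}=o(n^2)$. Both are implied by the conditions $ms_0^*(\log(np))^{4/\nu}=o(n)$ and $(s_0^*)^2\log(np)=o(n)$. For the first piece, the crude max-norm route requires $m^2 s_0^*\log(np)=o(n)$, which is exactly one of the listed conditions. This suggests the intended argument is indeed the entrywise bound multiplied by $m$.

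If a tighter route is needed, we would instead bound the first piece in Frobenius norm, using $\lVert R_S\widehat\Theta_S\bigl(\widehat\Sigma\widehat\Theta_S^\top-E_S\bigr)\rVert$ together with $\lVert\widehat\Theta_S\rVert_{2}\le\lVert\Theta\rVert_2+\sqrt m\max_j\lVert\widehat\Theta_j-\Theta_j\rVert_2=O_\Prob(1)$. In either case, we combine the three pieces, pass through $\lVert R_S\rVert_2^2\le\overline c$, and conclude $\lVert R\widehat\Theta\widehat\Sigma\widehat\Theta^\top R^\top-R\Theta R^\top\rVert_2=o_\Prob(1)$.
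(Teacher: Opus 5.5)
Your proposal is correct and is essentially the paper's proof. The paper also obtains $\|R(\widehat\Theta\widehat\Sigma\widehat\Theta^\top-\Theta)R^\top\|_2\lesssim m\|B\|_{\max}$, using the Rayleigh quotient with $y=R^\top x$ supported on $m$ coordinates (equivalent to your $\|R_S\|_2^2\, m\|A\|_{\max}$). It uses the same split $B=\widehat\Theta(\widehat\Sigma\widehat\Theta^\top-I_p)+(\widehat\Theta-\Theta)$, controls the first piece by $\|\widehat\Theta\|_\infty$ times the extended KKT bound and the second by $\max_j\|\widehat\Theta_j-\Theta_j\|_2$, and closes with $m^2s_0^*\log(np)=o(n)$ and $ms_0^*(\log(np))^{4/\nu}=o(n)$.

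Your $\sqrt m$ Frobenius refinement for the second piece is valid but unnecessary. The optional \emph{tighter route} for the first piece would not work as sketched: $\widehat\Sigma\widehat\Theta_S^\top-E_S$ is a $p\times m$ matrix with only entrywise control, so its operator norm need not be small when $p\gg n$. Your main route does not rely on it.
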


\noindent {\bf Proof of Lemma \ref{lem:op_norm_bound}:} Define 
\begin{align*}
    A := R \widehat \Theta \widehat \Sigma \widehat \Theta^\top R^\top - R \Theta R^\top = R B R^\top, \quad B := \widehat \Theta \widehat \Sigma \widehat \Theta^\top -  \Theta,  
\end{align*}
and we aim to bound $\left\Vert A \right \Vert_2$. Fix an arbitrary $x \in \mathbb R^r$ with $\lVert x \rVert_2 = 1$, and define $y = R^\top x \in \mathbb R^p$. Note that $\lVert y \rVert_2 \lesssim 1$, by the assumption that $\lambda_{\max}(R  R^\top) \leq \overline{c} < \infty$. Without loss of generality, we may assume that the last $p-m$ column vectors of $R$ are zero, which implies that the last $p-m$ elements of $y$ are zero. We have
\begin{align*}
    \left|x^\top A x \right| = \left| y^\top B y \right| &= \left| \sum_{i,j \in [m]} B_{ij} y_i y_j \right| \\
    &\leq \left\Vert B  \right \Vert_{\max} \left(\sum_{i \in [m]} |y_i| \right) \left(\sum_{j \in [m]} |y_j| \right) \\
    &\leq m \left\Vert B  \right \Vert_{\max} \left(\sum_{i \in [m]} y_i^2 \right)^{1/2} \left(\sum_{j \in [m]} y_j^2 \right)^{1/2} \\
    &= m \left\Vert B  \right \Vert_{\max} \left(\sum_{i \in [p]} y_i^2 \right)^{1/2} \left(\sum_{j \in [p]} y_j^2 \right)^{1/2} \\
    &\lesssim m \left\Vert B  \right \Vert_{\max},
\end{align*}
where we use Cauchy-Schwarz inequality in the second inequality. Since this holds uniformly in $x$, we have $\left\Vert A \right \Vert_2 \lesssim m \left\Vert B  \right \Vert_{\max}$ by the Rayleigh quotient property. It remains to bound $\left\Vert B  \right \Vert_{\max}$, note that
\begin{align*}
	B = \widehat \Theta \widehat \Sigma \widehat \Theta^\top -  \Theta = 	\widehat \Theta \widehat \Sigma \widehat \Theta^\top - \widehat \Theta  + \widehat \Theta -  \Theta = \widehat \Theta \left( \widehat \Sigma \widehat \Theta^\top - I_p \right) + \left(\widehat \Theta -  \Theta \right),
\end{align*}
thus
\begin{align*}
	\left\Vert B  \right \Vert_{\max} &\leq \left\Vert \widehat \Theta \left( \widehat \Sigma \widehat \Theta^\top - I_p \right) \right \Vert_{\max} + \left\Vert\widehat \Theta -  \Theta  \right\Vert_{\max} \\
	&\leq \left\Vert  \widehat \Theta \right\Vert_{\infty} \left\Vert \widehat \Sigma \widehat \Theta^\top - I_p \right \Vert_{\max} + \left\Vert\widehat \Theta -  \Theta  \right\Vert_{\max}.
\end{align*}
We have 
\begin{align*}
    \left\Vert\widehat \Theta -  \Theta  \right\Vert_{\max} \leq \max_j \left\Vert \widehat \Theta_j^\top - \Theta_j^\top \right\Vert_2 = O_\mathbb{P} \left( \sqrt{\frac{s_0^* \log (np)}{n}} + \frac{s_0^* (\log(np))^{4/\nu}}{n} \right)
\end{align*}
by Lemma \ref{lem:random_design_bound}. In addition, note that
\begin{align*}
	\left\Vert  \widehat \Theta \right\Vert_{\infty} \leq \left\Vert  \widehat \Theta - \Theta \right\Vert_{\infty} + \left\Vert  \Theta \right\Vert_{\infty}, 
\end{align*}
and 
\begin{align*}
    \left\Vert  \widehat \Theta - \Theta \right\Vert_{\infty} = \max_j \left\Vert \widehat \Theta_j^\top - \Theta_j^\top \right\Vert_1 = O_\mathbb{P} \left( s_0^*\sqrt{\frac{\log (np)}{n}} + \frac{(s_0^*)^{3/2}(\log(np))^{4/\nu}}{n} \right),
\end{align*}
and since $\lambda_{\max}(\Theta) \leq \overline{c} < \infty$, we have $\max_j \lVert \Theta_j^\top \rVert_2 = O(1)$ and
\begin{align*}
    \left\Vert  \Theta \right\Vert_{\infty}  = \max_j ||\Theta_j^\top||_1 \leq \max_j \sqrt{1+ s_j} ||\Theta_j^\top||_2 \lesssim \sqrt{s_0^*}.
\end{align*}
Finally, we have
\begin{align*}
    \left\Vert \widehat \Sigma \widehat \Theta^\top - I_p \right \Vert_{\max} = O_\Prob\left(\sqrt{\frac{\log (np)}{n}} \right),
\end{align*}
as shown in the proof of Lemma \ref{lem:random_design_bound}. Putting everything together, we obtain
\begin{align*}
    \left\Vert B  \right \Vert_{\max} = O_\mathbb{P} \left( \sqrt{\frac{s_0^* \log (np)}{n}} + \frac{s_0^* (\log(np))^{4/\nu}}{n} \right),
\end{align*}
and thus
\begin{align*}
    \left \Vert A \right \Vert_2 = O_\mathbb{P} \left( \sqrt{\frac{m^2 s_0^* \log (np)}{n}} + \frac{ms_0^* (\log(np))^{4/\nu}}{n} \right) = o_\Prob(1).
\end{align*}
\qed 

\vspace{5mm}
We now verify that the conditions in Lemma \ref{lem:ridge_parameter_sufficient_condition} hold with probability approaching one. Note that the conditions in Assumption \ref{ass:size} imply that
\begin{enumerate}
        \item $ \lambda_{\max}(\Theta) \leq \overline{c} < \infty$,
        \item $\tr(R \Theta R^\top)/r \geq \underline{c}>0$,
        \item $\lambda_{\max}(RR^\top) \leq \overline{c}<\infty $,
        \item $\inf_{i \in [n]} \sigma_i^2 \geq \underline{c} >0$.
\end{enumerate}
Define
\begin{align*}
    u_0 \equiv 2\overline{c}, \quad\quad and \quad\quad t_0 \equiv {\underline{c}}/{2}.
\end{align*}
For the {\bf upper bound} of Lemma \ref{lem:ridge_parameter_sufficient_condition}, note that
\begin{align*}
     \lVert R \Theta R^\top \rVert_2 \leq \lVert R \rVert_2^2 \lVert \Theta \rVert_2 \leq \overline{c}^2,
\end{align*}
and by Lemma \ref{lem:op_norm_bound} and Assumption \ref{ass:size}, we have
\begin{align*}
    D_1^2 = \lVert R \widehat\Theta \widehat\Sigma \widehat{\Theta}^\top R^\top \rVert_2 = \lVert R \Theta R^\top \rVert_2 + o_\Prob(1),
\end{align*}
so that with probability approaching one, 
\begin{align*}
    |D_1| \leq  u_0.
\end{align*}
For the {\bf lower bound} of Lemma \ref{lem:ridge_parameter_sufficient_condition}, note that 
\begin{align*}
    \frac{1}{r} \tr (R \widehat{\Theta} \widehat\Sigma \widehat\Theta^\top R^\top) = \frac{1}{r} \tr (R \Theta R^\top) + \frac{1}{r} \left( \tr (R \widehat{\Theta} \widehat\Sigma \widehat\Theta^\top R^\top) - \tr (R \Theta R^\top) \right),
\end{align*}
and by Lemma \ref{lem:op_norm_bound} and Weyl's theorem, we have
\begin{align*}
    \left|\frac{1}{r} \left( \tr (R \widehat{\Theta} \widehat\Sigma \widehat\Theta^\top R^\top) - \tr (R \Theta R^\top) \right) \right| &\leq \frac{1}{r} \sum_{i \in [r]} \left| \lambda_i(R \widehat{\Theta} \widehat\Sigma \widehat\Theta^\top R^\top) - \lambda_i(R \Theta R^\top) \right| \\
    &\leq \lVert R \widehat \Theta \widehat \Sigma \widehat \Theta^\top R^\top - R \Theta R^\top\rVert_2 = o_\Prob(1).
\end{align*}
Combining with our assumption yields
\begin{align*}
   \frac{\lVert M \rVert_F^2}{r_n} \geq \frac{\lVert M \rVert_F^2}{r} \geq t_0
\end{align*}
by recalling $r_n \leq r$. Finally, for $\Phi$, note that $p_n = O_\Prob(K_\theta/n) = O_\Prob(r/n) = o_\Prob(1)$, since $r \leq m$ and $m/n \leq m s_0^2 \log^2(np)/n = o(1)$. It follows that, with probability approaching one, $p_n \leq t_0^2/(8u_0^4)$, whence
\begin{align*}
    \Phi \geq 2\underline{c}^2 \left(\frac{t_0^2}{4u_0^4} - p_n\right) \geq \frac{\underline{c}^2 t_0^2}{4u_0^4} > 0.
\end{align*}
The proof is complete.
\qed

\begin{rem} \label{rem:varpi}

We can choose $c_{\varpi} = \min\left\{ 1, \frac{\lVert M \rVert_F^2}{(1+\mu) r_n \lVert M \rVert_2^2} \right\}$ for any $\mu>0$. The proof above implies that $\theta \geq \lVert M \rVert_F^2/r_n - \varpi \lVert M \rVert_2^2 \geq (\mu/(1+\mu))\times (\lVert M \rVert_F^2/r_n)$ for any $\varpi \in (0,c_\varpi)$, and since $\lVert M \rVert_F^2/r_n$ can be bounded away from zero with probability approaching one, we have $\theta \gtrsim 1$ asymptotically. Note also that we can choose $\varpi = c_{\varpi}/(1 + \omega)$ for any $\omega > 0$, so that $\varpi$ is also bounded away from zero asymptotically, and thus $K_\theta$ is of the same order as $r_n$ by Lemma \ref{lem:Ktheta_and_rn}. In our simulation we choose $\mu=3$ and $\omega=1$.
    
\end{rem}

\begin{rem} \label{rem:rn_and_r}

From the proof above, we can also deduce that $r_n/r \geq \underline{c} > 0$ with probability approaching one under Assumption \ref{ass:size}. To see this, note that 
\begin{align*}
    \frac{ \tr(R \widehat\Theta \widehat\Sigma \widehat\Theta^\top R^\top)}{r} = \frac{ \tr(R \Theta R^\top)}{r} + o_\mathbb{P}(1),
\end{align*}
and 
\begin{align*}
   \tr(R \widehat\Theta \widehat\Sigma \widehat\Theta^\top R^\top) \leq r_n \lambda_{\max}(R \widehat\Theta \widehat\Sigma \widehat\Theta^\top R^\top) = r_n \left(\lambda_{\max}(R \Theta R^\top) + o_\mathbb{P}(1)\right) \leq r_n (\overline{c} + o_\mathbb{P}(1)),
\end{align*}
where the equality follows from Lemma \ref{lem:op_norm_bound} and Weyl's Theorem. Therefore, 
\begin{align*}
    \underline{c} \leq \frac{\tr(R \Theta R^\top)}{r}+ o_\mathbb{P}(1)  \leq \frac{r_n}{r} (\overline{c}+o_\mathbb{P}(1)) = \frac{r_n}{r}\overline{c} + o_\mathbb{P}(1),
\end{align*}
where we note $r_n \leq r$. It follows that, with probability approaching one,
\begin{align*}
    \frac{r_n}{r} \geq \frac{\underline{c}}{2\overline{c}} > 0.
\end{align*}
\end{rem}

\subsection{Proof of Proposition \ref{prop:sufficient_conditions_for_fixed_design}}
By Lemma~\ref{lem:random_design_bound}, we have
\begin{align*}
    \frac{\sqrt{r}\lVert {\bf X}\widehat\Delta\rVert_2}{\sqrt{n}}
    &=
    O_\Prob\left(
        \sqrt{\frac{r s_0\log(np)}{n}}
    \right)
    =
    o_\Prob(1),\\
    \frac{
        \lVert {\bf X}\widehat\Delta\rVert_\infty
        \lVert {\bf X}\widehat\Delta\rVert_2
    }{\sqrt{n}}
    &=
    O_\Prob\left(
        \sqrt{\frac{s_0^2\log^2(np)}{n}}
        \sqrt{\frac{s_0(\log(np))^{2/\nu}}{n}}
    \right)
    =
    o_\Prob(1),\\
    \sqrt{m}\lVert\Delta\rVert_\infty
    &=
    O_\Prob\left(
        \sqrt{\frac{m s_0^2\log^2(np)}{n}}
    \right)
    =
    o_\Prob(1).
\end{align*}
For the first bound, since $r\leq m$, Assumption~\ref{ass:size}.4.1 implies
\[
    \sqrt{\frac{r s_0\log(np)}{n}}
    \leq
    \sqrt{\frac{m s_0\log(np)}{n}}
    =
    o(1).
\]
Hence, condition~{\it 3} of
Theorem~\ref{thm:size-control_fixed_design} holds. Furthermore, by
Proposition~\ref{prop:ridge_parameter}, we can choose
$\varpi\in(0,1)$ such that condition~{\it 4} of
Theorem~\ref{thm:size-control_fixed_design} holds. Finally, Assumption~\ref{ass:size} directly implies conditions~{\it 1}
and~{\it 2} of Theorem~\ref{thm:size-control_fixed_design}. It remains to
verify the additional requirement $r=o(n)$. Since
$s_0\log(np)\geq1$ and
\[
    \frac{r s_0\log(np)}{n}=o(1),
\]
we have
\[
    \frac{r}{n}
    \leq
    \frac{r s_0\log(np)}{n}
    =
    o(1).
\]
Therefore, $r=o(n)$, completing the proof.

\subsection{Proof of Theorem \ref{thm:size-control_random_design}} 
By Proposition~\ref{prop:sufficient_conditions_for_fixed_design}, the proof proceeds essentially along the same lines as that of Theorem~\ref{thm:size-control_fixed_design}. The main difference is that \({\bf X}\) is now random, requiring the relevant arguments to be established conditionally on \({\bf X}\). For completeness, we provide the details below.

We begin by proving \eqref{eq:assumption_B*} for both bounded and diverging $r$, and break the proof into two parts, part $I$ and $II$.

\vspace{5mm}
\noindent \begin{Large} \textbf{Part $I$ (Bounded $r$ case):} \end{Large} Assume that $r$ is bounded, so that $r_n$ is bounded as $r_n \leq r$, and the same holds for $K_\theta$ by Lemma \ref{lem:Ktheta_and_rn}. We have
\begin{align*}
    \widehat{Q}(q) = \frac{\epsilon^\top P_\theta \epsilon}{\sum_{i \in [n]} P_{\theta,ii} \sigma_i^2} + o_\Prob(1),
\end{align*}
and
\begin{align*}
   (q_{1-\alpha} (F_{\widetilde{w}_n}) - 1) \left( \frac{\sqrt{\widehat{\Phi}}}{\frac{1}{\sqrt{K_\theta}}\sum_{i \in [n]} P_{\theta,ii} \widehat{\sigma}_i^2}  \frac{  1 }{\sqrt{2 \sum_{i \in [r_n]} \widetilde{w}_{i,n}^2 } } -1 \right) = o_\Prob(1),
\end{align*}
where we now apply Lemmas \ref{lem:variance_consistency_random_design} and \ref{lem:weights_consistency_random_design}, and use the fact that $\theta \gtrsim 1$ with probability approaching one by Proposition \ref{prop:ridge_parameter}. By \cite[Theorem 1.5]{rudelson2015small}, we have
\begin{align*}
    \sup_{y \in \mathbb{R}} \mathbb{P}\left( \left|\frac{\mathcal{E}^\top P_\theta \mathcal{E}}{\sum_{i \in [n]}P_{\theta,ii}\sigma_i^2}  - y\right| \leq \ell \bigg| \bf X  \right) \leq C_\zeta \ell^{(1-\zeta)/2}
\end{align*}
with probability approaching one, for any $\ell >0$ and $\zeta \in (0,1)$, where $C_\zeta$ only depends on $\zeta$. By using the same argument as in the proof of 
Lemma \ref{lem:anti-concentration}, we have
\begin{align*}
    &\quad \sup_{y \in \mathbb{R}} \left| \mathbb{P}\left(\sum_{i \in [r_n]} \widetilde{w}_{i,n}\,\chi^2_{1,i}\leq y \bigg| Y, \bf X \right) - \mathbb{P} \left(\sum_{i \in [r_n]} w_{i,n}\,\chi^2_{1,i} \leq y \bigg| \bf X\right)  \right| \\
    &\leq \sup_{y \in \mathbb{R}}\mathbb{P}\left( \left|\sum_{i \in [r_n]} w_{i,n}\,\chi^2_{1,i} - y \right|\leq \ell \bigg| \bf X \right) + \mathbb{P}\left( \left|\sum_{i \in [r_n]} \widetilde{w}_{i,n}\,\chi^2_{1,i} - \sum_{i \in [r_n]} w_{i,n}\,\chi^2_{1,i} \right|>\ell \bigg| Y, \bf X \right).
    \end{align*}
Together with the anti-concentration inequality and Lemma \ref{lem:weights_consistency_random_design}, this implies that the Kolmogorov distance between the distribution of $F_{\widetilde{w}_n} $ conditional on the data and the distribution of $F_{w_n}$ conditional on $\bf X$ is $o_\Prob(1)$, which further implies that $q_{1-\alpha}(F_{\widetilde{w}_n}) = q_{1-\alpha}(F_{w_n}) + o_\Prob(1)$, in view of Lemmas \ref{lem:quantile_function} and \ref{lem:weighted_chi_squared} (note that we can set $R=r$ in Lemma \ref{lem:weighted_chi_squared}). In addition, we have 
\begin{align*}
    \sup_{y \in \mathbb{R}} \mathbb{P}\left( \left|\frac{\epsilon^\top P_\theta \epsilon}{\sum_{i \in [n]}P_{\theta,ii}\sigma_i^2}  - y\right| \leq \ell \bigg| \bf X  \right) \leq C'_\zeta \ell^{(1-\zeta)/2} + o_\Prob(1)
\end{align*}
with probability approaching one, similar to the proof of Theorem \ref{thm:size-control_fixed_design}. This implies that, uniformly in $y \in \mathbb{R}$,
\begin{align*}
    &\quad \mathbb{P}\left( \left|\frac{\epsilon^\top P_\theta \epsilon}{\sum_{i \in [n]}P_{\theta,ii}\sigma_i^2} - y\right| \leq \ell \right) \\
    &\leq \mathbb E \left[1_{A_n} \mathbb{P}\left( \left|\frac{\epsilon^\top P_\theta \epsilon}{\sum_{i \in [n]}P_{\theta,ii}\sigma_i^2} - y\right| \leq \ell \bigg| \bf X  \right) \right] + \mathbb P \left( A_n^c \right) \\
    &\leq C'_\zeta \ell^{(1-\zeta)/2}  + o(1),
\end{align*}
where $A_n$ denotes the event that the anti-concentration inequality holds. It follows that, with $\tilde R_n = o_\Prob(1)$, 
\begin{align*}
    \mathbb P \left( \widehat Q(q) > \mathcal{C}_{\alpha} \right) &= \mathbb P \left(  \frac{\epsilon^\top P_\theta \epsilon}{\sum_{i \in [n]} P_{\theta,ii} \sigma_i^2} > q_{1-\alpha}(F_{w_{n}}) + \tilde R_n \right) \\
    &= \mathbb P \left( \frac{\epsilon^\top P_\theta \epsilon}{\sum_{i \in [n]} P_{\theta,ii} \sigma_i^2} > q_{1-\alpha}(F_{w_{n}})  \right) + o(1) \\
    &= \mathbb P \left( \frac{\mathcal{E}^\top P_\theta \mathcal{E}}{\sum_{i \in [n]}P_{\theta,ii}\sigma_i^2} > q_{1-\alpha}(F_{w_{n}})  \right) + o(1) \\
    &= \mathbb E \left[ \mathbb P \left(  \frac{\mathcal{E}^\top P_\theta \mathcal{E}}{\sum_{i \in [n]}P_{\theta,ii}\sigma_i^2} > q_{1-\alpha}(F_{w_{n}})  \bigg| \bf X \right) \right] + o(1) \\
    &= \mathbb E \left[ \mathbb P \left(\sum_{i \in [r_n]} w_{i,n}\,\chi^2_{1,i} > q_{1-\alpha}(F_{w_{n}})  \bigg| \bf X \right) \right] + o(1) \\
    &= \alpha + o(1),
\end{align*}
where the second equality follows from Lemma \ref{lem:anti-concentration} (with an additional $o(1)$ term), and the third equality now follows from Remark \ref{rem:strong_approx_random_design}. This completes the proof of \eqref{eq:assumption_B*}.

\vspace{5mm}
\noindent \begin{Large} \textbf{Part $II$ (Diverging $r$ case):} \end{Large} Assume now $r \rightarrow \infty$, so that $r_n$ is diverging with probability approaching one by Remark \ref{rem:rn_and_r}, and the same holds for $K_\theta$ by Lemma \ref{lem:Ktheta_and_rn}. By using now Lemma \ref{lem:weights_consistency_random_design}, we have
\begin{align*}
    \frac{F_{\widetilde{w}_n} - 1}{\sqrt{2 \sum_{i \in [r_n]} \widetilde{w}_{i,n}^2 } } \convD \mathcal{N}(0,1)
\end{align*}
in probability conditional on the data. In addition, conditional on $\bf X$, we can follow the same lines as in the proof of Theorem \ref{thm:size-control_fixed_design} to obtain
\begin{align*}
    \frac{ \frac{1}{\sqrt{K_\theta}} \sum_{i \in [n]} P_{\theta,ii} \widehat{\sigma}_i^2 }{\sqrt{\widehat{\Phi} } } \left( \widehat{Q}(q) - 1 \right) \convD \mathcal{N}\left(0, 1 \right)
\end{align*}
in probability by \cite[Theorem 3.1]{dovi2024ridge} and Lemma \ref{lem:variance_consistency_random_design}, where we also use the fact that $\Phi$ is bounded away from zero with probability approaching one by Proposition \ref{prop:ridge_parameter}. Since conditional distribution functions are bounded by one, convergence in probability implies convergence in mean, whence the weak convergence holds unconditionally by the dominated convergence theorem. Finally, applying \cite[Example 1.4.7, Theorem 1.3.6, Theorem 1.3.4(vi)]{VW96} yields \eqref{eq:assumption_B*}. The proof is complete. 
\qed

\subsection{Proof of Theorem \ref{thm:size-control_random_design_extension}} The proof follows the same lines as in the proof of Theorem \ref{thm:size-control_random_design} once we show that $\hat w_j = 1, j \in S_0^c$ where $S_0 = \{ s \in[p] : \beta^0_s \neq 0\}$, and that $\hat w_{j,k} = 1, k \in S_{j}^c$ where $S_{j} = \{k \neq j: \gamma_{j,k} \neq 0\}$, with probability approaching one. 

Consider first our original LASSO estimator $\widehat{\beta}$ defined in \eqref{eq:beta-hat}. Using the KKT condition, we have
\begin{align*}
    (\widehat{\beta}- \beta^0) &= \Theta \Sigma (\widehat{\beta}- \beta^0) \\
    &= \Theta \left( \widehat{\Sigma} (\widehat{\beta}- \beta^0) + (\Sigma-\widehat{\Sigma}) (\widehat{\beta}- \beta^0) \right) \\
    &= \Theta \left(\frac{{\bf X}^\top \eps}{n} - \lambda_0 \widehat{\kappa} + (\Sigma-\widehat{\Sigma}) (\widehat{\beta}- \beta^0) \right),
\end{align*}
where $\lVert \widehat{\kappa} \rVert_{\infty} \leq 1$. Therefore, we have
\begin{align*}
    \lVert \widehat{\beta}- \beta^0 \rVert_\infty \leq \lVert \Theta \rVert_\infty \left( \lVert {\bf X}^\top \eps/n \rVert_\infty + \delta_n \lVert \widehat{\beta}- \beta^0 \rVert_1 + \lambda_0 \right),
\end{align*}
where $\delta_n = \lVert \widehat{\Sigma}-\Sigma \rVert_{\max}$. On the events that $\lVert {\bf X}^\top \eps/n \rVert_\infty \leq \lambda_0/2$ and that $\bf X$ satisfies the RE-condition with parameter $(s_0,3)$, both of which have probability approaching one, the proof in Lemma \ref{lem:main_LASSO} shows that
\begin{align*}
    \lVert \widehat{\beta}- \beta^0 \rVert_1 \lesssim s_0 \lambda_0. 
\end{align*}
By \cite[Theorem 4.1]{kuchibhotla2022moving}, $s_0 \delta_n = o_\Prob(1)$ under our assumption, and thus 
\begin{align*}
    \lVert \widehat{\beta}- \beta^0 \rVert_\infty \leq 2 \lambda_0\lVert \Theta \rVert_\infty \lesssim \lambda_0,
\end{align*}
since we assume that $s_0^*=O(1)$. Thus, by suitably choosing $\lambda_{{\rm prec},0} \asymp \lambda_0 \asymp \sqrt{\log(np)/n}$, we have
\begin{align*}
    \lVert \widehat{\beta}- \beta^0 \rVert_\infty \leq \lambda_{{\rm prec},0} \quad \Longrightarrow \quad \frac{\lambda_{{\rm prec},0}}{|\widehat{\beta}_{j}|\vee \lambda_{{\rm prec},0}} = 1, j \in S_0^c
\end{align*}
with probability approaching one. 

Consider now the original node-wise LASSO estimator $\widehat{\gamma}_{j}$ defined in \eqref{eq:gamma_j}. On the events that $\lVert {\bf X}_{-j}^\top \eta_j / n \rVert_{\infty} \leq \lambda_j/2$ and that $\bf X$ satisfies the RE condition with parameter $(s_0^*+1,3)$, both of which have probability approaching one, Lemma \ref{lem:hatgamma} implies that 
\begin{align*}
    \max_{j \in [p]} \lVert \widehat{\gamma}_{,j} - \gamma_j \rVert_\infty \leq \max_{j \in [p]} \lVert \widehat{\gamma}_{j} - \gamma_j \rVert_2 \lesssim \max_{j \in [p]} \lambda_j \sqrt{s_j} \lesssim  \max_{j \in [p]} \lambda_j,
\end{align*}
since we assume that $s_0^*=O(1)$. Thus, by suitably choosing $\lambda_{{\rm prec},j} \asymp \lambda_j \asymp \sqrt{\log(np)/n}$, we have
\begin{align*}
    \lVert \widehat{\gamma}_j - \gamma_j \rVert_\infty \leq \lambda_{{\rm prec},j} \quad \Longrightarrow \quad \frac{\lambda_{{\rm prec},j}}{|\widehat{\gamma}_{j}(k)|\vee \lambda_{{\rm prec},j}} = 1, k \in S_{j}^c
\end{align*}
uniformly in $j \in [p]$ with probability approaching one. This concludes the proof.

\newpage
\section{Power analysis}
\label{sec:power_analysis}

\begin{lem}
\label{lem:power}
Suppose that ${\bf X}$ is fixed and that the
regularity conditions of
Theorem~\ref{thm:size-control_fixed_design}
hold along a sequence of alternatives
$R\beta^0\neq q$. Let
\begin{align*}
    D &:=
    \sum_{i\in[n]}P_{\theta,ii}\sigma_i^2,\\
    M_\theta &:=
    \left(
    R\widehat\Theta\widehat\Sigma
    \widehat\Theta^\top R^\top+\theta I_r
    \right)^{-1},
\end{align*}
and define
\begin{align}
    \mathcal S_n(q)
    :=
    \frac{
    n(R\beta^0-q)^\top
    M_\theta(R\beta^0-q)
    }{D}.
    \label{eq:power_signal}
\end{align}
Then
\begin{align}
    \widehat Q(q)
    &=
    \frac{\epsilon^\top P_\theta\epsilon}{D}
    +\mathcal S_n(q)
    +O_\Prob\left(
        \sqrt{\frac{\mathcal S_n(q)}{K_\theta}}
    \right)
    +o_\Prob\left(1+\mathcal S_n(q)\right).
    \label{eq:power_expansion}
\end{align}
In particular, if $K_\theta\rightarrow\infty$
and $\mathcal S_n(q)=O(1)$, then
\begin{align}
    \widehat Q(q)
    =
    \frac{\epsilon^\top P_\theta\epsilon}{D}
    +\mathcal S_n(q)+o_\Prob(1).
    \label{eq:power_bounded_signal}
\end{align}
Furthermore, if
$\mathcal S_n(q)\rightarrow\infty$, then
\begin{align}
    \frac{\widehat Q(q)}{\mathcal S_n(q)}
    \convP 1,
    \label{eq:power_diverging_signal}
\end{align}
and the test based on the rejection rule
\eqref{eq:rejection-criteria} is consistent:
\begin{align}
    \Prob\left(
        \widehat Q(q)>\mathcal C_\alpha
    \right)\rightarrow1.
    \label{eq:power_consistency}
\end{align}
\end{lem}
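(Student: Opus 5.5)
The plan is to reduce everything to the null-case decomposition from Step~1 of the proof of Theorem~\ref{thm:size-control_fixed_design}, with an extra signal vector. Under the alternative, write $\delta:=\sqrt n(R\beta^0-q)$. Then $b(q)=\sqrt n(R\widehat b-R\beta^0)+\delta=\mathcal A+\delta$, where $\mathcal A=RW-R\Delta$ as before. Hence
\begin{align*}
\widehat D\,\widehat Q(q)=\mathcal A^\top M_\theta\mathcal A+2\delta^\top M_\theta\mathcal A+\delta^\top M_\theta\delta .
\end{align*}
By definition, $\delta^\top M_\theta\delta=D\,\mathcal S_n(q)$. The first term is handled exactly as in Step~1: $\mathcal A^\top M_\theta\mathcal A/\widehat D=\epsilon^\top P_\theta\epsilon/D+o_\Prob(K_\theta^{-1/2})$. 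That argument used only conditions 1--4 of Theorem~\ref{thm:size-control_fixed_design} and Lemma~\ref{lem:variance-estimator-consistent}, and these do not depend on the null. The residuals $\widehat\sigma_i^2$ are built from $\widehat\beta$, not from $q$. For the signal term, Lemma~\ref{lem:variance-estimator-consistent} gives $\widehat D/D=1+o_\Prob(1)$, so $\delta^\top M_\theta\delta/\widehat D=\mathcal S_n(q)(1+o_\Prob(1))=\mathcal S_n(q)+o_\Prob(\mathcal S_n(q))$.

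The main work is the cross term. Split it as $\delta^\top M_\theta RW-\delta^\top M_\theta R\Delta$ and apply Cauchy--Schwarz in the $M_\theta$-inner product. For the noise part, $RW=R\widehat\Theta{\bf X}^\top\epsilon/\sqrt n$. Conditional on ${\bf X}$ it has mean zero, and its variance satisfies
\begin{align*}
\mathbb E_\epsilon(\delta^\top M_\theta RW)^2=\tfrac1n\delta^\top M_\theta R\widehat\Theta{\bf X}^\top\Lambda{\bf X}\widehat\Theta^\top R^\top M_\theta\delta\le \overline c\,\delta^\top M_\theta (M_\theta^{-1}-\theta I_r)M_\theta\delta\le \overline c\,\delta^\top M_\theta\delta .
\end{align*}
This uses $\lambda_{\max}(\Lambda)=O(1)$ and $R\widehat\Theta\widehat\Sigma\widehat\Theta^\top R^\top=M_\theta^{-1}-\theta I_r$. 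By Markov's inequality, $\delta^\top M_\theta RW=O_\Prob(\sqrt{D\mathcal S_n})$. Dividing by $\widehat D\asymp D\gtrsim K_\theta$ gives $O_\Prob(\sqrt{\mathcal S_n/K_\theta})$.

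For the remainder part, $|\delta^\top M_\theta R\Delta|\le(\delta^\top M_\theta\delta)^{1/2}(\Delta^\top R^\top M_\theta R\Delta)^{1/2}$. The second factor is $O_\Prob(\sqrt{m}\|\Delta\|_\infty)=o_\Prob(1)$, as in \eqref{eq:step1:2}. After dividing by $\widehat D$ this part is $o_\Prob(\sqrt{\mathcal S_n/K_\theta})$, which is absorbed into the $O_\Prob$ term. This yields \eqref{eq:power_expansion}.

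The remaining claims follow from the expansion. Since $D\asymp K_\theta$ and $\mathbb E_\epsilon\epsilon^\top P_\theta\epsilon=D$, the quadratic form $\epsilon^\top P_\theta\epsilon/D$ is $O_\Prob(1)$.
\begin{itemize}
\item If $K_\theta\to\infty$ and $\mathcal S_n=O(1)$, then $\sqrt{\mathcal S_n/K_\theta}=o(1)$, which gives \eqref{eq:power_bounded_signal}.
\item If $\mathcal S_n\to\infty$, divide \eqref{eq:power_expansion} by $\mathcal S_n$. Each term other than $\mathcal S_n$ becomes $o_\Prob(1)$, since $\sqrt{\mathcal S_n/K_\theta}/\mathcal S_n\le\mathcal S_n^{-1/2}$ when $K_\theta\ge1/2$ (Lemma~\ref{lem:Ktheta_and_rn}). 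This gives \eqref{eq:power_diverging_signal}.
\end{itemize}

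For consistency, I would show $\mathcal C_\alpha=O_\Prob(1)$ and then combine it with $\widehat Q\to\infty$ in probability. The critical value is free of $q$, so its behaviour is the same as under the null. By Step~2 of the proof of Theorem~\ref{thm:size-control_fixed_design}, the factor $\sqrt{K_\theta\widehat\Phi}/(\widehat D\sqrt{2\sum\tilde w_{i,n}^2})=1+o_\Prob(1)$, using Lemmas~\ref{lem:consistency:phi-hat} and \ref{lem:consistency:weights}. The standardized quantile $(q_{1-\alpha}(F_{\tilde w_n})-1)/\sqrt{2\sum\tilde w_{i,n}^2}$ is $O_\Prob(1)$ by Chebyshev's inequality, since $F_{\tilde w_n}$ has mean $1$ and variance $2\sum\tilde w_{i,n}^2$. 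Combining these with the prefactor $\sqrt{2\sum\tilde w_{i,n}^2}\le\sqrt2$ gives $\mathcal C_\alpha=1+O_\Prob(1)$.

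I expect the main obstacle to be the cross-term variance bound. It needs $M_\theta$ and the sandwich matrix $R\widehat\Theta{\bf X}^\top\Lambda{\bf X}\widehat\Theta^\top R^\top$ to be comparable, and the $\theta I_r$ ridge is what makes the inequality $M_\theta(M_\theta^{-1}-\theta I_r)M_\theta\preceq M_\theta$ available.
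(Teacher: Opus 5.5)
Your argument is correct and is essentially the paper's proof. You use the same expansion of $b(q)^\top M_\theta b(q)$: you bundle noise and remainder into $\mathcal A$ and reuse Step~1, where the paper expands into six terms. You then use the same variance bound via $M_\theta(M_\theta^{-1}-\theta I_r)M_\theta\preceq M_\theta$ for the signal--noise cross term, Cauchy--Schwarz for the remainder terms (your $o_\Prob(\sqrt{\mathcal S_n(q)/K_\theta})$ bound on the signal--remainder term is slightly sharper than the paper's $o_\Prob(1+\mathcal S_n(q))$), and boundedness of $\mathcal C_\alpha$.

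One small caution concerns the ratio $\sqrt{K_\theta\widehat\Phi}/(\widehat D\sqrt{2\sum_{i}\widetilde w_{i,n}^2})=1+o_\Prob(1)$. The paper establishes it in Step~2 only for bounded $K_\theta$, since that argument uses $\lVert w_n\rVert_2$ bounded away from zero. For the consistency claim it is cleaner to argue as the paper does: $\sqrt{K_\theta\widehat\Phi}/\widehat D=O_\Prob(K_\theta^{-1/2})$ follows from $\widehat D\asymp K_\theta$ and $\widehat\Phi=O_\Prob(1)$. Together with the bounded standardized quantile, this gives $\mathcal C_\alpha=O_\Prob(1)$ directly.
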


\noindent \textbf{Proof of Lemma~\ref{lem:power}:}

\noindent Recall that
\begin{align*}
    \sqrt{n}(\widehat b-\beta^0)
    =
    \frac{\widehat\Theta{\bf X}^\top\epsilon}
    {\sqrt{n}}-\Delta.
\end{align*}
It follows that
\begin{align}
    b(q)
    &=
    \frac{R\widehat\Theta{\bf X}^\top\epsilon}
    {\sqrt{n}}
    -R\Delta
    +\sqrt{n}(R\beta^0-q).
    \label{eq:power_decomposition}
\end{align}
By the definition of $\widehat Q(q)$,
\begin{align}
    \widehat Q(q)
    &=
    \frac{1}{\widehat D}
    \left[
    \frac{R\widehat\Theta{\bf X}^\top\epsilon}
    {\sqrt{n}}
    -R\Delta
    +\sqrt{n}(R\beta^0-q)
    \right]^\top
    M_\theta
    \left[
    \frac{R\widehat\Theta{\bf X}^\top\epsilon}
    {\sqrt{n}}
    -R\Delta
    +\sqrt{n}(R\beta^0-q)
    \right],
    \label{eq:power_quadratic}
\end{align}
where
\begin{align*}
    \widehat D
    :=
    \sum_{i\in[n]}
    P_{\theta,ii}\widehat\sigma_i^2.
\end{align*}
Expanding the quadratic form gives
\begin{align}
    \widehat Q(q)
    &=
    \frac{\epsilon^\top P_\theta\epsilon}
    {\widehat D}
    +
    \frac{
    n(R\beta^0-q)^\top
    M_\theta(R\beta^0-q)}
    {\widehat D}
    \notag\\
    &\quad+
    \frac{
    2(R\beta^0-q)^\top
    M_\theta R\widehat\Theta
    {\bf X}^\top\epsilon}
    {\widehat D}
    \notag\\
    &\quad-
    \frac{
    2\epsilon^\top{\bf X}
    \widehat\Theta^\top R^\top
    M_\theta R\Delta}
    {\sqrt{n}\widehat D}
    \notag\\
    &\quad-
    \frac{
    2\sqrt{n}(R\beta^0-q)^\top
    M_\theta R\Delta}
    {\widehat D}
    +
    \frac{
    (R\Delta)^\top M_\theta R\Delta}
    {\widehat D}.
    \label{eq:power_expanded}
\end{align}
We bound each term separately.

\vspace{5mm}
First, by Lemma~\ref{lem:variance-estimator-consistent},
\begin{align}
    \frac{\widehat D}{D}
    =
    1+o_\Prob(1).
    \label{eq:power_variance_consistency}
\end{align}
Indeed, the lemma establishes
\begin{align*}
    \frac{\widehat D-D}{K_\theta}
    =
    o_\Prob(1),
\end{align*}
while the uniformly bounded and nondegenerate
conditional variances imply
\begin{align*}
    D\asymp K_\theta.
\end{align*}
Furthermore,
\begin{align*}
    \mathbb E_\epsilon
    \left[
    \epsilon^\top P_\theta\epsilon
    \right]
    =
    D.
\end{align*}
Since $P_\theta$ is positive semi-definite,
Markov's inequality yields
\begin{align*}
    \frac{\epsilon^\top P_\theta\epsilon}{D}
    =
    O_\Prob(1).
\end{align*}
Consequently,
\begin{align}
    \frac{\epsilon^\top P_\theta\epsilon}
    {\widehat D}
    =
    \frac{\epsilon^\top P_\theta\epsilon}{D}
    +o_\Prob(1).
    \label{eq:power_first_term}
\end{align}

Next, by the definition of $\mathcal S_n(q)$
and \eqref{eq:power_variance_consistency},
\begin{align}
    \frac{
    n(R\beta^0-q)^\top
    M_\theta(R\beta^0-q)}
    {\widehat D}
    &=
    \mathcal S_n(q)\frac{D}{\widehat D}
    \notag\\
    &=
    \mathcal S_n(q)
    +o_\Prob\left(1+\mathcal S_n(q)\right).
    \label{eq:power_second_term}
\end{align}

We now consider the interaction between the
stochastic component and the signal.
Conditional on ${\bf X}$, the errors are
independent and have mean zero. Hence,
\begin{align*}
    &\mathbb E_\epsilon
    \left[
    \left\{
    (R\beta^0-q)^\top
    M_\theta R\widehat\Theta
    {\bf X}^\top\epsilon
    \right\}^2
    \right]\\
    &\quad=
    (R\beta^0-q)^\top
    M_\theta R\widehat\Theta
    {\bf X}^\top\Lambda{\bf X}
    \widehat\Theta^\top R^\top
    M_\theta(R\beta^0-q).
\end{align*}
By the uniform upper bound on the conditional
variances,
\begin{align*}
    {\bf X}^\top\Lambda{\bf X}
    \preceq
    \overline c\,{\bf X}^\top{\bf X}.
\end{align*}
Therefore,
\begin{align*}
    &\mathbb E_\epsilon
    \left[
    \left\{
    (R\beta^0-q)^\top
    M_\theta R\widehat\Theta
    {\bf X}^\top\epsilon
    \right\}^2
    \right]\\
    &\quad\lesssim
    n(R\beta^0-q)^\top
    M_\theta
    \left(
    R\widehat\Theta\widehat\Sigma
    \widehat\Theta^\top R^\top
    \right)
    M_\theta(R\beta^0-q).
\end{align*}
Since
\begin{align*}
    M_\theta
    \left(
    R\widehat\Theta\widehat\Sigma
    \widehat\Theta^\top R^\top
    \right)
    M_\theta
    =
    M_\theta-\theta M_\theta^2
    \preceq M_\theta,
\end{align*}
we obtain
\begin{align*}
    &\mathbb E_\epsilon
    \left[
    \left\{
    (R\beta^0-q)^\top
    M_\theta R\widehat\Theta
    {\bf X}^\top\epsilon
    \right\}^2
    \right]\\
    &\quad\lesssim
    n(R\beta^0-q)^\top
    M_\theta(R\beta^0-q)
    =
    D\mathcal S_n(q).
\end{align*}
By Chebyshev's inequality, together with
$D\asymp K_\theta$ and
\eqref{eq:power_variance_consistency},
\begin{align}
    \frac{
    (R\beta^0-q)^\top
    M_\theta R\widehat\Theta
    {\bf X}^\top\epsilon}
    {\widehat D}
    =
    O_\Prob\left(
    \sqrt{\frac{\mathcal S_n(q)}{K_\theta}}
    \right).
    \label{eq:power_cross_term}
\end{align}

It remains to control the terms involving
the de-biasing remainder $\Delta$.
By the assumptions of
Theorem~\ref{thm:size-control_fixed_design},
\begin{align*}
    \|\Delta\|_\infty
    =
    o_\Prob(1/\sqrt m).
\end{align*}
Since $R$ has at most $m$ nonzero columns
and $\lambda_{\max}(RR^\top)\lesssim1$,
\begin{align*}
    \|R\Delta\|_2^2
    \leq
    \lambda_{\max}(RR^\top)
    \|\Delta_{\operatorname{supp}(R)}\|_2^2
    \lesssim
    m\|\Delta\|_\infty^2
    =
    o_\Prob(1).
\end{align*}
Moreover, $\theta\gtrsim1$ implies
\begin{align*}
    \lambda_{\max}(M_\theta)
    \leq\theta^{-1}
    \lesssim1.
\end{align*}
It follows that
\begin{align}
    \frac{
    (R\Delta)^\top M_\theta R\Delta}
    {\widehat D}
    =
    o_\Prob(1).
    \label{eq:power_remainder_squared}
\end{align}

By the Cauchy--Schwarz inequality,
\begin{align*}
    &\left|
    \frac{
    \epsilon^\top{\bf X}
    \widehat\Theta^\top R^\top
    M_\theta R\Delta}
    {\sqrt n\widehat D}
    \right|\\
    &\quad\leq
    \left(
    \frac{\epsilon^\top P_\theta\epsilon}
    {\widehat D}
    \right)^{1/2}
    \left(
    \frac{
    (R\Delta)^\top M_\theta R\Delta}
    {\widehat D}
    \right)^{1/2}
    =
    o_\Prob(1).
\end{align*}
Similarly,
\begin{align*}
    &\left|
    \frac{
    \sqrt n(R\beta^0-q)^\top
    M_\theta R\Delta}
    {\widehat D}
    \right|\\
    &\quad\leq
    \left(
    \frac{
    n(R\beta^0-q)^\top
    M_\theta(R\beta^0-q)}
    {\widehat D}
    \right)^{1/2}
    \left(
    \frac{
    (R\Delta)^\top M_\theta R\Delta}
    {\widehat D}
    \right)^{1/2}\\
    &\quad=
    o_\Prob\left(
    1+\mathcal S_n(q)
    \right),
\end{align*}
where the final equality follows from
\eqref{eq:power_second_term} and
\eqref{eq:power_remainder_squared}.

\vspace{5mm}
Combining these bounds with
\eqref{eq:power_expanded}, we conclude that
\begin{align*}
    \widehat Q(q)
    &=
    \frac{\epsilon^\top P_\theta\epsilon}{D}
    +\mathcal S_n(q)
    +O_\Prob\left(
    \sqrt{\frac{\mathcal S_n(q)}{K_\theta}}
    \right)
    +o_\Prob\left(1+\mathcal S_n(q)\right),
\end{align*}
which proves \eqref{eq:power_expansion}.

If $K_\theta\rightarrow\infty$ and
$\mathcal S_n(q)=O(1)$, then
\begin{align*}
    \sqrt{\frac{\mathcal S_n(q)}{K_\theta}}
    =o(1),
\end{align*}
so that
\begin{align*}
    \widehat Q(q)
    =
    \frac{\epsilon^\top P_\theta\epsilon}{D}
    +\mathcal S_n(q)+o_\Prob(1).
\end{align*}

If $\mathcal S_n(q)\rightarrow\infty$,
dividing the expansion by
$\mathcal S_n(q)$ yields
\begin{align*}
    \frac{\widehat Q(q)}{\mathcal S_n(q)}
    &=
    1+
    O_\Prob\left(
    \frac{1}{\mathcal S_n(q)}
    \right)
    +
    O_\Prob\left(
    \frac{1}{
    \sqrt{K_\theta\mathcal S_n(q)}
    }
    \right)
    +o_\Prob(1)\\
    &=1+o_\Prob(1).
\end{align*}
Thus,
\begin{align*}
    \frac{\widehat Q(q)}{\mathcal S_n(q)}
    \convP1.
\end{align*}

Finally, we establish the boundedness of
the critical value. Recall from
\eqref{eq:critical_value} that
\begin{align*}
    \mathcal C_\alpha
    =
    1+
    \frac{\sqrt{K_\theta\widehat\Phi}}
    {\widehat D}
    \left(
    \frac{
    q_{1-\alpha}(F_{\widetilde w_n})-1}
    {\sqrt{2\sum_{i\in[r_n]}
    \widetilde w_{i,n}^2}}
    \right).
\end{align*}
Under the conditions of
Theorem~\ref{thm:size-control_fixed_design},
the existing variance-consistency results
imply
\begin{align*}
    \widehat D\asymp_\Prob K_\theta,
    \qquad
    \widehat\Phi=O_\Prob(1).
\end{align*}
Furthermore, the standardized weighted
chi-squared distribution has conditional
mean zero and variance one, implying that
its $(1-\alpha)$-quantile is bounded
for any fixed $\alpha\in(0,1)$.
Consequently,
\begin{align*}
    \mathcal C_\alpha
    =
    1+O_\Prob(K_\theta^{-1/2})
    =
    O_\Prob(1).
\end{align*}

Since
$\mathcal S_n(q)\rightarrow\infty$
and
$\widehat Q(q)/\mathcal S_n(q)\convP1$,
we conclude that
\begin{align*}
    \Prob\left(
    \widehat Q(q)>\mathcal C_\alpha
    \right)\rightarrow1.
\end{align*}
This completes the proof.


\newpage

\section{Further Details for Monte Carlo Study}

\subsection{Implementation Details} 
\label{sec:appendix_implement}

The implementation details for all the test statistics are as follows:
\begin{enumerate}
    \item {\bf Wald-HE (BIC)/(GIC)}: Heteroskedasticity-robust Wald test proposed by \cite{caner2018asymptotically}, with ordinary LASSO using either Bayesian Information Criterion (BIC) or Generalized Information Criterion (GIC) to select $\lambda_0$ in the main LASSO regression. According to \cite{caner2018asymptotically} and our preliminary experiments, using cross-validation to select this tuning parameter does not improve the results. The same parameter is also used for the node-wise LASSO regressions. It is also possible to use BIC or GIC to select the tuning parameters for the node-wise LASSO regressions, but it turns out that this does not change the results too much and can be computationally demanding for large $r$.
    
    \item {\bf Wald-HO (BIC)/(GIC)}: Homoskedastic Wald test using the same approximate inverse as above, and we follow \cite{van-de-geer(2014)} to use the scaled LASSO in \cite{sun2012scaled} for the main LASSO regression.

    \item {\bf LMTW (BIC)/(GIC)}: Our proposed test, defined in \eqref{eq:rejection-criteria}, is implemented using ordinary LASSO estimators for both the initial regression in \eqref{eq:beta-hat} and the nodewise regressions in \eqref{eq:gamma_j}. Following \cite{Fan-Tang2013} and \cite{caner2018asymptotically}, the regularization parameters $\lambda_0$ and $\lambda_j$, $j\in[p]$, are selected in a data-driven manner using either the Bayesian information criterion (BIC) or the generalized information criterion (GIC).

For the BIC-based implementation, we define
\begin{align*}
    \mathrm{BIC}_0(\lambda)
    &=
    \log\left(
    \frac{\lVert Y-{\bf X}\widehat{\beta}(\lambda)
    \rVert_2^2}{n}
    \right)
    +
    \frac{\log n}{n}
    \lVert\widehat{\beta}(\lambda)\rVert_0,
    \\
    \mathrm{BIC}_j(\lambda)
    &=
    \log\left(
    \frac{\lVert X_j-{\bf X}_{-j}
    \widehat{\gamma}_j(\lambda)\rVert_2^2}{n}
    \right)
    +
    \frac{\log n}{n}
    \lVert\widehat{\gamma}_j(\lambda)\rVert_0,
    \qquad j\in[p],
\end{align*}
where $\widehat{\beta}(\lambda)$ and
$\widehat{\gamma}_j(\lambda)$ are the ordinary LASSO estimators defined in \eqref{eq:beta-hat} and \eqref{eq:gamma_j}, respectively, evaluated at the candidate regularization parameter $\lambda$. Similarly, for the GIC-based implementation, we define
\begin{align*}
    \mathrm{GIC}_0(\lambda)
    &=
    \log\left(
    \frac{\lVert Y-{\bf X}\widehat{\beta}(\lambda)
    \rVert_2^2}{n}
    \right)
    +
    \frac{\log\log n\,\log p}{n}
    \lVert\widehat{\beta}(\lambda)\rVert_0,
    \\
    \mathrm{GIC}_j(\lambda)
    &=
    \log\left(
    \frac{\lVert X_j-{\bf X}_{-j}
    \widehat{\gamma}_j(\lambda)\rVert_2^2}{n}
    \right)
    +
    \frac{\log\log n\,\log p}{n}
    \lVert\widehat{\gamma}_j(\lambda)\rVert_0,
    \qquad j\in[p].
\end{align*}
The regularization parameters are selected by minimizing the corresponding information criteria over a sequence of candidate values generated using the \texttt{glmnet} package in R. Specifically, for $\mathrm{IC}\in
\{\mathrm{BIC},\mathrm{GIC}\}$, we select
\begin{align*}
    \widehat{\lambda}_0
    &\in
    \arg\min_{\lambda\in\Lambda_0}
    \mathrm{IC}_0(\lambda),
    \\
    \widehat{\lambda}_j
    &\in
    \arg\min_{\lambda\in\Lambda_j}
    \mathrm{IC}_j(\lambda),
    \qquad j\in[p],
\end{align*}
where $\Lambda_0 = \{ \lambda_{0,1},...,\lambda_{0,N_0} \}$ and $\Lambda_j = \{ \lambda_{j,1},...,\lambda_{j,N_j} \}$ denote the candidate regularization parameter sets for the initial and $j$th nodewise regressions, respectively. Thus, the initial and nodewise regressions are tuned separately.

    For the quantile of the weighted chi-square random variable, we compute the quantile using $10,000$ random samples, and we set $c_{\varpi} = \min\left\{ 1, \lVert M \rVert_F^2/(4 r_n \lVert M \rVert_2^2) \right\}$ where $M \equiv { R \widehat\Theta {\bf X}^\top}/{\sqrt{n}}$ and $\varpi = c_{\varpi}/2$ for the ridge parameter; see Appendix \ref{subsec:ridge_parameter} for further discussion of these choices.

    \item {\bf LMTW-CL (BIC)/(GIC)}: Our proposed test, defined in \eqref{eq:rejection-criteria}, is implemented using the conservative LASSO extension described in Section~\ref{subsec:conservative_LASSO} for both the initial and nodewise regressions. Following \citet[Section~5]{caner2018asymptotically}, we first obtain preliminary {\it ordinary/plain} LASSO estimates, with regularization parameters selected using BIC or GIC as described above.

    For the initial regression, let $\widehat{\beta}^{L}$ denote the preliminary ordinary LASSO estimator obtained using the selected regularization parameter $\widehat{\lambda}_0^{L}$. We set
\[
\lambda_{{\rm prec},0}
=
\frac{9}{4}\widehat{\lambda}_0^{L}
\]
and construct the conservative LASSO weights as
\[
\widehat{w}_{0,k}
=
\frac{\lambda_{{\rm prec},0}}
{|\widehat{\beta}^{L}_k|
\vee\lambda_{{\rm prec},0}},
\qquad k\in[p].
\]
Using these weights, we compute the conservative LASSO estimator by solving
\[
\widehat{\beta}^{CL}(\lambda)
\in
\arg\min_{\beta\in\mathbb{R}^p}
\left\{
\frac{1}{n}\lVert Y-{\bf X}\beta\rVert_2^2
+
2\lambda\sum_{k=1}^p
\widehat{w}_{0,k}|\beta_k|
\right\}.
\]
The regularization parameter for this weighted LASSO regression is selected separately by minimizing the corresponding BIC or GIC over its candidate regularization parameter set, with the information criteria defined analogously to those for LMTW, using $\widehat{\beta}^{CL}(\lambda)$ in place of $\widehat{\beta}(\lambda)$.

The same two-stage procedure is applied to each nodewise regression. Specifically, preliminary ordinary LASSO estimates $\widehat{\gamma}_j^{L}$ are obtained using BIC- or GIC-selected regularization parameters $\widehat{\lambda}_j^{L}$, and the corresponding precision parameters are set to $\lambda_{{\rm prec},j}
=9\widehat{\lambda}_j^{L}/4$. The resulting conservative weights are then used in the weighted nodewise LASSO regressions, whose regularization parameters are again selected separately using BIC or GIC.

All remaining components of the test, including the construction of the test statistic, the weighted chi-square critical value, and the choice of ridge parameter, follow the same procedure as LMTW (BIC)/(GIC).

\end{enumerate}


\newpage
\subsection{Additional Simulation Results}
\label{sec:additional_simulation}

We present the simulation results for $(n,p)=(200,300)$ in Tables \ref{tab:size_power_n200_p300_r5_homo}-\ref{tab:size_power_n200_p300_r25_hetero}. We again observe that our test statistics have overall better size control across different designs.

\begin{table}[!htbp]
\centering
\begin{threeparttable}
\caption{Empirical Size and Power: $(n,p,r)=(200,300,5)$, Homoskedastic Errors}
\label{tab:size_power_n200_p300_r5_homo}
\scriptsize
\setlength{\tabcolsep}{2.7pt}
\begin{tabular}{lcccccccccccc}
\toprule\toprule
& \multicolumn{3}{c}{D1} & \multicolumn{3}{c}{D2} & \multicolumn{3}{c}{D3} & \multicolumn{3}{c}{D4} \\
\cmidrule(lr){2-4}\cmidrule(lr){5-7}\cmidrule(lr){8-10}\cmidrule(lr){11-13}
& $\delta=0$ & $\delta=0.2$ & $\delta=0.4$ & $\delta=0$ & $\delta=0.2$ & $\delta=0.4$ & $\delta=0$ & $\delta=0.2$ & $\delta=0.4$ & $\delta=0$ & $\delta=0.2$ & $\delta=0.4$ \\
\midrule
Wald-HO (BIC)
& 0.084 & 0.231 & 0.892 & 0.090 & 0.229 & 0.887 & 0.093 & 0.241 & 0.878 & 0.079 & 0.244 & 0.879 \\
Wald-HO (GIC)
& 0.092 & 0.214 & 0.858 & 0.097 & 0.216 & 0.848 & 0.104 & 0.232 & 0.853 & 0.093 & 0.229 & 0.840 \\
\addlinespace
Wald-HE (BIC)
& 0.074 & 0.478 & 0.985 & 0.079 & 0.475 & 0.979 & 0.088 & 0.484 & 0.986 & 0.064 & 0.530 & 0.988 \\
Wald-HE (GIC)
& 0.066 & 0.430 & 0.973 & 0.076 & 0.418 & 0.963 & 0.073 & 0.435 & 0.974 & 0.056 & 0.469 & 0.977 \\
\addlinespace
LMTW (BIC)
& 0.069 & 0.342 & 0.957 & 0.072 & 0.295 & 0.940 & 0.075 & 0.321 & 0.947 & 0.062 & 0.334 & 0.942 \\
LMTW (GIC)
& 0.072 & 0.274 & 0.928 & 0.081 & 0.250 & 0.901 & 0.090 & 0.273 & 0.913 & 0.070 & 0.262 & 0.899 \\
\addlinespace
LMTW-CL (BIC)
& 0.066 & 0.437 & 0.991 & 0.063 & 0.412 & 0.980 & 0.081 & 0.442 & 0.983 & 0.060 & 0.443 & 0.984 \\
LMTW-CL (GIC)
& 0.057 & 0.395 & 0.984 & 0.055 & 0.368 & 0.970 & 0.062 & 0.385 & 0.973 & 0.055 & 0.395 & 0.972 \\
\bottomrule
\end{tabular}
\begin{tablenotes}[flushleft]
\scriptsize
\item \textit{Notes:} The nominal size is $\alpha=0.05$. The column $\delta=0$ reports empirical size, while $\delta=0.2$ and $\delta=0.4$ report empirical power. D1: Gaussian $X$ and Gaussian errors; D2: centered-exponential $X$ and Gaussian errors; D3: Gaussian $X$ and standardized $t_5$ errors; D4: centered-exponential $X$ and standardized $t_5$ errors. The errors are homoskedastic in all four designs. Wald(HO) is the homoskedastic Wald test; Wald(HE) is the heteroskedasticity-robust Wald test using conservative-LASSO; LMTW(PL) and LMTW(CL) use ordinary/plain-- and conservative-LASSO respectively.
\end{tablenotes}
\end{threeparttable}
\end{table}

\begin{table}[!htbp]
\centering
\begin{threeparttable}
\caption{Empirical Size and Power: $(n,p,r)=(200,300,5)$, Heteroskedastic Errors}
\label{tab:size_power_n200_p300_r5_hetero}
\scriptsize
\setlength{\tabcolsep}{2.7pt}
\begin{tabular}{lcccccccccccc}
\toprule\toprule
& \multicolumn{3}{c}{D1} & \multicolumn{3}{c}{D2} & \multicolumn{3}{c}{D3} & \multicolumn{3}{c}{D4} \\
\cmidrule(lr){2-4}\cmidrule(lr){5-7}\cmidrule(lr){8-10}\cmidrule(lr){11-13}
& $\delta=0$ & $\delta=0.2$ & $\delta=0.4$ & $\delta=0$ & $\delta=0.2$ & $\delta=0.4$ & $\delta=0$ & $\delta=0.2$ & $\delta=0.4$ & $\delta=0$ & $\delta=0.2$ & $\delta=0.4$ \\
\midrule
Wald-HO (BIC)
& 0.140 & 0.301 & 0.882 & 0.254 & 0.409 & 0.879 & 0.144 & 0.320 & 0.886 & 0.235 & 0.415 & 0.889 \\
Wald-HO (GIC)
& 0.153 & 0.286 & 0.843 & 0.279 & 0.402 & 0.849 & 0.156 & 0.304 & 0.864 & 0.246 & 0.403 & 0.867 \\
\addlinespace
Wald-HE (BIC)
& 0.090 & 0.439 & 0.960 & 0.141 & 0.485 & 0.954 & 0.106 & 0.463 & 0.962 & 0.126 & 0.547 & 0.975 \\
Wald-HE (GIC)
& 0.085 & 0.393 & 0.947 & 0.144 & 0.447 & 0.930 & 0.088 & 0.412 & 0.940 & 0.120 & 0.492 & 0.949 \\
\addlinespace
LMTW (BIC)
& 0.065 & 0.274 & 0.894 & 0.101 & 0.254 & 0.766 & 0.077 & 0.273 & 0.887 & 0.074 & 0.257 & 0.791 \\
LMTW (GIC)
& 0.066 & 0.231 & 0.840 & 0.102 & 0.224 & 0.706 & 0.077 & 0.238 & 0.840 & 0.075 & 0.216 & 0.728 \\
\addlinespace
LMTW-CL (BIC)
& 0.072 & 0.343 & 0.952 & 0.108 & 0.304 & 0.830 & 0.085 & 0.368 & 0.944 & 0.087 & 0.325 & 0.855 \\
LMTW-CL (GIC)
& 0.063 & 0.308 & 0.931 & 0.098 & 0.269 & 0.806 & 0.066 & 0.320 & 0.912 & 0.072 & 0.292 & 0.811 \\
\bottomrule
\end{tabular}
\begin{tablenotes}[flushleft]
\scriptsize
\item \textit{Notes:} The nominal size is $\alpha=0.05$. The column $\delta=0$ reports empirical size, while $\delta=0.2$ and $\delta=0.4$ report empirical power. D1: Gaussian $X$ and Gaussian errors; D2: centered-exponential $X$ and Gaussian errors; D3: Gaussian $X$ and standardized $t_5$ errors; D4: centered-exponential $X$ and standardized $t_5$ errors. The errors are heteroskedastic in all four designs. Wald(HO) is the homoskedastic Wald test; Wald(HE) is the heteroskedasticity-robust Wald test using conservative-LASSO; LMTW(PL) and LMTW(CL) use ordinary/plain-- and conservative-LASSO respectively.
\end{tablenotes}
\end{threeparttable}
\end{table}

\begin{table}[!htbp]
\centering
\begin{threeparttable}
\caption{Empirical Size and Power: $(n,p,r)=(200,300,25)$, Homoskedastic Errors}
\label{tab:size_power_n200_p300_r25_homo}
\scriptsize
\setlength{\tabcolsep}{2.7pt}
\begin{tabular}{lcccccccccccc}
\toprule\toprule
& \multicolumn{3}{c}{D1} & \multicolumn{3}{c}{D2} & \multicolumn{3}{c}{D3} & \multicolumn{3}{c}{D4} \\
\cmidrule(lr){2-4}\cmidrule(lr){5-7}\cmidrule(lr){8-10}\cmidrule(lr){11-13}
& $\delta=0$ & $\delta=0.2$ & $\delta=0.4$ & $\delta=0$ & $\delta=0.2$ & $\delta=0.4$ & $\delta=0$ & $\delta=0.2$ & $\delta=0.4$ & $\delta=0$ & $\delta=0.2$ & $\delta=0.4$ \\
\midrule
Wald-HO (BIC)
& 0.191 & 0.245 & 0.630 & 0.196 & 0.229 & 0.626 & 0.198 & 0.261 & 0.664 & 0.200 & 0.274 & 0.678 \\
Wald-HO (GIC)
& 0.234 & 0.265 & 0.610 & 0.232 & 0.259 & 0.608 & 0.252 & 0.282 & 0.644 & 0.240 & 0.293 & 0.656 \\
\addlinespace
Wald-HE (BIC)
& 0.219 & 0.519 & 0.944 & 0.212 & 0.530 & 0.953 & 0.226 & 0.577 & 0.965 & 0.233 & 0.578 & 0.976 \\
Wald-HE (GIC)
& 0.233 & 0.491 & 0.925 & 0.239 & 0.502 & 0.931 & 0.241 & 0.563 & 0.953 & 0.254 & 0.548 & 0.958 \\
\addlinespace
LMTW (BIC)
& 0.104 & 0.157 & 0.655 & 0.080 & 0.123 & 0.611 & 0.082 & 0.163 & 0.654 & 0.083 & 0.160 & 0.636 \\
LMTW (GIC)
& 0.123 & 0.140 & 0.560 & 0.089 & 0.105 & 0.500 & 0.103 & 0.134 & 0.574 & 0.096 & 0.124 & 0.537 \\
\addlinespace
LMTW-CL (BIC)
& 0.073 & 0.189 & 0.804 & 0.063 & 0.165 & 0.797 & 0.081 & 0.200 & 0.797 & 0.068 & 0.191 & 0.794 \\
LMTW-CL (GIC)
& 0.071 & 0.159 & 0.741 & 0.066 & 0.136 & 0.725 & 0.062 & 0.156 & 0.731 & 0.058 & 0.157 & 0.736 \\
\bottomrule
\end{tabular}
\begin{tablenotes}[flushleft]
\scriptsize
\item \textit{Notes:} The nominal size is $\alpha=0.05$. The column $\delta=0$ reports empirical size, while $\delta=0.2$ and $\delta=0.4$ report empirical power. D1: Gaussian $X$ and Gaussian errors; D2: centered-exponential $X$ and Gaussian errors; D3: Gaussian $X$ and standardized $t_5$ errors; D4: centered-exponential $X$ and standardized $t_5$ errors. The errors are homoskedastic in all four designs. Wald(HO) is the homoskedastic Wald test; Wald(HE) is the heteroskedasticity-robust Wald test using conservative-LASSO; LMTW(PL) and LMTW(CL) use ordinary/plain-- and conservative-LASSO respectively.
\end{tablenotes}
\end{threeparttable}
\end{table}

\begin{table}[!htbp]
\centering
\begin{threeparttable}
\caption{Empirical Size and Power: $(n,p,r)=(200,300,25)$, Heteroskedastic Errors}
\label{tab:size_power_n200_p300_r25_hetero}
\scriptsize
\setlength{\tabcolsep}{2.7pt}
\begin{tabular}{lcccccccccccc}
\toprule\toprule
& \multicolumn{3}{c}{D1} & \multicolumn{3}{c}{D2} & \multicolumn{3}{c}{D3} & \multicolumn{3}{c}{D4} \\
\cmidrule(lr){2-4}\cmidrule(lr){5-7}\cmidrule(lr){8-10}\cmidrule(lr){11-13}
& $\delta=0$ & $\delta=0.2$ & $\delta=0.4$ & $\delta=0$ & $\delta=0.2$ & $\delta=0.4$ & $\delta=0$ & $\delta=0.2$ & $\delta=0.4$ & $\delta=0$ & $\delta=0.2$ & $\delta=0.4$ \\
\midrule
Wald-HO (BIC)
& 0.225 & 0.277 & 0.663 & 0.296 & 0.343 & 0.733 & 0.228 & 0.283 & 0.689 & 0.300 & 0.376 & 0.762 \\
Wald-HO (GIC)
& 0.264 & 0.303 & 0.635 & 0.338 & 0.368 & 0.716 & 0.274 & 0.303 & 0.666 & 0.342 & 0.402 & 0.747 \\
\addlinespace
Wald-HE (BIC)
& 0.246 & 0.591 & 0.970 & 0.310 & 0.653 & 0.983 & 0.266 & 0.643 & 0.979 & 0.310 & 0.713 & 0.992 \\
Wald-HE (GIC)
& 0.276 & 0.571 & 0.953 & 0.342 & 0.646 & 0.975 & 0.290 & 0.616 & 0.968 & 0.339 & 0.691 & 0.987 \\
\addlinespace
LMTW (BIC)
& 0.092 & 0.147 & 0.605 & 0.095 & 0.149 & 0.528 & 0.085 & 0.154 & 0.590 & 0.067 & 0.145 & 0.589 \\
LMTW (GIC)
& 0.108 & 0.130 & 0.519 & 0.096 & 0.130 & 0.445 & 0.103 & 0.136 & 0.513 & 0.074 & 0.115 & 0.499 \\
\addlinespace
LMTW-CL (BIC)
& 0.077 & 0.181 & 0.754 & 0.084 & 0.180 & 0.675 & 0.077 & 0.189 & 0.721 & 0.065 & 0.164 & 0.705 \\
LMTW-CL (GIC)
& 0.068 & 0.154 & 0.689 & 0.072 & 0.155 & 0.613 & 0.060 & 0.155 & 0.668 & 0.054 & 0.132 & 0.649 \\
\bottomrule
\end{tabular}
\begin{tablenotes}[flushleft]
\scriptsize
\item \textit{Notes:} The nominal size is $\alpha=0.05$. The column $\delta=0$ reports empirical size, while $\delta=0.2$ and $\delta=0.4$ report empirical power. D1: Gaussian $X$ and Gaussian errors; D2: centered-exponential $X$ and Gaussian errors; D3: Gaussian $X$ and standardized $t_5$ errors; D4: centered-exponential $X$ and standardized $t_5$ errors. The errors are heteroskedastic in all four designs. Wald(HO) is the homoskedastic Wald test; Wald(HE) is the heteroskedasticity-robust Wald test using conservative-LASSO; LMTW(PL) and LMTW(CL) use ordinary/plain-- and conservative-LASSO respectively.
\end{tablenotes}
\end{threeparttable}
\end{table}

\newpage
\section{Supplementary Results for the Empirical Application}
\label{sec:appendix_empirical}

\subsection{Calibration} \label{cali_ERR}

We first examine the marginal sub-Weibull condition on the covariates in Assumption \ref{ass:size}.3 using the standardized design matrix employed in the initial LASSO regressions. The tail parameter $\nu$ governs the strength of the distributional restriction on the covariates and enters the rate conditions in Assumption \ref{ass:size}.4. To provide empirical guidance on the choice of $\nu$, we consider three candidate orders: $\nu=2$ corresponds to sub-Gaussian tails, $\nu=1.5$ represents an intermediate case, and $\nu=1$ allows for the heavier sub-exponential tails. For each candidate order $\nu\in\{1,1.5,2\}$ and covariate $j$, we compute the empirical marginal Orlicz norm 
\begin{align*}
    K_{n,p}(\nu)\equiv \max_{j \in [p]} \Bigl( \inf \bigl\{K>0:\frac{1}{n}\sum_{i=1}^n\exp\left(\left|\frac{X_{ij}}{K}\right|^\nu\right)\leq 2\bigr\} \Bigr)
\end{align*}
and report this in Table \ref{tab:x-tail-calibration}. For a given $\nu$, this quantity measures the scale required to satisfy the empirical exponential moment bound uniformly across covariates. Table \ref{tab:x-tail-calibration} reports these quantities for the main and robustness-check specifications. The maximal empirical Orlicz norms are 2.26 and 2.69 at $\nu=1$, compared with 3.79 and 4.67 at $\nu=1.5$, and 4.99 and 6.17 at $\nu=2$, respectively. These comparisons illustrate how the empirical scale changes as increasingly stringent tail restrictions are considered, although the norms across different orders are not directly comparable measures of goodness of fit. As $\eta$ decreases from $2$ to $1$, the empirical
maximal Orlicz norm $\widehat K_{\max}(\eta)$
decreases, reflecting the weaker tail restrictions associated with smaller values of $\eta$. Since our theoretical framework accommodates any fixed $\eta>0$, it allows for substantially heavier-tailed covariates than those satisfying the sub-Gaussian condition. In particular, the relatively small empirical Orlicz norms obtained at $\eta=1$ provide descriptive support for considering a sub-exponential specification. Nevertheless, these finite-sample diagnostics do not establish the required asymptotic boundedness condition, and smaller values of $\eta$ may impose more stringent rate requirements.

Our test allows for $\nu=1$ as a conservative working specification to accommodate relatively heavy-tailed covariates without relying on stronger distributional restrictions. This choice is also consistent with the objective of our theoretical framework, which allows for non-sub-Gaussian covariates, albeit at the cost of more demanding rate conditions. The diagnostics provide empirical context for this choice rather than a formal identification of the population tail order or verification of Assumption \ref{ass:size}.3.

\begin{table}[H]
\centering
\footnotesize
\setlength{\tabcolsep}{8.0pt}
\begin{threeparttable}
\caption{Empirical calibration of the marginal sub-Weibull parameter of $\textbf{X}$}
\label{tab:x-tail-calibration}
\begin{tabular*}{\textwidth}{@{\extracolsep{\fill}}lccccc@{}}
\toprule
Specification & $p$ & $ K_{n,p}(2)$ & $K_{n,p}(1.5)$ & $ K_{n,p}(1)$ \\
\midrule
main & 430 & 4.99 & 3.79 & 2.26 \\
robustness & 858 & 6.17 & 4.67 & 2.69  \\
\bottomrule
\end{tabular*}
\begin{tablenotes}[flushleft]
\scriptsize
\item Notes: Calculations use the standardized LASSO design matrix. For each candidate order $\nu$, $\widehat K_{\max}(\nu)$ denotes the largest empirical marginal Orlicz norm across covariates. These quantities are descriptive and do not constitute a formal test of the population sub-Weibull condition.
\end{tablenotes}
\end{threeparttable}
\end{table}

We next examine the error moment condition in Assumption \ref{ass:mean_zero}.3. Since the structural errors are unobserved, we use the LASSO residuals from the main specification with $p=430$ as empirical proxies. Panel A of Figure \ref{fig:p429-residual-moment-diagnostics} plots the empirical $L_h$ norms of the standardized residuals for $h=1,\ldots,8$, together with Gaussian and standardized $t(5)$ benchmarks. Panel B reports the fraction of the fourth, sixth, and eighth sample absolute moments contributed by the $k$ largest absolute residuals. Under both BIC and GIC, the higher-order empirical residual moments exceed the Gaussian benchmark and become increasingly concentrated among a small number of observations. In particular, the ten largest absolute residuals account for approximately three quarters of the fourth sample absolute moment and more than 90 percent of the sixth and eighth sample absolute moments. This concentration illustrates the sensitivity of higher-order sample moments to extreme observations. We therefore retain $h=4$, the weakest moment requirement allowed by Assumption \ref{ass:mean_zero}.3, rather than relying on stronger higher-order moment assumptions. These diagnostics do not establish the conditional fourth-moment restriction or rule out the existence of higher-order population moments.

\begin{figure}[H]
\centering
\caption{Residual moment diagnostics for the $p=430$ LASSO residuals}
\label{fig:p429-residual-moment-diagnostics}
\vspace{0.25em}
\includegraphics[width=0.92\textwidth]{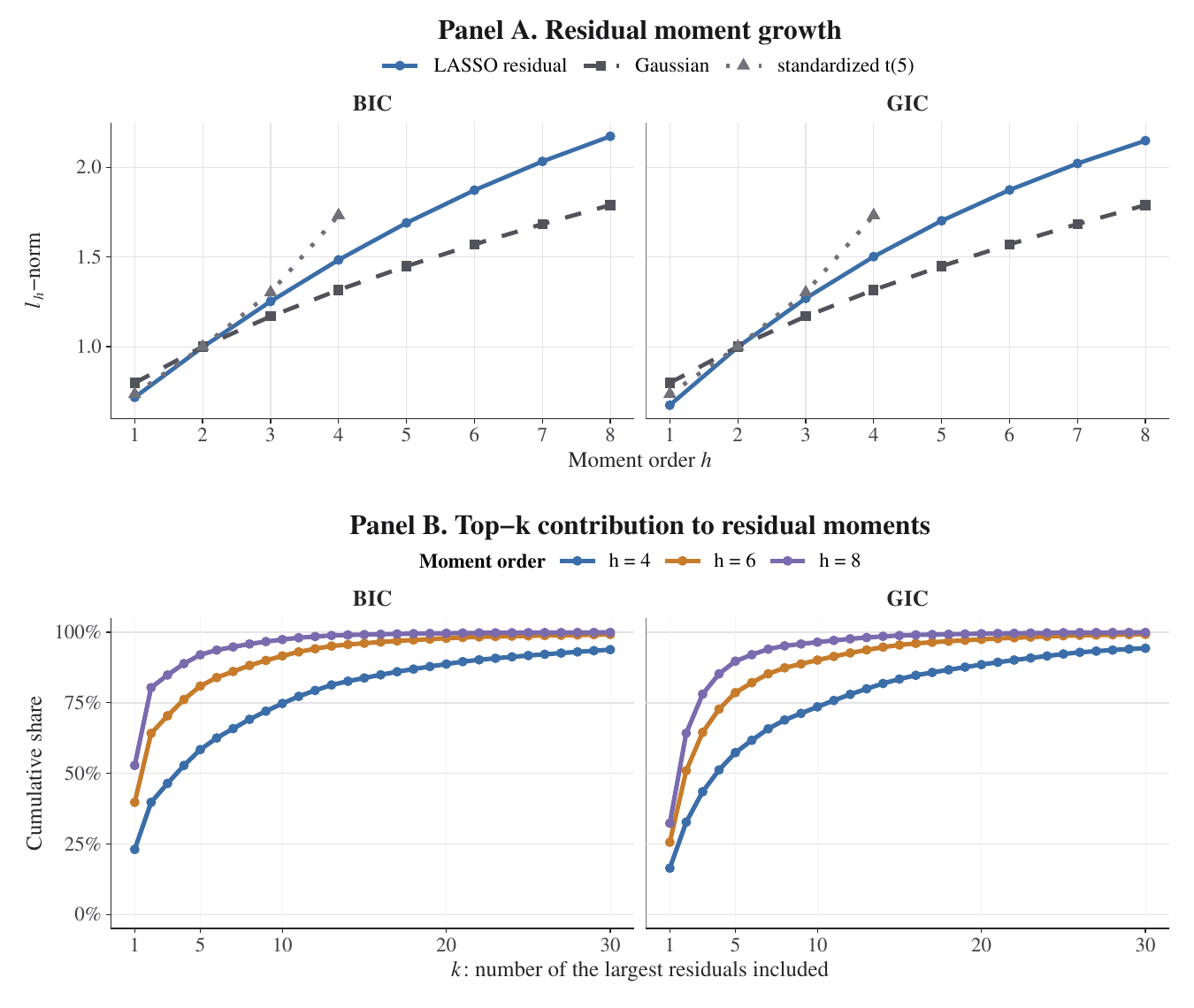}
\vspace{0.25em}
\begin{minipage}{0.92\textwidth}
\footnotesize
Notes: the LASSO residuals are obtained under the main setting $(n,p)=(210,430)$, centered and scaled by their sample standard deviation. The ordinary LASSO and conservative LASSO yield identical residuals under each criterion in Table \ref{tab:lag0-linear-main-p429}. The upper panel plots empirical $\ell_h$-norms for $h=1,\ldots,8$ with Gaussian and standardized $t(5)$ benchmarks; the latter is shown only for $h<5$, where its population moments exist. The lower panel plots the shares of the fourth, sixth, and eighth sample absolute moments contributed by the $k$ largest absolute standardized residuals.
\end{minipage}
\end{figure}

\subsection{Robustness Check} \label{ERR_RC}

This subsection examines the robustness of our empirical findings to a higher-dimensional specification that augments the baseline model with squared regressors. Specifically, starting from the $p=430$ regressors (originally 429 regressors, plus the constant term) in the main specification, we include the squared term of each regressor, introducing an additional 430 covariates and increasing the total dimension to $p=858$. Table~\ref{tab:lag0-linear-robustness-significance} reports the results of the joint-zero and coefficient-equality tests under this expanded specification, providing a robustness check for the findings reported in Table~\ref{tab:lag0-linear-main-p429}. Overall, the results remain qualitatively similar to those obtained under the baseline specification, suggesting that our main empirical conclusions are robust to the inclusion of quadratic terms.

\begin{table}[!htbp]
\centering
\footnotesize
\setlength{\tabcolsep}{3.0pt}
\renewcommand{\arraystretch}{1.0}
\begin{threeparttable}
\caption{Concept-group tests of zero coefficients and coefficient equality, $(n,p)=(210,858)$}
\label{tab:lag0-linear-robustness-significance}
\begin{tabular*}{\textwidth}{@{\extracolsep{\fill}}lcllll@{}}
\toprule
& & & & \multicolumn{2}{c}{$p$-value} \\
\cmidrule(l){5-6}
Group $G$ & $m$ & Method & Selected concepts
& $R_G=I_m$
& $R_G=[-\mathbb{1}_{m-1},\,I_{m-1}]$ \\
\midrule
credit & 8 & LMTW (BIC) & ``banks'' & $<0.001^{***}$ & $0.0020^{***}$ \\
 & & LMTW (GIC) & None & $<0.001^{***}$ & $0.0181^{**}$ \\
 & & LMTW-CL (BIC) & ``banks'' & $<0.001^{***}$ & $0.0051^{***}$ \\
 & & LMTW-CL (GIC) & None & $<0.001^{***}$ & $0.0493^{**}$ \\
\midrule
demand & 11 & LMTW (BIC) & ``business activity'' & $<0.001^{***}$ & $0.0230^{**}$ \\
 & & LMTW (GIC) & ``business activity'' & $<0.001^{***}$ & $0.0073^{***}$ \\
 & & LMTW-CL (BIC) & ``business activity'' & $0.0034^{***}$ & $0.0554^{*}$ \\
 & & LMTW-CL (GIC) & ``business activity'' & $<0.001^{***}$ & $0.0085^{***}$ \\
\midrule
housing & 9 & LMTW (BIC) & None & $0.0293^{**}$ & $0.3407$ \\
 & & LMTW (GIC) & None & $<0.001^{***}$ & $0.3094$ \\
 & & LMTW-CL (BIC) & None & $0.0542^{*}$ & $0.3637$ \\
 & & LMTW-CL (GIC) & None & $<0.001^{***}$ & $0.2461$ \\
\midrule
inflation & 10 & LMTW (BIC) & None & $0.5477$ & $0.6689$ \\
 & & LMTW (GIC) & None & $0.4025$ & $0.3733$ \\
 & & LMTW-CL (BIC) & None & $0.5986$ & $0.7640$ \\
 & & LMTW-CL (GIC) & None & $0.5474$ & $0.5340$ \\
\midrule
labor & 9 & LMTW (BIC) & None & $0.1096$ & $0.1182$ \\
 & & LMTW (GIC) & None & $<0.001^{***}$ & $0.0017^{***}$ \\
 & & LMTW-CL (BIC) & None & $0.1730$ & $0.2127$ \\
 & & LMTW-CL (GIC) & None & $0.0078^{***}$ & $0.0101^{**}$ \\
\midrule
oil & 4 & LMTW (BIC) & None & $0.3758$ & $0.1293$ \\
 & & LMTW (GIC) & None & $0.1642$ & $0.0448^{**}$ \\
 & & LMTW-CL (BIC) & None & $0.3540$ & $0.1697$ \\
 & & LMTW-CL (GIC) & None & $0.1874$ & $0.0621^{*}$ \\
\bottomrule
\end{tabular*}
\begin{tablenotes}[flushleft]
\scriptsize
\item Notes: Tests concern the lag-0 linear coefficients. For coefficient-equality tests $R_G = [-\mathbb{1}_{m-1}, I_{m-1}]$, the reference is the first concept listed for its group in Table~\ref{tab:lag0-linear-main-concepts}. LMTW and LMTW-CL denote implementations based on ordinary/plain and conservative LASSO, respectively. BIC and GIC represent the criteria used to select the penalty parameter in both the main and node-wise LASSO steps. ``Selected concepts'' lists the concepts selected by the main LASSO within each group. Simulated $p$-values and critical values use $\widetilde\Phi$ and 200,000 weighted chi-square draws. Stars denote rejection at the 10\% ($^{*}$), 5\% ($^{**}$), and 1\% ($^{***}$) levels.
\end{tablenotes}
\end{threeparttable}
\end{table}

\end{document}